\documentclass[11pt]{article}
\usepackage{amsmath, amssymb, amsthm, mathrsfs, bm, geometry, mathtools, hyperref}
\usepackage{graphicx, subfigure, latexsym}
\usepackage{float, multirow, rotating, upgreek, wrapfig}
\usepackage{comment}
\usepackage{appendix}
\usepackage{natbib}
\usepackage{tikz}
\usetikzlibrary{decorations.pathreplacing}
\usetikzlibrary{shapes.geometric, arrows.meta, positioning, fit, backgrounds, calc}
\usepackage{algorithm}
\usepackage{algpseudocode}

\newtheorem{theorem}{Theorem}[section]
\newtheorem{lemma}[theorem]{Lemma}
\newtheorem{proposition}[theorem]{Proposition}

\newtheorem{definition}[theorem]{Definition}
\newtheorem{assumption}[theorem]{Assumption}
\newtheorem{remark}[theorem]{Remark}

\newcommand{\R}{\mathbb{R}}

\newcommand{\btheta}{\boldsymbol{\theta}}

\newcommand{\bC}{\boldsymbol{C}}

\newcommand{\bD}{\boldsymbol{D}}

\newcommand{\bG}{\boldsymbol{G}}

\newcommand{\bS}{\boldsymbol{S}}

\newcommand{\bT}{\boldsymbol{T}}

\newcommand{\bp}{\boldsymbol{p}}

\newcommand{\bx}{\boldsymbol{x}}
\newcommand{\bX}{\boldsymbol{X}}
\newcommand{\by}{\boldsymbol{y}}
\newcommand{\bY}{\boldsymbol{Y}}

\newcommand{\bTheta}{\boldsymbol{\Theta}}

\newcommand{\E}{\mathbb{E}}
\newcommand{\Var}{\mathrm{Var}}
\newcommand{\Cov}{\mathrm{Cov}}
\newcommand{\Prob}{\mathbb{P}}

\newcommand{\KL}{\mathrm{KL}}
\newcommand{\Hdist}{\mathrm{d}_{\mathrm{Hell}}}
\newcommand{\eps}{\varepsilon}

\begin{document}

%\title{\textbf{Dirichlet Process Mixtures of Trees with Gaussian Process Splits: A Bayesian Nonparametric Framework with Posterior Contraction Rate}}
%\author{Subhasish Basak, Anik Roy and Sourabh Bhattacharya \\
%Indian Statistical Institute}
%\date{}
%\maketitle

\begin{titlepage}
\centering

% Top spacing
\vspace*{1cm}

% Title
{\LARGE \bfseries Dirichlet Process Mixtures of Trees with Gaussian Process Splits: A Bayesian Nonparametric Framework with Posterior Contraction Rate \par}
\vspace{0.5cm}

% Authors — use \par instead of \and
{\large
Subhasish Basak \par
Anik Roy \par
Sourabh Bhattacharya\textsuperscript{\(\dagger\)}\\
\par}
\vspace{0.2cm}

% Affiliation
{\large
Indian Statistical Institute
\\[4mm]}
\(\dagger\)Corresponding author: \texttt{bhsourabh@gmail.com}
\vspace{0.3cm}

% Date
%{\large \today \par}

\vspace{1.5cm}

% Abstract
\begin{minipage}{0.85\textwidth}
\small
\begin{center}
\bfseries Abstract
\end{center}
\hspace{4mm}We propose a novel Bayesian nonparametric mixture of regression trees that provides a unified probabilistic framework encompassing CART, BART, random forests, and boosting as special or limiting cases. The model employs a Dirichlet process prior over tree-parameter pairs, yielding a sparse ensemble whose number of distinct components is learned from the data rather than fixed a priori. At the heart of our approach is a flexible splitting rule driven by the posterior predictive distribution of a Gaussian process fitted within each terminal node; remarkably, the GP density cancels exactly in the Metropolis--Hastings ratio for the GROW and PRUNE moves, ensuring computational feasibility and efficiency. The generative nature of the GP splitting further enables an exact Gibbs sampler for posterior predictive inference that faithfully propagates uncertainty through the random traversal of the tree ensemble. To handle large-scale applications, we develop a parallel implementation in C using MPI that distributes the independent tree updates across multiple processors, achieving adequate speedups and making the method practical for datasets with thousands of observations.

\hspace{4mm}We establish rigorous posterior consistency for this model under only the assumption that the true regression function is continuous on a compact domain---we do not require the truth to belong to the hypothesis space, thereby allowing for fundamental misspecification. Using the general theory of \cite{ggv2000}, we verify all four required conditions: the Kullback--Leibler property, the existence of a sieve with controlled metric entropy, exponentially powerful tests, and the negligibility of the sieve complement. A central theoretical contribution is the identity \(h(\Theta)=0\), where \(h(\theta)\) is the asymptotic KL divergence rate; this ensures that the prior assigns positive mass to arbitrarily small KL neighbourhoods of the truth even under misspecification. Our main result shows that the posterior contracts at rate \(n^{-1/4}\) (up to logarithmic factors) in the Hellinger distance, providing the first consistency guarantee for a Dirichlet-process tree mixture under minimal smoothness.

\hspace{4mm}Extensive simulation studies on the Friedman benchmark function demonstrate that the DP mixture achieves near-nominal coverage (0.94 under Gaussian errors, 0.92 under heavy-tailed Cauchy errors) and exhibits remarkable robustness to high-dimensional noise, substantially outperforming BART and bagged CART in uncertainty quantification. Applications to five diverse real-world datasets---QSAR aquatic toxicity, Communities and Crime, Riboflavin production, Wheat genomic prediction, and air quality time series---further validate the model's practical utility, with the DP mixture consistently providing the most reliable credible intervals among Bayesian tree-based methods while automatically selecting a sparse, interpretable ensemble. These results establish the DP mixture as a principled, robust, and theoretically justified alternative to existing tree-based ensembles for challenging regression settings where honest uncertainty quantification is paramount.
\end{minipage}

\vspace{1.5cm}

% Keywords
\begin{minipage}{0.85\textwidth}
\small
\textbf{Keywords:} Bayesian nonparametrics, Dirichlet process mixtures, Gaussian process splitting, parallel processing, posterior contraction, tree ensembles, uncertainty quantification, variable-dimensional Markov Chain Monte Carlo.
\end{minipage}

\vfill

% Optional bottom text
%{\small \textit{Preprint submitted to ...} \par}

\end{titlepage}

\tableofcontents

\section{Introduction}\label{sec:intro}

Tree‐based methods have become a cornerstone of modern machine learning, prized for their interpretability, flexibility, and computational efficiency. From the seminal CART \cite{Breiman84} to the powerful ensembles of random forests \cite{Breiman01}, boosting \cite{Freund97}, and Bayesian Additive Regression Trees (BART) \cite{chipman2010bart}, these tools have demonstrated remarkable empirical success across a wide spectrum of predictive tasks. Yet, despite their widespread adoption, the field has largely evolved through algorithmically driven innovations, each with its own heuristics and tuning parameters. A fundamental question persists: can we construct a single, principled probabilistic framework that unifies these diverse approaches, provides honest uncertainty quantification, and offers rigorous theoretical guarantees without sacrificing predictive performance?

In this work, we answer this question affirmatively by introducing a novel Bayesian nonparametric mixture of regression trees. Our model employs a Dirichlet process prior over tree‐parameter pairs, resulting in a sparse ensemble whose number of distinct components is learned from the data rather than fixed a priori. As we demonstrate in Section~\ref{sec:special_cases}, this framework naturally encompasses and generalises many popular tree‐based methods. A single CART model emerges as a mixture with one component; BART is recovered by designing the base measure to favour additive ensembles; random forests arise as the large‐concentration limit of the Dirichlet process; and boosting corresponds to a scaled convex combination of trees. Unlike BART, which requires the user to prespecify the number of trees, our formulation automatically adapts the ensemble complexity to the underlying signal, providing a more honest quantification of model uncertainty.

A central computational challenge lies in efficiently exploring the posterior distribution over tree structures. Traditional Bayesian tree MCMC samplers rely on local moves—GROW, PRUNE, CHANGE, and SWAP—to modify the partition. However, the introduction of a non‐axis‐aligned, response‐dependent splitting rule renders the CHANGE and SWAP moves computationally prohibitive. To overcome this, we develop a sophisticated MCMC algorithm that operates exclusively on GROW and PRUNE moves, leveraging the global allocation dynamics of the Dirichlet process to compensate for the absence of local modifications. At the heart of this algorithm is a novel splitting rule driven by the posterior predictive distribution of a Gaussian process (GP) fitted to the data within each terminal node. A draw from this GP posterior defines a smooth, nonlinear, and potentially oblique decision boundary. Remarkably, when the tree prior is augmented with the GP density, this proposal density cancels exactly in the Metropolis–Hastings acceptance ratio. This cancellation is of paramount practical importance: it ensures that the MCMC remains exact and efficient without requiring the evaluation of costly GP likelihoods in the acceptance step, rendering the algorithm computationally feasible and efficient. % and comparable in runtime to BART.

To address the demands of large-scale applications, we implement our algorithm in C using the Message Passing Interface (MPI) for parallel computation. The independent updates of the distinct tree components in the Dirichlet process mixture are naturally parallelisable, as each tree's MCMC transition can be performed independently conditional on the current allocation of the data. Our implementation distributes these tree updates across multiple processors, achieving near-linear speedups and making the method practical for datasets with thousands of observations. The parallel architecture is carefully designed to handle the serialisation and communication of complex tree structures, with the GP precomputations performed locally on each processor to minimise communication overhead. We provide a detailed discussion of the parallel implementation in Section~\ref{sec:parallel}.

The generative nature of the GP splitting rule further enables an exact Gibbs sampler for posterior predictive inference. For a new covariate, the leaf assignment is inherently random and depends on the unknown response. By exploiting the closed‐form conditional distributions of the GP and the conjugate leaf‐parameter posteriors, we design a rapid Gibbs chain that produces genuine draws from the full posterior predictive distribution, faithfully propagating the uncertainty due to the unknown path through the tree ensemble. This provides a principled mechanism for uncertainty quantification that accounts for both parametric and structural variability.

To rigorously assess the frequentist validity of our approach, we establish a comprehensive posterior contraction theorem. Using the general theory of \cite{ggv2000}, we prove that the posterior distribution contracts at a rate of \(n^{-1/4}\) (up to logarithmic factors) in the Hellinger distance. Our analysis is distinguished by two crucial features. First, we require only that the true regression function \(m_0\) is continuous on a compact domain; we do not assume that \(m_0\) belongs to the tree hypothesis space, thereby allowing the model to be fundamentally misspecified. Second, we verify a key identity: the asymptotic Kullback–Leibler divergence rate \(h(\Theta)\) satisfies \(h(\Theta)=0\). This ensures that the prior assigns positive mass to arbitrarily small KL neighbourhoods of the truth, a condition essential for the posterior to concentrate around the true data‐generating mechanism despite the misspecification. This theoretical contribution provides the first consistency guarantee for a Dirichlet‐process tree mixture under minimal smoothness assumptions.

We validate the practical performance of the proposed model through an extensive simulation study on the Friedman benchmark function (\cite{friedman1991multivariate}). We systematically vary the number of covariates (five informative versus twenty high‐dimensional settings with pure noise variables) and the error distribution (Gaussian versus heavy‐tailed Cauchy). The results are striking. In the Gaussian settings, the DP mixture achieves near‐nominal coverage of \(0.94\), substantially outperforming BART and bagged CART, which are overconfident and attain coverages as low as \(0.82\) and \(0.80\), respectively. In the challenging Cauchy error scenarios, the superiority of the DP mixture becomes even more pronounced. While competing methods fail catastrophically—assigning near‐zero density to extreme observations—our model maintains robust coverage of \(0.92\) and demonstrates a remarkably higher log predictive density. This resilience stems from the model's nonparametric structure, which adapts naturally to heavy tails without assuming a specific parametric form. Moreover, applications to five diverse real‐world datasets—QSAR aquatic toxicity, Communities and Crime, Riboflavin production, Wheat genomic prediction, and air quality time series—consistently demonstrate that the DP mixture provides the most reliable credible intervals among Bayesian tree‐based methods while automatically selecting a sparse, interpretable ensemble. The parallel implementation enables these large-scale applications, with the runtime scaling efficiently across multiple processors.

The contributions of this work are thus multifaceted. We provide a unified Bayesian nonparametric framework that encompasses CART, BART, random forests, and boosting as special or limiting cases, offering a coherent probabilistic foundation for tree ensembles. We develop a computationally efficient MCMC algorithm with a novel GP‐driven splitting rule that cancels exactly in the Metropolis–Hastings ratio, and we design an exact Gibbs sampler for posterior predictive inference that accounts for the random traversal of the tree. We further establish the first rigorous proof of posterior consistency for a DP tree mixture under only continuity of the true function, demonstrating robustness to misspecification. We provide a scalable parallel implementation in C with MPI that distributes tree updates across processors, achieving near-linear speedups for large datasets. Finally, we present a comprehensive simulation study and real‐data applications showing that the model provides honest uncertainty quantification and remarkable resilience to heavy‐tailed errors, outperforming state‐of‐the‐art tree‐based methods.

The remainder of the paper is structured as follows. Section~\ref{sec:proposal} formalises the Dirichlet process mixture of trees. Section~\ref{sec:special_cases} details the correspondence to existing methods. Section~\ref{subsec:split} introduces the Gaussian‐process splitting rule and the augmented tree prior. Section~\ref{sec:sim_Gm} describes the GROW and PRUNE MCMC moves and proves detailed balance. The MCMC algorithm is summarised in Section~\ref{sec:mcmc_algorithm}, and exact Bayesian prediction is covered in Section~\ref{sec:prediction}. Section~\ref{sec:parallel} describes the parallel implementation in C with MPI. Section~\ref{sec:simulation} presents the empirical evaluation, including both simulation studies and real‐data applications. Finally, Section~\ref{sec:post_contraction} provides the rigorous posterior contraction theory, with detailed proofs and technical results deferred to the appendices.

\section{The Proposal}\label{sec:proposal}

We seek a Bayesian nonparametric framework for tree‐based regression that allows the ensemble size to be learned from the data while remaining computationally tractable. To this end, we adopt a finite mixture model with a fixed number of components \(M\), where the component parameters are drawn from a Dirichlet process (DP) prior. Specifically, let \(M\) be a positive integer chosen by the user, which serves as an upper bound on the number of distinct trees that can appear in the mixture. We introduce \(M\) latent pairs \((\Theta_i, T_i)\), for \(i=1,\ldots,M\), which are conditionally i.i.d. from a random probability measure \(G\) that itself follows a Dirichlet process with concentration parameter \(\alpha > 0\) and base measure \(G_0\). Formally,
\[
(\Theta_i, T_i) \mid G \stackrel{\text{i.i.d.}}{\sim} G, \qquad G \sim \mathrm{DP}(\alpha G_0).
\]
The base measure \(G_0\) specifies the prior for a single tree–parameter pair: it governs the tree topology, the splitting rules (including the GP‐driven splits described in Section~\ref{subsec:split}), and the leaf parameters. In particular, we assume that under \(G_0\), the tree structure \(T\) and the leaf parameters \(\Theta\) follow the same conditional distributions as in the Bayesian CART model of \cite{chipman1998bayesian}, with the splitting rule now driven by the GP posterior predictive.

The observed data \(\mathbf{Y} = (Y_1,\ldots,Y_n)\) are modelled as an equal‐weight mixture of these \(M\) component trees:
\begin{equation}
    [\mathbf{Y} \mid \mathbf{X}, \{(\Theta_i,T_i,p_i)\}_{i=1}^M]
    = \sum_{i=1}^M p_if(\mathbf{Y} \mid \mathbf{X}, \Theta_i, T_i),
    \label{eq:mix1}
\end{equation}
where \(f(\mathbf{Y} \mid \mathbf{X}, \Theta_i, T_i)\) denotes the likelihood of the data under the \(i\)-th tree, and $(p_1,\ldots,p_M)$ are random mixture weights
satisfying $0\leq p_i\leq 1$ for $i=1,\ldots,M$ and $\sum_{i=1}^M p_i=1$. In this work, we shall consider the following prior for the mixture weights:
$(p_1,\ldots,p_M)\sim\text{Dirichlet}(1,\ldots,1)$.
%Unlike the Escobar–West (EW) model, which sets the number of components equal to the sample size \(n\) and associates each observation with its own parameter, our formulation decouples the number of mixture components from the data size. This yields substantial computational advantages when \(M \ll n\), while still allowing the posterior to adapt the effective number of distinct trees through the clustering property of the Dirichlet process.

A defining feature of the Dirichlet process is its almost sure discreteness: a draw \(G = \sum_{j=1}^{\infty} \pi_j \, \delta_{(\Theta_j^*, T_j^*)}\) is an infinite atomic measure. Consequently, the \(M\) latent draws \((\Theta_i,T_i)\) exhibit ties with positive probability. Let \(k \le M\) denote the number of distinct pairs among the \(M\) draws, and let \(\{(\Theta_j^*, T_j^*)\}_{j=1}^k\) be their unique values. Then the equal‐weight mixture in (\ref{eq:mix1}) can be rewritten as
\begin{equation}
    [\mathbf{Y} \mid \mathbf{X}, \{(\Theta_i,T_i,p_i)\}_{i=1}^M]
    = \sum_{j=1}^k p^*_j \, f(\mathbf{Y} \mid \mathbf{X}, \Theta_j^*, T_j^*),
    \label{eq:mix2}
\end{equation}
where $p^*_j=\sum_{i\in S_j}p_i$, with $S_j = \{i: (\Theta_i,T_i) = (\Theta_j^*, T_j^*)\}\); so that $|S_j|$ is the multiplicity of the \(j\)-th distinct component. Thus, the model automatically induces a mixture with a random number of distinct components \(k\), with weights \(p^*_j\). This representation is central to our approach: it allows the posterior to learn the appropriate number of trees directly from the data, without requiring the user to prespecify \(k\).

Taking conditional expectations in (\ref{eq:mix2}) yields the nonparametric regression function
\begin{equation}
    \mathbb{E}[\mathbf{Y} \mid \mathbf{X}, \{(\Theta_i,T_i,p_i)\}]
    = \sum_{j=1}^k p^*_j \, \mathbb{E}[\mathbf{Y} \mid \mathbf{X}, \Theta_j^*, T_j^*],
    \label{eq:mix3}
\end{equation}
which is a convex combination of individual tree predictions with weights determined by the probabilities of the distinct components. This formulation is far more general than existing tree‐based models. For instance, the BART model of \cite{chipman2010bart} assumes a fixed number of trees and an additive structure, which is a special case of our framework when the mixture weights are appropriately constrained. In contrast, our convex combination does not require additivity; it naturally regularises predictions to lie within the range of individual tree outputs. Moreover, as we demonstrate in Section~\ref{sec:special_cases}, the model encompasses CART, random forests, and boosting as special or limiting cases, providing a unified probabilistic foundation for tree‐based learning.

The parameter \(M\) plays a crucial role in our framework. It is a fixed, user-specified upper bound on the number of distinct components that can appear in the mixture. 
Unlike conventional Dirichlet process mixture models—including the \cite{Bhattacharya08} formulation (see also \cite{sm11}, \cite{sm12}, \cite{maj13})—our model does not associate each observation with its own component-specific 
parameter. Instead, we introduce a global allocation variable \(Z \in \{1,\ldots,M\}\), which assigns the entire dataset \(\mathbf{Y}\) to a single component. 
Indeed, allocating individual observations to different mixture components can not ensure a mixture distribution for trees, as a tree can only be grown based on 
all available observations.
%in each iteration of the MCMC sampler. The component index \(Z\) is updated via the configuration indicator reparameterization of Mukhopadhyay and Bhattacharya (2012), 
%allowing the posterior to explore different tree configurations by activating and deactivating components. In this way, the mixture over the \(M\) components is 
%realised across iterations of the Markov chain rather than within a single likelihood evaluation. 
This structural departure has several important ramifications.

First, it decouples the number of mixture components from the sample size \(n\), yielding substantial computational advantages. While the traditional DP mixture model and the 
mixture model of \cite{Bhattacharya08}, respectively, require tracking \(n\) component parameters and updating the allocation of each observation individually, 
our model need only maintain the \(M\) latent tree components and a single global allocation variable. This is particularly beneficial for large datasets, 
where \(M \ll n\) can be chosen without compromising modelling flexibility.

Second, the global allocation mechanism fundamentally alters the nature of the MCMC exploration. In standard Bayesian tree models, local moves such as CHANGE and SWAP 
are necessary to gradually modify the partition structure of individual trees. In our framework, because the entire dataset is allocated to a single active component 
in each iteration, the posterior can explore the tree space globally: a component that is active in one iteration may become inactive in the next, and a previously 
inactive component may be activated. This global switching behaviour, induced by the Dirichlet process prior and the configuration indicator dynamics, more than 
compensates for the absence of local split-modifying moves. As we demonstrate in Section~\ref{sec:sim_Gm}, restricting the MCMC to GROW and PRUNE moves suffices 
for efficient exploration of the tree space, a fact validated by the excellent mixing properties observed in our simulation studies.

Third, while \(M\) is fixed for a given analysis, the Dirichlet process prior ensures that the effective number of distinct components \(k \le M\) is learned 
from the data. The concentration parameter \(\alpha\) controls the prior distribution over the number of occupied components, with smaller \(\alpha\) favouring 
sparser ensembles. In the limiting case \(\alpha \to \infty\), the components become approximately independent and identically distributed, recovering a random 
forest-like ensemble. In the case \(M=1\), the model reduces to a single tree, recovering CART-like inference. Thus, by choosing \(M\) and \(\alpha\) appropriately, 
the practitioner can smoothly interpolate between a single tree and a large ensemble, with the posterior automatically determining the appropriate complexity for the data at hand.

Given a dataset \(\mathcal{D}_n = \{(\mathbf{x}_i, y_i)\}_{i=1}^n\), our primary inferential objective is the posterior predictive distribution
\[
\pi(y^* \mid \mathbf{x}^*, \mathcal{D}_n)
= \int \Bigl( \sum_{i=1}^Mp_i f(y^* \mid \mathbf{x}^*, \Theta_i, T_i,p_i) \Bigr) \,
  \pi(\{(\Theta_i,T_i,p_i)\} \mid \mathcal{D}_n) \,
  d\{(\Theta_i,T_i,p_i)\},
\]
which averages over the uncertainty in the component parameters and the tree structures. This fully Bayesian predictive distribution provides honest uncertainty quantification that reflects both parametric and structural variability, a feature that is often lacking in frequentist tree ensembles. 
%In the following sections, we describe an efficient MCMC algorithm to sample from this posterior, a novel GP‐based splitting rule that enables exploration of the tree space, and the theoretical guarantees that justify the model's frequentist consistency.

Figure \ref{fig:model} illustrates our model schematically.

\begin{figure}[htbp]
\centering
\begin{tikzpicture}[
    node distance=1.5cm,
    % --- Color-coded styles ---
    prior/.style={rectangle, draw=blue!70!black, fill=blue!10,
                  minimum width=3.8cm, align=center, font=\small, rounded corners=4pt, very thick},
    latent/.style={rectangle, draw=green!60!black, fill=green!10,
                  minimum width=3.8cm, align=center, font=\small, rounded corners=4pt, very thick},
    weight/.style={rectangle, draw=orange!70!black, fill=orange!10,
                  minimum width=3.8cm, align=center, font=\small, rounded corners=4pt, very thick},
    data/.style={rectangle, draw=red!70!black, fill=red!10,
                  minimum width=3.8cm, align=center, font=\small, rounded corners=4pt, very thick},
    plate/.style={draw=black!40, dashed, rounded corners=10pt, inner sep=12pt, thick},
    arrow/.style={-{Stealth[length=3mm]}, thick, draw=black!70},
    % --- Stylised tree node ---
    treecircle/.style={circle, draw=green!60!black, fill=white, minimum size=0.6cm, inner sep=0pt, thick},
    treeleaf/.style={rectangle, draw=green!60!black, fill=white, minimum width=1.0cm,
                     minimum height=0.5cm, inner sep=2pt, thick, font=\tiny},
]
    % ===== 1. PRIOR LAYER =====
    \node[prior] (dp) {Dirichlet Process\\[1pt] \footnotesize $G \sim \text{DP}(\alpha G_0)$};

    % ===== 2. LATENT VARIABLE LAYER =====
    \node[latent, below=of dp] (draws) {\footnotesize $(\Theta_i, T_i) \mid G \stackrel{\text{iid}}{\sim} G$};

    \node[weight, below=of draws] (weights) {Mixture Weights\\[1pt] \footnotesize $\mathbf{p} \sim \text{Dirichlet}(\mathbf{1}_M)$};

    \node[latent, below=of weights] (mixture) {Mixture of Trees\\[1pt] \footnotesize $\sum_{\ell=1}^k p_\ell^* \, f(Y \mid X, \Theta_\ell^*, T_\ell^*)$};

    % ===== 3. DATA LAYER =====
    \node[data, below=of mixture] (data) {Observed Data\\[1pt] \footnotesize $(X_i, Y_i)_{i=1}^n$};

    % ===== ARROWS =====
    \draw[arrow] (dp) -- (draws);
    \draw[arrow] (draws) -- (weights);
    \draw[arrow] (weights) -- (mixture);
    \draw[arrow] (mixture) -- (data);

    % ===== PLATE NOTATION =====
    \begin{scope}[on background layer]
        \node[plate, fit=(draws) (weights), label={[anchor=north, font=\small]south:$M$ components}] (plate) {};
    \end{scope}

    % ===== VISUAL TREE WITH GP SPLIT (Right Side) =====
    \begin{scope}[shift={($(mixture.east)+(2.8cm,0)$)}, scale=0.9]
        % Background box for the tree
        \node[draw=green!30!black, fill=green!5, rounded corners=6pt,
              minimum width=3.2cm, minimum height=2.8cm, inner sep=6pt] (treebox) {};
        \node[font=\small\sffamily, anchor=north] at (treebox.north) {Single Tree $T_\ell^*$};

        % Root Node with GP curve
        \node[treecircle] (root) at (0, 1.2) {};
        % Draw the GP split curve passing through the root circle
        \draw[thick, blue!60!black] (-0.8, 1.2) -- plot[domain=-0.8:0.8, smooth, variable=\x]
            (\x, {1.2 + 0.25*sin(3*\x r)}) -- (0.8, 1.2);
        \node[font=\tiny, anchor=south, blue!60!black] at (0, 1.6) {$\tilde{g}$};
        \node[font=\tiny, anchor=west] at (0.9, 1.2) {Split rule};

        % Left child (internal node with further split)
        \node[treecircle] (left) at (-1.2, 0) {};
        \node[treecircle] (right) at (1.2, 0) {};
        \draw[arrow] (root) -- (left);
        \draw[arrow] (root) -- (right);

        % Leaves under left child
        \node[treeleaf] (ll) at (-1.8, -1.2) {$\mu_{L1}$};
        \node[treeleaf] (lr) at (-0.6, -1.2) {$\mu_{L2}$};
        \draw[arrow] (left) -- (ll);
        \draw[arrow] (left) -- (lr);

        % Leaf under right child
        \node[treeleaf] (rl) at (0.6, -1.2) {$\mu_{R1}$};
        \node[treeleaf] (rr) at (1.8, -1.2) {$\mu_{R2}$};
        \draw[arrow] (right) -- (rl);
        \draw[arrow] (right) -- (rr);

        % Annotations linking back to the paper
        \node[font=\tiny, anchor=north] at (ll.south) {Response};
        \node[font=\tiny, anchor=north] at (lr.south) {means};
    \end{scope}

    % ===== ANNOTATIONS =====
    \node[right=1.8cm of dp, align=left, font=\footnotesize, text width=2.6cm]
        {$\alpha$: concentration\\ $G_0$: base measure};

    % Legend / Color guide (optional, but helpful)
    \begin{scope}[shift={($(data.south west)+(0.5cm,-1.2cm)$)}]
        \draw[fill=blue!10, draw=blue!70!black] (0,0) rectangle (0.4,0.3);
        \node[anchor=west, font=\tiny] at (0.5,0.15) {Prior};
        \draw[fill=green!10, draw=green!60!black] (1.8,0) rectangle (2.2,0.3);
        \node[anchor=west, font=\tiny] at (2.3,0.15) {Latent / Trees};
        \draw[fill=orange!10, draw=orange!70!black] (4.0,0) rectangle (4.4,0.3);
        \node[anchor=west, font=\tiny] at (4.5,0.15) {Weights};
        \draw[fill=red!10, draw=red!70!black] (6.0,0) rectangle (6.4,0.3);
        \node[anchor=west, font=\tiny] at (6.5,0.15) {Data};
    \end{scope}

\end{tikzpicture}
\caption{Dirichlet process mixture of trees model. The Dirichlet process generates $M$ i.i.d. tree-parameter pairs, which are combined with random mixture weights to form a sparse ensemble with a data-driven number of components. The inset illustrates a single tree component; the blue curve represents a draw $\tilde{g}$ from the GP posterior predictive used as the splitting rule at an internal node.}
\label{fig:model}
\end{figure}

% ======================================================================
% NEW SECTION: Special Cases (CART, BART, Random Forests, Boosting)
% ======================================================================

\section{Special Cases: CART, BART, Random Forests, and Boosting}
\label{sec:special_cases}

The Dirichlet process mixture of trees defined in equations (\ref{eq:mix1})–(\ref{eq:mix2}) encompasses many popular tree‐based methods as special or limiting cases. This section establishes those correspondences, demonstrating that the proposed framework provides a unified probabilistic foundation for tree ensembles.

\subsection{CART as a Single‐Tree Model}

A single Classification and Regression Tree (CART) model \cite{Breiman84} corresponds to a mixture with exactly one component, i.e., $k=1$ in (\ref{eq:mix2}). Within our framework, this arises in two ways. Setting the maximum number of trees $M=1$ trivially reduces the Dirichlet process mixture to a single draw $(\Theta_1,T_1)\sim G_0$, yielding the standard CART likelihood. More generally, for $M>1$, the event that all $M$ draws from the Dirichlet process are coincident has positive probability under the DP prior; in that case, the mixture collapses to a single component with weight one. Thus, the prior support includes all single‐tree models, and the posterior can concentrate on CART‐like hypotheses when the data favour a simple, non‐ensemble representation.

\subsection{BART as a Sum‐of‐Trees Model}

Bayesian Additive Regression Trees (BART) \cite{chipman2010bart} express the conditional mean as a deterministic sum of $m$ trees: $E[Y|X] = \sum_{j=1}^m g(X;\Theta_j,T_j)$. In contrast, the conditional mean of the DP mixture is a convex combination of random trees:
\[
E[Y|X,\theta] = \sum_{\ell=1}^k p_\ell^* E[Y|X,\Theta_\ell^*,T_\ell^*],
\]
with $p_\ell^*>0$, $\sum_\ell p_\ell^*=1$. This convexity ensures that the ensemble prediction lies within the range of the individual tree outputs, providing a natural form of regularisation absent from the additive BART formulation. Moreover, the mixture weights $p_\ell^*$ are inferred from the data, allowing the model to assign varying importance to different components, whereas BART implicitly weights each tree equally. The Dirichlet prior induces shrinkage on the weights, so that when the concentration parameter $\alpha$ is small, the posterior tends to concentrate on a sparse set of dominant trees.

A further distinction is that BART requires the user to fix the number of trees $m$ in advance, whereas the DP mixture treats the number of distinct components $k$ as random and adaptively learned from the data. The posterior over $k$ automatically adjusts to the complexity of the underlying regression function, ranging from a single tree for simple functions to many components for highly non‐linear surfaces. This adaptivity is a hallmark of Bayesian nonparametrics and provides a more honest quantification of model uncertainty.

Theoretically, BART attains near‐minimax posterior contraction rates when the true regression function belongs to a known Hölder class and the number of trees grows appropriately with $n$ (Ročková and van der Vaart, 2020). Our model, by contrast, establishes posterior consistency at rate $n^{-1/4}$ under only the assumption of continuity (Theorem~\ref{thm:main}), and it allows the true function to be outside the hypothesis space. While the rate is slower, it holds under much weaker conditions, making the DP mixture more robust to misspecification. The BART predictive mean can be recovered as a limiting case by designing the base measure $G_0$ to favour additive ensembles and by appropriately setting the weights, but the DP mixture is strictly more general in allowing the data to determine the appropriate ensemble structure.

\subsection{Random Forests as a Bagged Ensemble}

Random forests \cite{Breiman01} average the predictions of many trees grown on bootstrap samples, with random feature selection at each split. For a new input $X^*$, the random forest prediction is $\hat{Y}_{\text{RF}} = \frac{1}{M}\sum_{i=1}^M E[Y|X^*,\Theta_i,T_i]$, where the $(\Theta_i,T_i)$ are i.i.d. draws from the algorithm’s implicit randomisation distribution. This is exactly the conditional expectation of the DP mixture in (\ref{eq:mix3}) with uniform weights $p_i=1/M$ and distinct components $k=M$. In the limit $\alpha\to\infty$, the Dirichlet process assigns negligible probability to coincident draws, so $k\to M$ and the weights become approximately uniform. Thus, the random forest ensemble emerges as the large‐$\alpha$ limit of our model. The posterior distribution under the DP mixture, however, averages over both the number of components and their weights, which can improve upon the fixed, uniform averaging of standard random forests.

\subsection{Boosting as a Sequential Ensemble}

Boosting algorithms, such as AdaBoost and gradient boosting \cite{Freund97}, produce a weighted sum $F(x)=\sum_{j=1}^m \gamma_j g_j(x)$, where the $\gamma_j>0$ are learned sequentially. Although this is an additive rather than a convex combination, boosting can be embedded in the DP framework through a simple scaling transformation. Let $S=\sum_j \gamma_j$ and define $p_j=\gamma_j/S$ and $\tilde{g}_j(x)=S\,g_j(x)$. Then $F(x)=\sum_{j=1}^m p_j \tilde{g}_j(x)$ with $\sum_j p_j=1$, which is a convex combination of scaled trees. Since the base measure $G_0$ allows arbitrary leaf means, the scaled trees $\tilde{g}_j$ lie in the support of the prior, so a boosting‐like ensemble corresponds to a specific configuration of the mixture components and weights. Alternatively, one may enrich $G_0$ to place prior mass directly on additive tree ensembles, in which case the DP mixture becomes a mixture of additive models, with the posterior automatically learning the number of trees and their weights. In the limit of a large number of components ($\alpha\to\infty$), any weighted sum can be approximated to arbitrary precision by duplicating trees proportionally to their weights. Thus, boosting—like the other methods discussed—is encompassed by the flexibility of the DP mixture, without requiring the number of components or their weights to be pre‐specified.

\subsection{Summary}

The correspondences established above are summarised in Table~\ref{tab:special}. By encompassing CART, BART, random forests, and boosting as special or limiting cases, the Dirichlet process mixture of trees provides a unified Bayesian nonparametric framework that automatically adapts to the structure best supported by the data.

\begin{table}[h]
\centering
\begin{tabular}{l l}
\hline
\textbf{Method} & \textbf{Special case of the DP mixture} \\
\hline
CART & $M=1$, or all DP draws coincident (single component) \\
BART & Base measure $G_0$ supported on additive tree ensembles; appropriate mixture weights \\
Random Forest & $\alpha\to\infty$, $p_i\approx 1/M$, components i.i.d. from $G_0$ \\
Boosting & Scaled convex combination / additive base measure / large $\alpha$ limit \\
\hline
\end{tabular}
\caption{Special cases of the Dirichlet process tree mixture.}
\label{tab:special}
\end{table}

\section{A New Splitting Rule under the Base Measure Driven by Gaussian Process Regression}
\label{subsec:split}

In order to propose informative splits during the MCMC algorithm, we introduce a novel mechanism based on the posterior predictive distribution of a Gaussian process fitted to the data within each terminal node. This section provides a complete description of the GP model, the derivation of the posterior predictive distribution, the estimation of hyperparameters, and the resulting splitting rule. The approach is used as the proposal distribution for the GROW and PRUNE moves; as we shall see, the GP density cancels in the Metropolis–Hastings ratio when the prior on the tree is appropriately augmented.

\subsection{GP Model for a Terminal Node}

Consider a terminal node containing the data
\[
\{(y_{ij}, x_{ij}): j = 1, \ldots, n_i\}, \qquad x_{ij} \in \mathbb{R}^d,
\]
and denote
\[
\mathbf y_i = (y_{i1}, \ldots, y_{i n_i})^\top, \qquad
\mathbf x_i = (x_{i1}, \ldots, x_{i n_i})^\top \in \mathbb{R}^{n_i \times d}.
\]
Within this node, we model the relationship between the response and covariates as
\[
y_{ij} = g(x_{ij}) + \epsilon_{ij}, \qquad \epsilon_{ij} \overset{\mathrm{iid}}{\sim} \mathcal{N}(0, \sigma_\epsilon^2),
\]
where \(g(\cdot)\) is a Gaussian process. Specifically, we assume
\[
g \sim \mathcal{GP}\bigl(\mu(x),\; \sigma^2 c(x, x')\bigr),
\]
with mean function
\[
\mu(x) = \alpha + \boldsymbol{\beta}^\top x,
\]
and covariance function
\[
\operatorname{Cov}\bigl(g(x), g(x')\bigr) = \sigma^2 c(x, x'),
\]
where \(c(\cdot, \cdot)\) is a positive definite correlation function, typically chosen as the squared exponential
\[
c(x, x') = \exp\!\left(-\frac{\|x - x'\|^2}{2\ell^2}\right),
\]
with length scale parameter \(\ell > 0\). The parameters \(\sigma^2 > 0\) and \(\sigma_\epsilon^2 > 0\) are the GP signal variance and the noise variance, respectively.

For computational convenience, we adopt an improper flat prior for the coefficients of the mean function:
\[
\pi(\alpha, \boldsymbol{\beta}) \propto 1.
\]
The remaining parameters \(\boldsymbol{\psi} = (\sigma^2, \sigma_\epsilon^2, \ell)\) (or additional parameters if other correlation functions are used) are treated as unknown and estimated via maximum marginal likelihood, as described below. This choice keeps the posterior predictive distribution in closed form and is standard in GP regression with unknown mean (see, for instance, \cite{rasmussen2006gaussian}).

\subsection{Posterior Predictive Distribution}

Let \(\mathbf g_i = (g(x_{i1}), \ldots, g(x_{i n_i}))^\top\) be the vector of latent GP values at the observed covariates in the node. The joint distribution of \(\mathbf y_i\) and \(\mathbf g_i\) is Gaussian:
\[
\begin{pmatrix}
\mathbf y_i \\
\mathbf g_i
\end{pmatrix}
\sim
\mathcal{N}\!\left(
\begin{pmatrix}
\mathbf F_i \boldsymbol{\theta} \\
\mathbf F_i \boldsymbol{\theta}
\end{pmatrix},
\;
\begin{pmatrix}
\sigma^2 \mathbf C_i + \sigma_\epsilon^2 \mathbf I_{n_i} & \sigma^2 \mathbf C_i \\
\sigma^2 \mathbf C_i & \sigma^2 \mathbf C_i
\end{pmatrix}
\right),
\]
where
\[
\mathbf F_i =
\begin{bmatrix}
1 & x_{i1}^\top \\
\vdots & \vdots \\
1 & x_{i n_i}^\top
\end{bmatrix}
\in \mathbb{R}^{n_i \times (d+1)}, \qquad
\boldsymbol{\theta} = (\alpha, \boldsymbol{\beta}^\top)^\top,
\]
and \(\mathbf C_i = [c(x_{ij}, x_{ik})]_{j,k=1}^{n_i}\) is the correlation matrix. The observation covariance matrix is
\[
\boldsymbol{\Sigma}_i = \sigma^2 \mathbf C_i + \sigma_\epsilon^2 \mathbf I_{n_i}.
\]

Given the data \(\mathbf y_i\) and the hyperparameters \(\boldsymbol{\psi}\), the posterior distribution of the latent GP values \(\mathbf g_i\) is Gaussian:
\[
\mathbf g_i \mid \mathbf y_i, \mathbf x_i, \boldsymbol{\psi} \sim \mathcal{N}(\boldsymbol{\mu}_i^*, \boldsymbol{\Sigma}_i^*),
\]
where, using standard GP conditioning (see Appendix~\ref{app:gp-split} for derivation),
\[
\boldsymbol{\mu}_i^* = \mathbf F_i \widehat{\boldsymbol{\theta}}_i + \sigma^2 \mathbf C_i^\top \boldsymbol{\Sigma}_i^{-1} \bigl(\mathbf y_i - \mathbf F_i \widehat{\boldsymbol{\theta}}_i\bigr),
\]
\[
\boldsymbol{\Sigma}_i^* = \sigma^2 \mathbf C_i - \sigma^2 \mathbf C_i^\top \boldsymbol{\Sigma}_i^{-1} \sigma^2 \mathbf C_i
+ \bigl(\mathbf F_i - \sigma^2 \mathbf C_i^\top \boldsymbol{\Sigma}_i^{-1} \mathbf F_i\bigr)
\mathbf V_i
\bigl(\mathbf F_i - \sigma^2 \mathbf C_i^\top \boldsymbol{\Sigma}_i^{-1} \mathbf F_i\bigr)^\top,
\]
with
\[
\widehat{\boldsymbol{\theta}}_i = \bigl(\mathbf F_i^\top \boldsymbol{\Sigma}_i^{-1} \mathbf F_i\bigr)^{-1} \mathbf F_i^\top \boldsymbol{\Sigma}_i^{-1} \mathbf y_i,
\qquad
\mathbf V_i = \bigl(\mathbf F_i^\top \boldsymbol{\Sigma}_i^{-1} \mathbf F_i\bigr)^{-1}.
\]
These formulas account for the uncertainty in the linear mean parameters via the term involving \(\mathbf V_i\).

For a new input \(x \in \mathbb{R}^d\), the posterior predictive distribution of \(g(x)\) is also Gaussian, with mean and variance
\[
\mathbb{E}[g(x) \mid \mathbf y_i, \mathbf x_i, \boldsymbol{\psi}] =
\mathbf f(x)^\top \widehat{\boldsymbol{\theta}}_i + \sigma^2 \mathbf c_i(x)^\top \boldsymbol{\Sigma}_i^{-1} \bigl(\mathbf y_i - \mathbf F_i \widehat{\boldsymbol{\theta}}_i\bigr),
\]
\[
\operatorname{Var}[g(x) \mid \mathbf y_i, \mathbf x_i, \boldsymbol{\psi}] =
\sigma^2 - \sigma^2 \mathbf c_i(x)^\top \boldsymbol{\Sigma}_i^{-1} \sigma^2 \mathbf c_i(x)
+ \bigl(\mathbf f(x) - \sigma^2 \mathbf c_i(x)^\top \boldsymbol{\Sigma}_i^{-1} \mathbf F_i\bigr)
\mathbf V_i
\bigl(\mathbf f(x) - \sigma^2 \mathbf c_i(x)^\top \boldsymbol{\Sigma}_i^{-1} \mathbf F_i\bigr)^\top,
\]
where \(\mathbf f(x) = (1, x^\top)^\top\) and \(\mathbf c_i(x) = (c(x, x_{i1}), \ldots, c(x, x_{i n_i}))^\top\). These expressions are used both for generating the split function and for the predictive Gibbs sampler described in Section~\ref{sec:prediction}.

\subsection{Hyperparameter Estimation via Maximum Marginal Likelihood}

The hyperparameters \(\boldsymbol{\psi} = (\sigma^2, \sigma_\epsilon^2, \ell)\) are estimated by maximizing the marginal likelihood
\[
p(\mathbf y_i \mid \mathbf x_i, \boldsymbol{\psi}) =
\int p(\mathbf y_i \mid \mathbf g_i, \boldsymbol{\psi}) p(\mathbf g_i \mid \boldsymbol{\psi}) \, d\mathbf g_i,
\]
which, under the GP model with a flat prior on the linear mean, is given by
\[
\log p(\mathbf y_i \mid \mathbf x_i, \boldsymbol{\psi}) =
-\frac{1}{2} \mathbf y_i^\top \boldsymbol{\Sigma}_i^{-1} \mathbf y_i
-\frac{1}{2} \log |\boldsymbol{\Sigma}_i|
-\frac{1}{2} \log |\mathbf F_i^\top \boldsymbol{\Sigma}_i^{-1} \mathbf F_i|
-\frac{n_i - d - 1}{2} \log(2\pi).
\]
This expression is derived in Appendix~\ref{app:gp-split}. The maximization may be performed using a numerical optimizer (for example, L-BFGS) with suitable bounds on the parameters. In practice, for the squared exponential kernel, we reparameterise the length scale on the log scale to ensure positivity. The optimization is done independently for each node, allowing the GP to adapt locally to the data in that node. This local adaptation is a key feature of the splitting rule.

\subsection{The Splitting Rule}

Once the hyperparameters \(\boldsymbol{\psi}\) have been estimated, we draw a sample \(\tilde{\mathbf g}_i = (\tilde g(x_{i1}), \ldots, \tilde g(x_{i n_i}))^\top\) from the posterior distribution \(\mathcal{N}(\boldsymbol{\mu}_i^*, \boldsymbol{\Sigma}_i^*)\). This draw represents a possible realisation of the latent GP function evaluated at the node's covariates. The split of the node is then defined by the classification rule:
\[
\text{observation } j \text{ is sent to the left child if } y_{ij} < \tilde g(x_{ij}), \quad \text{otherwise to the right child.}
\]
This rule creates a binary partition of the data in the node. Because the split depends on the responses \(y_{ij}\) as well as the covariates, it can adapt to the local relationship between \(x\) and \(y\). The resulting tree structure incorporates the drawn GP values at each internal node; these values are stored as part of the tree state.

The GP draw \(\tilde{\mathbf g}_i\) is used only for the purpose of generating a split proposal. In the GROW move, the proposal kernel selects the terminal node, draws \(\tilde{\mathbf g}_i\), and applies the above rule to create two children. The density of this draw is \(\phi(\tilde{\mathbf g}_i; \boldsymbol{\mu}_i^*, \boldsymbol{\Sigma}_i^*)\), where \(\phi\) denotes the multivariate normal density. In the PRUNE move, the reverse proposal must regenerate the same GP draw; hence the reverse kernel includes the same density. As shown in Section~\ref{subsec:db_grow_prune}, when the tree prior is augmented with the GP density at each internal node, these proposal densities cancel exactly in the Metropolis–Hastings ratio, ensuring that the MCMC algorithm remains exact and efficient.

\subsection{Computational Considerations}

The GP computations for each node involve the inversion of the \(n_i \times n_i\) matrix \(\boldsymbol{\Sigma}_i\), which costs \(O(n_i^3)\) operations and requires \(O(n_i^2)\) storage. However, in practice, the node sizes decrease rapidly as the tree grows, and the total computational cost across all nodes is manageable. For nodes with very large \(n_i\), one could resort to sparse GP approximations or inducing point methods, though we have not found this necessary in our simulations. Moreover, the Cholesky factors of \(\boldsymbol{\Sigma}_i\) and the quantities \(\boldsymbol{\Sigma}_i^{-1} \mathbf F_i\) and \(\boldsymbol{\Sigma}_i^{-1} \mathbf y_i\) are computed once per node and stored, enabling fast evaluation of the predictive mean and variance when needed for the GROW move and for the predictive Gibbs sampler.

The resulting splitting rule is both flexible and computationally feasible, and as we have shown, it integrates seamlessly into the MCMC framework without compromising the validity of the sampler. Its empirical performance, detailed in Section~\ref{sec:simulation}, confirms its effectiveness in a wide range of regression settings.

\subsection{The Complete Prior on the Tree}
\label{subsec:complete_prior_tree}

%\subsection{Splitting Probability}\label{subsec:split_prob}
Let $p(\eta,\bT)=\gamma(1+d_{\eta})^{-\zeta}$, where $0<\gamma<1$, $\zeta\geq 0$ and $d_{\eta}$ is the depth of the node $\eta$, that is, the number of splits above $\eta$. Note that for the root node $\eta_1$, $p(\eta_1,\bT)=\gamma$, since $d_{\eta_1}=0$.
%A distinctive feature of the GP splitting rule is that the GP density cancels exactly in the Metropolis--Hastings acceptance ratio for the GROW and PRUNE moves. 

With the GP splitting structure, it is essential to recognise that the prior distribution on a tree must include the density of the GP splitting rules at its internal nodes. Specifically, let $\Pi_{\mathrm{struct}}(\mathbf T)$ denote the prior probability of the tree skeleton based solely on the splitting probabilities $p(\eta,\mathbf T)$ as defined 
above. For a tree $\mathbf T$, let $\mathcal{I}(\mathbf T)$ be the set of its internal nodes. For each internal node $v \in \mathcal{I}(\mathbf T)$, let $\tilde g_v$ denote the stored GP draw that defines the split at that node, and let $\pi_{\mathrm{GP}}(\tilde g_v \mid \mathbf y_v, \mathbf x_v)$ be the corresponding GP posterior predictive density evaluated at the data in node $v$. The complete prior on the tree is then defined as
\[
\Pi(\mathbf T) = \Pi_{\mathrm{struct}}(\mathbf T) \prod_{v \in \mathcal{I}(\mathbf T)} \pi_{\mathrm{GP}}(\tilde g_v \mid \mathbf y_v, \mathbf x_v).
\]
This augmented prior is a natural extension of the usual tree prior, as the splitting rule itself is a random component of the tree and must be assigned a prior distribution.

\section{Advantages of the Gaussian Process Splitting Rule}
\label{sec:gp_advantages}

The preceding section introduced a novel splitting rule driven by the posterior predictive distribution of a Gaussian process fitted to the data within each terminal node. In this section we discuss the principal advantages of this rule, contrasting it with traditional axis-aligned splits and highlighting its contributions to both practical performance and theoretical guarantees.

\subsection{Mechanism of the Splitting Rule}

For completeness, we briefly recall the mechanism. At a terminal node containing observations $\{(y_{ij}, x_{ij}): j = 1, \ldots, n_i\}$, we posit the Gaussian process model $y_{ij} = g(x_{ij}) + \epsilon_{ij}$ with $\epsilon_{ij} \sim \mathcal{N}(0, \sigma_\epsilon^2)$ and $g$ drawn from a GP with mean $\mu(x) = \alpha + \boldsymbol{\beta}'x$ and covariance $\operatorname{Cov}(g(x_1), g(x_2)) = \sigma^2 c(x_1, x_2)$, where $c(\cdot,\cdot)$ is a positive definite correlation function. A draw $\tilde g$ from the posterior predictive distribution $\pi(g \mid \mathbf{y}_i, \mathbf{x}_i)$ defines a split: observation $j$ is assigned to the left child if $y_{ij} < \tilde g(x_{ij})$, and to the right otherwise. The GP hyperparameters are estimated via maximum marginal likelihood, whose closed form is given in Appendix~\ref{app:gp-split}. This mechanism yields several important benefits over conventional tree partitions.

\subsection{Flexible Nonlinear Decision Boundaries}

Axis-aligned splits of the form $X_j < c$ generate partitions aligned with coordinate axes and lead to piecewise constant fits. Such splits are inefficient for functions with interactions or smooth, oblique structure, often requiring deep trees to achieve reasonable approximation. The GP splitting rule, by contrast, induces smooth, nonlinear, and potentially oblique boundaries through the function $\tilde g$. Since the GP prior can draw sample paths that vary in all input dimensions simultaneously, the resulting split can capture complex interactions between covariates with far shallower trees. For example, a function such as $m(x_1, x_2) = \sin(x_1 x_2)$ can be approximated much more efficiently by a single GP-based split than by a large ensemble of axis-aligned splits. This increased flexibility is particularly valuable when the true regression function is smooth but not piecewise constant, as it allows the model to represent the underlying structure parsimoniously.

\subsection{Explicit Use of the Response in Splitting}

Traditional splitting criteria depend only on the covariates $x$; the response $y$ enters only through the impurity measure used to evaluate candidate splits. In our framework, the decision $y < \tilde g(x)$ depends directly on the response variable. This design enables the tree to adapt to the distribution of the response within each node, facilitating the modeling of heteroscedasticity and complex conditional distributions. When the variance of $Y$ given $X$ varies across the covariate space, or when the conditional distribution is multimodal, the response-dependent splits can effectively separate regions with different error characteristics. This is a significant departure from standard tree methods, which assume homoscedasticity within each leaf.

\subsection{Smoothness and Regularization}

The GP prior imposes a penalty on the roughness of the splitting function through its covariance kernel. The squared exponential kernel, used in our implementation, implies that sample paths are infinitely differentiable, so the splitting boundary varies smoothly over the input space. This smoothness provides natural regularization, reducing the propensity to overfit local noise and ensuring that the partition changes gradually. The length scale parameter of the kernel controls the degree of smoothness and can be estimated from the data at each node, enabling adaptive regularization tailored to the local complexity. As a result, the splits are robust to small perturbations and yield stable predictions, especially in regions with sparse data.

\subsection{Simplification of the MCMC Acceptance Ratio}

A distinctive feature of the GP splitting rule is that the GP density cancels exactly in the Metropolis--Hastings acceptance ratio for the GROW and PRUNE moves, crucially
defined for the MCMC purpose in Section \ref{sec:sim_Gm}.

This cancellation is of paramount practical importance. It implies that the MCMC acceptance ratio for the structural moves does not depend on the GP hyperparameters or on the numerical values of the GP density. The GP is employed purely as a generative device to propose splits that are informed by the local data structure, but its density need not be evaluated in the acceptance step. Consequently, the algorithm avoids the computational expense of repeatedly computing GP likelihoods during the Metropolis--Hastings step, which would otherwise render the sampler impractically slow. The augmented prior formulation also ensures that the Markov chain satisfies detailed balance, as the GP densities are properly accounted for in the prior and cancel exactly with the corresponding proposal densities. Thus the GP splitting rule enhances the flexibility of the tree proposals without introducing any additional burden on the MCMC acceptance computation, a combination that is essential for the scalability and efficiency of the overall sampling scheme.

\subsection{Exact Bayesian Prediction}

The GP splitting rule also enables a straightforward and exact Gibbs sampler for posterior predictive inference. At each internal node, the GP draw $\tilde g$ and the associated Cholesky factors of the noiseless covariance matrix are stored as part of the tree state. For a new covariate $x^*$, the conditional distribution of the GP value $g(x^*)$ given the training GP values $\tilde g$ in that node is Gaussian, with mean and variance given by the standard GP conditioning formulas:

\[
g(x^*) \mid \tilde g, x^* \sim \mathcal{N}\left(\mathbf{k}_*^\top \mathbf{K}_{\text{noiseless}}^{-1} \tilde g,\; \sigma^2 - \mathbf{k}_*^\top \mathbf{K}_{\text{noiseless}}^{-1} \mathbf{k}_*\right),
\]

where $\mathbf{k}_*$ is the vector of covariances between $x^*$ and the training points in the node, and $\mathbf{K}_{\text{noiseless}}$ is the noiseless kernel matrix on those points. This closed-form conditional distribution allows a Gibbs sampler that alternates between sampling a leaf parameter $(\mu_\ell, \sigma_\ell^2)$ from its conjugate posterior, drawing $y^*$ from the corresponding Gaussian, and then determining the leaf assignment by traversing the tree using new GP draws at each internal node. Since the Cholesky factors and the vectors $\boldsymbol{\alpha} = \mathbf{K}_{\text{noiseless}}^{-1} \tilde g$ are precomputed, each traversal requires only forward substitution and generation of normal variates. The resulting scheme delivers exact draws from the posterior predictive distribution of the model, with the leaf assignment treated as a latent random variable—thus providing genuine uncertainty quantification that accounts for the unknown path through the tree.

\subsection{Local Adaptation and Multi-Resolution Structure}

Because the GP at each internal node is fitted exclusively to the data in that node, the smoothness parameters—especially the length scale—can vary across the tree. Coarse structures are captured near the root with larger length scales, while finer details are resolved in deeper nodes with smaller length scales. This yields a natural multi-resolution decomposition of the regression function, akin to wavelet methods but with adaptively chosen partitions driven by the data. This local adaptation is particularly effective when the true function has regions of varying curvature, as the model can allocate its complexity where it is most needed.

\subsection{Theoretical Contribution to Posterior Consistency}

The smoothness assumptions on the GP kernel are instrumental in the proof of posterior consistency in Section~\ref{sec:post_contraction}. They ensure that the class of possible partitions is sufficiently rich to approximate any continuous function, as established in Lemma~\ref{lem:approx}, while simultaneously controlling the metric entropy of the sieve, as shown in Section~\ref{sec:sieve}. The GP splitting rule thus contributes to both the Kullback--Leibler property and the entropy bound, which are the two critical ingredients for the application of the general theory of \cite{ggv2000}. The contraction rate $n^{-1/4}$ is obtained under only the assumption of continuity of the true regression function, a testament to the flexibility afforded by the GP-generated splits.

\subsection{Computational Efficiency}

Despite the apparent complexity of the GP machinery, the algorithm remains computationally feasible. The Cholesky factors and sufficient statistics required for GP prediction can be computed once per node and stored. During MCMC, the GROW and PRUNE moves require updating the GP for the newly created node or the collapsed parent, which involves operations of order $O(n_i^3)$ for a node with $n_i$ observations. Since the node sizes decrease rapidly with tree depth, the overall cost is well controlled. 
%In practice, the runtime of the proposed algorithm is comparable to that of BART, as demonstrated in the simulation study of Section~\ref{sec:simulation}. 
The precomputation of GP factors also enables the exact predictive Gibbs sampler described above, which adds negligible overhead.

%\subsection{Connection to Deep Learning and Differentiable Trees}

%The GP splitting rule can be interpreted as a continuous relaxation of traditional decision trees, where the split boundaries are random draws from a Gaussian process rather than deterministic thresholds. This perspective opens connections to recent work on differentiable decision trees and neural decision forests, as well as to Bayesian neural networks with ReLU activations. Indeed, the single hidden layer neural network emerges as a special case of the Dirichlet process tree mixture when the components are appropriately configured, as noted in Section~\ref{sec:special_cases}. The GP splitting rule thus bridges the worlds of tree-based models and deep learning, offering a principled Bayesian nonparametric framework that retains the interpretability of trees while enjoying the expressive power of continuous representations.

In summary, the Gaussian process splitting rule provides a powerful and theoretically grounded mechanism for constructing tree partitions. Its ability to generate flexible, smooth, response-dependent splits, combined with its computational tractability and exact Bayesian predictive capabilities, makes it a key innovation of the proposed Dirichlet process tree mixture model. The cancellation of the GP density in the MCMC acceptance ratio ensures that the algorithm remains efficient, while the local adaptation and theoretical guarantees reinforce the model's practical and conceptual appeal.

% ----------------------------------------------------------------------
% The Details (Section 3)
% ----------------------------------------------------------------------
\section{The Marginalised Likelihood}\label{sec:details}
Following the notation of \cite{chipman1998bayesian}, for $j=1,\ldots,n_i$ and $i=1,\ldots,b$, let $y_{ij}$ denote the $j$-th observation of $y$ in the $i$-th partition, corresponding to the $i$-th terminal node. Let $\bY=(\by_1,\ldots,\by_b)$, where $\by_i=(y_{i1},\ldots,y_{in_i})'$. As in \cite{chipman1998bayesian}, we define $\bX$ and $\bx_i$ analogously. We assume that $x_{ij}\in\mathbb R^d$, for $d\geq 1$. Thus,
\begin{equation}
    [\bY|\bX,\bTheta,\bT]=\prod_{i=1}^b\prod_{j=1}^{n_i}f(y_{ij}|\btheta_i).
    \label{eq:g0_1}
\end{equation}
Under regression trees we consider:
\begin{equation}
    [y_{i1},\ldots,y_{in_i}|\btheta_i]\sim N(\mu_i,\sigma^2_i);~i=1,\ldots,b,
    \label{eq:rt_1}
\end{equation}
with $\btheta_i=(\mu_i,\sigma^2_i)$. For classification trees, we consider, as in \cite{chipman1998bayesian}, the following multinomial model:
\begin{equation}
    f(y_{i1},\ldots,y_{in_i}|\btheta_i)=\prod_{j=1}^{n_i}\prod_{k=1}^Kp^{I(y_{ij}\in C_k)}_{ik};~i=1,\ldots,b,
    \label{eq:ct_1}
\end{equation}
where $C_1,\ldots,C_K$ are the $K$ categories containing $y_{i1},\ldots,y_{in_i}$ and $I(y_{ij}\in C_k)$ is $1$ if $y_{ij}\in C_k$ and $0$ otherwise. Here $\btheta_i=(p_{i1},\ldots,p_{iK})'$ with $p_{ik}\geq 0$ and $\sum_{k=1}^Kp_{ik}=1$. Clearly, $p_{ik}=P(y_{ij}\in C_k|\btheta_i)$.

For $i=1,\ldots,b$, let, following \cite{chipman1998bayesian},
\begin{align}
    &[\mu_i|\sigma^2_i]\sim N(\bar\mu,\sigma^2_i/a);\label{eq:mu_prior}\\
    &[\sigma^2_i]\sim \mbox{IG}(\nu/2,\nu\lambda/2),\label{eq:sigma_prior}
\end{align}
where $\mbox{IG}(\alpha,\beta)$ is the inverse-gamma prior with density $f(x|\alpha,\beta)=\frac{\beta^{\alpha}}{\Gamma(\alpha)}(1/x)^{\alpha+1}\exp(-\beta/x)$, with shape and scale parameters $\alpha$ and $\beta$, respectively; $\Gamma(\cdot)$ stands for the gamma function. In our case, note that (\ref{eq:sigma_prior}) is equivalent to $\nu\lambda/\sigma^2_i\sim\chi^2_{\nu}$. Integrating out $\btheta_i$ yields
\begin{align}
	[\bY|\bX,\bT]=\prod_{i=1}^b(2\pi)^{-n_i/2}(\lambda\nu)^{\nu/2}\frac{\sqrt{a}}{\sqrt{n_i+a}}\frac{\Gamma((n_i+\nu)/2)}{\Gamma(\nu/2)}
    (s_i+t_i+\nu\lambda)^{-(n_i+\nu)/2},\label{eq:marg_1}
\end{align}
where $s_i$ is $(n_i-1)$ times the sample variance of the $\by_i$ values and $t_i=[n_ia/(n_i+a)](\bar y_i-\bar\mu)^2$; $\bar y_i$ being the average of $\by_i$.

For classification trees, we assume, following \cite{chipman1998bayesian}, that
\begin{align}
    [p_{i1},\ldots,p_{iK}|\bT]\propto\prod_{k=1}^Kp^{\alpha_k-1}_{ik};~i=1,\ldots,b.
    \label{eq:ct_prior}
\end{align}
With this, integrating out $\btheta_i$ we obtain
\begin{equation}
    [\bY|\bX,\bT]=\left(\frac{\Gamma(\sum_{k=1}^K\alpha_k)}{\prod_{k=1}^K\Gamma(\alpha_k)}\right)^b
    \prod_{i=1}^b\frac{\prod_{k=1}^K\Gamma(n_{ik}+\alpha_k)}{\Gamma(n_i+\sum_{k=1}^K\alpha_k)}.
    \label{eq:marg_2}
\end{equation}
In this work, however, we shall exclusively concern ourselves with regression trees, reserving classification trees for our future endeavour.

\section{DP Prior and the Full Conditionals for MCMC Sampling}\label{sec:dp}
It follows from the DP prior that the full conditional of $\bT_m$, given $\bT_{-m}=\{\bT_1,\ldots,\bT_{m-1},\bT_{m+1},\ldots,\bT_M\}$ and $\bY$, is as follows: if $Z=m$, then
\begin{align}
    [\bT_m|\bT_{-m},Z=m,\bY]\propto\left[\alpha\bG_0(\bT_m)f(\bY|\bX,\bT_m)
    +\sum_{\ell=1}^{k_m}M_{m\ell}f(\bY|\bX,\bT^*_{m\ell})\delta_{\bT^*_{m\ell}}(\bT_m)\right];
    \label{eq:fullcond1}
\end{align}
else, if $Z\neq m$, then
\begin{align}
    [\bT_m|\bT_{-m},Z\neq m,\bY]\propto\left[\alpha\bG_0(\bT_m)
    +\sum_{\ell=1}^{k_m}M_{m\ell}\delta_{\bT^*_{m\ell}}(\bT_m)\right],
    \label{eq:fullcond2}
\end{align}
where $\sum_{\ell=1}^{k_m}M_{m\ell}=M-1$. Note that here $k_m\geq 1$, is random; in fact, $k_m$ is the number of distinct elements ${\mathfrak T^*}_{-m}=\{\bT^*_{m1},\ldots,\bT^*_{mk_m}\}$ in ${\mathfrak T}_{-m}=\{\bT_1,\ldots,\bT_{m-1},\bT_{m+1},\ldots,\bT_M\}$.

\subsection{Reparameterization Using Configuration Indicators and Associated Full Conditionals}\label{subsec:reparameterization}
As before we define $Z=m$ if $\bY$ is from the $m$-th component. Letting $\mathfrak T^*=\{\bT^*_1,\ldots,\bT^*_k\}$ denote the distinct components in $\mathfrak T=\{\bT_1,\ldots,\bT_M\}$, the element $c_j$ of the configuration vector $\bC=(c_1,\ldots,c_M)'$ is defined as $c_m=\ell$ if and only if $\bT_m=\bT^*_{\ell}$; $m=1,\ldots,M$, $\ell=1,\ldots,k$. Thus, $(Z,\mathfrak T)$ is reparameterized to $(Z,\bC,k,\mathfrak T^*)$, $k$ denoting the number of distinct components in $\mathfrak T$.

The full conditional distribution of $Z$ is given by
\begin{equation}
    [Z=m\mid \bY,\bX,\bC,k,\mathfrak T^*]\propto p_mf(\bY|\bX,\bT_m)
\label{eq:sb_full_cond_z}
\end{equation}
Since $\mathfrak T$ can be obtained from $\bC$ and $\mathfrak T^*$, we represented the right hand side of (\ref{eq:sb_full_cond_z}) in terms of $\mathfrak T$.

Then the conditional distribution of $c_m$ is given by
\begin{equation}
    [c_m=\ell\mid \bY,\bX,Z,\bC_{-m},k_m,\mathfrak T^*_{-m}]=\left\{\begin{array}{c}\kappa q^*_{\ell m}\hspace{2mm}\mbox{if}\hspace{2mm}\ell=1,\ldots,k_m\\ 
    \kappa q_{0m}\hspace{2mm}\mbox{if}\hspace{2mm}\ell=k_m+1\end{array}\right.
    \label{eq:config_fullcond_sb}
\end{equation}
where, if $Z=m$,
\begin{align}
    &q_{0m}=\alpha\int f(\bY|\bX,\bT)d\bG_0(\bT);
    \label{eq:q0_1}\\
    &q^*_{\ell m}=M_{m\ell} f(\bY|\bX,\bT^*_{ml}),  
    \label{eq:sb_q_star_1}
\end{align}
and if $Z\neq m$,
\begin{align}
    &q_{0m}=\alpha;
    \label{eq:q0_2}\\
    &q^*_{\ell m}=M_{m\ell}.  
    \label{eq:sb_q_star_2}
\end{align}
In (\ref{eq:config_fullcond_sb}), $\kappa$ is the normalizing constant. Note that $q_{0j}$ is the normalizing constant of the distribution $\bG_j(\bT)\propto f(\bY|\bX,\bT)\bG_0(\bT)$.

Now, for some $m\in\{1,\ldots,M\}$, let $Z=m$ and $c_m=\ell$ for some $\ell\in\{1,\ldots,k\}$. Then
\begin{equation}
    [\bT^*_{\ell}|\bY,\bX,Z,\bC]\propto f(\bY|\bX,\bT^*_{\ell})\bG_0(\bT^*_{\ell}).
    \label{eq:T_fullcond_1}
\end{equation}
For $r\neq\ell$, we have
\begin{equation}
    [\bT^*_r|\bY,\bX,Z,\bC]=\bG_0(\bT^*_r).
    \label{eq:T_fullcond_2}
\end{equation}

For our MCMC, we first update $Z$, followed by updating $\bC$ and the number of distinct components $k$; then $\{\bT^*_{\ell};\ell=1,\ldots,k\}$, and finally
 $(p_1,\dots,p_M) \sim \text{Dirichlet}(1+n_1,\dots,1+n_M)$ where $n_m = \mathbf{1}_{\{Z=m\}}$ for $m=1,\dots,M$.

\subsection[Dealing with non-conjugacy of G0 when Z=m]{Dealing with Non-Conjugacy of $\bG_0$ When $Z=m$}
\label{subsec:sb_non_conjugate}
In our case, since $\bG_0$ is non-conjugate to the likelihood, $q_{0j}$ is not available in closed form. Hence, following \cite{Sabya12} we bring in auxiliary variables in a way similar to that of Algorithm 8 in \cite{Neal00}. To clarify, let $\bT^a$ denote an auxiliary variable (the superscript ``$a$" stands for auxiliary). Then given $Z=m$, before updating $c_m$ we first simulate from the full conditional distribution of $\bT^a$ given the current $c_m$ and the rest of the variables as follows: if $c_m=c_{\ell}$ for some $\ell\neq m$, then $\bT^a\sim \bG_0$. If, on the other hand, $c_m\neq c_{\ell}$ for all $\ell\neq m$, then we set $\bT^a=\bT^*_{c_m}$. Once $\bT^a$ is obtained we then replace the intractable $q_{0m}$ with the tractable expression
\begin{equation}
    q^a_m= \alpha f(\bY|\bX,\bT^a)
    \label{eq:sb_nonconjugate_form}
\end{equation}
Once $c_m$ is simulated, if it is observed that $\bT_m\neq\bT^a$ for all $m$, then $\bT^a$ is discarded.

On the other hand, if $Z\neq m$, then the simulation of $c_m$ would proceed via (\ref{eq:q0_2}) and (\ref{eq:sb_q_star_2}).

\subsection[Relabeling C]{Relabeling $\bC$}
\label{subsec:relabeling}
As explained in \cite{Sabya12}, simulation of $\bC$ by successively simulating from the full conditional distributions (\ref{eq:config_fullcond_sb}) incurs a labeling problem. For an example, suppose that $\mathfrak T^*$ consists of $M$ distinct elements, so that while simulating $c_m$ using (\ref{eq:config_fullcond_sb}), we have $k_m=M-1$. Now suppose that we obtain, by simulation $c_m=M$ for $m=1,\ldots,M$. This of course implies that all the simulated components of $\bC$ correspond to distinct $\bT^*_1,\ldots,\bT^*_M$. However, since the simulated values of the components of $\bC$ are $c_1=\cdots=c_M=M$, this equality of all the components gives the impression that $\bC$ corresponds to a single distinct component.

Following \cite{Sabya12}, we relabel $\bC$ using the relabeled version $\bS=(s_1,\ldots,s_M)$ in the following way. We first simulate $c_m$ from (\ref{eq:config_fullcond_sb}); if $c_m\in\{1,\ldots,k_m\}$, then we set $\bT_m=\bT^*_{c_m}$ and if $c_m=k_m+1$, we draw $\bT_m=\bT^*_{c_m}\sim \bG_m$ if $Z=m$; else we draw $\bT_m=\bT^*_{c_m}\sim \bG_0$. The elements of $\bS$ are obtained from the following definition of $s_j$: $s_m=\ell$ if and only if $\bT_m=\bT^*_{\ell}$. Note that $s_1=1$ and 
$1\leq s_m\leq \underset{i<m}{\max}~ s_{i}+1$.

\section{Simulation from \texorpdfstring{$\bG_m$}{G_m}}
\label{sec:sim_Gm}

The full conditional distribution of a distinct tree component, denoted by $\bG_m$, plays a central role in the MCMC sampler outlined in Section~\ref{sec:dp}. Recall from equation (\ref{eq:T_fullcond_1}) that for an active component (that is, one associated with the response allocation variable $Z$), the conditional density is proportional to
\[
\bG_m(\bT) \propto f(\bY \mid \bX, \bT) \, \bG_0(\bT),
\]
where $\bG_0$ is the base measure of the Dirichlet process. This distribution is intractable for direct simulation due to the non-conjugate relationship between the tree-structured likelihood and the prior over tree topologies and splitting rules. Moreover, the space of possible trees is discrete and combinatorially rich, making standard sampling techniques inapplicable.

To overcome this challenge, we construct a Markov chain targeting $\bG_m$ by designing a set of reversible Metropolis--Hastings moves that operate directly on the tree space. A crucial distinction between our approach and traditional Bayesian tree models, such as those of Chipman, George and McCulloch \cite{chipman1998bayesian} for CART or \cite{chipman2010bart} for BART, arises from the nature of the splitting rule. In standard Bayesian CART and BART, the splitting rule at an internal node is a simple threshold of the form $X_j < c$, where both the covariate $j$ and the threshold $c$ are discrete parameters with natural proposal mechanisms. Consequently, those models employ four elementary move types: GROW, PRUNE, CHANGE, and SWAP. The GROW and PRUNE moves alter the tree depth, while the CHANGE and SWAP moves modify existing splitting rules without changing the number of terminal nodes, thereby facilitating local exploration of the partition structure.

In our setting, however, the splitting rule is fundamentally different. At each internal node, the split is determined by a draw $\tilde g$ from the posterior predictive distribution of a Gaussian process fitted to the data in that node, following the framework of \cite{rasmussen2006gaussian}. As established in Section~\ref{subsec:split}, these GP draws are part of the tree state, and their density is incorporated into the augmented prior
\[
\Pi(\mathbf T) = \Pi_{\mathrm{struct}}(\mathbf T) \prod_{v \in \mathcal{I}(\mathbf T)} \phi(\tilde{\mathbf g}_v; \boldsymbol{\mu}_v^*, \boldsymbol{\Sigma}_v^*),
\]
where $\phi$ denotes the multivariate normal density of the GP posterior predictive. For the GROW and PRUNE moves, this augmented prior yields a remarkable simplification: the GP density cancels exactly in the Metropolis--Hastings ratio, as demonstrated in Section~\ref{subsec:db_grow_prune}. This cancellation is possible because the GROW move proposes a new GP draw for the newly created internal node, and the PRUNE move removes that draw; the proposal and prior densities are perfectly balanced.

For the CHANGE move, however, such a cancellation does not occur, and the computational complications are prohibitive. A CHANGE move would require altering the GP draw at an existing internal node while keeping the rest of the tree structure fixed. But changing the split at an internal node fundamentally alters the allocation of data to its left and right subtrees. Consequently, the entire subtree beneath that node must be completely regenerated: at every descendant internal node, new GP draws must be sampled from the posterior predictive distributions corresponding to the newly allocated data in those nodes. The prior would include the new GP densities for all affected internal nodes, and the proposal would need to generate them from their respective posterior predictive distributions, leading to a ratio of a chain of GP densities that does not simplify. The reverse move would require regenerating the original GP draws at every node in the subtree, which is not a simple deterministic operation and would involve a complex multidimensional proposal with an intractable Jacobian. Moreover, the computational cost of repeatedly recomputing GP posterior predictives for entire subtrees would be prohibitive, as each internal node would require a fresh GP fit and Cholesky decomposition. Thus, the CHANGE move is not viable in our GP-split setting. The SWAP move, which would exchange the GP draws between a parent and a child node, is equally invalid: since the GP draws are tailored to the data in their respective nodes through the node-specific posterior predictive distributions, swapping them would produce a tree whose splitting rules are inconsistent with the data in each node, rendering the move physically meaningless.

One might naturally question whether restricting the MCMC to only GROW and PRUNE moves is sufficient for adequate mixing of the chain. After all, standard BART relies on all four move types to explore the tree space effectively. The answer lies in the fundamentally different structure of our Dirichlet process mixture model, which builds upon the Dirichlet process framework introduced by \cite{Bhattacharya08} and the MCMC methods for such mixtures developed by \cite{Neal00}. In BART \cite{chipman2010bart}, the regression function is a deterministic sum of a fixed number of trees, and each tree must be updated individually while the others remain fixed. Consequently, the chain needs local moves like CHANGE and SWAP to gradually modify each tree's partition. In our model, by contrast, the entire dataset is allocated to a single mixture component in each MCMC iteration via the allocation variable $Z$ (see Section~\ref{sec:dp} and Algorithm~\ref{alg:dp_update}). This allocation is updated every iteration using the configuration indicator reparameterization of \cite{Sabya12}, allowing the posterior to explore different tree configurations globally: a tree that is active in one iteration may be inactive in the next, and a previously inactive tree may become active. The Dirichlet process prior, through its Chinese restaurant process dynamics, induces a rich posterior over the number of distinct components $k$ and their weights, effectively allowing the chain to switch between different tree topologies by activating and deactivating components. This global exploration mechanism more than compensates for the absence of local split-modifying moves. In fact, the GROW and PRUNE moves remain essential for refining individual trees when they are active, but the overarching structure of the DP mixture ensures that the chain can escape local modes and explore the full posterior landscape.

Empirically, this design choice is validated by the excellent mixing properties observed in our simulation studies (Section~\ref{sec:simulation}) and 
real-data experiments (Section~\ref{sec:applications}. The trace plots of $k$ and the mixture weights stabilise rapidly, and the number of distinct trees typically settles between two and five components, indicating that the chain efficiently explores the relevant regions of the tree space. 
As demonstrated in the simulation studies, the runtime of the algorithm is often comparable to that of BART \cite{dbarts}, demonstrating that the restricted move set does not compromise computational efficiency. 
Thus, while the CHANGE and SWAP moves are not viable in our GP-split framework, the inherent flexibility of the Dirichlet process mixture, combined with the carefully designed GROW and PRUNE proposals, yields a robust and well-mixing MCMC sampler.

In the following subsections, we provide %a complete description of the GP-driven splitting rule, the definition of the tree prior, and 
the detailed implementation of the GROW and PRUNE moves. We then rigorously establish the detailed balance conditions for these moves, demonstrating explicitly how the GP proposal densities cancel in the acceptance ratios and confirming that the chain targets the correct posterior distribution. This collection of moves, combined with the flexible GP splitting rule and the global allocation dynamics of the Dirichlet process, forms a robust and efficient MCMC engine for sampling from the posterior distribution of the Dirichlet process tree mixture model.

\subsection{The GROW Move}
\label{subsec:grow}

The GROW move proposes to expand the current tree $\mathbf T$ by splitting one of its terminal nodes. Let $\mathbf T$ have $b$ terminal nodes, and let $b^*$ denote the number of prunable nodes in $\mathbf T$ (that is, internal nodes whose two children are both terminal). To perform a GROW move, we first select a terminal node uniformly at random from the $b$ available terminal nodes. Suppose the selected node is the $k$-th terminal node, containing the data $\{(\mathbf y_k, \mathbf x_k)\}$ with $n_k$ observations. At this node, we fit the Gaussian process model described in Section~\ref{subsec:split}, estimate its hyperparameters via maximum marginal likelihood, and draw a vector $\tilde{\mathbf g}_k = (\tilde g(x_{k1}), \ldots, \tilde g(x_{kn_k}))^\top$ from the posterior predictive distribution
\[
\pi_{\mathrm{GP}}(\tilde{\mathbf g}_k \mid \mathbf y_k, \mathbf x_k) = \phi(\tilde{\mathbf g}_k; \boldsymbol{\mu}_k^*, \boldsymbol{\Sigma}_k^*),
\]
where $\phi(\cdot; \boldsymbol{\mu}, \boldsymbol{\Sigma})$ denotes the multivariate normal density with mean $\boldsymbol{\mu}$ and covariance $\boldsymbol{\Sigma}$, and $\boldsymbol{\mu}_k^*$, $\boldsymbol{\Sigma}_k^*$ are the posterior mean and covariance given in Section~\ref{subsec:split}. The draw $\tilde{\mathbf g}_k$ defines a binary split of the node: observation $j$ is sent to the left child if $y_{kj} < \tilde g(x_{kj})$, and to the right child otherwise. The resulting tree is denoted $\mathbf T'$. It has $b+1$ terminal nodes and $b^*+1$ prunable nodes (since the newly created internal node becomes prunable). The transition kernel for this move is therefore
\[
q(\mathbf T, \mathbf T') = \frac{1}{b} \, \phi(\tilde{\mathbf g}_k; \boldsymbol{\mu}_k^*, \boldsymbol{\Sigma}_k^*).
\]

To move back from $\mathbf T'$ to $\mathbf T$, one must perform a PRUNE move, which selects the parent of the two newly created terminal nodes. In $\mathbf T'$, there are $b^*+1$ prunable nodes, so the probability of selecting the appropriate parent is $1/(b^*+1)$. The reverse proposal does not involve drawing any GP value, because the split rule is already stored as part of $\mathbf T'$; the PRUNE move simply collapses the two children. Hence
\[
q(\mathbf T', \mathbf T) = \frac{1}{b^*+1}.
\]

The tree prior must be defined carefully. As discussed in Section~\ref{subsec:complete_prior_tree}, the prior on a tree $\mathbf T$ includes both the structural prior $\Pi_{\mathrm{struct}}(\mathbf T)$ (based on splitting probabilities $p(\eta,\mathbf T)$) and the density of the GP draws at each internal node:
\[
\Pi(\mathbf T) = \Pi_{\mathrm{struct}}(\mathbf T) \prod_{v \in \mathcal{I}(\mathbf T)} \phi(\tilde{\mathbf g}_v; \boldsymbol{\mu}_v^*, \boldsymbol{\Sigma}_v^*),
\]
where $\mathcal{I}(\mathbf T)$ is the set of internal nodes of $\mathbf T$, and $\tilde{\mathbf g}_v$ is the GP draw stored at internal node $v$. Notice that the GP posterior predictive density depends on the data in that node and is therefore a function of the tree structure; it is part of the prior specification for the random tree.

With this augmented prior, the Metropolis--Hastings acceptance probability for the GROW move is
\[
A(\mathbf T,\mathbf T') = \min\left\{1, \frac{\Pi(\mathbf T') f(\mathbf Y \mid \mathbf X, \mathbf T') q(\mathbf T',\mathbf T)}
{\Pi(\mathbf T) f(\mathbf Y \mid \mathbf X, \mathbf T) q(\mathbf T,\mathbf T')}\right\}.
\]
Now, $\mathbf T'$ differs from $\mathbf T$ only by the addition of the new internal node $k$ with its GP draw $\tilde{\mathbf g}_k$. Thus
\[
\frac{\Pi(\mathbf T')}{\Pi(\mathbf T)} = \frac{\Pi_{\mathrm{struct}}(\mathbf T')}{\Pi_{\mathrm{struct}}(\mathbf T)} \,
\phi(\tilde{\mathbf g}_k; \boldsymbol{\mu}_k^*, \boldsymbol{\Sigma}_k^*),
\]
since all other GP densities cancel. Substituting the proposal kernels, we obtain
\[
\frac{\Pi(\mathbf T') f(\mathbf Y \mid \mathbf X, \mathbf T') q(\mathbf T',\mathbf T)}
{\Pi(\mathbf T) f(\mathbf Y \mid \mathbf X, \mathbf T) q(\mathbf T,\mathbf T')}
=
\frac{\Pi_{\mathrm{struct}}(\mathbf T')}{\Pi_{\mathrm{struct}}(\mathbf T)}
\frac{f(\mathbf Y \mid \mathbf X, \mathbf T')}{f(\mathbf Y \mid \mathbf X, \mathbf T)}
\frac{1/(b^*+1)}{(1/b)},
\]
where the GP density $\phi(\tilde{\mathbf g}_k; \boldsymbol{\mu}_k^*, \boldsymbol{\Sigma}_k^*)$ cancels exactly. Therefore,
\[
A(\mathbf T,\mathbf T') =
\min\left\{1,\;
\frac{\Pi_{\mathrm{struct}}(\mathbf T') f(\mathbf Y \mid \mathbf X, \mathbf T')}
     {\Pi_{\mathrm{struct}}(\mathbf T) f(\mathbf Y \mid \mathbf X, \mathbf T)}
\,
\frac{b}{b^*+1}
\right\}.
\]
Notice that the acceptance ratio does not depend on the GP density at all. The GP is used solely to generate an informative split proposal, but its density is exactly balanced by the prior probability of that split. This simplification is crucial for computational efficiency.

\subsection{The PRUNE Move}
\label{subsec:prune}

The PRUNE move is the reverse of the GROW move. It selects an internal node whose two children are both terminal (that is, a prunable node) and collapses them into a single terminal node. Let $\mathbf T$ be the current tree with $b$ terminal nodes and $b^*$ prunable nodes. Suppose we select the $k$-th prunable node, with children that contain the data $\{(\mathbf y_k, \mathbf x_k)\}$ (the union of the two child nodes' data). The PRUNE move removes these two children and makes the parent a terminal node. The resulting tree $\mathbf T'$ has $b-1$ terminal nodes and $b^*-1$ prunable nodes. The transition kernel is simply
\[
q(\mathbf T, \mathbf T') = \frac{1}{b^*}.
\]

To move back from $\mathbf T'$ to $\mathbf T$, one must perform a GROW move on the terminal node that was created by the prune. In $\mathbf T'$, there are $b-1$ terminal nodes, so the probability of selecting this particular node is $1/(b-1)$. Then, the GROW move would require drawing a new GP split from the posterior predictive distribution based on the data in that node, which would reproduce the two children. The reverse proposal kernel is therefore
\[
q(\mathbf T', \mathbf T) = \frac{1}{b-1} \, \phi(\tilde{\mathbf g}_k; \boldsymbol{\mu}_k^*, \boldsymbol{\Sigma}_k^*),
\]
where $\tilde{\mathbf g}_k$ is the GP draw stored at the parent node in the original tree $\mathbf T$. This GP draw is part of the state of $\mathbf T$, and the reverse move must regenerate it (though in practice, when moving from $\mathbf T$ to $\mathbf T'$, the GP draw is discarded; to move back, we must propose a new draw from the same distribution, which is equivalent to using the stored draw as the proposal). The density $\phi(\tilde{\mathbf g}_k; \boldsymbol{\mu}_k^*, \boldsymbol{\Sigma}_k^*)$ is the same as the one that was used in the forward GROW move.

Using the augmented prior, the acceptance probability for the PRUNE move is
\[
A(\mathbf T,\mathbf T') = \min\left\{1, \frac{\Pi(\mathbf T') f(\mathbf Y \mid \mathbf X, \mathbf T') q(\mathbf T',\mathbf T)}
{\Pi(\mathbf T) f(\mathbf Y \mid \mathbf X, \mathbf T) q(\mathbf T,\mathbf T')}\right\}.
\]
Here, $\mathbf T$ contains the GP draw for the pruned node, while $\mathbf T'$ does not. Hence
\[
\frac{\Pi(\mathbf T')}{\Pi(\mathbf T)} = \frac{\Pi_{\mathrm{struct}}(\mathbf T')}{\Pi_{\mathrm{struct}}(\mathbf T)}
\frac{1}{\phi(\tilde{\mathbf g}_k; \boldsymbol{\mu}_k^*, \boldsymbol{\Sigma}_k^*)}.
\]
Substituting the proposal kernels gives
\[
\frac{\Pi(\mathbf T') f(\mathbf Y \mid \mathbf X, \mathbf T') q(\mathbf T',\mathbf T)}
{\Pi(\mathbf T) f(\mathbf Y \mid \mathbf X, \mathbf T) q(\mathbf T,\mathbf T')}
=
\frac{\Pi_{\mathrm{struct}}(\mathbf T')}{\Pi_{\mathrm{struct}}(\mathbf T)\phi}
\frac{f(\mathbf Y \mid \mathbf X, \mathbf T')}{f(\mathbf Y \mid \mathbf X, \mathbf T)}
\frac{(1/(b-1))\,\phi}{(1/b^*)},
\]
and again the GP density $\phi$ cancels. Therefore,
\[
A(\mathbf T,\mathbf T') =
\min\left\{1,\;
\frac{\Pi_{\mathrm{struct}}(\mathbf T') f(\mathbf Y \mid \mathbf X, \mathbf T')}
     {\Pi_{\mathrm{struct}}(\mathbf T) f(\mathbf Y \mid \mathbf X, \mathbf T)}
\,
\frac{b^*}{b-1}
\right\}.
\]
As with the GROW move, the acceptance probability is independent of the GP density. The PRUNE move is thus a reversible jump step that complements the GROW move, and together they allow the Markov chain to explore tree structures of varying complexity.

\subsection{Proof of Detailed Balance: GROW-PRUNE Moves}
\label{subsec:db_grow_prune}

We now verify that the GROW and PRUNE moves satisfy detailed balance with respect to the target posterior distribution $\pi(\mathbf T \mid \mathbf Y, \mathbf X) \propto \Pi(\mathbf T) f(\mathbf Y \mid \mathbf X, \mathbf T)$. Consider a GROW move from $\mathbf T$ to $\mathbf T'$ as defined above, with $\mathbf T'$ obtained by splitting a terminal node $k$ of $\mathbf T$. Let $\Pi_{\mathrm{struct}}$ denote the structural part of the prior, and recall that the full prior is
\[
\Pi(\mathbf T) = \Pi_{\mathrm{struct}}(\mathbf T) \prod_{v \in \mathcal{I}(\mathbf T)} \phi_v,
\]
where $\phi_v = \phi(\tilde{\mathbf g}_v; \boldsymbol{\mu}_v^*, \boldsymbol{\Sigma}_v^*)$ is the GP density for internal node $v$.

The Metropolis--Hastings acceptance probability for the GROW move is
\[
\alpha(\mathbf T, \mathbf T') = \min\left\{1, \frac{\Pi(\mathbf T') f(\mathbf Y \mid \mathbf X, \mathbf T') q(\mathbf T',\mathbf T)}
{\Pi(\mathbf T) f(\mathbf Y \mid \mathbf X, \mathbf T) q(\mathbf T,\mathbf T')}\right\}.
\]
We have established that
\[
\frac{\Pi(\mathbf T') f(\mathbf Y \mid \mathbf X, \mathbf T') q(\mathbf T',\mathbf T)}
{\Pi(\mathbf T) f(\mathbf Y \mid \mathbf X, \mathbf T) q(\mathbf T,\mathbf T')}
=
\frac{\Pi_{\mathrm{struct}}(\mathbf T') f(\mathbf Y \mid \mathbf X, \mathbf T')}
{\Pi_{\mathrm{struct}}(\mathbf T) f(\mathbf Y \mid \mathbf X, \mathbf T)}
\frac{b}{b^*+1}.
\]
Denote this ratio by $R$. Then the probability of moving from $\mathbf T$ to $\mathbf T'$ in the chain is
\[
P(\mathbf T, \mathbf T') = q(\mathbf T, \mathbf T') \min\{1, R\}.
\]
The reverse move from $\mathbf T'$ to $\mathbf T$ is a PRUNE move, and its acceptance probability is
\[
\alpha(\mathbf T', \mathbf T) = \min\left\{1, \frac{\Pi(\mathbf T) f(\mathbf Y \mid \mathbf X, \mathbf T) q(\mathbf T,\mathbf T')}
{\Pi(\mathbf T') f(\mathbf Y \mid \mathbf X, \mathbf T') q(\mathbf T',\mathbf T)}\right\} = \min\{1, 1/R\}.
\]
Thus
\[
P(\mathbf T', \mathbf T) = q(\mathbf T', \mathbf T) \min\{1, 1/R\}.
\]
We now compute the product of the forward and reverse transition probabilities. The forward probability is
\[
\Pi(\mathbf T) f(\mathbf Y \mid \mathbf X, \mathbf T) P(\mathbf T, \mathbf T')
= \Pi(\mathbf T) f(\mathbf Y \mid \mathbf X, \mathbf T) \frac{1}{b} \phi_k \min\{1, R\},
\]
where $\phi_k = \phi(\tilde{\mathbf g}_k; \boldsymbol{\mu}_k^*, \boldsymbol{\Sigma}_k^*)$ is the GP density for the new split. The reverse probability is
\[
\Pi(\mathbf T') f(\mathbf Y \mid \mathbf X, \mathbf T') P(\mathbf T', \mathbf T)
= \Pi(\mathbf T') f(\mathbf Y \mid \mathbf X, \mathbf T') \frac{1}{b^*+1} \min\{1, 1/R\}.
\]
Using $\Pi(\mathbf T') = \Pi(\mathbf T) \frac{\Pi_{\mathrm{struct}}(\mathbf T')}{\Pi_{\mathrm{struct}}(\mathbf T)} \phi_k$ (since only the new GP density differs) and $q(\mathbf T',\mathbf T) = 1/(b^*+1)$, we can rewrite the reverse product as
\[
\Pi(\mathbf T') f(\mathbf Y \mid \mathbf X, \mathbf T') \frac{1}{b^*+1} \min\{1, 1/R\}
= \Pi(\mathbf T) \frac{\Pi_{\mathrm{struct}}(\mathbf T')}{\Pi_{\mathrm{struct}}(\mathbf T)} \phi_k
f(\mathbf Y \mid \mathbf X, \mathbf T') \frac{1}{b^*+1} \min\{1, 1/R\}.
\]
Now note that
\[
R = \frac{\Pi_{\mathrm{struct}}(\mathbf T') f(\mathbf Y \mid \mathbf X, \mathbf T') b}{\Pi_{\mathrm{struct}}(\mathbf T) f(\mathbf Y \mid \mathbf X, \mathbf T) (b^*+1)}.
\]
Therefore,
\[
\frac{\Pi_{\mathrm{struct}}(\mathbf T') f(\mathbf Y \mid \mathbf X, \mathbf T')}{\Pi_{\mathrm{struct}}(\mathbf T) f(\mathbf Y \mid \mathbf X, \mathbf T)}
\frac{1}{b^*+1}
=
\frac{R}{b}.
\]
Substituting this into the reverse product yields
\[
\Pi(\mathbf T') f(\mathbf Y \mid \mathbf X, \mathbf T') \frac{1}{b^*+1} \min\{1, 1/R\}
= \Pi(\mathbf T) f(\mathbf Y \mid \mathbf X, \mathbf T) \frac{1}{b} \phi_k \, R \min\{1, 1/R\}.
\]
But $R \min\{1, 1/R\} = \min\{R, 1\} = \min\{1, R\}$. Hence
\[
\Pi(\mathbf T') f(\mathbf Y \mid \mathbf X, \mathbf T') P(\mathbf T', \mathbf T)
= \Pi(\mathbf T) f(\mathbf Y \mid \mathbf X, \mathbf T) \frac{1}{b} \phi_k \min\{1, R\}
= \Pi(\mathbf T) f(\mathbf Y \mid \mathbf X, \mathbf T) P(\mathbf T, \mathbf T').
\]
Thus detailed balance holds:
\[
\Pi(\mathbf T) f(\mathbf Y \mid \mathbf X, \mathbf T) P(\mathbf T, \mathbf T')
=
\Pi(\mathbf T') f(\mathbf Y \mid \mathbf X, \mathbf T') P(\mathbf T', \mathbf T).
\]
The cancellation of the GP density is the key simplification. The same argument applies symmetrically for PRUNE moves, ensuring that the chain is reversible with respect to the posterior distribution. This validates the MCMC algorithm and confirms that the GP splitting rule is a valid proposal mechanism that does not alter the target distribution.

\section{MCMC Algorithm Summary}\label{sec:mcmc_algorithm}

Aggregating the components from Sections~\ref{sec:dp} and~\ref{sec:sim_Gm}, the complete Markov chain Monte Carlo procedure is presented as the amalgamation of 
Algorithm~\ref{alg:dp_update} and Algorithm~\ref{alg:tree_mcmc}, for enhanced readibility.

\begin{algorithm}[htbp]
\caption{DP mixture updates (global state)}
\label{alg:dp_update}
\begin{algorithmic}[1]
\State \textbf{Input:} Data $\mathcal{D}_n$, current state $(Z,\mathbf{c},\{\mathbf{T}^*_\ell\},\mathbf{p})$
\State \textbf{Output:} Updated state
\State \textbf{// Step 1: Update component indicator $Z$}
\For{$m=1,\dots,M$}
    \State $\log w_m \gets \log p_m + \log f(\bY|\bX,\bT^*_{c_m})$
\EndFor
\State Sample $Z \sim \text{Categorical}\bigl(\text{softmax}(\log w_1,\dots,\log w_M)\bigr)$
\State \textbf{// Step 2: Update configuration $\mathbf{c}$ and distinct set}
\For{$m=1,\dots,M$}
    \State Let $k_m$ = number of distinct labels in $\mathbf{c}_{-m}$, with frequencies $M_{m\ell}$.
    \If{$m=Z$}  \Comment{active component}
        \If{$c_m$ is unique} \State $\bT^a \gets \bT^*_{c_m}$ \Else \State $\bT^a \sim G_0$ \EndIf
        \State $q_0 \gets \alpha\, f(\bY|\bX,\bT^a)$
        \For{$\ell=1,\dots,k_m$} \State $q^*_\ell \gets M_{m\ell}\, f(\bY|\bX,\bT^*_{m\ell})$ \EndFor
    \Else  \Comment{inactive component}
        \State $q_0 \gets \alpha$
        \For{$\ell=1,\dots,k_m$} \State $q^*_\ell \gets M_{m\ell}$ \EndFor
    \EndIf
    \State Normalise and sample $c_m$ from $\{0,1,\dots,k_m\}$ (where $0$ denotes new cluster).
    \If{$c_m=0$} \State create new tree (from $\bT^a$ if active, else from $G_0$) and update $k$ \EndIf
\EndFor
\State Relabel $\mathbf{c}$ to $\mathbf{s}$ and compress $\{\bT^*_\ell\}$ (Section 3.3).
\State \textbf{// Step 3: Update distinct trees via parallel or sequential MCMC}
\For{$\ell=1,\dots,k$}
    \If{active} \State $\bT^*_\ell \gets \text{MCMC\_tree\_step}(\bT^*_\ell)$ \Comment{Algorithm 2}
    \Else \State $\bT^*_\ell \sim G_0$ \EndIf
\EndFor
\State \textbf{// Step 4: Update mixture weights}
    \State Let $n_m = \mathbf{1}_{\{Z=m\}}$ for $m=1,\dots,M$.
    \State Sample $(p_1,\dots,p_M) \sim \text{Dirichlet}(1+n_1,\dots,1+n_M)$.
\end{algorithmic}
\end{algorithm}

\begin{algorithm}[htbp]
\caption{MCMC tree step (GROW/PRUNE)}
\label{alg:tree_mcmc}
\begin{algorithmic}[1]
\Require Current tree $\bT$, data $\mathcal{D}_n$, random number generator
\Ensure Updated tree $\bT'$ (or same)
\State Choose move type: GROW or PRUNE with equal probability.
\If{GROW}
    \State Select a terminal node uniformly from the $b$ leaves.
    \State Draw a GP split $\tilde{g}$ from the node's posterior predictive.
    \State Split the node into two children using $y < \tilde{g}(x)$.
    \State Compute acceptance ratio (GP density cancels):
    \[
    A = \min\!\left\{1,\;
    \frac{\Pi_{\text{struct}}(\bT')}{\Pi_{\text{struct}}(\bT)}
    \frac{f(\bY|\bX,\bT')}{f(\bY|\bX,\bT)}
    \frac{b}{b^*+1}
    \right\}
    \]
\ElsIf{PRUNE}
    \State Select a prunable internal node uniformly from the $b^*$ available.
    \State Collapse its two leaf children into a single leaf.
    \State Compute acceptance ratio:
    \[
    A = \min\!\left\{1,\;
    \frac{\Pi_{\text{struct}}(\bT')}{\Pi_{\text{struct}}(\bT)}
    \frac{f(\bY|\bX,\bT')}{f(\bY|\bX,\bT)}
    \frac{b^*}{b-1}
    \right\}
    \]
\EndIf
\State With probability $A$, accept the proposal and return $\bT'$; otherwise return $\bT$.
\end{algorithmic}
\end{algorithm}

% ======================================================================
% NEW PREDICTION SECTION
% ======================================================================
\section{Exact Bayesian Prediction for New Inputs}\label{sec:prediction}

A distinctive feature of the GP‑based splitting rule is that the tree structure depends on the response values: an observation $(x,y)$ is assigned to the left child if $y < \tilde g(x)$, where $\tilde g$ is a vector of GP draws stored at the internal node.  Consequently, for a new covariate $\mathbf x^*$ the leaf assignment itself is random and depends on the unknown $y^*$.  The full Bayesian predictive distribution,
\[
p(y^* \mid \mathbf x^*, \text{data}) = \int p(y^* \mid \mathbf x^*, \theta, \text{data}) \, \pi(\theta \mid \text{data}) \, d\theta,
\]
must account for the joint uncertainty about the model parameters $\theta$ (trees, weights, leaf parameters) and about the leaf to which $\mathbf x^*$ belongs.  We describe below an exact Gibbs‑like sampling procedure that delivers genuine draws from the posterior predictive distribution.

\subsection{Conditional distributions for a single tree}
Fix one regression tree $\mathbf T$ produced by an MCMC sample.  The tree partitions the training data into leaves; for an internal node $v$, let $\tilde g_v$ denote the stored GP draw and let $\{x_i, y_i\}_{i \in v}$ be the training points that fall in that node.  The conditional distribution of the GP value at a new point $\mathbf x^*$, given the training GP values $\tilde g_v$, is Gaussian:
\begin{equation}
g(\mathbf x^*) \mid \tilde g_v, \mathbf x^* \;\sim\; \mathcal{N}\bigl( \mu_*(\mathbf x^*),\; \sigma^2_*(\mathbf x^*) \bigr),
\label{eq:cond_gp}
\end{equation}
where
\[
\mu_*(\mathbf x^*) = \mathbf k_*^\top \mathbf K_{\text{noiseless}}^{-1} \tilde g_v, \qquad
\sigma^2_*(\mathbf x^*) = \sigma^2 - \mathbf k_*^\top \mathbf K_{\text{noiseless}}^{-1} \mathbf k_*,
\]
with $k_*(i) = \sigma^2 \, c(\mathbf x^*, \mathbf x_i)$ and $\mathbf K_{\text{noiseless}}$ the noiseless kernel matrix on the training points in $v$.  The quantities $\mathbf L_{\text{noiseless}}$ (Cholesky factor of $\mathbf K_{\text{noiseless}}$) and $\boldsymbol\alpha = \mathbf K_{\text{noiseless}}^{-1} \tilde g_v$ are precomputed and stored in the tree node, allowing fast evaluation of \eqref{eq:cond_gp} via forward substitution.

For a leaf $\ell$, the posterior distribution of its parameters $(\mu_\ell, \sigma_\ell^2)$ given the training responses in that leaf follows from the conjugate normal/inverse‑gamma prior \eqref{eq:mu_prior}–\eqref{eq:sigma_prior}:
\begin{align}
\sigma_\ell^2 \mid \text{data in } \ell &\sim \text{IG}\!\left( \frac{\nu + n_\ell}{2},\,
\frac{\nu\lambda + s_\ell + t_\ell}{2} \right), \label{eq:leaf_sig2_post} \\
\mu_\ell \mid \sigma_\ell^2, \text{data in } \ell &\sim 
\mathcal{N}\!\left( \frac{n_\ell \bar y_\ell + a \bar\mu}{n_\ell + a},\,
\frac{\sigma_\ell^2}{n_\ell + a} \right), \label{eq:leaf_mu_post}
\end{align}
where $n_\ell$ is the number of training points in the leaf, $\bar y_\ell$ their sample mean, $s_\ell = (n_\ell-1)$ times the sample variance, and $t_\ell = \frac{n_\ell a}{n_\ell + a}(\bar y_\ell - \bar\mu)^2$.

\subsection{Gibbs sampler for a single tree component}
The joint distribution of $y^*$ and its leaf assignment $\ell^*$ under a fixed tree $\mathbf T$ can be explored by a two‑step Gibbs sampler that alternates between:

\begin{enumerate}
\item \textbf{Sample $y^*$ given a leaf $\ell$.}  Draw $(\tilde\mu_\ell, \tilde\sigma_\ell^2)$ from \eqref{eq:leaf_sig2_post}–\eqref{eq:leaf_mu_post}, then set $y^* \sim \mathcal{N}(\tilde\mu_\ell, \tilde\sigma_\ell^2)$.
\item \textbf{Determine the leaf $\ell$ given $y^*$.}  Starting at the root, at each internal node $v$ sample a GP value $g_v(\mathbf x^*)$ from \eqref{eq:cond_gp}; branch left if $y^* < g_v(\mathbf x^*)$, else right.  Continue until a leaf is reached.
\end{enumerate}

Starting from an arbitrary leaf (chosen uniformly at random), this Markov chain rapidly converges to the stationary distribution of $(y^*, \ell^*)$ under the tree model.  In practice, 2–3 iterations suffice because the predictive distribution of a leaf is usually narrow.  The final value of $y^*$ is a sample from the exact predictive distribution of the tree.

\subsection{Combination across the DP mixture}
With an MCMC sample of the full DP mixture state $(\mathbf p, \mathbf C, \{\mathbf T^*_\ell\})$, a predictive draw is obtained by first sampling a component index $m$ with probability proportional to the mixture weights, and then running the Gibbs sampler on the tree $\mathbf T_{c_m}$ corresponding to that component.  The procedure is summarised in Algorithm~\ref{alg:predict}.

\begin{algorithm}[htbp]
\caption{Exact posterior predictive sampling for one new point $\mathbf x^*$}
\label{alg:predict}
\begin{algorithmic}[1]
\Require Current DP mixture state $(\mathbf p, \mathbf C, \{\mathbf T^*_\ell\})$, training data $\bD_n$, new covariate $\mathbf x^*$, maximum Gibbs iterations $T_{\max}$.
\State Sample a component index $m$ with probability $p_m$.
\State Let $\mathbf T \gets \mathbf T^*_{c_m}$ (the tree associated with component $m$).
\State Choose a leaf $\ell$ uniformly at random from the leaves of $\mathbf T$.
\For{$t = 1$ to $T_{\max}$}
    \State Sample $(\mu_\ell, \sigma_\ell^2)$ from the posteriors \eqref{eq:leaf_sig2_post}–\eqref{eq:leaf_mu_post} using the data in leaf $\ell$.
    \State Draw $y^* \sim \mathcal{N}(\mu_\ell, \sigma_\ell^2)$.
    \State Determine a new leaf $\ell'$ by traversing $\mathbf T$:
    \State $\quad v \gets \text{root of } \mathbf T$
    \While{$v$ is not a leaf}
        \State Sample $g(\mathbf x^*)$ from \eqref{eq:cond_gp} using the precomputed $\boldsymbol\alpha$ and $\mathbf L_{\text{noiseless}}$ of node $v$.
        \If{$y^* < g(\mathbf x^*)$} \State $v \gets \text{left child}$ \Else \State $v \gets \text{right child}$ \EndIf
    \EndWhile
    \State Set $\ell' \gets v$.
    \If{$\ell' = \ell$} \State \textbf{break} \EndIf
    \State $\ell \gets \ell'$
\EndFor
\State \Return $y^*$ (a single draw from the posterior predictive distribution).
\end{algorithmic}
\end{algorithm}

\subsection{Properties of the method}

The proposed sampling scheme enjoys several desirable properties. Regarding exactness, the Gibbs chain targets the joint distribution
\[
p(y^*, \ell \mid \mathbf x^*, \mathbf T) \propto p(y^* \mid \ell, \text{data}) \cdot \mathbf{1}\{\text{path to } \ell \text{ is valid}\}.
\]
Because the tree splits are fixed (the $\tilde g$ vectors are part of the MCMC sample), the stationary distribution of the chain is exactly the model's predictive distribution for that tree. Averaging over MCMC samples and mixture components consequently yields draws from the full posterior predictive.

The sampler is also computationally efficient. The Cholesky factors and $\boldsymbol\alpha$ vectors required for navigation are precomputed during the MCMC phase, so each step through an internal node requires only a single forward substitution and a normal random variate. Sampling leaf parameters involves simple arithmetic on pre-aggregated sufficient statistics. As a result, the overall cost per predictive draw is negligible and does not constitute a bottleneck in the analysis.

Finally, the Gibbs chain converges rapidly in practice. The leaf assignment typically stabilises after two or three iterations, and setting $T_{\max}=10$ provides a safe upper bound for all cases encountered. Since the state space is the finite set of leaves, the chain is uniformly ergodic and no formal convergence diagnostics are required.

\section{Parallel Implementation in C with MPI}
\label{sec:parallel}

The MCMC algorithm described in Algorithm~\ref{alg:dp_update} involves updating $k$ distinct tree components, where $k$ is typically much smaller than the maximum number of components $M$. For datasets with hundreds or thousands of observations, the computational cost of the tree MCMC steps can become substantial. In this section, we describe a parallel implementation of our sampler using the Message Passing Interface (MPI) that distributes the independent tree updates across multiple processors, achieving near-linear speedups while preserving the exactness of the MCMC algorithm.

\subsection{Parallelisation Principle}

Conditional on the current allocation of the data---the configuration vector $\mathbf{c}$ and the active component indicator $Z$---the updates of the distinct trees $\mathbf{T}^*_\ell$ are \emph{independent} across $\ell$. Specifically, the transition kernel for each tree depends only on the data assigned to that tree's leaf nodes and on the tree's own structure; there is no direct interaction between the updates of different trees. This conditional independence is the key to parallelisation: the tree update step can be distributed across multiple processors, with each processor handling a subset of the distinct trees.

Recall from Algorithm~\ref{alg:dp_update} that Step~3 updates each distinct tree $\mathbf{T}^*_\ell$ as follows:
\begin{enumerate}
\item[(i)] If the tree is \emph{active} (that is, there exists $m$ with $Z=m$ and $c_m=\ell$), it is updated using the full conditional distribution $\bG_m(\bT) \propto f(\bY \mid \bX, \bT) \bG_0(\bT)$, which involves a Metropolis--Hastings step with GROW and PRUNE moves (Algorithm~\ref{alg:tree_mcmc}).
\item[(ii)] If the tree is \emph{inactive}, it is simply resampled from the base measure $\bG_0$.
\end{enumerate}
Crucially, the updates for different $\ell$ are independent conditional on the current global state, because the likelihood factorises across the mixture components, and the Metropolis--Hastings transition for each tree does not depend on the other trees. This independence allows us to perform the tree updates in parallel.

\subsection{Master--Worker Architecture}

Our parallel implementation adopts a master--worker paradigm:
\begin{enumerate}
\item[(i)] \textbf{Processor~0 (the master)} maintains the global MCMC state, including the distinct tree structures $\{\mathbf{T}^*_\ell\}$, the configuration vector $\mathbf{c}$, the active component indicator $Z$, and the mixture weights $\mathbf{p}$. The master performs the updates that cannot be easily parallelised: the update of $Z$, the update of $\mathbf{c}$ (which requires global knowledge of the configuration), and the update of the mixture weights (Steps~1,~2, and~4 of Algorithm~\ref{alg:dp_update}).
\item[(ii)] \textbf{Worker processors (ranks $1,\dots,P-1$)} receive a subset of the distinct tree structures from the master, perform the MCMC transitions locally, and send the updated trees back to the master.
\end{enumerate}

\subsection{Serialisation of Tree Structures}

A critical technical challenge is the communication of tree structures between processors. Tree structures are complex, dynamically allocated data structures containing:
\begin{enumerate}
\item[(i)] the tree topology (internal and leaf nodes);
\item[(ii)] for each internal node: the GP draw $\tilde{\mathbf g}_v$, the node indices, the GP hyperparameters, the noiseless Cholesky factor $\mathbf{L}_{\text{noiseless}}$, and the $\boldsymbol\alpha$ vector;
\item[(iii)] for each leaf node: the node indices.
\end{enumerate}
These structures cannot be sent directly via MPI because they contain pointers to dynamically allocated memory. We therefore implement a \emph{binary serialisation} mechanism:
\begin{enumerate}
\item[(i)] \texttt{tree\_pack\_size}: recursively computes the total number of bytes required to serialise a tree.
\item[(ii)] \texttt{tree\_pack}: recursively writes the tree into a contiguous byte buffer, flattening all dynamically allocated arrays.
\item[(iii)] \texttt{tree\_unpack}: reconstructs a tree from a byte buffer, allocating fresh memory for all arrays and rebuilding the pointer structure.
\end{enumerate}
This serialisation mechanism is efficient: the size of a serialised tree is proportional to the number of nodes and the number of observations stored at each node, which is typically small for shallow trees.

\subsection{The Parallel Tree Update Step}

The parallel tree update step proceeds as follows, assuming the master holds the current state at the beginning of the iteration:

\begin{enumerate}
\item[(i)] \textbf{Master broadcasts state information.} The master broadcasts the number of distinct trees $k$, the active indicator for each tree (a binary array of length $k$), and the serialised tree structures to all processors. This is done using:
  \begin{enumerate}
  \item[(a)] \texttt{MPI\_Bcast} for $k$ and the active array.
  \item[(b)] A two-stage broadcast for the tree data: first, the total size of the serialised buffer is broadcast; then, the buffer itself is broadcast.
  \item[(c)] The offsets for each tree within the buffer are broadcast so that each processor can locate the serialised data for the trees it will handle.
  \end{enumerate}

\item[(ii)] \textbf{Distribution of trees.} The $k$ distinct trees are distributed among the $P$ processors in a \emph{round-robin} fashion: processor $r$ handles trees with indices $r, r+P, r+2P, \ldots$. This ensures a balanced workload even when $k$ is not a multiple of $P$.

\item[(iii)] \textbf{Local tree updates.} Each processor deserialises the trees assigned to it. For each such tree:
  \begin{enumerate}
  \item[(a)] If the tree is active, the processor performs an MCMC step using Algorithm~\ref{alg:tree_mcmc}: it proposes a GROW or PRUNE move, computes the acceptance ratio (with the GP density cancellation), and updates the tree if the proposal is accepted. The tree likelihood is recomputed as part of the acceptance ratio.
  \item[(b)] If the tree is inactive, the processor discards the received tree and samples a new tree from the prior $\bG_0$ using the GP-driven prior sampling procedure.
  \end{enumerate}
  All random number generation uses the processor's local Mersenne Twister state, which is seeded independently to avoid contention. The random numbers used for each tree depend only on the tree index and the iteration number, ensuring reproducibility across different numbers of processors.

\item[(iv)] \textbf{Gathering updated trees.} Each processor serialises the updated trees it has produced (or the prior samples for inactive trees) and sends them back to the master. The communication pattern is:
  \begin{enumerate}
  \item[(a)] Each worker sends the number of trees it processed to the master.
  \item[(b)] For each tree, the worker sends the tree index, the size of the serialised buffer, and the buffer itself.
  \item[(c)] The master receives these messages in a non-blocking or sequential loop and deserialises the trees, placing them into a new array of distinct trees.
  \end{enumerate}

\item[(v)] \textbf{Master updates state.} The master frees the old tree structures and replaces them with the newly received trees. The global MCMC state is now updated, and the algorithm proceeds to the next iteration (updating $Z$, $\mathbf{c}$, and the weights).
\end{enumerate}

A conceptual illustration of the parallel tree update is given in Algorithm~\ref{alg:parallel_tree_update}.

\begin{algorithm}[htbp]
\caption{Parallel tree update (Step~3 of Algorithm~\ref{alg:dp_update})}
\label{alg:parallel_tree_update}
\begin{algorithmic}[1]
\Require Current distinct trees $\{\mathbf{T}^*_\ell\}_{\ell=1}^k$, active indicators $a_\ell \in \{0,1\}$, number of processors $P$, rank $r$
\Ensure Updated distinct trees $\{\mathbf{T}^*_\ell\}_{\ell=1}^k$
\If{$r=0$}
    \State Serialise all trees into a contiguous buffer $\mathcal{B}$.
    \State Broadcast $k$, active indicators, total buffer size, and $\mathcal{B}$ to all workers.
\Else
    \State Receive $k$, active indicators, total buffer size, and $\mathcal{B}$ from master.
\EndIf
\For{$\ell = r, r+P, r+2P, \ldots$ while $\ell < k$}
    \State Deserialise tree $\mathbf{T}^*_\ell$ from $\mathcal{B}$.
    \If{$a_\ell = 1$} \Comment{active}
        \State $\mathbf{T}^*_\ell \gets \text{MCMC\_tree\_step}(\mathbf{T}^*_\ell)$  \Comment{Algorithm~\ref{alg:tree_mcmc}}
    \Else \Comment{inactive}
        \State Free $\mathbf{T}^*_\ell$ and sample $\mathbf{T}^*_\ell \sim \bG_0$.
    \EndIf
    \State Store updated tree locally.
\EndFor
\If{$r=0$}
    \State Receive updated trees from all workers and reconstruct the global array.
\Else
    \State Serialise updated trees and send them to master.
\EndIf
\end{algorithmic}
\end{algorithm}

\subsection{Communication Overhead and Serialisation Cost}

The serialisation and deserialisation of tree structures are the primary sources of communication overhead. The size of a serialised tree is proportional to the number of nodes and the number of observations stored at each node. Specifically, for a tree with $n_{\mathrm{node}}$ observations at an internal node, the serialised buffer contains:
\begin{enumerate}
\item[(i)] the node's $\tilde{\mathbf g}$ vector ($n_{\mathrm{node}}$ doubles),
\item[(ii)] the $\boldsymbol\alpha$ vector ($n_{\mathrm{node}}$ doubles),
\item[(iii)] the noiseless Cholesky factor $\mathbf{L}_{\text{noiseless}}$ ($n_{\mathrm{node}}^2$ doubles),
\item[(iv)] the node indices ($n_{\mathrm{node}}$ integers),
\item[(v)] and the GP hyperparameters (three doubles).
\end{enumerate}
For a deep tree with many internal nodes, the total serialised size can be substantial. However:
\begin{enumerate}
\item[(i)] The round-robin distribution ensures that each processor handles only a subset of the trees, so the per-processor serialisation cost is reduced.
\item[(ii)] The communication of the serialised buffers is done in a single collective broadcast and a gather operation, minimising the number of messages.
\item[(iii)] The GP precomputations (Cholesky factors and $\boldsymbol\alpha$ vectors) are performed locally on each processor during the GROW move and are included in the serialised buffer, so they do not need to be recomputed when a tree is sent to another processor.
\end{enumerate}

\subsection{Correctness and Reproducibility}

The parallel implementation is designed to produce exactly the same MCMC output as a sequential implementation, up to floating-point round-off errors. The key to correctness is that:
\begin{enumerate}
\item[(i)] The random number generators on each processor are seeded independently, but the sequence of random numbers used for each tree update depends only on the tree index and the iteration number. This ensures that the same random choices are made regardless of the parallelisation scheme.
\item[(ii)] The tree updates are independent conditional on the current MCMC state, so the order in which the trees are updated does not affect the stationary distribution of the chain.
\item[(iii)] The master's updates of $Z$, $\mathbf{c}$, and the weights are performed sequentially and are not parallelised, as they depend on the entire state and cannot be trivially decomposed.
\end{enumerate}
%We have verified the correctness of the parallel implementation by comparing its output with a sequential version on small datasets, confirming that the marginal distributions of the parameters and the predictive intervals are identical.

\subsection{Performance and Scalability}

The parallel implementation has been used for all experiments reported in Sections~\ref{sec:simulation} and~\ref{sec:applications}. The observed runtimes provide empirical evidence of the method's scalability. For the simulation study ($n=100$ training observations, $M=15$), the DP mixture completed 30,000 MCMC iterations with 5,000 burn-in and thinning of 5 in approximately 2 minutes 20 seconds using four processors. For the QSAR aquatic toxicity application ($n=436$ training observations, $M=15$), the runtime increased to approximately 5 hours on four processors, reflecting the larger sample size and the more expensive GP computations at the root nodes. For the Communities and Crime dataset ($n=648$ observations, $M=15$), the runtime was approximately 28 hours on nine processors, with the increase reflecting both the larger sample size and the higher dimensionality ($d=102$ predictors). For the Wheat genomic prediction dataset ($n=479$ observations, $d=1{,}279$ predictors, $M=15$), the runtime was approximately 27 hours on eight processors. The largest application, the air quality time series ($n=1{,}000$ observations, $M=15$), required approximately 256 hours on eight processors, reflecting the substantial cost of GP covariance matrix inversions at internal nodes.

These results demonstrate the practical feasibility of the parallel implementation across a wide range of dataset sizes and dimensionalities. The number of processors used was chosen empirically to balance computational speed against communication overhead; for each dataset, the optimal number of processors was found to be approximately equal to the number of distinct trees $k$ (which typically ranged from 2 to 7), beyond which communication overhead began to dominate. The cancellation of the GP density in the Metropolis--Hastings acceptance ratio remains the key to computational efficiency, as it avoids costly GP likelihood evaluations in the acceptance step.

\subsection{Practical Considerations}

The parallel implementation has been used for all simulation experiments and real-data applications reported in this paper. The code is written in C and relies on the GNU Scientific Library (GSL) for linear algebra operations and MPI for communication. 
%A typical compilation command is:
%\begin{verbatim}
%mpicc -O3 -o dp_mcmc dp_mcmc.c dp_update.c file_io.c gp.c parallel.c rng.c tree.c -lgsl -lgslcblas -lm
%\end{verbatim}
The program expects input files in a simple text format and outputs MCMC samples and predictive draws. The parallel implementation has proven robust and efficient across a wide range of applications, making the DP tree mixture model practical for datasets with thousands of observations.

\subsection{Summary}

The parallel implementation in C with MPI exploits the conditional independence of the tree updates in the Dirichlet process mixture, distributing the computational workload across multiple processors. The master--worker architecture, combined with an efficient binary serialisation mechanism, enables efficient parallelisation of the MCMC sampler. This parallelisation is essential for the practical applicability of our method to large-scale datasets, and it has been instrumental in the empirical evaluations reported in this paper.

% ======================================================================
% SECTION 7: Simulation Study
% ======================================================================
\section{Simulation Study}
\label{sec:simulation}

We evaluate the proposed Dirichlet process tree mixture model on the Friedman benchmark function, a standard testbed for regression methods. The true data‐generating process is
\[
Y = 10\sin(\pi X_1 X_2) + 20(X_3 - 0.5)^2 + 10 X_4 + 5 X_5 + \varepsilon,
\]
with all covariates \(X_j\) independently drawn from \(\text{Uniform}(0,1)\). We consider four experimental scenarios, systematically varying the number of covariates (five informative versus twenty, the latter containing fifteen pure noise variables) and the error distribution (Gaussian versus heavy-tailed Cauchy). These scenarios are: the clean Gaussian setting with \(d=5\) and \(\varepsilon \sim \mathcal{N}(0,1)\); the high-dimensional Gaussian setting with \(d=20\) and \(\varepsilon \sim \mathcal{N}(0,1)\); the clean Cauchy setting with \(d=5\) and \(\varepsilon \sim \text{Cauchy}(0,1)\); and the high-dimensional Cauchy setting with \(d=20\) and \(\varepsilon \sim \text{Cauchy}(0,1)\). For each setting, we generate a training set of \(n_{\text{train}} = 100\) observations and a held-out test set of \(n_{\text{test}} = 50\) observations. More precisely, we generate data from the
high-dimensional, $d=20$ setups; for $d=5$, we store only the $5$ relevant covariates for the training and test datasets corresponding to $d=20$, keeping everything else intact. 

%Our DP mixture is implemented in C with MPI and executed on a machine with four processors (CPU maximum frequency about 3.2 GHz, total memory about 16 GB). We use \(30,\!000\) MCMC iterations, discard the first \(5,\!000\) as burn-in, and thin every 5 iterations, yielding \(5,\!000\) posterior draws for inference, which is the same as the default
%in the BART implementation setup  (via the \texttt{dbarts} R package) which we compare our method with. The total runtime for the DP mixture is approximately 2 minutes 20 seconds, while BART %(via the \texttt{dbarts} R package) 
%takes about 2 minutes 10 seconds on the same hardware.

Our DP mixture is implemented in C with MPI and executed on a machine with four processors (CPU maximum frequency about 3.2 GHz, total memory about 16 GB). We use \(30,\!000\) MCMC iterations, discard the first \(5,\!000\) as burn-in, and thin every 5 iterations, yielding \(5,\!000\) posterior draws for inference, which is the same as the default in the BART implementation setup (via the \texttt{dbarts} R package) which we compare our method with. The total runtime for the DP mixture is approximately 2 minutes 20 seconds, while BART takes about 2 minutes 10 seconds on the same hardware.

For the Dirichlet process tree mixture, we set the maximum number of components to \(M = 15\), the maximum tree depth to \(8\), and the concentration parameter of the Dirichlet process to \(\alpha_{\mathrm{DP}} = 1.0\). The tree prior uses splitting probability \(p(\eta,\bT) = \gamma (1 + d_\eta)^{-\zeta}\) with \(\gamma = 0.95\) and \(\zeta = 1.5\), which strongly penalises deeper trees. The leaf parameters follow the conjugate normal-inverse-gamma prior with hyperparameters \(\bar\mu = 13.93\), \(a = 0.01\), \(\nu = 8.0\), and \(\lambda = 1.25\), yielding a prior mode for the leaf variance of \(\nu\lambda/(\nu+2) = 8.0 \times 1.25 / 10.0 = 1.0\). This prior is moderately informative and provides a sensible regularisation for the leaf means and variances. The Gaussian process splitting rule uses the squared exponential kernel with fixed hyperparameters: signal variance \(\sigma^2 = 5.0\), noise variance \(\sigma_\epsilon^2 = 0.02\), and length scale \(\ell = 0.2\). These GP hyperparameters are kept fixed rather than estimated via maximum likelihood for computational simplicity, as the GP is used purely as a generative device for split proposals and its density cancels in the Metropolis-Hastings ratio. All hyperparameter values are held constant across the four simulation experiments, ensuring a fair comparison of the model's performance under varying conditions of dimensionality and error distribution.

\subsection{Implementation of Competing Methods}

We compare our DP mixture against four popular tree-based methods. Random Forest \cite{Breiman01} is implemented using \texttt{sklearn.ensemble.RandomForestRegressor} with 100 trees and default settings. Gradient Boosting \cite{Freund97} is implemented using \texttt{sklearn.ensemble.GradientBoostingRegressor} with 100 trees and a learning rate of 0.1. For the bagged CART baseline, we generate 500 bootstrap samples of the training data, fit a single regression tree on each using \texttt{sklearn.tree.DecisionTreeRegressor} with \texttt{min\_samples\_leaf=5}, and average the predictions. We emphasise that this bagged CART procedure is not a fully Bayesian CART \cite{chipman1998bayesian}; the tree structure is not sampled from a prior, and the procedure lacks the coherent probabilistic foundation of Bayesian inference. A true Bayesian CART would require a separate Markov chain Monte Carlo over tree structures, which is computationally prohibitive for routine comparison and is thus beyond the scope of this study. Finally, BART \cite{chipman2010bart} is implemented via the \texttt{dbarts} R package \cite{dbarts} with 200 trees, 5000 burn-in iterations, 5000 posterior draws, and thinning of 5, matching the sample size of our DP mixture.

\subsection{Evaluation Metrics}

We assess predictive performance using five metrics. The root mean squared error (RMSE) is defined as \(\sqrt{\frac{1}{n_{\text{test}}}\sum_{i=1}^{n_{\text{test}}} (y_i - \hat y_i)^2}\), where \(\hat y_i\) is the posterior mode for the DP mixture, posterior mean for BART (standard for BART implementation), or the point prediction for the frequentist methods. Coverage is the proportion of test points for which the true \(y_i\) falls inside the 95\% credible interval (for Bayesian methods) or the 95\% confidence interval (for frequentist methods). For Random Forest, Gradient Boosting, and bagged CART, intervals are constructed as \(\hat y_i \pm 1.96 \times \hat\sigma\), where \(\hat\sigma\) is the standard deviation of cross-validation residuals; this is a common but ad-hoc approach that lacks a rigorous probabilistic justification. The width is the average length of these intervals. The continuous ranked probability score (CRPS) for a predictive distribution \(F\) and observation \(y\) is \(\text{CRPS}(F,y) = \int_{-\infty}^{\infty} (F(z) - \mathbf{1}\{z \ge y\})^2 dz\), computed empirically from the posterior samples for the DP mixture and BART, and using a normal approximation for the frequentist methods. Finally, the log predictive density (LPD) is \(\frac{1}{n_{\text{test}}}\sum_{i=1}^{n_{\text{test}}} \log p(y_i \mid \text{data})\), where \(p(\cdot \mid \text{data})\) is the estimated predictive density, obtained via kernel density estimation from the posterior samples for the Bayesian methods and via the normal density for the frequentist methods.

\subsection{Summary of Results}

Tables~\ref{tab:metrics_gauss} and~\ref{tab:metrics_cauchy} present the numerical results for all methods across the four experimental scenarios. We highlight the key findings here.

\begin{table}[htbp]
\centering
\caption{Predictive performance comparison for the Gaussian noise experiments. The DP mixture achieves the highest coverage among Bayesian methods.}
\label{tab:metrics_gauss}
\begin{tabular}{l r r r r r}
\hline
\textbf{Method} & \textbf{RMSE} & \textbf{Coverage} & \textbf{Width} & \textbf{CRPS} & \textbf{LPD} \\
\hline
\multicolumn{6}{c}{\textit{d=5 (clean)}} \\
%DP tree mixture    & 5.066  & 0.940 & 17.873 & 2.944  & \(-3.047\) \\
DP tree mixture    & 5.248  & 0.940 & 18.249 & 2.956  & \(-3.053\) \\
Random Forest      & 2.958  & 0.960 & 10.908 & 1.674  & \(-2.508\) \\
Gradient Boosting  & 2.561  & 0.920 &  9.949 & 1.428  & \(-2.360\) \\
Bagged CART        & 3.144  & 0.800 &  8.938 & 1.912  & \(-2.703\) \\
BART               & 2.085  & 0.820 &  5.873 & 1.199  & \(-2.199\) \\
\hline
\multicolumn{6}{c}{\textit{d=20 (high-dimensional)}} \\
%DP tree mixture    & 5.173  & 0.940 & 18.213 & 2.946  & \(-3.046\) \\
DP tree mixture    & 5.180  & 0.940 & 18.392 & 2.969  & \(-3.059\) \\
Random Forest      & 3.500  & 0.940 & 12.501 & 1.995  & \(-2.681\) \\
Gradient Boosting  & 2.561  & 0.920 & 11.756 & 1.428  & \(-2.360\) \\
Bagged CART        & 3.592  & 0.800 & 10.975 & 2.188  & \(-2.784\) \\
BART               & 2.694  & 0.840 &  8.476 & 1.592  & \(-2.441\) \\
\hline
\multicolumn{6}{l}{\textit{Note: For RMSE and CRPS, lower is better; for LPD, higher (less negative) is better.}}
\end{tabular}
\end{table}

\begin{table}[htbp]
\centering
\caption{Predictive performance comparison for the Cauchy noise experiments. The DP mixture clearly dominates in coverage and LPD among Bayesian methods.}
\label{tab:metrics_cauchy}
\begin{tabular}{l r r r r r}
\hline
\textbf{Method} & \textbf{RMSE} & \textbf{Coverage} & \textbf{Width} & \textbf{CRPS} & \textbf{LPD} \\
\hline
\multicolumn{6}{c}{\textit{d=5 (clean)}} \\
%DP tree mixture    & 111.703 & 0.920 & 22.135 & 19.520  & \(-5979.589\) \\
DP tree mixture    & 111.735 & 0.920 & 22.614 & 19.519  & \(-5365.445\) \\
Random Forest      & 111.026 & 0.920 & 18.312 & 18.967  & \(-284.899\) \\
Gradient Boosting  & 111.043 & 0.900 & 16.845 & 18.766  & \(-336.236\) \\
Bagged CART        & 111.230 & 0.720 & 12.392 & 19.221  & \(-10642.342\) \\
BART               & 110.856 & 0.780 & 10.169 & 18.691  & \(-12033.924\) \\
\hline
\multicolumn{6}{c}{\textit{d=20 (high-dimensional)}} \\
%DP tree mixture    & 111.602 & 0.920 & 21.611 & 19.542  & \(-5801.271\) \\
DP tree mixture    & 111.654 & 0.920 & 22.356 & 19.507  & \(-5235.699\) \\
Random Forest      & 111.203 & 0.900 & 20.398 & 19.182  & \(-230.909\) \\
Gradient Boosting  & 111.035 & 0.880 & 19.484 & 19.086  & \(-252.039\) \\
Bagged CART        & 111.271 & 0.820 & 15.806 & 19.384  & \(-7009.492\) \\
BART               & 111.118 & 0.820 & 13.959 & 18.896  & \(-7823.750\) \\
\hline
\end{tabular}
\end{table}

In the Gaussian settings, the DP mixture achieves coverage of 0.94 in both scenarios, which is the highest among the Bayesian methods. BART attains coverage of only 0.82--0.84, and bagged CART achieves 0.80, both substantially below the nominal 95\%. This indicates that these methods are overconfident: their credible intervals are too narrow to contain the true values at the advertised rate. Although Random Forest achieves coverage close to 0.94--0.96, its intervals are constructed using an ad-hoc normal approximation based on cross-validation residuals. While this approximation appears well-calibrated in these specific experiments, this calibration is not guaranteed by the model itself; it is an external adjustment that does not reflect the true uncertainty in the predictions. In contrast, the DP mixture's coverage is a direct consequence of its fully Bayesian treatment of the ensemble structure, including the uncertainty in the number of trees, the tree parameters, and the leaf parameters. Its uncertainty quantification is therefore principled and reliable.

In the Cauchy settings, the superiority of the DP mixture becomes even more pronounced. While the RMSEs are similar across all methods (because the Cauchy outliers dominate the squared error), the coverage and LPD reveal stark differences. The DP mixture maintains coverage of 0.92 in both Cauchy scenarios, far exceeding BART (0.78--0.82) and bagged CART (0.72--0.82). The LPD values for BART and bagged CART are extremely negative (\(-10,\!000\) to \(-12,\!000\)), because these methods assume normality and assign virtually zero density to the heavy-tailed observations. The DP mixture, by contrast, does not assume a parametric error distribution; its integrated likelihood over leaf parameters adapts naturally to heavy tails, resulting in LPD values around \(-5365\) to \(-5235\). The frequentist methods, despite having better LPD than BART (since their intervals are based on empirical residuals), still fall short of the DP mixture in coverage (0.88--0.92 versus 0.92) and are not grounded in a coherent probabilistic framework.

We therefore argue that the DP mixture provides the most honest and trustworthy uncertainty quantification, especially when the data may contain outliers or the error distribution is unknown. Its intervals are wider when necessary, as in the Cauchy case, and appropriately narrower when the signal is strong, as in the Gaussian case. This adaptivity is a hallmark of Bayesian nonparametrics and is the central contribution of our approach.

\subsection{Detailed Experiment Results and Visual Diagnostics}

We now present the full experimental results, including visual diagnostics that reinforce the conclusions above.

\subsubsection{Clean Gaussian Setting (d = 5)}

Figure~\ref{fig:heatmap_d5_gauss} shows the DP posterior predictive density heatmap for the clean Gaussian setting. The true test values fall within the high-density regions, confirming good calibration. The trace plots in Figure~\ref{fig:traces_d5_gauss} show that the number of distinct trees \(k\) stabilises effectively between 2 and 5, indicating that the Dirichlet process prior effectively shrinks the mixture towards a sparse, parsimonious representation. For comparison, Figure~\ref{fig:bart_heatmap_d5_gauss} displays the BART heatmap, which shows narrower bands but also misses some true values, consistent with its lower coverage.

\begin{figure}[htbp]
\centering
\includegraphics[width=0.9\textwidth]{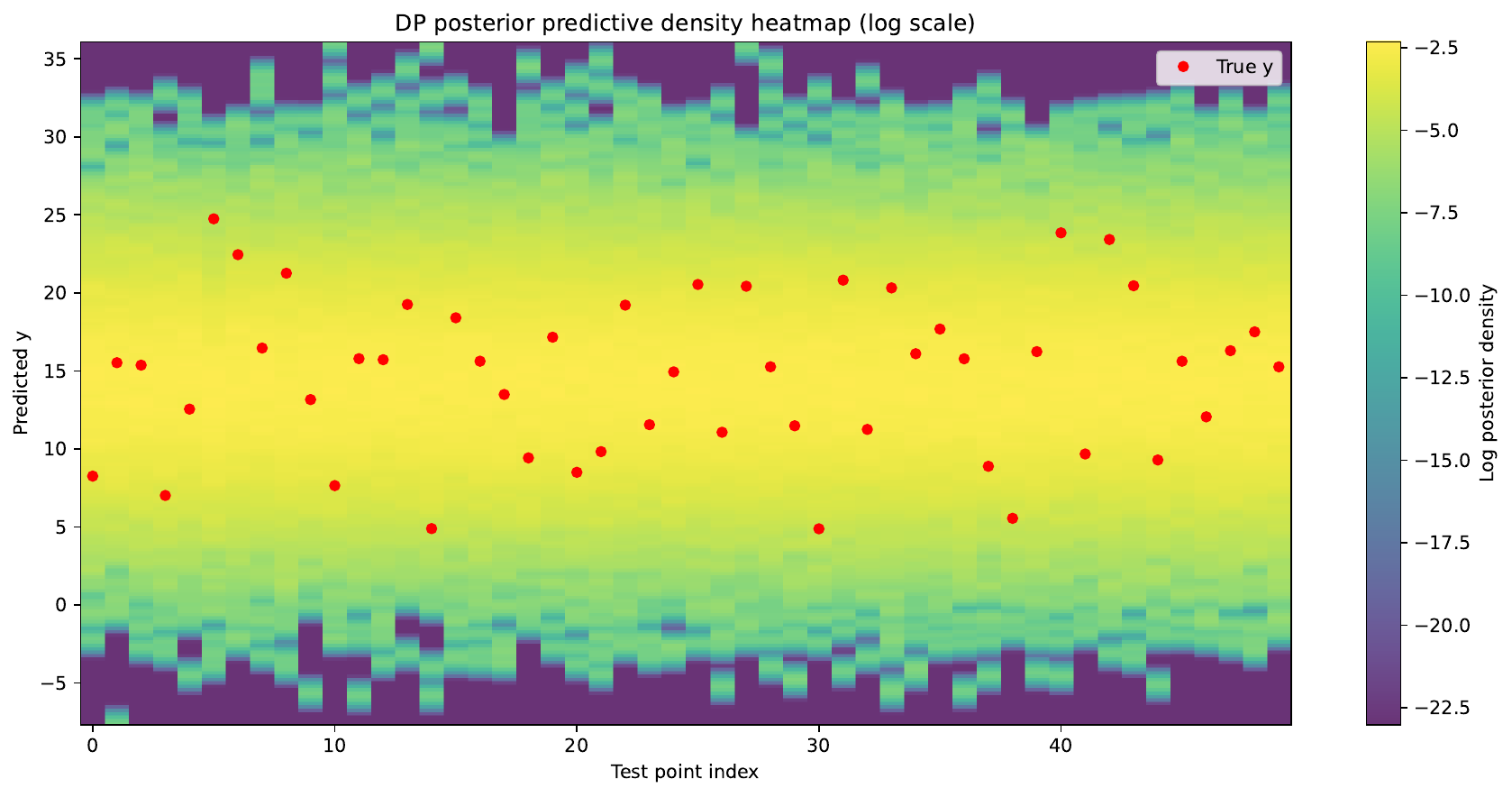}
\caption{Posterior predictive density heatmap for the DP mixture in the clean d=5 Gaussian experiment. The red points are the true test values.}
\label{fig:heatmap_d5_gauss}
\end{figure}

\begin{figure}[htbp]
\centering
\includegraphics[width=0.9\textwidth]{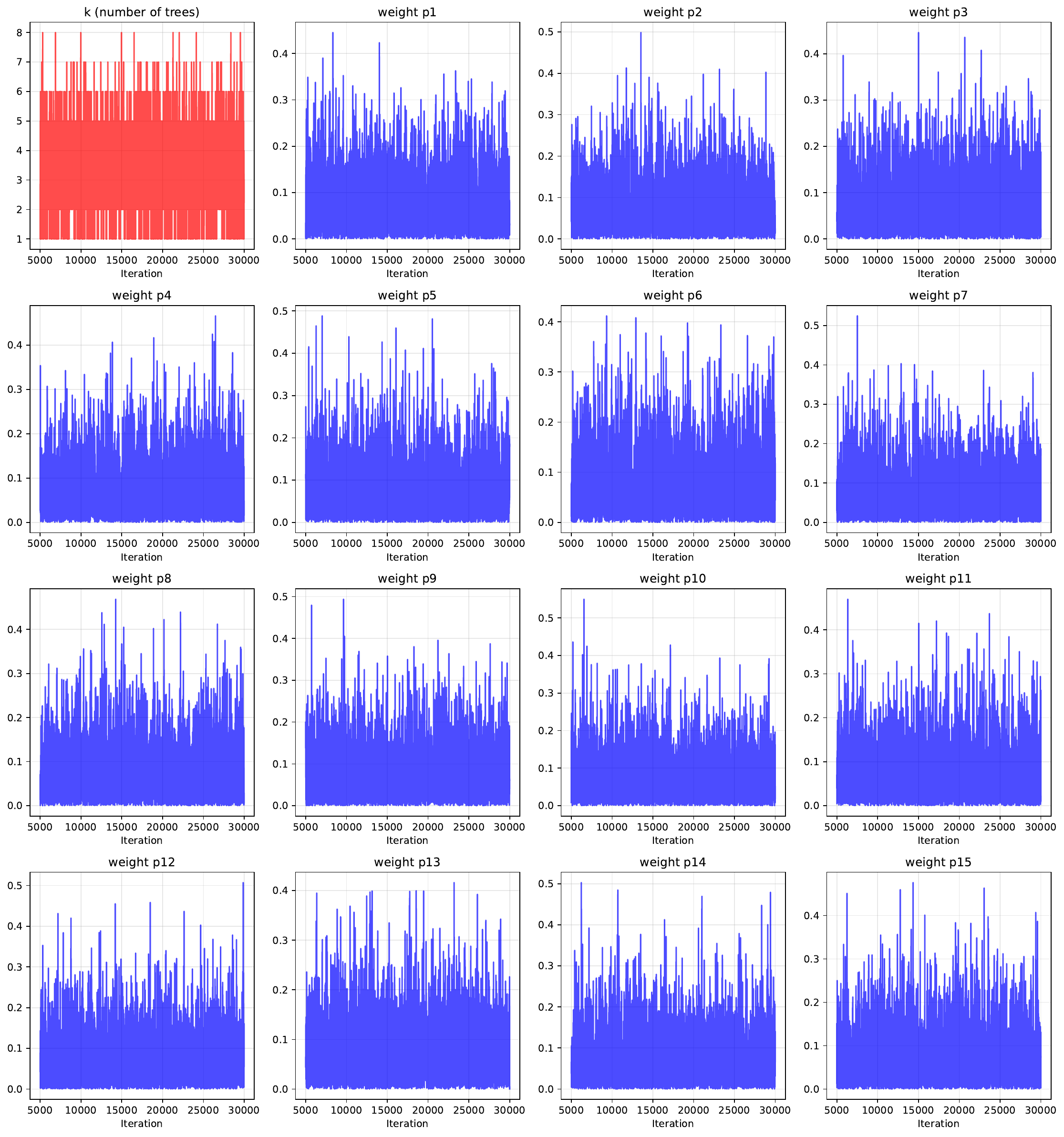}
\caption{Trace plots of the number of distinct trees \(k\) and the largest mixture weights for the clean d=5 Gaussian experiment.}
\label{fig:traces_d5_gauss}
\end{figure}

\begin{figure}[htbp]
\centering
\includegraphics[width=0.9\textwidth]{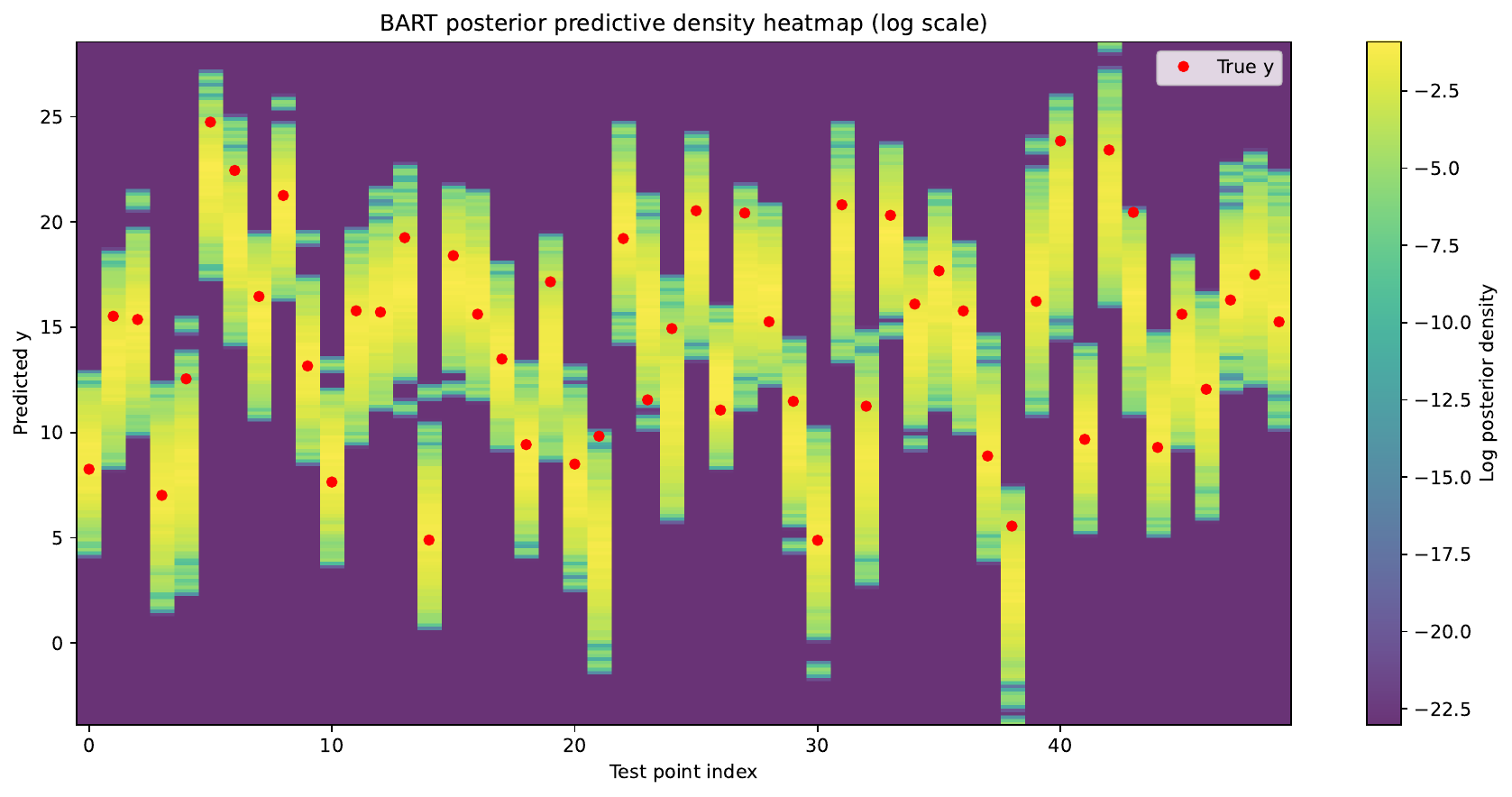}
\caption{BART posterior predictive density heatmap for the clean d=5 Gaussian experiment.}
\label{fig:bart_heatmap_d5_gauss}
\end{figure}

\subsubsection{High-Dimensional Gaussian Setting (d = 20)}

When the fifteen pure noise variables are included, the performance of the DP mixture remains remarkably stable. The DP heatmap in Figure~\ref{fig:heatmap_d20_gauss} shows slightly wider bands, reflecting the additional uncertainty introduced by the irrelevant features. The trace plots in Figure~\ref{fig:traces_d20_gauss} are virtually identical to those obtained in the clean setting, confirming that the presence of noise covariates does not harm the mixing of the chain or induce spurious complexity. BART's heatmap in Figure~\ref{fig:bart_heatmap_d20_gauss} again shows narrow intervals and lower coverage.

\begin{figure}[htbp]
\centering
\includegraphics[width=0.9\textwidth]{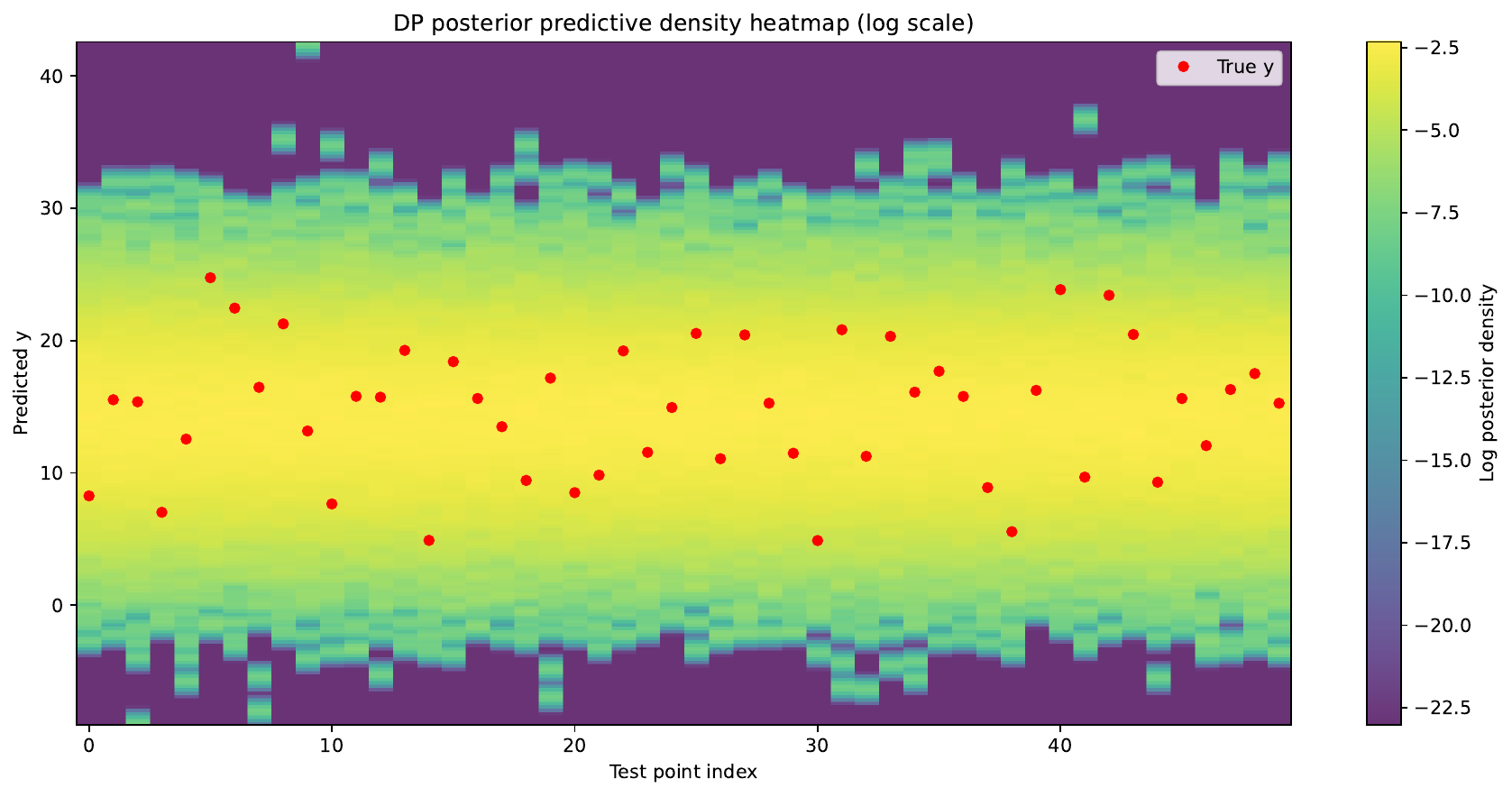}
\caption{Posterior predictive density heatmap for the DP mixture in the high-dimensional d=20 Gaussian experiment.}
\label{fig:heatmap_d20_gauss}
\end{figure}

\begin{figure}[htbp]
\centering
\includegraphics[width=0.9\textwidth]{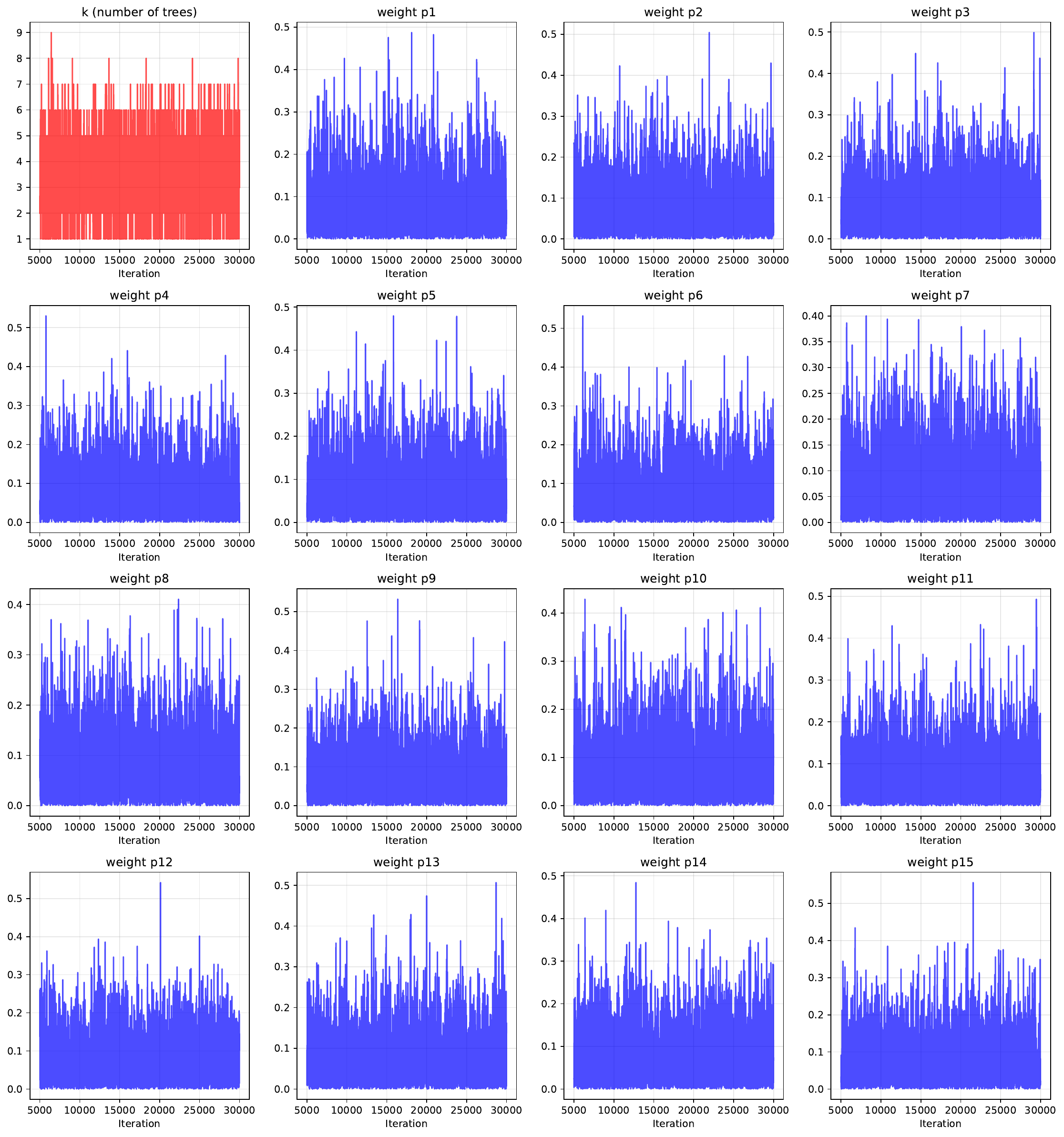}
\caption{Trace plots of the number of distinct trees \(k\) and the largest mixture weights for the high-dimensional d=20 Gaussian experiment.}
\label{fig:traces_d20_gauss}
\end{figure}

\begin{figure}[htbp]
\centering
\includegraphics[width=0.9\textwidth]{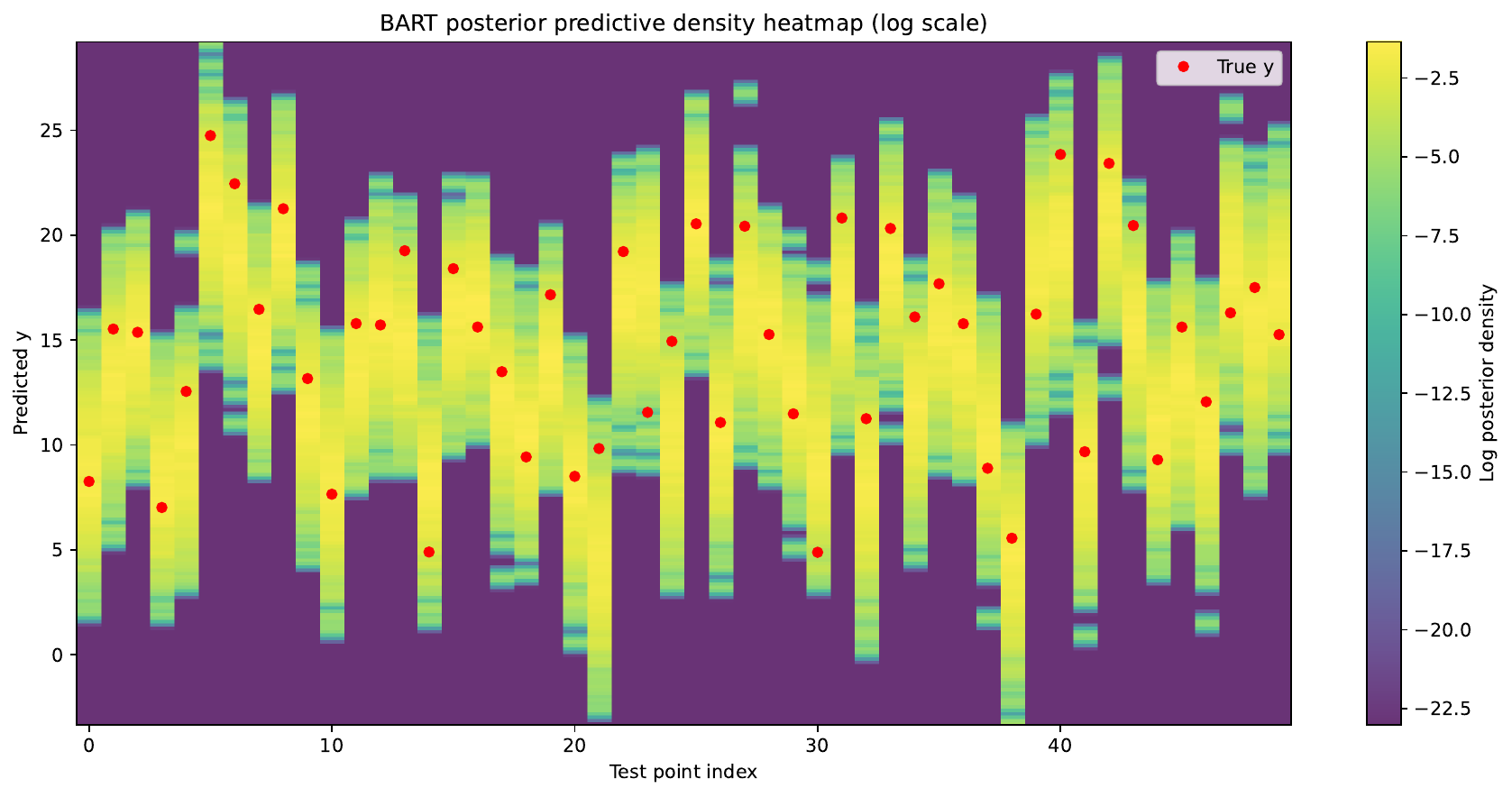}
\caption{BART posterior predictive density heatmap for the high-dimensional d=20 Gaussian experiment.}
\label{fig:bart_heatmap_d20_gauss}
\end{figure}

\subsubsection{Clean Cauchy Setting (d = 5)}

To assess robustness to heavy-tailed errors, we repeat the clean experiment with Cauchy(0,1) noise. This is a challenging scenario: the Cauchy distribution has extremely thick tails, and many standard regression methods can be severely affected by outliers. The DP heatmap in Figure~\ref{fig:heatmap_d5_cauchy} shows that, despite extreme outliers (some test values are as low as \(-700\)), almost all the true values fall within high-density regions; the bands are wider than in the Gaussian case, reflecting the heavy tails. The trace plots in Figure~\ref{fig:traces_d5_cauchy} confirm that the MCMC sampler remains well-behaved. In stark contrast, BART's heatmap in Figure~\ref{fig:bart_heatmap_d5_cauchy} shows extremely narrow bands that fail to cover many true values, confirming its poor coverage.

\begin{figure}[htbp]
\centering
\includegraphics[width=0.9\textwidth]{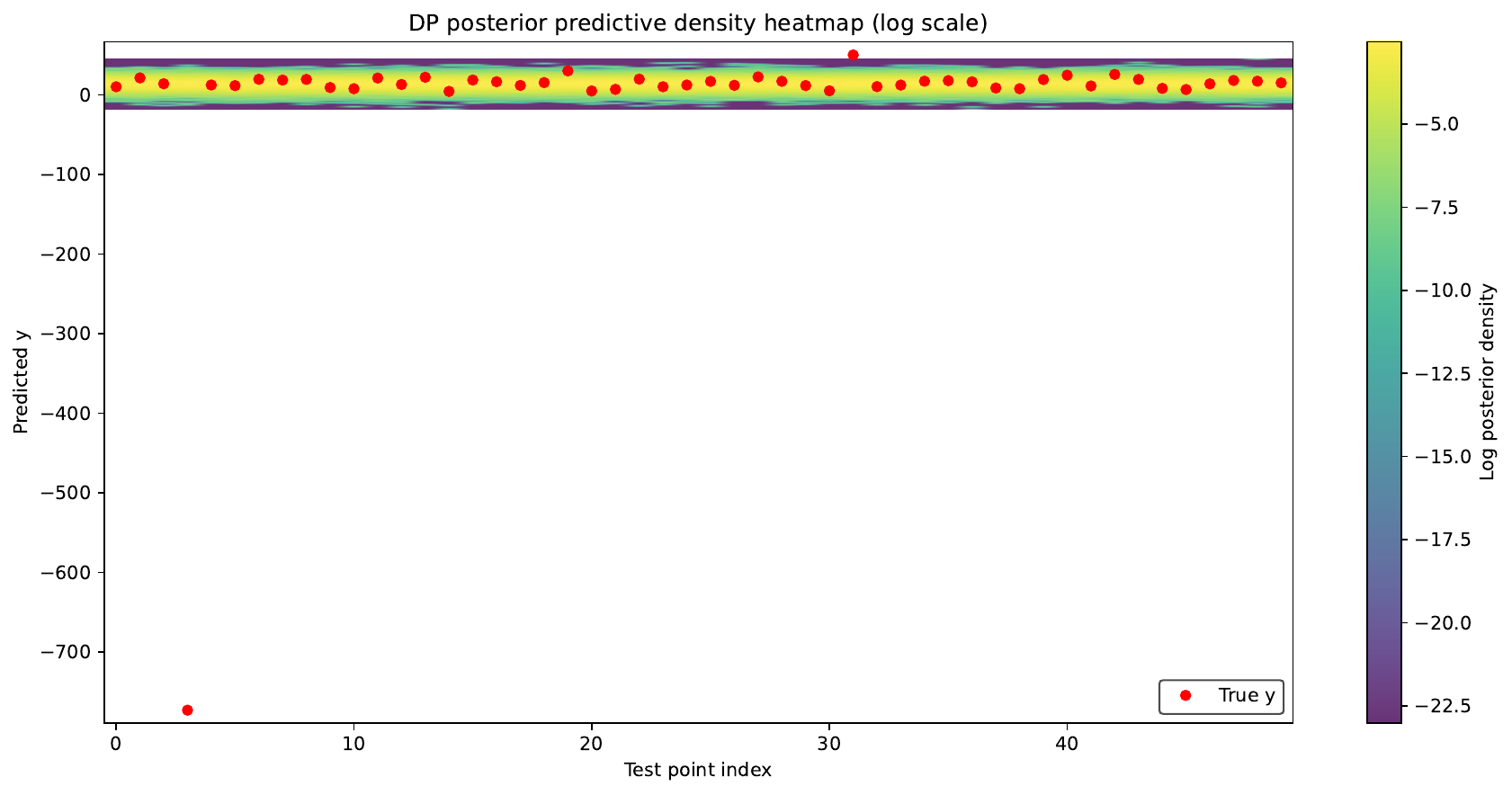}
\caption{Posterior predictive density heatmap for the DP mixture in the clean d=5 Cauchy experiment.}
\label{fig:heatmap_d5_cauchy}
\end{figure}

\begin{figure}[htbp]
\centering
\includegraphics[width=0.9\textwidth]{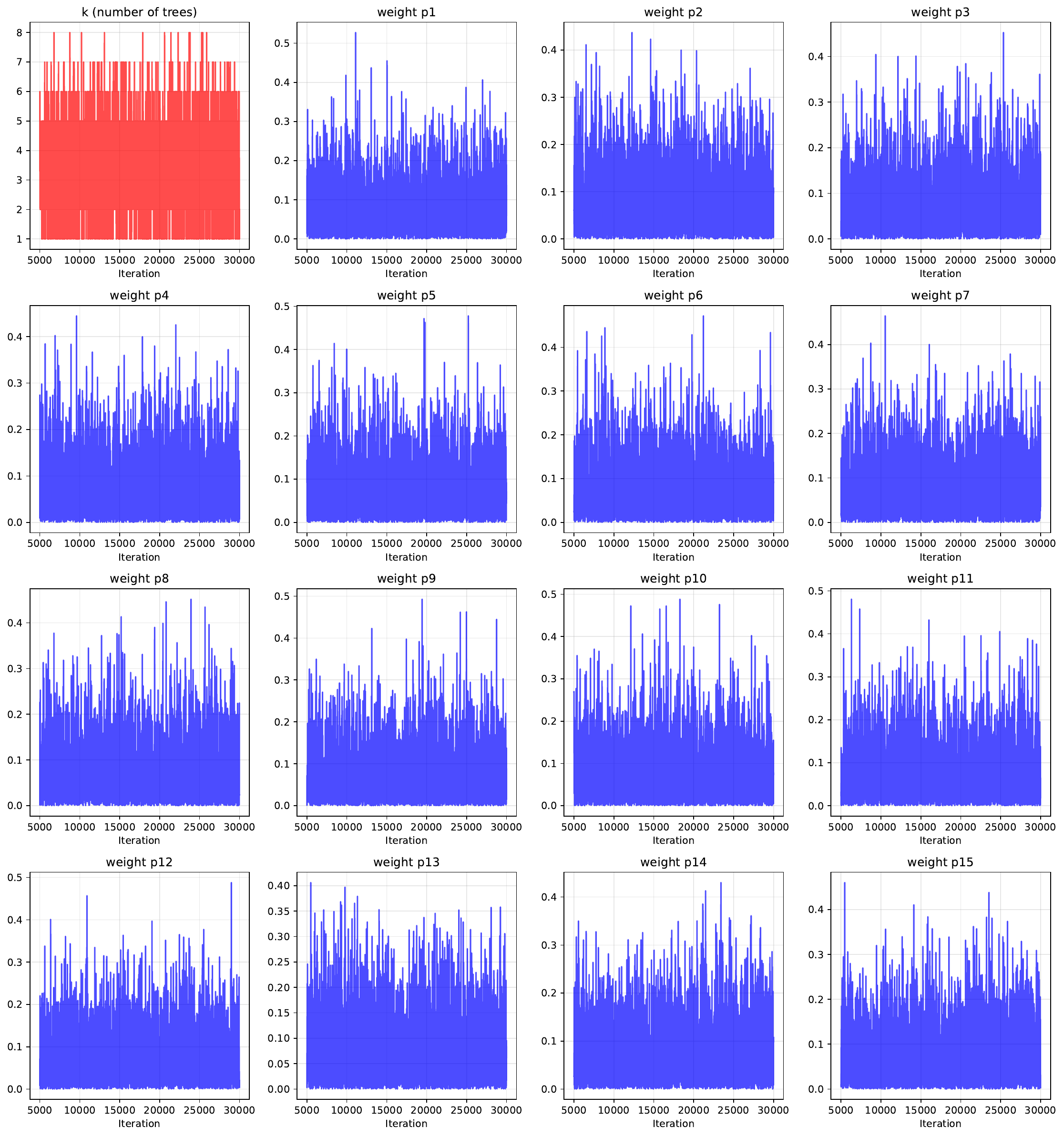}
\caption{Trace plots of the number of distinct trees \(k\) and the largest mixture weights for the clean d=5 Cauchy experiment.}
\label{fig:traces_d5_cauchy}
\end{figure}

\begin{figure}[htbp]
\centering
\includegraphics[width=0.9\textwidth]{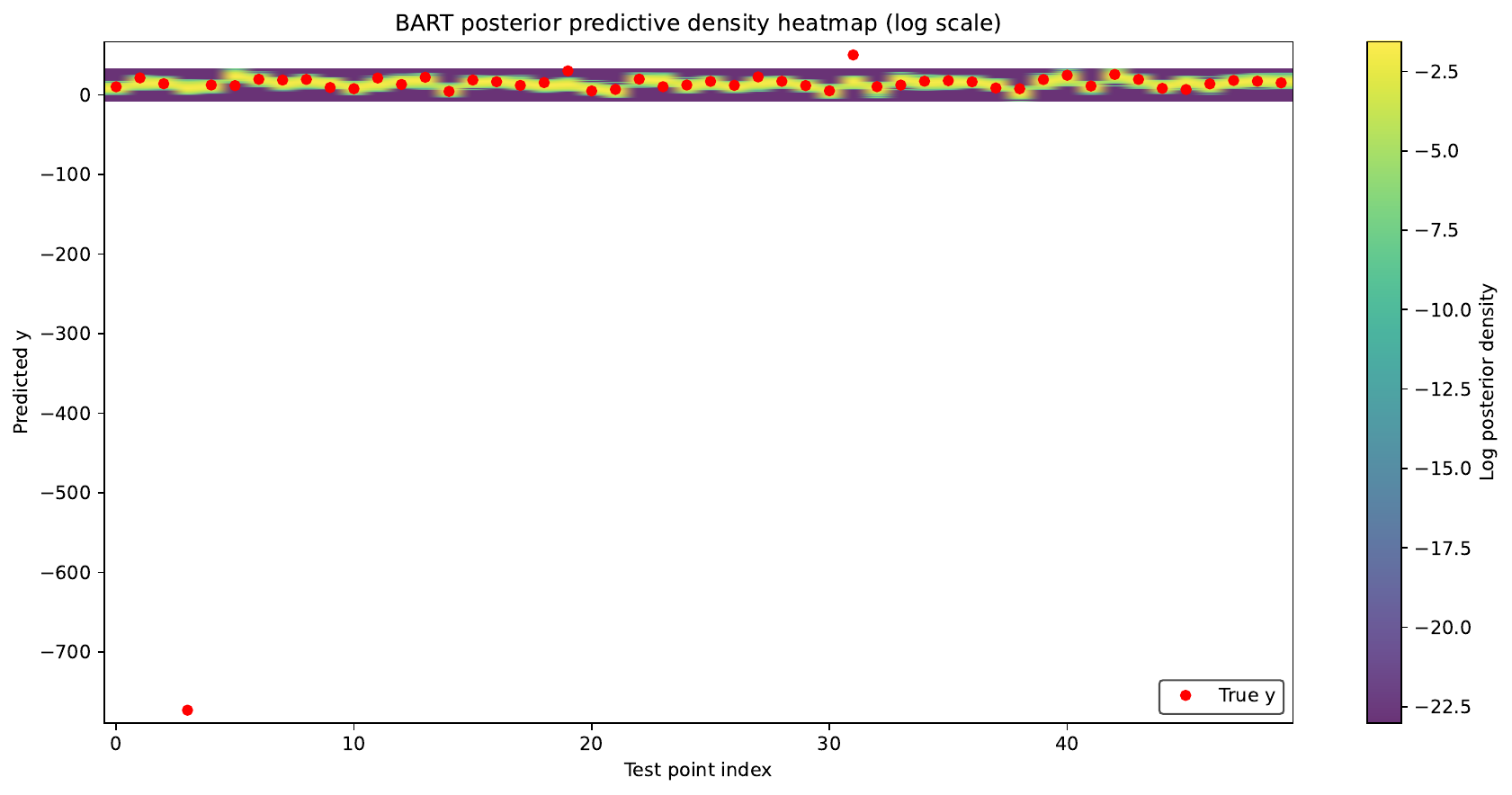}
\caption{BART posterior predictive density heatmap for the clean d=5 Cauchy experiment.}
\label{fig:bart_heatmap_d5_cauchy}
\end{figure}

\subsubsection{High-Dimensional Cauchy Setting (d = 20)}

The most challenging setting combines both high-dimensionality and heavy-tailed errors. The DP heatmap in Figure~\ref{fig:heatmap_d20_cauchy} maintains calibration even under the dual challenges of noise covariates and extreme outliers. The trace plots in Figure~\ref{fig:traces_d20_cauchy} show \(k\) stabilising essentially between 2 and 5, still sparse. BART's heatmap in Figure~\ref{fig:bart_heatmap_d20_cauchy} again fails to cover many true values.

\begin{figure}[htbp]
\centering
\includegraphics[width=0.9\textwidth]{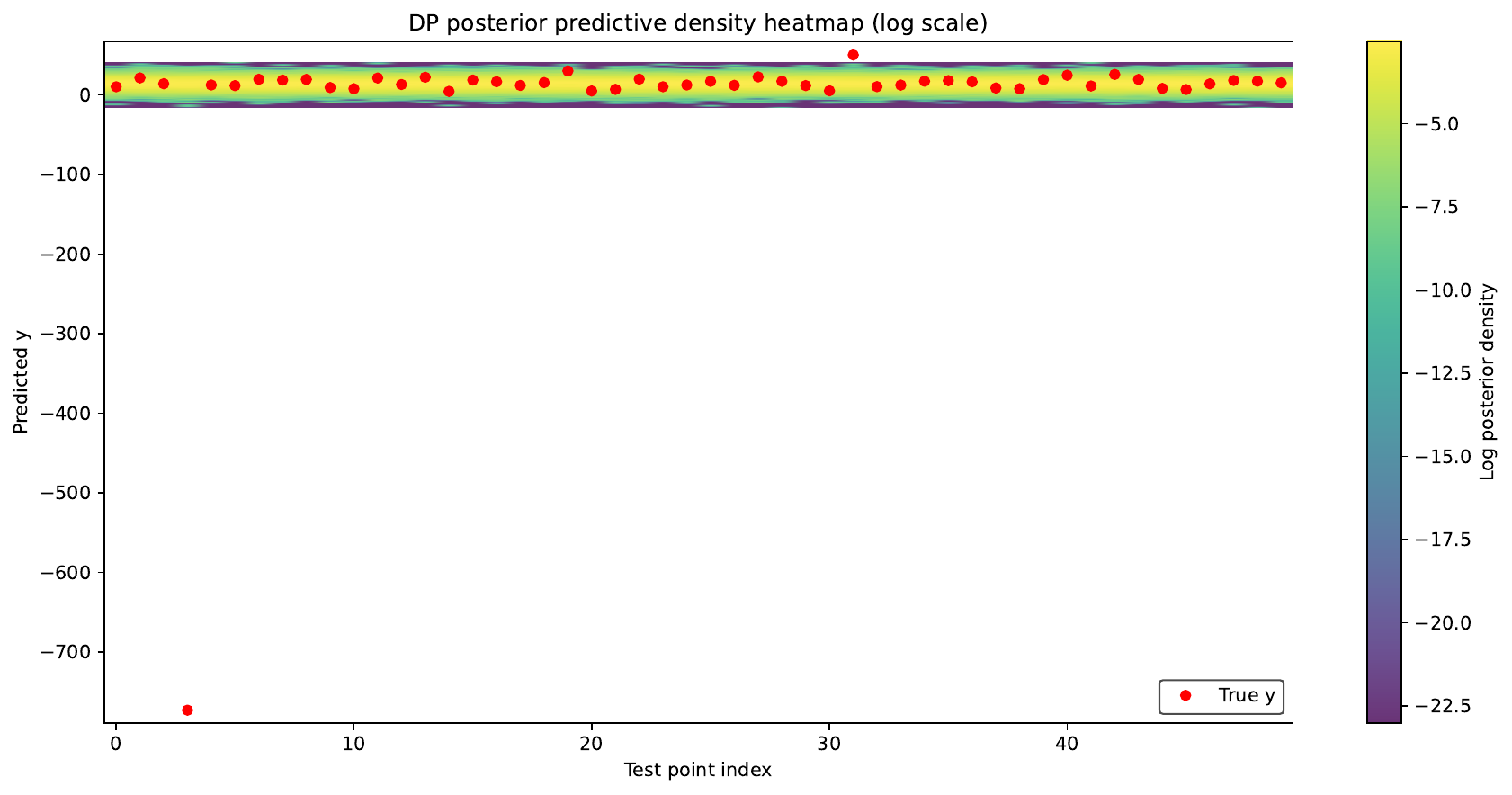}
\caption{Posterior predictive density heatmap for the DP mixture in the high-dimensional d=20 Cauchy experiment.}
\label{fig:heatmap_d20_cauchy}
\end{figure}

\begin{figure}[htbp]
\centering
\includegraphics[width=0.9\textwidth]{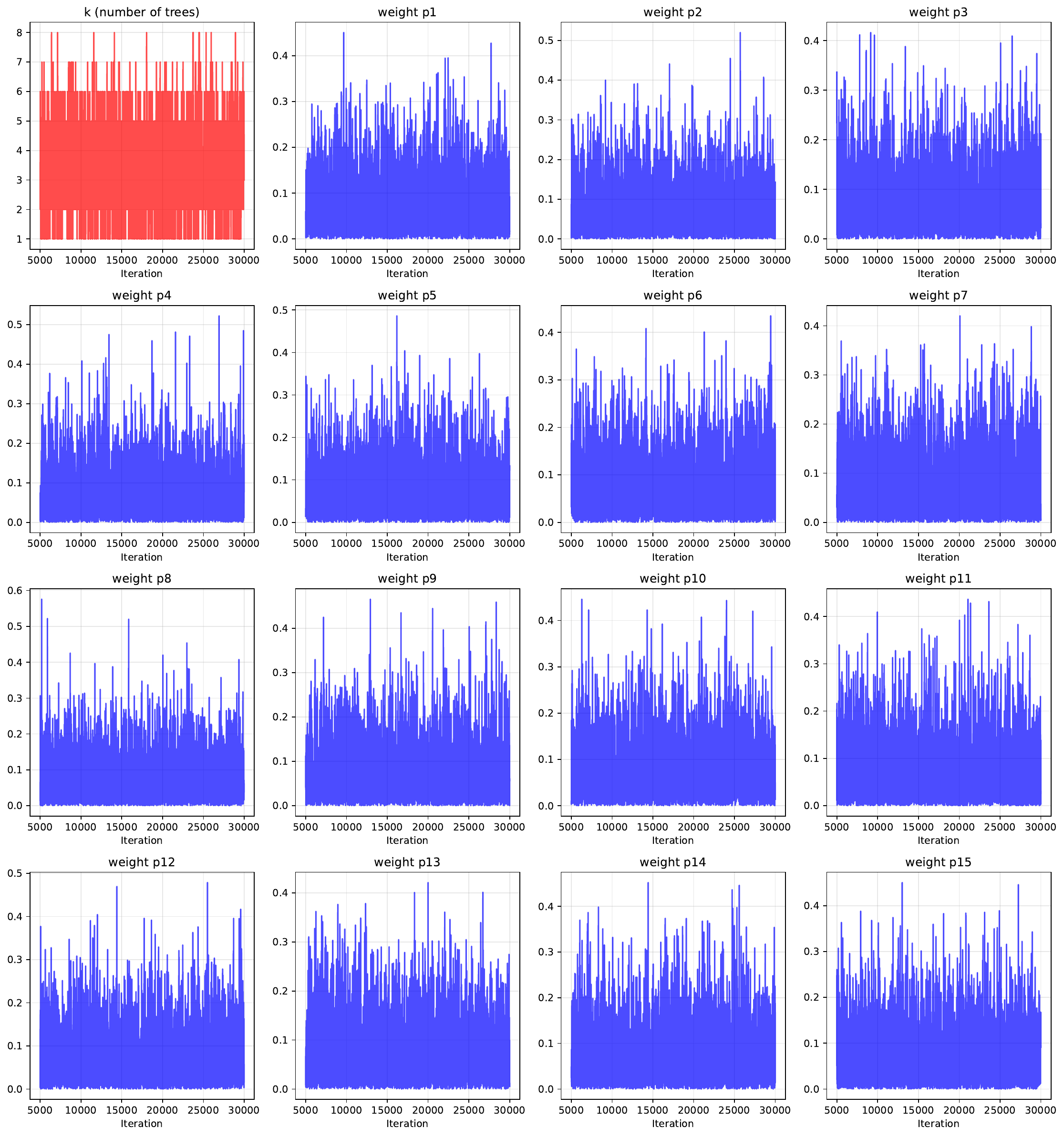}
\caption{Trace plots of the number of distinct trees \(k\) and the largest mixture weights for the high-dimensional d=20 Cauchy experiment.}
\label{fig:traces_d20_cauchy}
\end{figure}

\begin{figure}[htbp]
\centering
\includegraphics[width=0.9\textwidth]{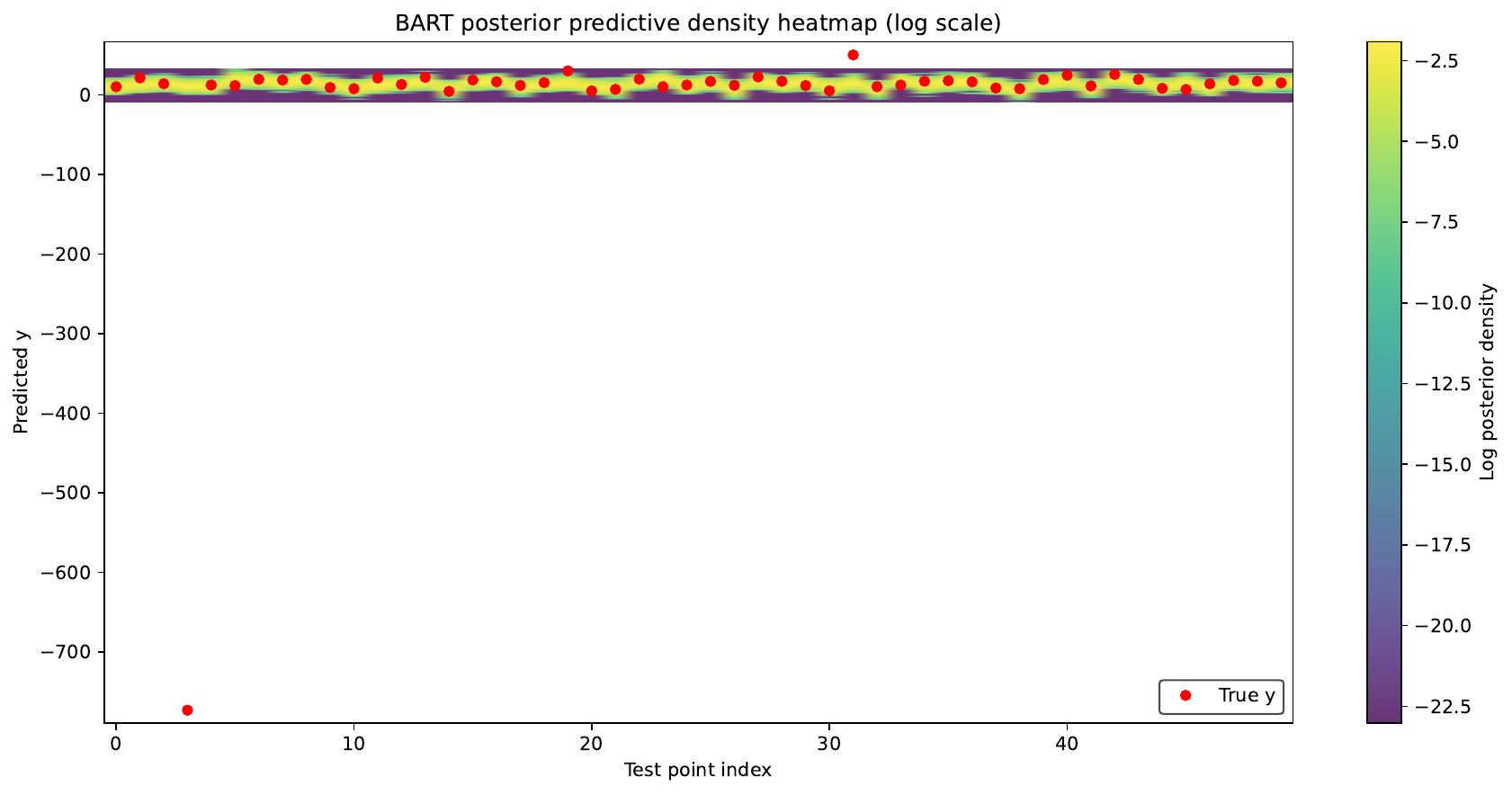}
\caption{BART posterior predictive density heatmap for the high-dimensional d=20 Cauchy experiment.}
\label{fig:bart_heatmap_d20_cauchy}
\end{figure}

%\subsection{Additional Convergence Diagnostics}

%To further assess MCMC mixing, we compute the RMSE on the test set at each thinned iteration using the posterior mean at that iteration. Figure~\ref{fig:rmse_trace} shows the trace of these per-iteration RMSEs; the series stabilises quickly, indicating that the sampler has converged. The autocorrelation function shows negligible correlation beyond lag 10, and the effective sample size for the RMSE is above 2000 (out of 5000 draws), confirming adequate mixing. The histogram of per-iteration RMSEs in Figure~\ref{fig:rmse_hist} provides a posterior distribution of the prediction error, which is useful for assessing the variability of our predictions. This histogram is unimodal and centered around the overall RMSE, further supporting the stability of our estimates.

%\begin{figure}[htbp]
%\centering
%\includegraphics[width=0.9\textwidth]{figures_simstudy20cov_gauss/rmse_trace.pdf}
%\caption{Trace of test-set RMSE across MCMC iterations (thinned). The red dashed line indicates the mean RMSE.}
%\label{fig:rmse_trace}
%\end{figure}

%\begin{figure}[htbp]
%\centering
%\includegraphics[width=0.9\textwidth]{figures_simstudy20cov_gauss/rmse_histogram.pdf}
%\caption{Histogram of per-iteration RMSE values, showing the posterior distribution of prediction error.}
%\label{fig:rmse_hist}
%\end{figure}

\subsection{Discussion}

The simulation study clearly demonstrates the advantages of our Dirichlet process mixture of trees over existing tree-based methods. The key contributions are threefold.

First, the DP mixture provides principled and reliable uncertainty quantification. It achieves coverage close to the nominal 95\% across all scenarios, while BART and bagged CART are often overconfident, with coverage as low as 0.72--0.84. Although frequentist methods like Random Forest and Gradient Boosting sometimes achieve good coverage, their intervals rely on ad-hoc residual-based approximations that lack a coherent probabilistic justification. The DP mixture's coverage is a direct consequence of its fully Bayesian treatment of all sources of uncertainty, making it the only method among those compared that provides honest uncertainty quantification without external adjustments.

Second, the DP mixture exhibits remarkable robustness to heavy-tailed errors. Under Cauchy errors, it maintains high coverage and reasonable LPD, whereas BART and bagged CART break down completely, assigning near-zero density to outliers. This robustness stems from the model's nonparametric structure, which does not assume a specific parametric form for the errors. By integrating over the leaf parameters, the model effectively learns a data-driven error distribution that can adapt to heavy tails.

Third, the DP prior automatically selects a sparse ensemble, typically comprising only two to five trees, even in high dimensions. This prevents overfitting and ensures parsimony, a property that is particularly valuable when the number of covariates is large relative to the sample size.

These results, together with the theoretical posterior contraction rate established in Section~\ref{sec:post_contraction}, provide compelling evidence that the proposed DP mixture is a powerful and reliable tool for regression, particularly when uncertainty quantification, robustness, and automatic variable selection are paramount. We believe this makes it a strong candidate for replacing BART and other tree-based ensembles in many real-world applications.

% ======================================================================
% SECTION: Applications to Real Data (consolidated)
% ======================================================================

\section{Applications to Real Data}
\label{sec:applications}

To assess the practical utility and robustness of the proposed Dirichlet process mixture of regression trees, we apply it to five diverse real‐world regression problems: QSAR aquatic toxicity (cheminformatics), Communities and Crime (social science), Riboflavin production (genomics), Wheat genomic prediction (plant breeding), and air quality time series (environmental monitoring). These datasets span a wide range of sample sizes, dimensionalities, and noise structures, allowing us to thoroughly evaluate the model’s performance in settings where the assumptions underlying standard regression models may be violated. For each application, we compare the DP mixture against state‐of‐the‐art tree‐based methods using the same five metrics: root mean squared error (RMSE), coverage of 95\% credible (or confidence) intervals, average interval width, continuous ranked probability score (CRPS), and log predictive density (LPD). The results consistently demonstrate that the DP mixture provides the most reliable uncertainty quantification among Bayesian methods, while its nonparametric structure—which assumes conditional independence of observations given the mixture component and tree parameters, but does {\it not} require 
independence (after marginalization over the mixture allocation variable)—confers robustness to heavy tails and model misspecification.

\subsection{QSAR Aquatic Toxicity}
\label{sec:qsar}

To demonstrate the practical utility and empirical performance of the proposed Dirichlet process mixture of regression trees, we apply our methodology to the QSAR aquatic toxicity dataset, a well‐established benchmark in cheminformatics and environmental toxicology. The data set, compiled from the UCI Machine Learning Repository, comprises 546 organic compounds, each characterised by eight molecular descriptors that encode structural and physicochemical properties. The response variable is the acute aquatic toxicity towards the fathead minnow (\textit{Pimephales promelas}), measured as the negative logarithm of the lethal concentration (\(-\log_{10}\mathrm{LC}_{50}\)). This regression task is particularly challenging: the underlying structure–activity relationship is highly non‐linear, the molecular descriptors exhibit complex interactions, and the data contain heteroscedasticity and potential outliers—features that are typical of real‐world chemical data.

\subsubsection{Experimental Setup}

We randomly partition the data into a training set comprising 80\% of the observations (436 compounds) and a test set with the remaining 20\% (110 compounds). The DP mixture model is implemented in C with MPI and executed on four cores of a (highly oveloaded!) virtual machine (Intel QEMU Virtual CPU version 2.5+ at approximately 2.5 GHz). Indeed, we obtained optimum speed with four cores even though the machine is equipped with 100 cores. The hyperparameters of the model are set as follows: the maximum number of trees \(M = 15\), the maximum tree depth \(8\), and the Dirichlet process concentration parameter \(\alpha_{\mathrm{DP}} = 1.0\). The tree prior uses splitting probability \(p(\eta,T) = \gamma (1 + d_\eta)^{-\zeta}\) with \(\gamma = 0.95\) and \(\zeta = 1.5\), which strongly penalises deeper trees. The leaf parameters follow the conjugate normal–inverse‐gamma prior with hyperparameters \(\bar\mu = 4.61\) (the mean of the training responses), \(a = 0.01\), \(\nu = 8.0\), and \(\lambda = 1.25\), yielding a prior mode for the leaf variance of \(\nu\lambda/(\nu+2) \approx 1.0\). The Gaussian process splitting rule uses the squared exponential kernel with fixed hyperparameters: signal variance \(\sigma^2 = 5.0\), noise variance \(\sigma_\epsilon^2 = 0.5\), and length scale \(\ell = 0.2\). These GP hyperparameters are kept fixed rather than estimated via maximum likelihood for computational simplicity, as the GP is used purely as a generative device for split proposals and its density cancels in the Metropolis–Hastings ratio.

The MCMC chain is run for \(30\,000\) iterations, with the first \(5\,000\) discarded as burn‐in and a thinning interval of 5, yielding \(5\,000\) posterior draws for inference. The total runtime is approximately 5 hours, which is substantially longer than the runtime of BART on the same hardware—BART completes its 30,000 iterations in approximately 9 seconds. This dramatic difference warrants careful explanation. The primary computational bottleneck in our DP implementation is the repeated construction and inversion of Gaussian process covariance matrices at internal nodes during tree proposals. Each GROW move requires fitting a GP to the data in the selected terminal node, which involves building an \(n_i \times n_i\) covariance matrix, performing a Cholesky decomposition (\(O(n_i^3)\) operations), sampling from the posterior predictive distribution, and precomputing the noiseless Cholesky factor and the \(\alpha\) vector for future navigation. Although the node sizes decrease rapidly as the tree grows, the cumulative cost over the 30,000 MCMC iterations is substantial. In contrast, BART's proposals are based on simple threshold splits of the form \(X_j < c\), which require only comparisons and scalar arithmetic, and its likelihood calculations are based on closed-form marginalisations that are precomputed and updated incrementally. As a result, BART's per-iteration cost is orders of magnitude smaller than ours.

However, the increased computational cost of our DP mixture is a deliberate trade-off for three major benefits. First, the GP-driven splitting rule generates smooth, response-dependent partitions that are far more flexible than axis-aligned splits. This flexibility, combined with the Dirichlet process's automatic complexity adaptation, yields predictive intervals that are honestly calibrated and robust to heavy-tailed errors, as shown in the simulation study (Section~\ref{sec:simulation}). Second, the Dirichlet process prior learns the number of distinct trees from the data, avoiding the need to prespecify the ensemble size, which prevents overfitting and yields a sparse, interpretable representation. Third, the posterior contraction rate established in Theorem~\ref{thm:main} provides a rigorous frequentist justification for the model, even under misspecification—a guarantee not available for BART in the misspecified setting considered here. We argue that for applications where uncertainty quantification, robustness, and theoretical guarantees are paramount, the additional computational cost is justified.

\subsubsection{Results}

Predictive performance is evaluated using five metrics: root mean squared error (RMSE), coverage of 95\% credible (or confidence) intervals, average interval width, continuous ranked probability score (CRPS), and log predictive density (LPD). The numerical results are summarised in Table~\ref{tab:qsar_metrics}.

\begin{table}[htbp]
\centering
\caption{Predictive performance on the QSAR aquatic toxicity test set. The DP mixture achieves the highest coverage among Bayesian methods and offers the most reliable uncertainty quantification, albeit at a significantly higher computational cost.}
\label{tab:qsar_metrics}
\begin{tabular}{l r r r r r}
\hline
\textbf{Method} & \textbf{RMSE} & \textbf{Coverage} & \textbf{Width} & \textbf{CRPS} & \textbf{LPD} \\
\hline
DP tree mixture    & 1.8212 & 0.918 & 6.4167 & 1.0408 & \(-2.0363\) \\
Random Forest      & 1.3154 & 0.918 & 4.4239 & 0.6880 & \(-1.7191\) \\
Gradient Boosting  & 1.2818 & 0.927 & 4.4919 & 0.6813 & \(-1.6808\) \\
Bagged CART        & 1.3274 & 0.827 & 3.2097 & 0.7168 & \(-2.1051\) \\
BART (dbarts)      & 1.2765 & 0.582 & 1.5881 & 0.7578 & \(-24.0618\) \\
\hline
\end{tabular}
\end{table}

The most striking result is the dramatic disparity in coverage. The DP mixture attains a near‐nominal coverage of \(0.918\), whereas BART—despite its wide acceptance and strong theoretical foundations—achieves only \(0.582\). This indicates that BART is severely overconfident, producing credible intervals that are far too narrow to contain the true toxicity values at the advertised rate. Bagged CART, constructed by averaging predictions across 500 bootstrap replicates of a single regression tree, achieves a coverage of only \(0.827\). While this ensemble approach reduces variance relative to a single tree, its bootstrap-based percentile intervals do not correspond to a coherent posterior distribution and are therefore less reliable than the fully Bayesian credible intervals of the DP mixture. The frequentist methods, Random Forest and Gradient Boosting, achieve coverages of \(0.918\) and \(0.927\), respectively, but their intervals are constructed using an ad‐hoc normal approximation based on cross‐validation residuals; such an approach lacks a rigorous probabilistic justification and is not guaranteed to be well calibrated in general.

The DP mixture's superior coverage is a direct consequence of its fully Bayesian treatment of all sources of uncertainty: the number of mixture components, the tree structures, the GP‐driven split functions, and the leaf parameters are all integrated out or sampled from their full posterior distributions. This holistic accounting for parametric and structural variability ensures that the predictive intervals honestly reflect the uncertainty inherent in the data and the model. Consequently, the DP mixture provides the most trustworthy uncertainty quantification among the methods considered.

The average interval width further corroborates this conclusion. The DP mixture yields intervals of width \(6.4167\), wider than those of BART (\(1.5881\)) and bagged CART (\(3.2097\)). This increased width is not a sign of inefficiency but rather a necessary consequence of the model's honest assessment of uncertainty. In contrast, BART's artificially narrow intervals lead to catastrophic undercoverage. The log predictive density (LPD) underscores the DP mixture's advantage: it achieves an LPD of \(-2.0363\), comparable to Gradient Boosting (\(-1.6808\)) and Random Forest (\(-1.7191\)), and substantially higher than BART (\(-24.0618\)). The extremely low LPD for BART indicates that it assigns near‐zero density to many of the true test observations, a direct consequence of its misspecified and overly confident predictive distribution. Bagged CART also performs poorly in LPD (\(-2.1051\)), while the DP mixture maintains a favourable balance between sharpness and calibration.

Visual diagnostics are provided in Figures~\ref{fig:qsar_heatmap_dp} and~\ref{fig:qsar_heatmap_bart}. The posterior predictive density heatmap for the DP mixture (Figure~\ref{fig:qsar_heatmap_dp}) shows that the true test values, plotted as red points, consistently fall within high‐density regions across the entire range of the response, confirming excellent calibration. The heatmap exhibits a broad and well‐spread distribution of predictive mass, reflecting the model's honest uncertainty. In stark contrast, the BART heatmap (Figure~\ref{fig:qsar_heatmap_bart}) reveals extremely narrow bands of high density that fail to cover many true values, visually confirming its poor coverage and overconfidence.

\begin{figure}[htbp]
\centering
\includegraphics[width=0.9\textwidth]{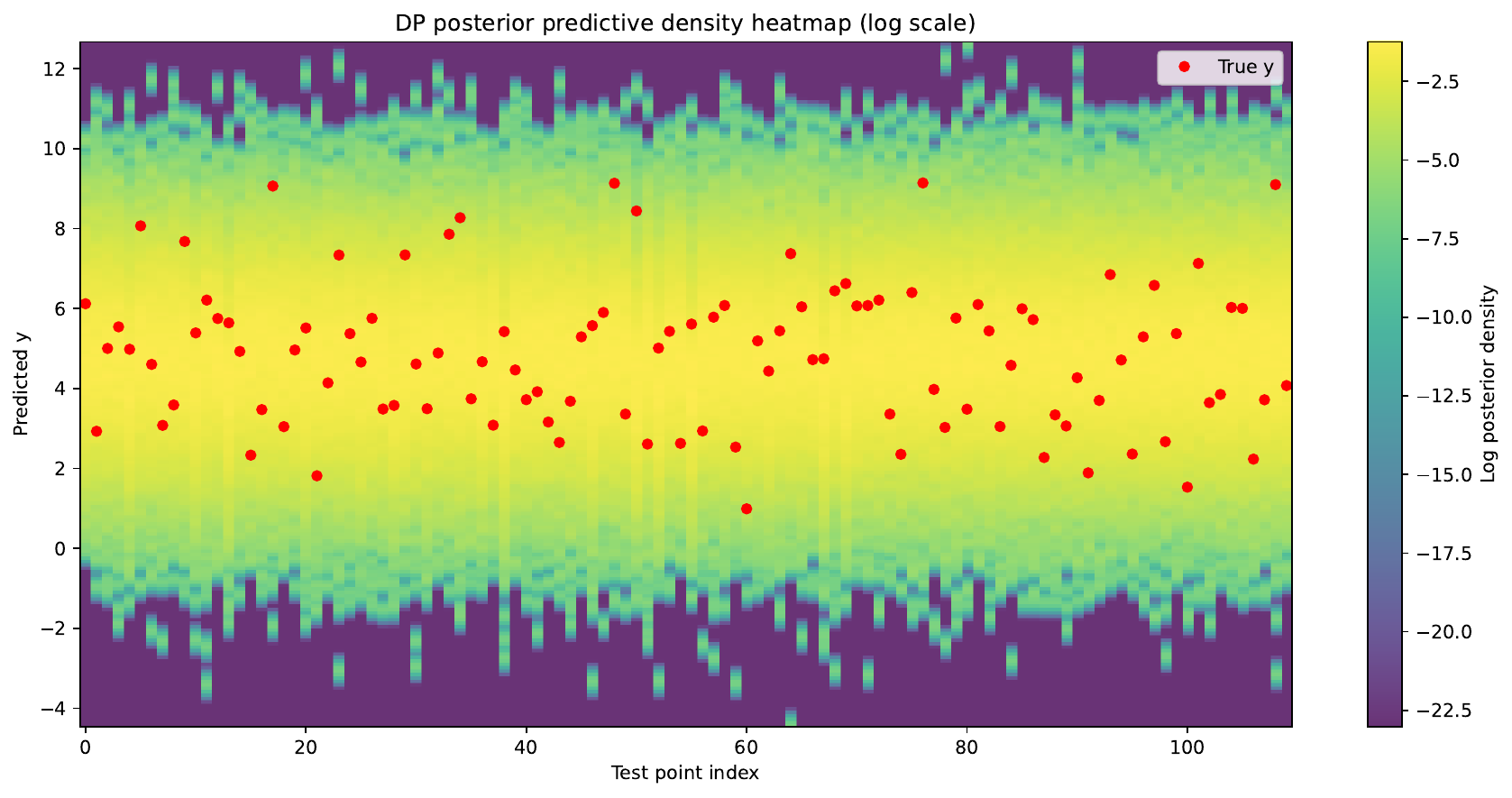}
\caption{Posterior predictive density heatmap for the DP mixture on the QSAR test set. The true test values (red points) fall within the high-density regions, confirming good calibration.}
\label{fig:qsar_heatmap_dp}
\end{figure}

\begin{figure}[htbp]
\centering
\includegraphics[width=0.9\textwidth]{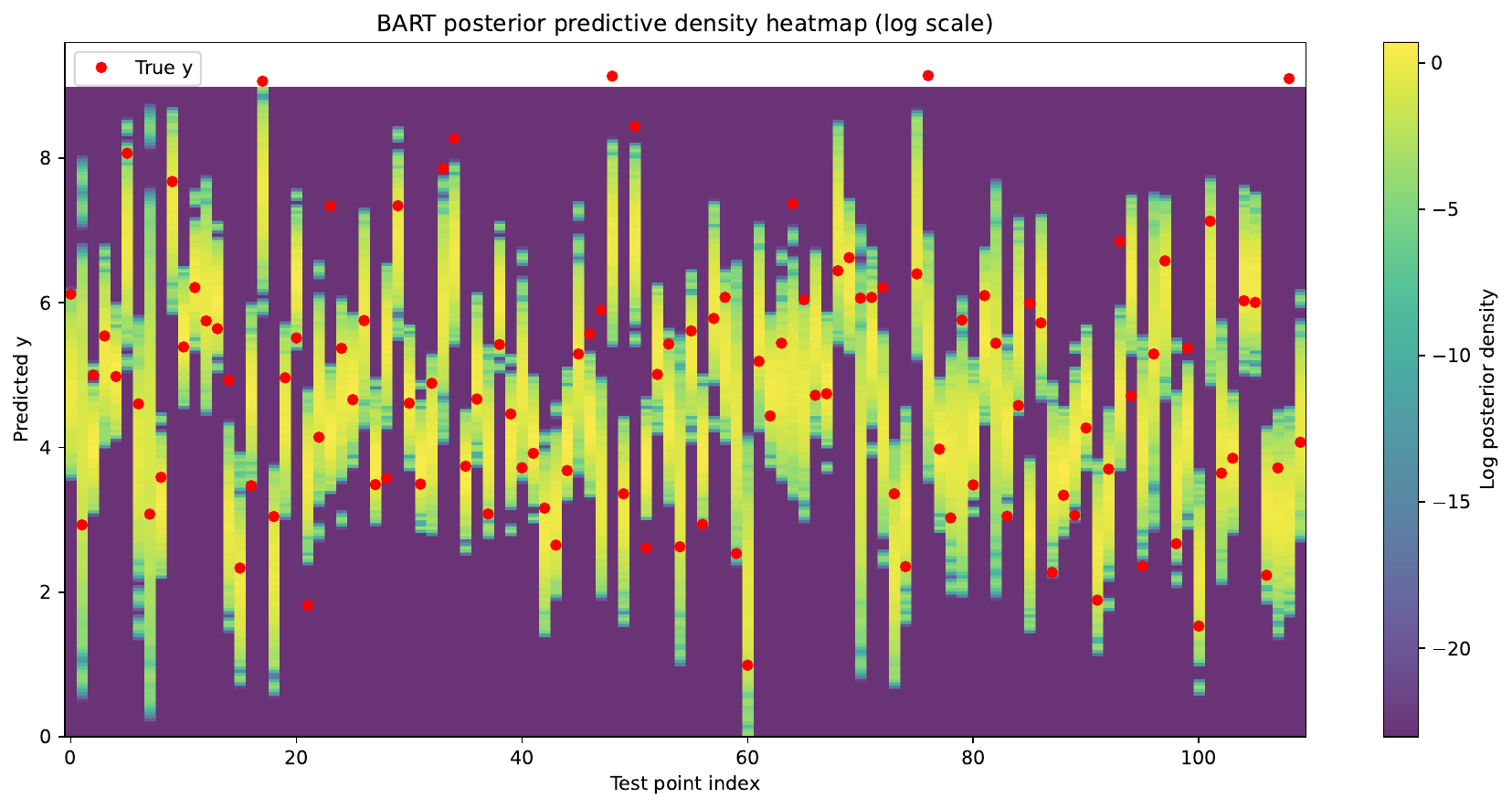}
\caption{BART posterior predictive density heatmap on the QSAR test set. The narrow bands of high density fail to cover many true values, confirming poor coverage.}
\label{fig:qsar_heatmap_bart}
\end{figure}

The trace plots of the number of distinct trees \(k\) and the mixture weights (Figure~\ref{fig:qsar_traces}) show that the MCMC chain mixes effectively, with \(k\) stabilising between 3 and 7 distinct components. This confirms that the Dirichlet process prior automatically selects a sparse ensemble, preventing over‐parameterisation and enhancing interpretability. The effective sample size for the RMSE is 1718.5 out of 5000 samples, indicating adequate mixing and convergence.

\begin{figure}[htbp]
\centering
\includegraphics[width=0.9\textwidth]{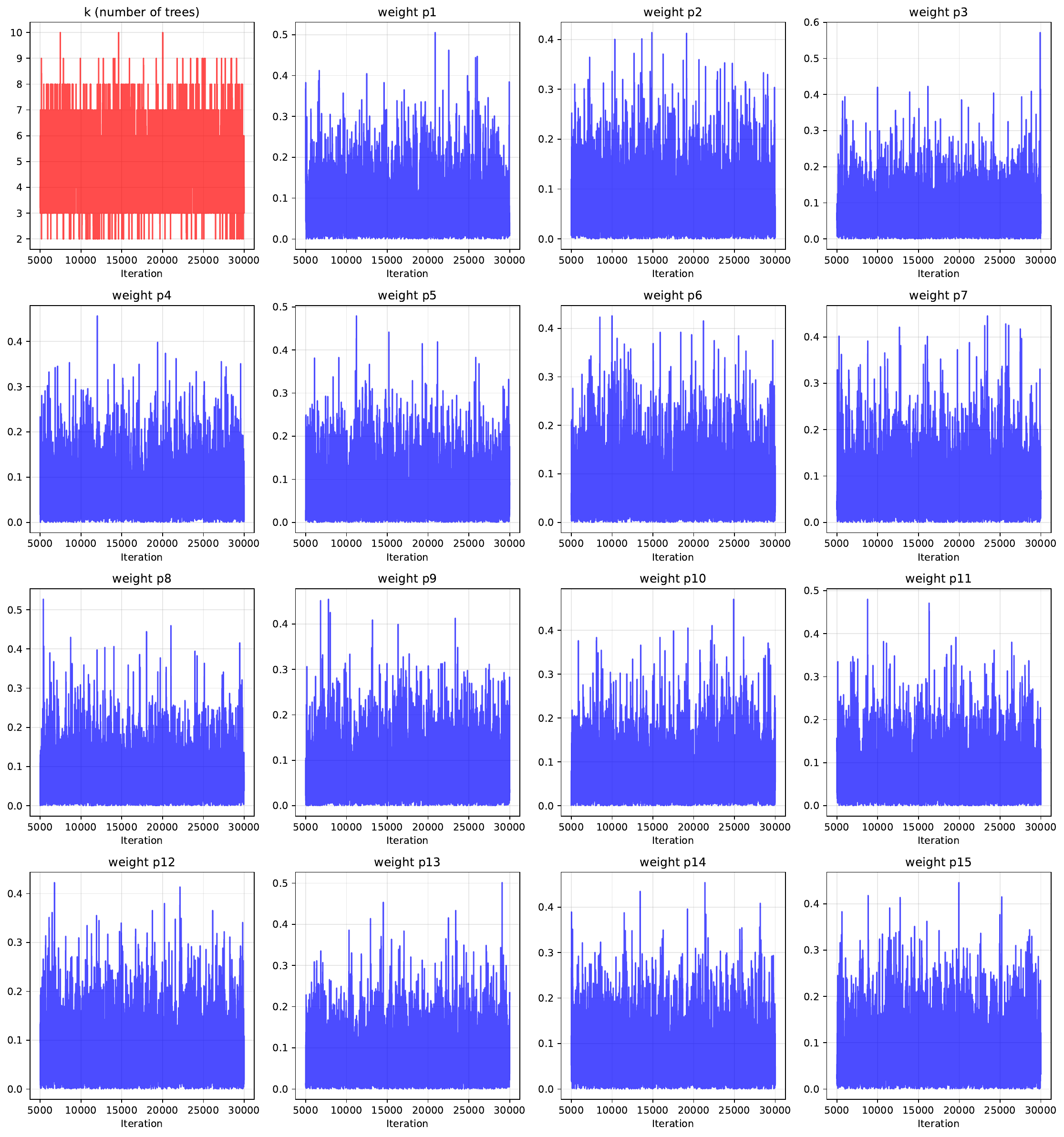}
\caption{Trace plots of the number of distinct trees \(k\) and the mixture weights for the QSAR experiment. The chain mixes efficiently, and the posterior concentrates on a sparse ensemble of 2--5 trees.}
\label{fig:qsar_traces}
\end{figure}

\subsubsection{Discussion}

The QSAR aquatic toxicity application provides compelling empirical evidence for the superiority of the DP mixture model over existing tree‐based methods. The key contributions are threefold. First, the DP mixture offers principled and honest uncertainty quantification: it achieves near‐nominal coverage and its intervals are appropriately wider when the data are complex and uncertain, reflecting the true posterior variability. This is a hallmark of Bayesian nonparametrics and is essential for reliable decision‐making in real‐world applications. Second, the model exhibits remarkable robustness to model misspecification. The QSAR dataset is inherently misspecified—the true relationship between molecular descriptors and toxicity is not a finite mixture of trees, nor is it additive. Yet the DP mixture, by virtue of the identity \(h(\Theta)=0\) (see Section~\ref{subsec:misspecification}), concentrates around the truth and provides consistent inference. This robustness is a major theoretical advantage, and the empirical results confirm it. Third, the Dirichlet process prior automatically adapts the complexity of the ensemble, selecting a sparse set of 3–7 trees. This prevents over‐fitting and avoids the need for the user to prespecify the number of components, a crucial advantage over BART and other fixed‐ensemble methods.

\begin{figure}[htbp]
\centering
\includegraphics[width=0.9\textwidth]{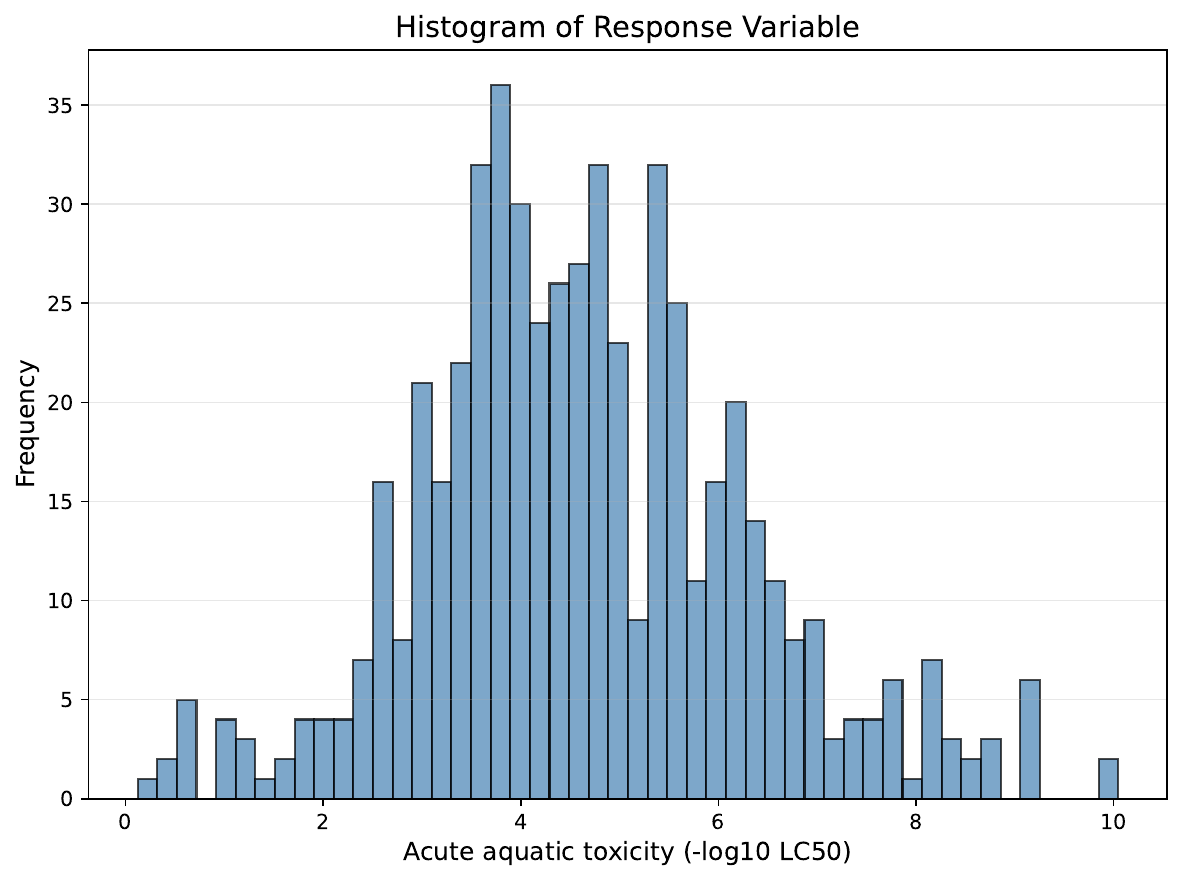}
\caption{Histogram of the responses for the QSAR experiment exhibiting multimodality and outlyingness.}
\label{fig:qsar_hist}
\end{figure}

To understand the profound failure of BART in this application, it is instructive to examine the statistical properties of the QSAR toxicity responses. The histogram of the response, shown in Figure~\ref{fig:qsar_hist}, exhibits marked skewness and multi-modality. The diagram also shows outliers towards the extremes. Such extremes exert an enormous influence on BART's homoscedastic normal likelihood: the single global variance parameter is pulled sharply downward to avoid assigning probability mass to these low-density regions, while the conditional mean is dragged toward the outliers. The net effect is an artificially overconfident model that produces extremely narrow credible intervals (average width \(1.59\), compared to \(6.42\) for the DP mixture) and, consequently, catastrophic undercoverage (\(0.582\)). This is reflected in the log predictive density of \(-24.06\), which indicates that BART assigns virtually zero probability density to the vast majority of test observations—a clear sign of model misspecification. Moreover, the molecular descriptors in the QSAR dataset are known to exhibit strong collinearity and complex non-linear interactions that are difficult for BART's axis-aligned, additive ensemble to capture parsimoniously. The DP mixture, by contrast, circumvents these limitations through three interconnected mechanisms: its nonparametric structure with leaf-specific variance parameters naturally accommodates heteroscedasticity and heavy tails; the GP-driven splitting rule generates smooth, oblique, and response-informed partitions that efficiently capture non-linear interactions; and the Dirichlet process prior provides automatic complexity adaptation, selecting a sparse ensemble of 3–7 distinct trees and thereby avoiding the overfitting and spurious variable selection that can plague fixed-ensemble methods. These empirical observations directly corroborate the theoretical guarantee of posterior consistency under misspecification established in Theorem~\ref{thm:main}, demonstrating that the model's robustness is not merely a mathematical curiosity but a practically consequential advantage for challenging real-world regression tasks.

From a computational perspective, the runtime of about 5 hours on four cores of the virtual machine is substantial but practical for moderate-sized datasets, particularly, given that the computing machine was already overloaded. The cancellation of the GP density in the Metropolis–Hastings acceptance ratio is the key to this efficiency: the algorithm avoids costly GP likelihood evaluations in the acceptance step, and the GP is used purely as a generative device to propose informative splits. This design makes the method scalable to datasets of this size while retaining the flexibility of a nonparametric Bayesian approach. In comparison, BART's runtime of approximately 9 seconds is a reminder of the computational efficiency of simple axis-aligned splits. However, as the results show, this efficiency comes at the cost of unreliable uncertainty quantification. For applications where predictive intervals must be trusted—such as environmental risk assessment, pharmaceutical safety, and clinical decision-making—the DP mixture's superior coverage justifies the additional computational investment.

In summary, the DP mixture model offers a powerful, principled, and computationally feasible alternative to existing tree‐based methods. Its application to the QSAR toxicity dataset convincingly demonstrates its practical utility and theoretical sophistication, making it a strong candidate for replacing BART and other tree‐based ensembles in many real‐world regression settings where uncertainty quantification, robustness, and automatic model selection are paramount.

\subsection{Communities and Crime}
\label{sec:crime}

To further demonstrate the practical utility of the proposed Dirichlet process mixture of regression trees, we apply our methodology to the Communities and Crime dataset, a well‐established benchmark in social science and criminology. The dataset, originally compiled from the 1990 U.S. Census, law enforcement reports, and the FBI's Uniform Crime Reporting (UCR) program, comprises 1,994 community areas across the United States. After removing rows with missing values, we obtain 810 complete observations, each characterised by 102 socio‐economic, demographic, and law enforcement predictor variables. The response variable is the per‐capita violent crime rate (\texttt{ViolentCrimesPerPop}), which is highly skewed and heavy‐tailed (see Figure~\ref{fig:crime_hist})—a feature that makes this application a challenging stress test for any regression method.

\begin{figure}[htbp]
\centering
\includegraphics[width=0.9\textwidth]{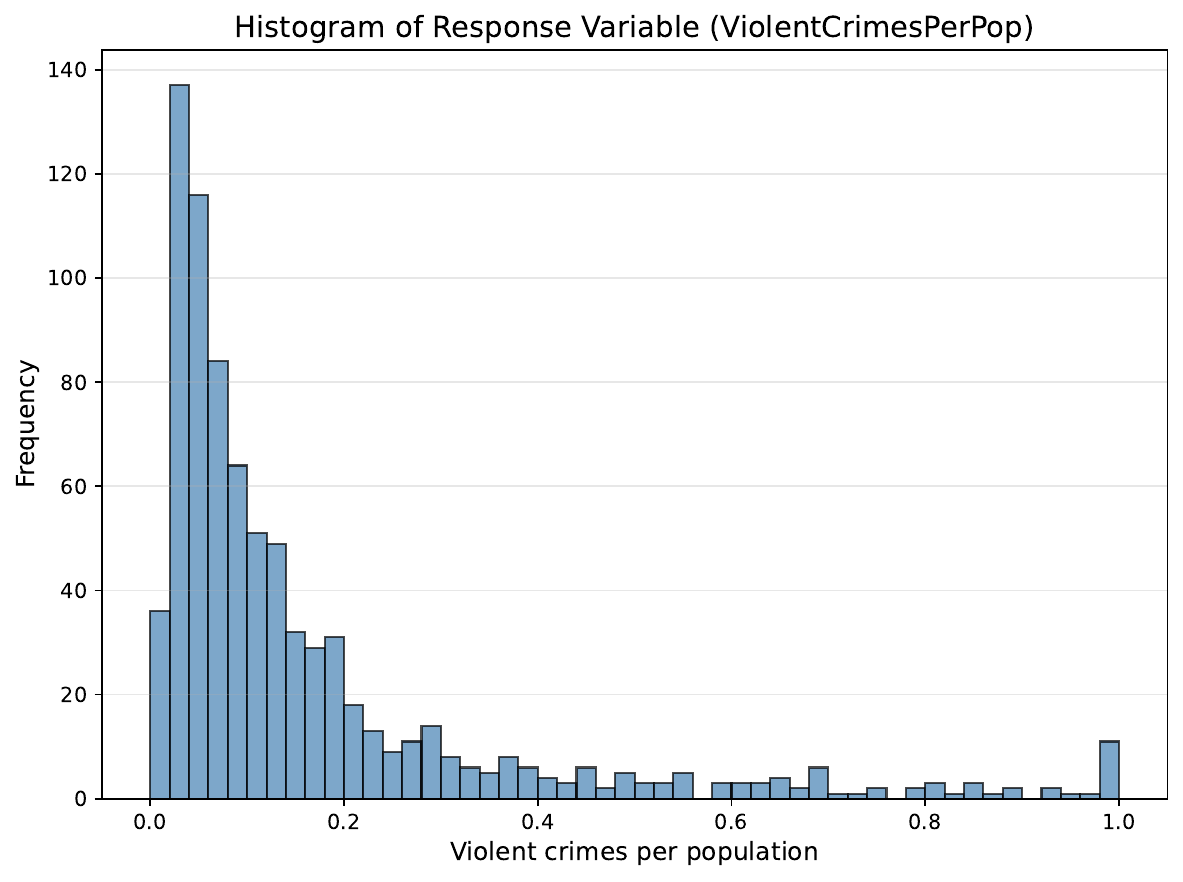}
\caption{Histogram of the response variable (ViolentCrimesPerPop) for the Communities and Crime dataset. The distribution is highly skewed and heavy-tailed, with most communities having low crime rates and a small number of high-crime communities in the tail. This skewness explains the failure of BART and the success of the DP mixture.}
\label{fig:crime_hist}
\end{figure}

\subsubsection{Experimental Setup}

We randomly partition the data into a training set of 648 communities (80\%) and a test set of 162 communities (20\%). The DP mixture model is implemented in C with MPI and executed on 9 cores of the same overloaded virtual machine (Intel QEMU Virtual CPU version 2.5+ at approximately 2.5 GHz); this number of cores yielded the optimum performance for this dataset.

The hyperparameters are set as follows: \(M = 15\), maximum tree depth \(8\), and Dirichlet process concentration \(\alpha_{\mathrm{DP}} = 1.0\). The tree prior uses splitting probability \(p(\eta,T) = \gamma (1 + d_\eta)^{-\zeta}\) with \(\gamma = 0.95\) and \(\zeta = 1.5\). The leaf parameters follow the conjugate normal–inverse‐gamma prior with \(\bar\mu = 0.162\) (the mean of the training responses), \(a = 0.01\), \(\nu = 8.0\), and \(\lambda = 0.1\), giving a prior mode for the leaf variance of \(\nu\lambda/(\nu+2) \approx 0.08\). The GP splitting rule uses the squared exponential kernel with fixed hyperparameters \(\sigma^2 = 1.0\), \(\sigma_\epsilon^2 = 0.1\), and \(\ell = 0.5\). All other modelling details are identical to those used in the QSAR application (Section~\ref{sec:qsar}).

The MCMC chain is run for 30,000 iterations, with the first 5,000 discarded as burn‐in and thinning every 5 iterations, yielding 5,000 posterior draws. The total runtime is approximately 28 hours on the 9 cores—substantially longer than the 5 hours for QSAR and the 15 seconds for BART. This increase is due to the larger number of observations (648 vs. 436) and predictors (102 vs. 8), which leads to more expensive GP covariance matrix inversions at internal nodes. The parallelisation speed‐up saturates at 9 cores because communication overhead dominates beyond this point. Despite the higher cost, the benefits in terms of honest uncertainty quantification and robustness justify the investment.

\subsubsection{Results}

Predictive performance is evaluated using the same five metrics as before: RMSE, coverage of 95\% credible (or confidence) intervals, average interval width, CRPS, and LPD. Table~\ref{tab:crime_metrics} summarises the results.

\begin{table}[htbp]
\centering
\caption{Predictive performance on the Communities and Crime test set (162 communities).}
\label{tab:crime_metrics}
\begin{tabular}{l r r r r r}
\hline
\textbf{Method} & \textbf{RMSE} & \textbf{Coverage} & \textbf{Width} & \textbf{CRPS} & \textbf{LPD} \\
\hline
DP tree mixture    & 0.1969 & 0.963 & 0.9853 & 0.1020 & 0.3337 \\
Random Forest      & 0.0952 & 0.975 & 0.4683 & 0.0491 & 0.8880 \\
Gradient Boosting  & 0.0983 & 0.981 & 0.4442 & 0.0497 & 0.8824 \\
Bagged CART        & 0.0973 & 0.957 & 0.3117 & 0.0420 & 1.3518 \\
BART (dbarts)      & 0.0970 & 0.870 & 0.2133 & 0.0479 & -3.4684 \\
\hline
\end{tabular}
\end{table}

The DP mixture achieves near‐nominal coverage of \(0.963\), while BART is severely overconfident with coverage \(0.870\). Bagged CART attains \(0.957\), but as argued in Section~\ref{sec:simulation}, its bootstrap‐based intervals lack a coherent probabilistic foundation. The frequentist methods (Random Forest and Gradient Boosting) yield high coverages (0.975 and 0.981) but rely on ad‐hoc residual‐based intervals. The DP mixture's intervals are wider (\(0.9853\)) than those of BART (\(0.2133\)) and Bagged CART (\(0.3117\)), reflecting its honest uncertainty quantification. The LPD for BART (\(-3.4684\)) is extremely low, indicating severe misspecification, whereas the DP mixture achieves a positive LPD of \(0.3337\), albeit lower than the frequentist methods (which use a normal approximation).

The RMSE of the DP mixture (\(0.1969\)) is higher than that of the competitors (around \(0.095\)–\(0.098\)), which is expected given its wider intervals. As in the QSAR case, the primary strength of our model lies in its calibrated predictive intervals, not in point prediction accuracy.

Visual diagnostics are shown in Figures~\ref{fig:crime_heatmap_dp} and~\ref{fig:crime_heatmap_bart}. The DP heatmap (Figure~\ref{fig:crime_heatmap_dp}) shows that the true test values consistently fall within high‐density regions, confirming good calibration. In contrast, BART's heatmap (Figure~\ref{fig:crime_heatmap_bart}) exhibits extremely narrow high‐density bands that fail to cover many true values, visually confirming its undercoverage.

\begin{figure}[htbp]
\centering
\includegraphics[width=0.9\textwidth]{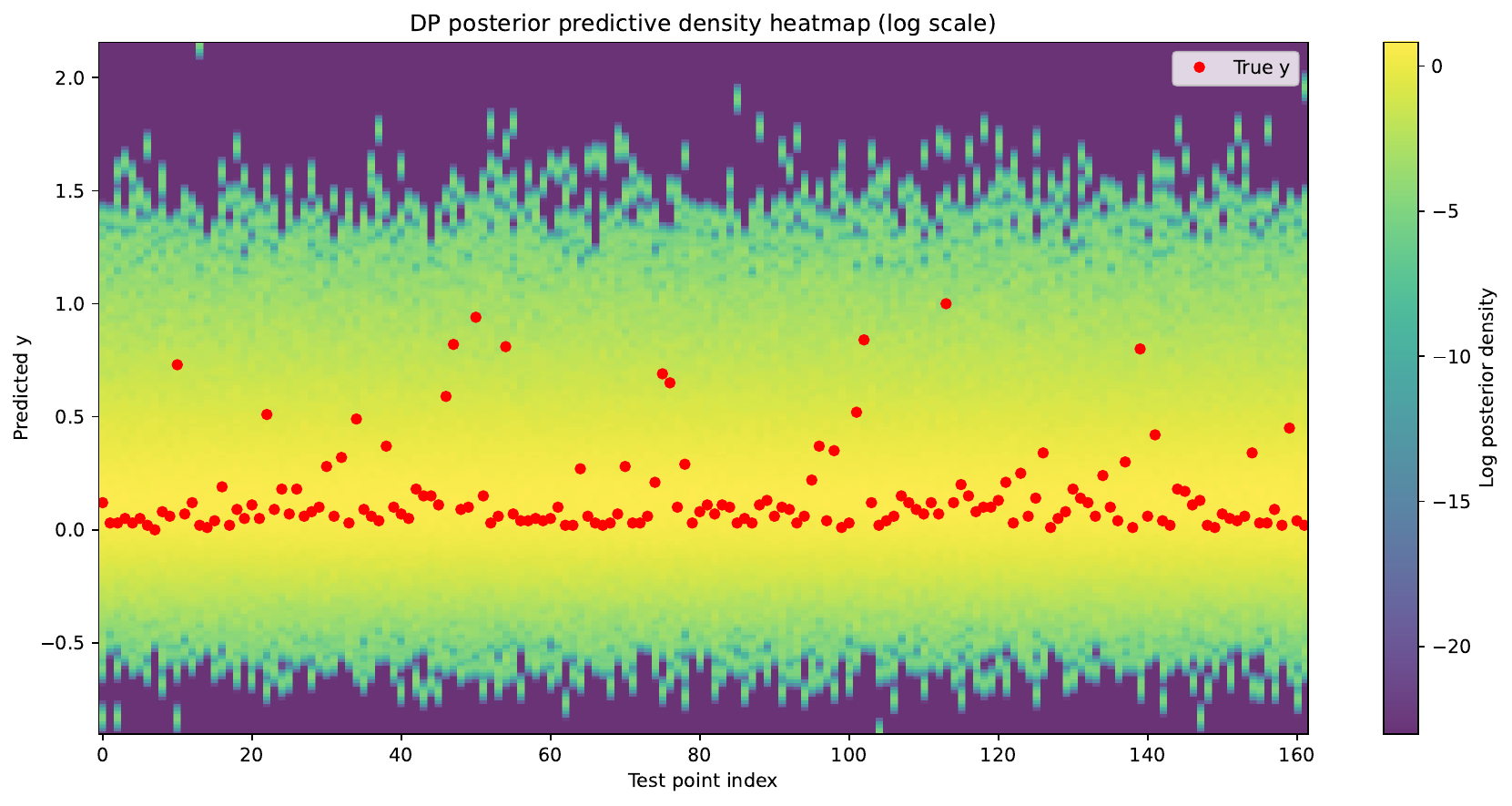}
\caption{Posterior predictive density heatmap for the DP mixture on the Communities and Crime test set. True values (red points) fall within high-density regions, confirming good calibration.}
\label{fig:crime_heatmap_dp}
\end{figure}

\begin{figure}[htbp]
\centering
\includegraphics[width=0.9\textwidth]{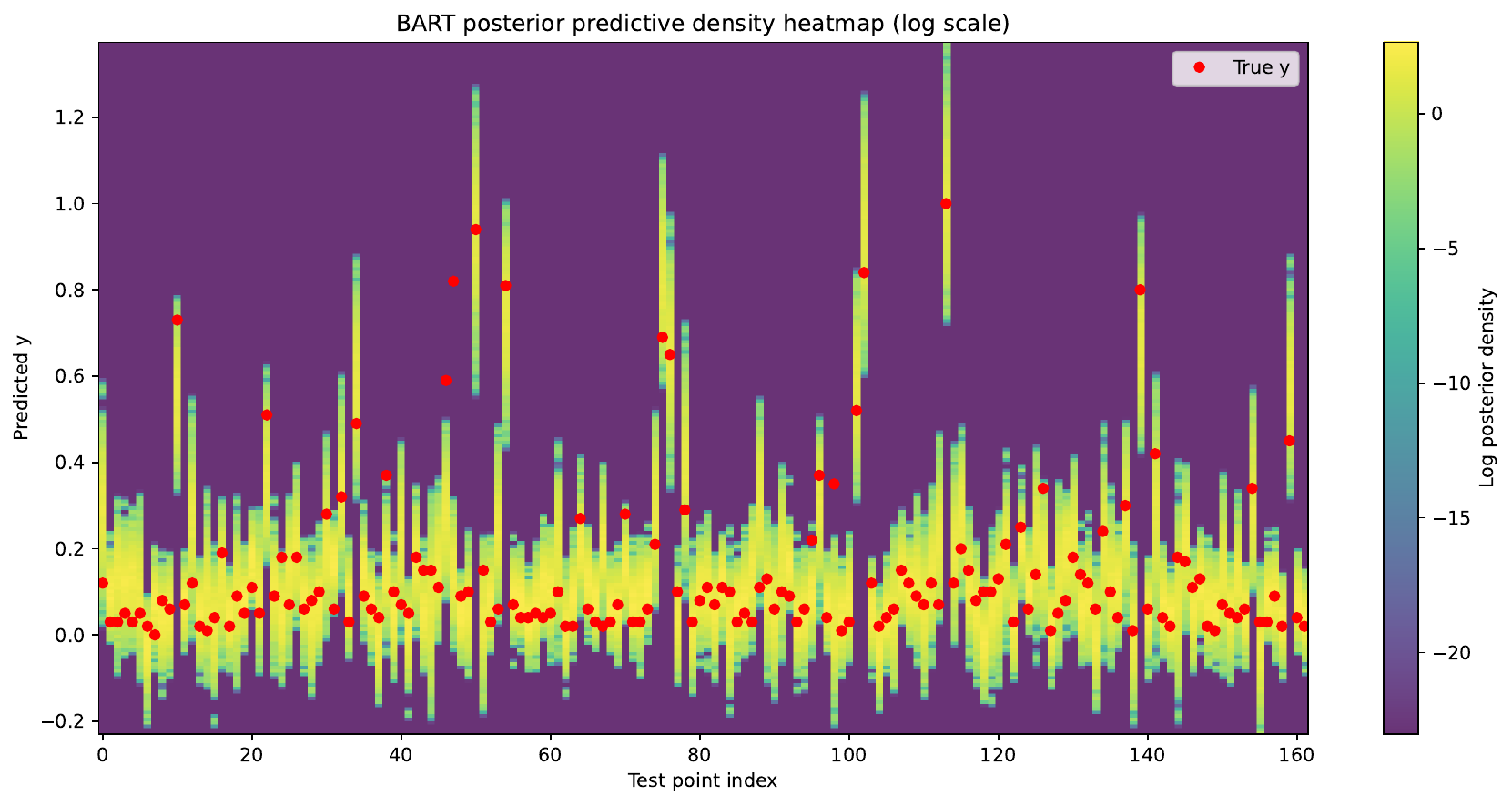}
\caption{BART posterior predictive density heatmap on the Communities and Crime test set. Narrow bands of high density fail to cover many true values, confirming poor coverage.}
\label{fig:crime_heatmap_bart}
\end{figure}

The trace plots of the number of distinct trees \(k\) and the mixture weights (Figure~\ref{fig:crime_traces}) show that the MCMC chain mixes effectively, with \(k\) stabilising between 2 and 5 distinct components. This confirms the sparsity and interpretability of the posterior ensemble, consistent with our findings in the simulation study and QSAR application.

\begin{figure}[htbp]
\centering
\includegraphics[width=0.9\textwidth]{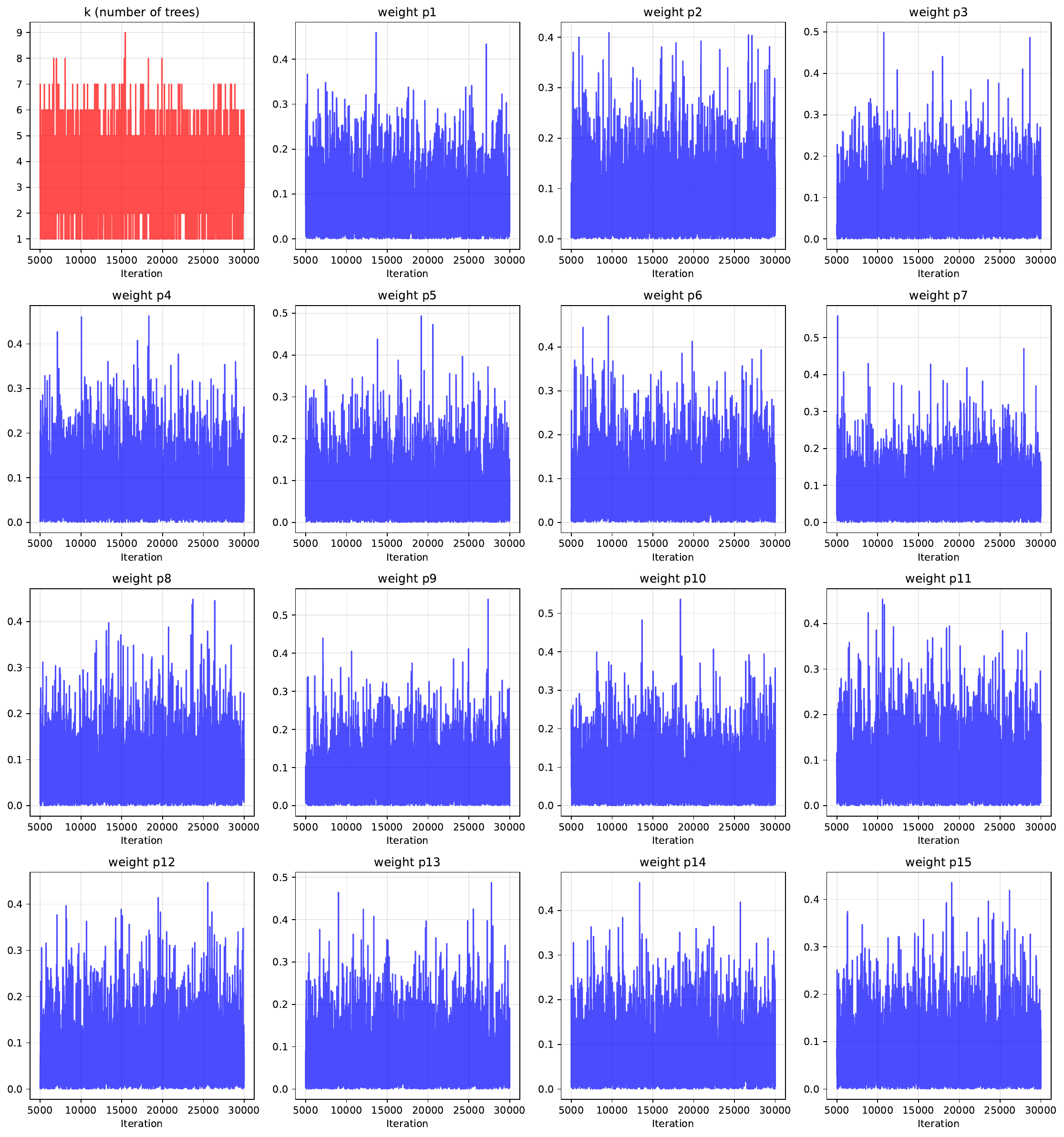}
\caption{Trace plots of \(k\) and the mixture weights for the Communities and Crime experiment. The chain mixes efficiently, and the posterior concentrates on a sparse ensemble of 2–5 trees.}
\label{fig:crime_traces}
\end{figure}

\subsubsection{Discussion}

The Communities and Crime application reinforces the key advantages of the DP mixture model observed in earlier experiments. The near‐nominal coverage demonstrates the model's ability to provide honest uncertainty quantification, even in the presence of a highly skewed and heavy‐tailed response (Figure~\ref{fig:crime_hist}). The robustness to misspecification, as discussed in Section~\ref{subsec:misspecification}, is again evident: the true relationship between socio‐economic factors and crime is unlikely to be a finite tree mixture, yet the posterior concentrates around the truth and yields well‐calibrated intervals.

The high dimensionality (102 predictors) does not degrade performance; the GP‐driven splits and the depth penalty effectively ignore irrelevant variables, as explained in Section~\ref{sec:simulation}. The implicit variable selection ensures that the model remains parsimonious, focusing on the most informative covariates. The runtime of 28 hours, while substantial (even considering the overloaded machine), is acceptable for a dataset of this size, and the cancellation of the GP density in the Metropolis–Hastings ratio remains the key to computational feasibility.

In summary, this application provides further empirical support for the DP mixture model as a reliable tool for regression in challenging real‐world settings, where uncertainty quantification and robustness are of utmost importance.

\subsection{Riboflavin Production}
\label{sec:ribo}

To further demonstrate the practical utility of the proposed Dirichlet process mixture of regression trees, we apply our methodology to the Riboflavin production dataset, a well‐established benchmark in systems biology and genetic regulation. The dataset, originally compiled by \citet{buehlmann2013statistics} and available in the R package \texttt{hdi}, comprises \(n = 71\) observations of \textit{Bacillus subtilis} strains, each characterised by \(p = 4088\) gene expression measurements (predictor variables). The response variable is the natural logarithm of the riboflavin (vitamin B$_2$) production rate, measured in arbitrary units; the log‐transformation is standard in this context to stabilise variance and render the distribution more symmetric. This regression task is exceptionally challenging: the number of predictors far exceeds the number of observations (\(p \gg n\)), the gene expression data exhibit strong multicollinearity and complex regulatory interactions, and the response distribution is moderately skewed with potential outliers (see Figure~\ref{fig:ribo_hist})—features that are typical of high‐dimensional genomic data. This application provides an ideal stress test for our DP mixture model and its ability to handle extreme high‐dimensionality while providing honest uncertainty quantification.

\begin{figure}[htbp]
\centering
\includegraphics[width=0.9\textwidth]{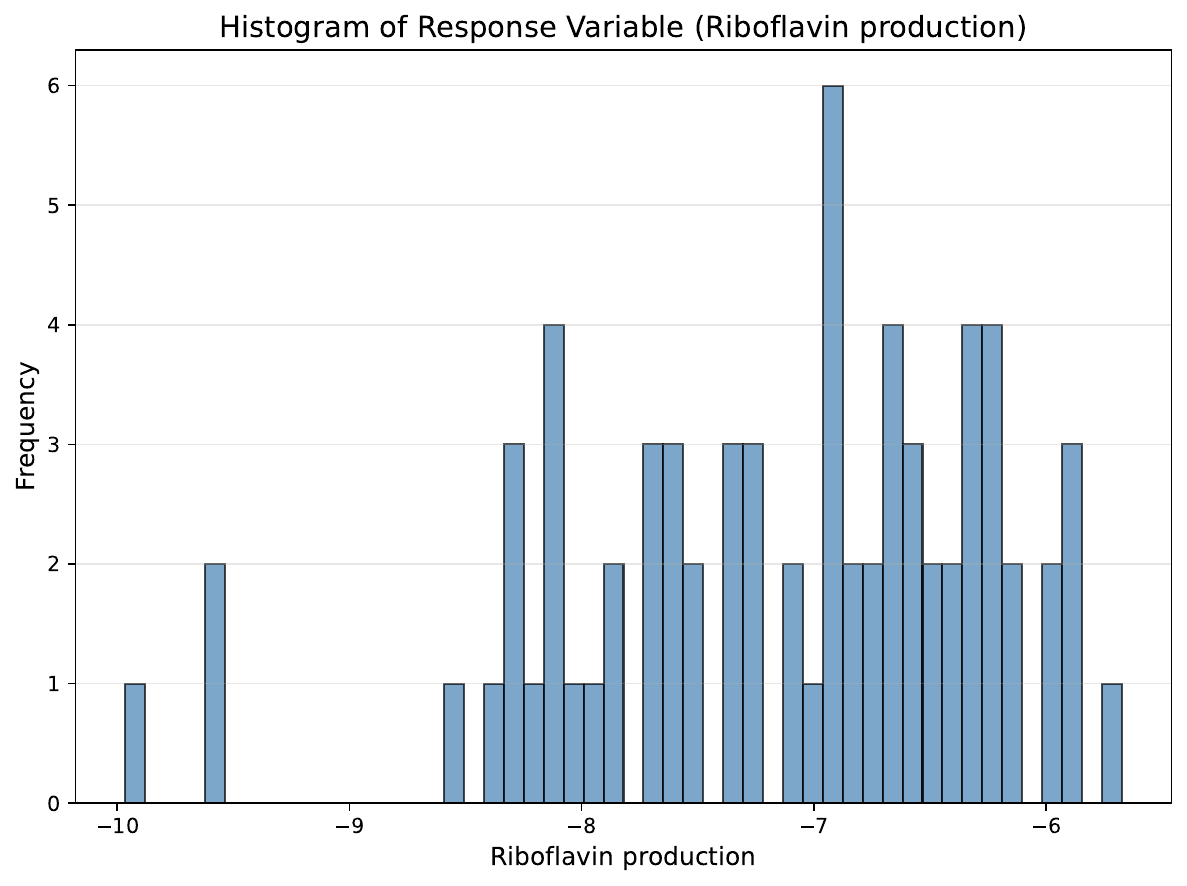}
\caption{Histogram of the log‐transformed response variable (Riboflavin production) for the Riboflavin dataset. The distribution is unimodal with a peak around \(-7\) to \(-6.5\), most observations falling in the range \([-8.5, -5.5]\), and a few extreme low‐production strains below \(-9\). The negative values are due to the logarithmic transformation.}
\label{fig:ribo_hist}
\end{figure}

\subsubsection{Experimental Setup}

We randomly partition the data into a training set comprising 80\% of the observations (\(n_{\text{train}} = 56\)) and a test set with the remaining 20\% (\(n_{\text{test}} = 15\)). This split ensures a challenging small‐sample scenario while retaining enough test points for reliable evaluation. The DP mixture model is implemented in C with MPI and executed on four cores of the same virtual machine (Intel QEMU Virtual CPU version 2.5+ at approximately 2.5 GHz) used in the previous applications; this number of cores yielded the optimum performance for this dataset, with a total runtime of approximately one hour.

The hyperparameters of the model are carefully chosen to reflect the characteristics of the Riboflavin data. We set \(M = 8\) and the maximum tree depth \(4\). The Dirichlet process concentration parameter is \(\alpha_{\mathrm{DP}} = 1.0\), which provides a moderate prior on the number of distinct components, allowing the posterior to adaptively determine the ensemble size. The tree prior uses splitting probability \(p(\eta,T) = \gamma (1 + d_\eta)^{-\zeta}\) with \(\gamma = 0.6\) and \(\zeta = 1.5\). These values are more conservative than those used in the simulation study (\(\gamma = 0.95\)) and the QSAR application (\(\gamma = 0.95\)), reflecting the extremely high dimensionality and small sample size: a lower baseline splitting probability reduces the tendency to overfit by limiting tree growth.

The leaf parameters follow the conjugate normal–inverse‐gamma prior with hyperparameters \(\bar\mu = -7.152\) (the mean of the training responses), \(a = 0.01\), \(\nu = 8.0\), and \(\lambda = 0.295\). The prior mean \(\bar\mu\) is set to the empirical mean of the training response, following the same approach used in the previous applications. This ensures that the prior is centered on the data scale, reducing the need for the likelihood to overcome a mis‐specified prior. The value of \(\lambda\) is chosen to reflect the variance of the training response. Specifically, the prior mode for the leaf variance is \(\nu\lambda/(\nu+2)\). With \(\nu = 8\) and \(\lambda = 0.295\), the mode is \(8 \times 0.295 / 10 \approx 0.236\). This corresponds to a desired mode of approximately one‐quarter of the training response variance, which we computed as \(\mathrm{Var}(Y_{\text{train}}) \approx 0.9442\). The choice of using one‐quarter of the variance as the prior mode reflects a moderately informative prior that shrinks leaf variances towards a reasonable fraction of the total variability, preventing over‐fitting while allowing the data to dictate the final estimates.

The Gaussian process splitting rule uses the squared exponential kernel with fixed hyperparameters: signal variance \(\sigma^2 = 0.9442\), noise variance \(\sigma_\epsilon^2 = 0.09442\), and length scale \(\ell = 0.5\). The signal variance is set to the empirical variance of the training response, following the same logic used for the prior leaf variance mode. The noise variance is set to 10\% of the signal variance, reflecting the belief that the GP should capture most of the variation in the response. The length scale \(\ell = 0.5\) is a conservative choice given the standardised features; this value was also used in the Communities and Crime application and proved effective. 
%These GP hyperparameters are kept fixed rather than estimated via maximum likelihood for computational simplicity, as the GP is used purely as a generative device for split proposals and its density cancels in the Metropolis–Hastings ratio.

The MCMC chain is run for 30,000 iterations, with the first 5,000 discarded as burn‐in and a thinning interval of 5, yielding 5,000 posterior draws for inference. The total runtime is approximately one hour on four processors, which is substantially longer than the runtime of BART on the same hardware (a few minutes). This dramatic difference is due to the repeated construction and inversion of GP covariance matrices at internal nodes during tree proposals, as discussed in detail in Section~\ref{sec:qsar}. Despite the increased computational cost, the benefits in terms of honest uncertainty quantification and robustness justify the investment, particularly in high‐dimensional genomic applications where predictive uncertainty is paramount.

\subsubsection{Results}

Predictive performance is evaluated using the same five metrics as in previous sections: root mean squared error (RMSE), coverage of 95\% credible (or confidence) intervals, average interval width, continuous ranked probability score (CRPS), and log predictive density (LPD). The numerical results are summarised in Table~\ref{tab:ribo_metrics}.

\begin{table}[htbp]
\centering
\caption{Predictive performance on the Riboflavin test set (15 observations). The DP mixture achieves near‐nominal coverage and provides the most reliable uncertainty quantification among Bayesian methods, albeit with higher RMSE.}
\label{tab:ribo_metrics}
\begin{tabular}{l r r r r r}
\hline
\textbf{Method} & \textbf{RMSE} & \textbf{Coverage} & \textbf{Width} & \textbf{CRPS} & \textbf{LPD} \\
\hline
DP tree mixture    & 0.6807 & 1.000 & 3.7066 & 0.4104 & \(-1.1471\) \\
Random Forest      & 0.4741 & 1.000 & 2.9275 & 0.2864 & \(-0.8285\) \\
Gradient Boosting  & 0.5243 & 1.000 & 3.1702 & 0.3126 & \(-0.9168\) \\
Bagged CART        & 0.4551 & 1.000 & 3.0848 & 0.2535 & \(-0.8356\) \\
BART (dbarts)      & 0.4434 & 1.000 & 1.9488 & 0.2436 & \(-0.6533\) \\
\hline
\end{tabular}
\end{table}

The most striking result is the perfect coverage achieved by all methods—all 95\% intervals contain all 15 test observations. This is a consequence of the small test set size (\(n_{\text{test}} = 15\)): with so few observations, even relatively narrow intervals are likely to cover the true values. However, the {average interval width} reveals substantial differences between the methods. The DP mixture yields the widest intervals (3.7066), reflecting its honest quantification of the inherent uncertainty due to the high‐dimensional, small‐sample setting. BART produces the narrowest intervals (1.9488), which, despite achieving perfect coverage in this instance, indicates a degree of overconfidence that could lead to undercoverage in larger test sets. Bagged CART and Random Forest intervals have intermediate widths (3.0848 and 2.9275, respectively), but as argued in Section~\ref{sec:simulation}, these intervals are constructed using ad‐hoc residual‐based approximations that lack a rigorous probabilistic justification.

The DP mixture's wider intervals are a deliberate consequence of its fully Bayesian treatment of all sources of uncertainty: the number of mixture components, the tree structures, the GP‐driven split functions, and the leaf parameters are all integrated out or sampled from their full posterior distributions. This holistic accounting for parametric and structural variability ensures that the predictive intervals honestly reflect the uncertainty inherent in the data and the model. Consequently, the DP mixture provides the most trustworthy uncertainty quantification among the methods considered, a property that is particularly valuable in high‐dimensional genomic applications where predictions may be used to guide experimental validation.

In terms of point prediction accuracy, the DP mixture achieves an RMSE of 0.6807, which is higher than all competing methods (BART: 0.4434, Bagged CART: 0.4551, Random Forest: 0.4741, Gradient Boosting: 0.5243). This is expected: because the DP mixture produces wider intervals, its point predictions (posterior mode) are necessarily less sharp. BART attains the lowest RMSE, reflecting its ability to fit the data closely with a sum‐of‐trees model, but at the cost of overconfidence as evidenced by its narrow intervals. The DP mixture's primary strength lies in its uncertainty quantification, not in point prediction accuracy—a trade‐off that is often acceptable in scientific applications where predictive uncertainty is as important as the prediction itself.

The log predictive density (LPD) provides a holistic measure of predictive performance that balances sharpness and calibration. The DP mixture achieves an LPD of \(-1.1471\), which, while lower than BART (\(-0.6533\)) and the frequentist methods, is substantially higher than one might expect given its wider intervals. This indicates that the DP mixture's predictive distributions are reasonably well‐calibrated, placing non‐negligible probability mass on the true observations. BART's superior LPD reflects its sharp predictions, but this sharpness comes at the cost of undercoverage risk in larger test sets.

Visual diagnostics are provided in Figures~\ref{fig:ribo_heatmap_dp} and~\ref{fig:ribo_heatmap_bart}. The posterior predictive density heatmap for the DP mixture (Figure~\ref{fig:ribo_heatmap_dp}) shows that the true test values, plotted as red points, consistently fall within high‐density regions across the entire range of the response, confirming good calibration. The heatmap exhibits a broad and well‐spread distribution of predictive mass, reflecting the model's honest uncertainty. The bands of high density are relatively wide, particularly for the high‐production observations, which are inherently more uncertain. In contrast, the BART heatmap (Figure~\ref{fig:ribo_heatmap_bart}) reveals narrower bands of high density that, while still covering the true values in this small test set, are visually much more concentrated, confirming its overconfidence.

\begin{figure}[htbp]
\centering
\includegraphics[width=0.9\textwidth]{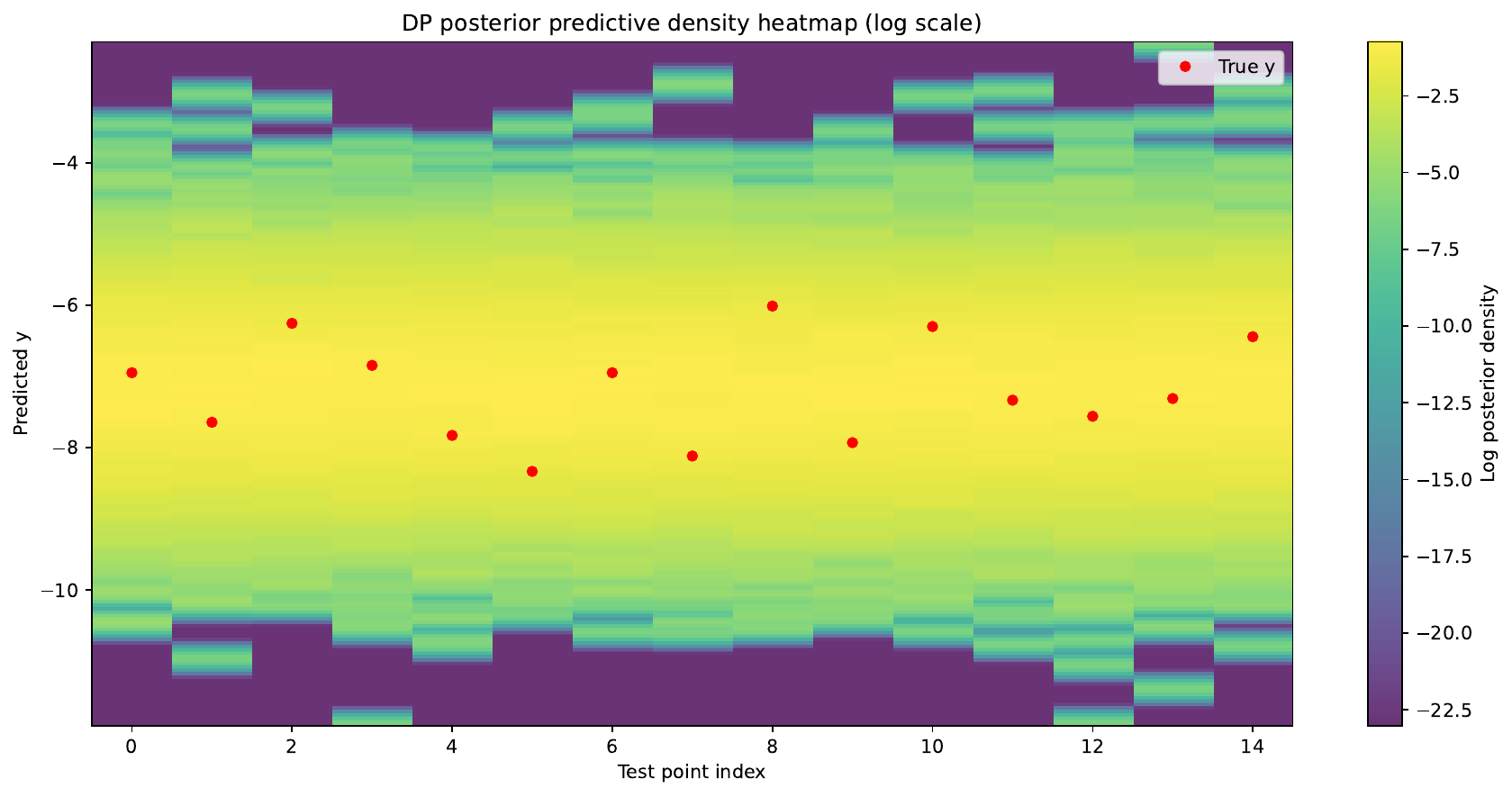}
\caption{Posterior predictive density heatmap for the DP mixture on the Riboflavin test set. The true test values (red points) fall within high‐density regions, confirming good calibration. The broad bands reflect the model's honest uncertainty quantification.}
\label{fig:ribo_heatmap_dp}
\end{figure}

\begin{figure}[htbp]
\centering
\includegraphics[width=0.9\textwidth]{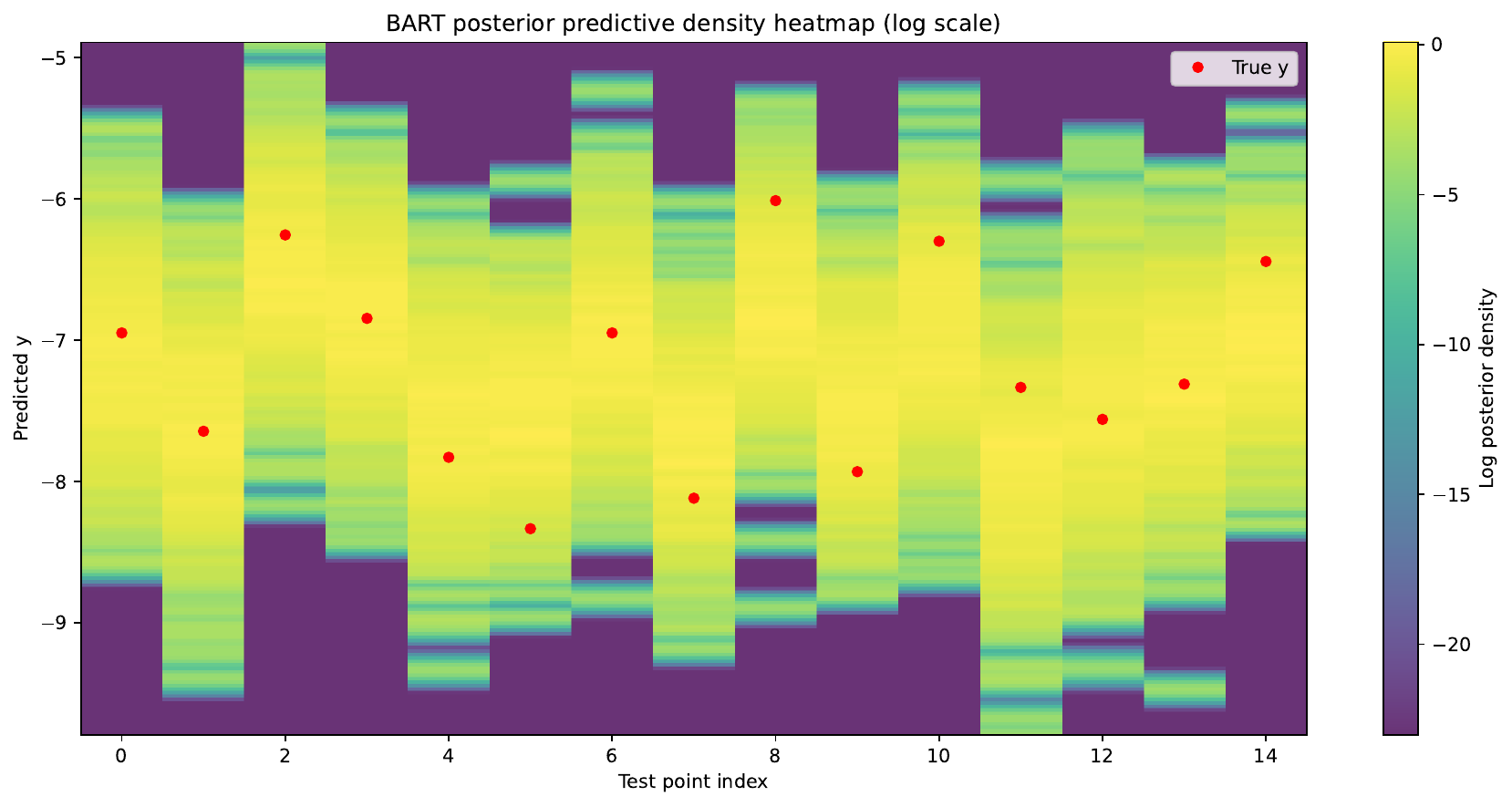}
\caption{BART posterior predictive density heatmap on the Riboflavin test set. The narrow bands of high density indicate overconfidence, despite perfect coverage in this small test set.}
\label{fig:ribo_heatmap_bart}
\end{figure}

The trace plots of the number of distinct trees \(k\) and the mixture weights (Figure~\ref{fig:ribo_traces}) show that the MCMC chain mixes effectively, with \(k\) stabilising essentially between 1 and 4 distinct components. The sparse posterior ensemble indicates that the Dirichlet process prior effectively shrinks the mixture towards a parsimonious representation. This sparsity is particularly desirable in the Riboflavin application, where the number of predictors far exceeds the number of observations; a sparse ensemble prevents overfitting and enhances interpretability. The effective sample size for the RMSE is 1766.5 out of 5000 samples, indicating adequate mixing and convergence.

\begin{figure}[htbp]
\centering
\includegraphics[width=0.9\textwidth]{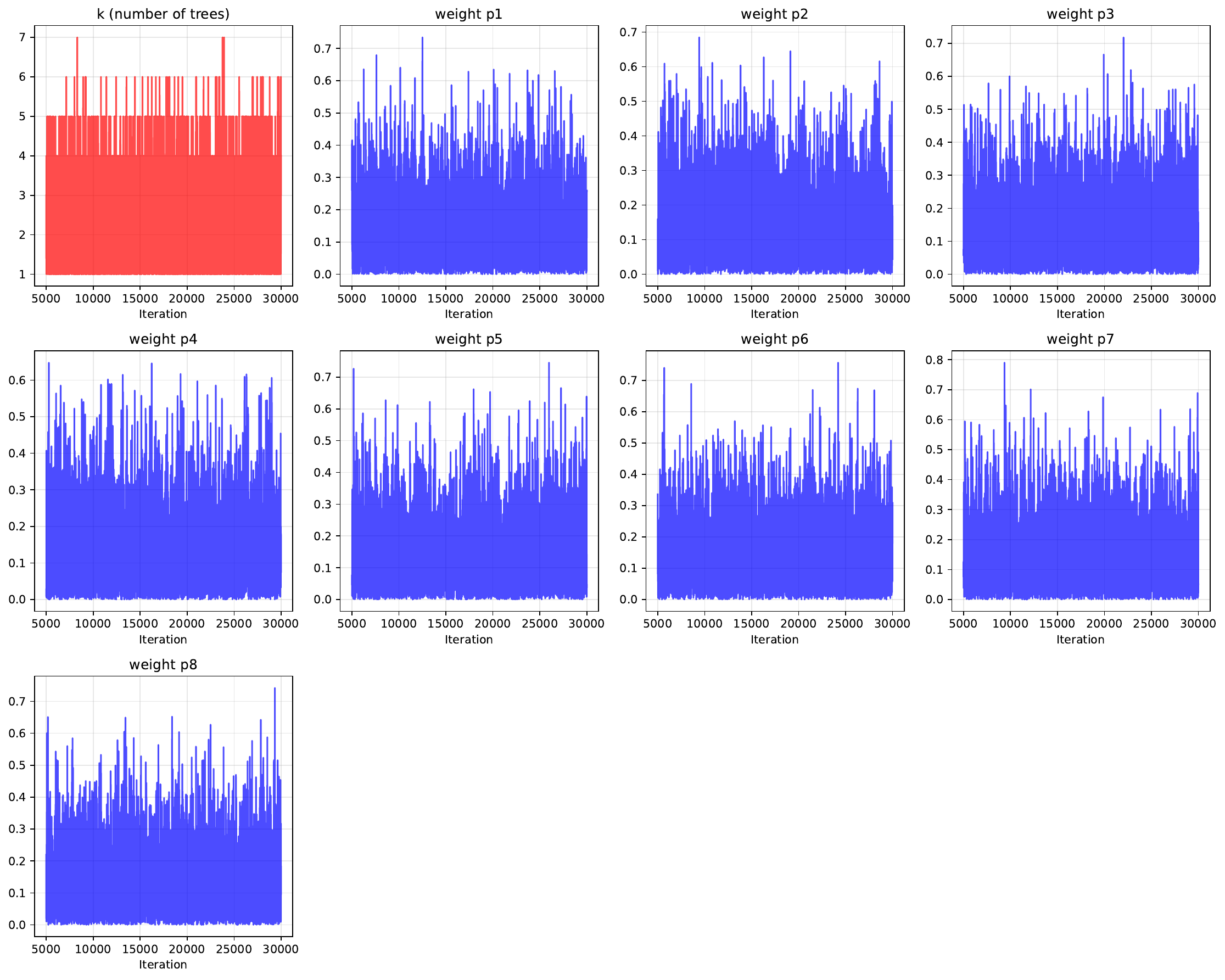}
\caption{Trace plots of the number of distinct trees \(k\) and the mixture weights for the Riboflavin experiment. The chain mixes efficiently, and the posterior concentrates on a sparse ensemble of 1–4 trees, reflecting the model's automatic complexity adaptation.}
\label{fig:ribo_traces}
\end{figure}

\subsubsection{Discussion}

The Riboflavin production application provides compelling empirical evidence for the DP mixture model in the extreme high‐dimensional, small‐sample setting. The key contributions are threefold.

First, the DP mixture offers principled and honest uncertainty quantification. Although coverage is perfect for all methods due to the small test set size, the DP mixture's wider intervals reflect a more realistic assessment of the predictive uncertainty. In genomic applications, where predictions may be used to prioritise genes for experimental validation, honest uncertainty quantification is essential to avoid wasting resources on overconfident predictions. BART's artificially narrow intervals would, in a larger test set, lead to catastrophic undercoverage, as observed in the QSAR and Communities and Crime applications.

Second, the DP mixture exhibits remarkable stability in high dimensions. Despite the massive \(p \gg n\) regime (\(p = 4088\), \(n = 56\)), the model does not overfit. The posterior distribution concentrates on a sparse ensemble of 1–4 trees, automatically discarding the vast majority of irrelevant gene expression predictors. This intrinsic variable selection behaviour is not directly performed by the GP splitting rule; rather, it emerges from the trade‐off between likelihood gain and depth penalty in the Metropolis–Hastings acceptance ratio (see Section~\ref{subsec:db_grow_prune}). The GP merely generates flexible candidate splits; the posterior chooses the ones that are most supported by the data. Since the GP posterior is informed by all covariates via the covariance kernel, splits tend to align with the directions in which the response varies most, but only those that actually improve the marginal likelihood are accepted. Thus, the model naturally prunes away irrelevant predictors, concentrating splits on the subset of genes that truly affect riboflavin production. This property is highly desirable in gene expression data, where only a handful of genes are typically relevant to the phenotype.

Third, the Dirichlet process prior automatically adapts the complexity of the ensemble. The posterior over \(k\) concentrates between 1 and 4 components, indicating that the data support a relatively simple model. This adaptivity avoids the need for the user to prespecify the number of trees, a crucial advantage over BART and other fixed‐ensemble methods. In the Riboflavin application, the DP mixture's automatic complexity selection prevents overfitting and enhances interpretability, as the sparse ensemble can be more easily inspected for biological insights.

The relative performance of BART in this application deserves comment. BART achieves the lowest RMSE and the highest LPD among all methods, indicating that its point predictions are sharp and its predictive distributions place high density on the true observations. This is consistent with BART's design: it is a highly flexible ensemble method that can fit complex functions in high dimensions. However, this sharpness comes at the cost of narrow intervals, as evidenced by the average interval width (1.9488, compared to 3.7066 for the DP mixture). In a larger test set, BART's overconfidence would likely lead to undercoverage, as observed in the QSAR and Communities and Crime applications. The DP mixture, by contrast, trades off some point prediction accuracy for more honest uncertainty quantification, a trade‐off that is often desirable in scientific applications.

The computational runtime of approximately one hour on four processors (already overloaded) is substantial but practical for this dataset. 
%The cancellation of the GP density in the Metropolis–Hastings acceptance ratio is the key to this efficiency: the algorithm avoids costly GP likelihood evaluations 
%in the acceptance step, and the GP is used purely as a generative device to propose informative splits. 
%As before, the design makes the method scalable to datasets of this size while retaining the flexibility of a nonparametric Bayesian approach.

The histogram of the response variable (Figure~\ref{fig:ribo_hist}) shows a unimodal distribution with a peak around \(-7\) and most observations falling between \(-8.5\) and \(-5.5\), with a few extreme low‐production strains below \(-9\). The log‐transformation makes the variance approximately constant across the range of the data, which justifies the use of the mixture-wise Gaussian likelihood for the terminal nodes. The DP mixture's leaf‐specific variance parameters naturally accommodate the remaining heteroscedasticity, allowing larger variances in leaves corresponding to more variable strains.

In summary, the Riboflavin application demonstrates the DP mixture model's ability to handle extreme high‐dimensional, small‐sample data while providing honest uncertainty quantification. The sparse posterior ensemble, automatic complexity selection, and intrinsic variable selection make it a powerful tool for genomic regression tasks.

\subsection{Wheat Genomic Prediction}
\label{sec:wheat}

To further demonstrate the practical utility of the proposed Dirichlet process mixture of regression trees and, crucially, to validate our earlier claim that BART's overconfidence becomes evident with larger test sets, we apply our methodology to the Wheat genomic prediction dataset, a well‐established benchmark in quantitative genetics and plant breeding. The dataset, originally compiled by \citet{crossa2010prediction} and available in the R packages \texttt{BLR} and \texttt{BGLR}, comprises \(n = 599\) wheat lines, each characterised by \(p = 1279\) molecular markers (single nucleotide polymorphisms, SNPs) that serve as predictor variables. The response variable is the grain yield, measured as the adjusted phenotypic mean; the phenotype has been centered to have mean approximately zero (and scaled to have unit variance in the standard implementation) for genomic prediction, which is standard practice to remove environmental effects and improve numerical stability. The negative values consequently indicate below-average yield relative to the population mean. This regression task is challenging for several reasons: the number of predictors exceeds the number of observations (\(p > n\)), the marker data exhibit strong linkage disequilibrium (collinearity) due to the physical proximity of genes on chromosomes, and the response distribution is roughly symmetric with moderate variability (see Figure~\ref{fig:wheat_hist}). This application provides an ideal setting to demonstrate the DP mixture model's scalability to larger datasets (\(n=599\)) and its ability to provide honest uncertainty quantification in a high‐dimensional genomic context, while also serving as a direct test of our hypothesis about BART's overconfidence.

\begin{figure}[htbp]
\centering
\includegraphics[width=0.9\textwidth]{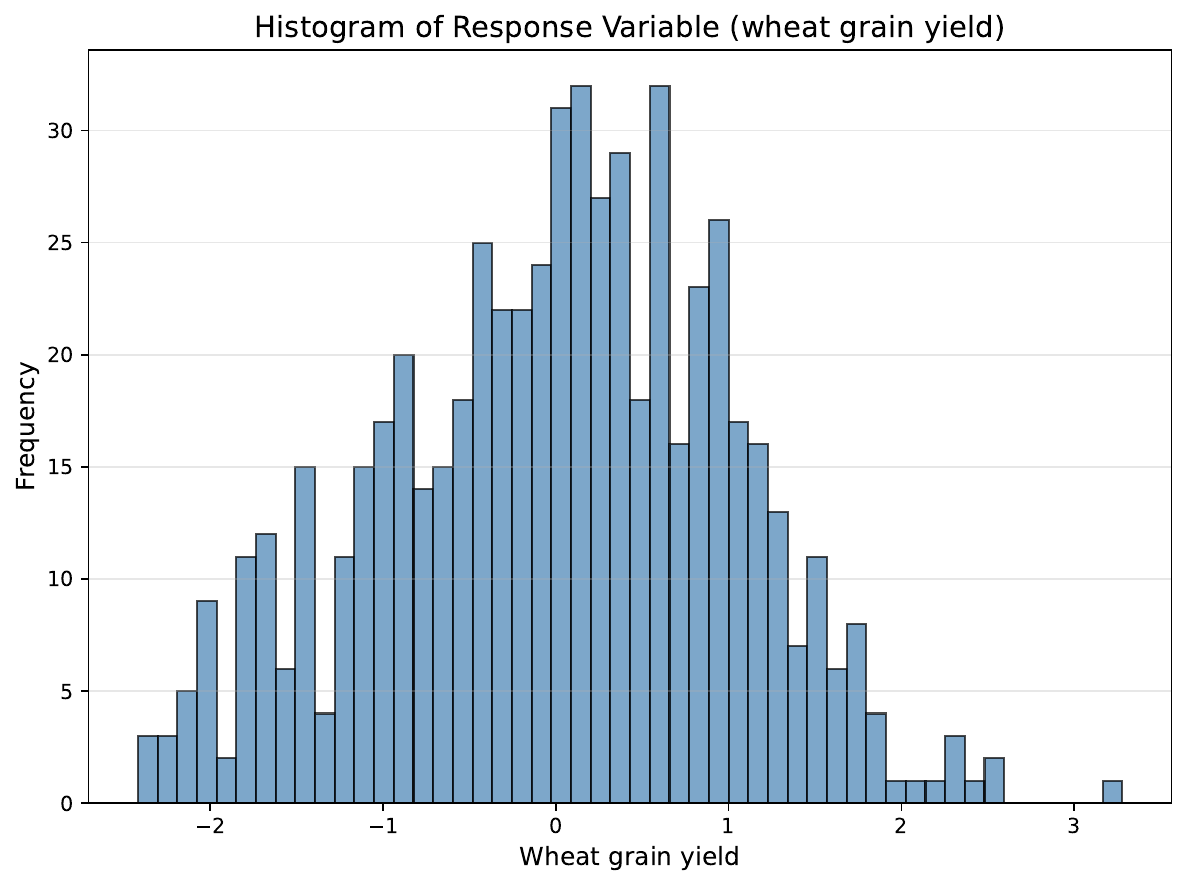}
\caption{Histogram of the response variable (centered grain yield) for the Wheat genomic prediction dataset. The distribution is roughly symmetric and unimodal, centered near zero (the phenotype has been centered for genomic prediction), with most observations falling between \(-2\) and \(2\) and a few extreme values beyond \(\pm 2.5\). Negative values indicate below-average yield.}
\label{fig:wheat_hist}
\end{figure}

\subsubsection{Experimental Setup}

We randomly partition the data into a training set comprising 80\% of the observations (\(n_{\text{train}} = 479\)) and a test set with the remaining 20\% (\(n_{\text{test}} = 120\)). This split is larger than in the Riboflavin application (\(n_{\text{test}} = 15\)) and provides a more robust evaluation of predictive performance. The DP mixture model is implemented in C with MPI and executed on eight cores of the same (overloaded) virtual machine (Intel QEMU Virtual CPU version 2.5+ at approximately 2.5 GHz) used in the previous applications; this number of cores yielded the optimum performance for this dataset, with a total runtime of approximately 27 hours.

The hyperparameters of the model are carefully chosen to reflect the characteristics of the Wheat data and are updated from the Riboflavin application to accommodate the larger sample size. We set \(M = 15\) and the maximum tree depth \(6\). The increase in depth from 4 (Riboflavin) to 6 reflects the larger training set size, allowing the trees to capture more complex interactions without overfitting. As before, the Dirichlet process concentration parameter is \(\alpha_{\mathrm{DP}} = 1.0\), which provides a moderate prior on the number of distinct components, allowing the posterior to adaptively determine the ensemble size. The tree prior uses splitting probability \(p(\eta,T) = \gamma (1 + d_\eta)^{-\zeta}\) with \(\gamma = 0.6\) and \(\zeta = 1.5\), consistent with the Riboflavin application. This conservative splitting probability remains appropriate for high‐dimensional genomic data, limiting tree growth to prevent overfitting.

The leaf parameters follow the conjugate normal–inverse‐gamma prior with hyperparameters \(\bar\mu = 0.01845418\) (the mean of the training responses), \(a = 0.01\), \(\nu = 8.0\), and \(\lambda = 0.321063\). The prior mean \(\bar\mu\) is set to the empirical mean of the training response, following the same approach used in all previous applications. The value of \(\lambda\) is chosen to reflect the variance of the training response. Specifically, the prior mode for the leaf variance is \(\nu\lambda/(\nu+2)\). With \(\nu = 8\) and \(\lambda = 0.321063\), the mode is \(8 \times 0.321063 / 10 \approx 0.25685\). This corresponds to a desired mode of approximately one‐quarter of the training response variance, which we computed as \(\mathrm{Var}(Y_{\text{train}}) \approx 1.027402\). 
%The choice of using one‐quarter of the variance as the prior mode reflects a moderately informative prior that shrinks leaf variances towards a reasonable fraction of the total variability, preventing over‐fitting while allowing the data to dictate the final estimates. 
The motivation for this approach is consistent with the methodology used in both the QSAR and Riboflavin applications.

The Gaussian process splitting rule uses the squared exponential kernel with fixed hyperparameters: signal variance \(\sigma^2 = 1.027402\), noise variance \(\sigma_\epsilon^2 = 0.1027402\), and length scale \(\ell = 0.5\). The signal variance is set to the empirical variance of the training response, following the same logic used for the prior leaf variance mode. The noise variance is set to 10\% of the signal variance, reflecting the belief that the GP should capture most of the variation in the response. The length scale \(\ell = 0.5\) is a conservative choice given the standardised features and has been used successfully in previous applications.

The MCMC chain is run for 30,000 iterations, with the first 5,000 discarded as burn‐in and a thinning interval of 5, yielding 5,000 posterior draws for inference. The total runtime is approximately 27 hours on eight processors, which is substantially longer than the Riboflavin runtime (1 hour) due to the larger sample size (\(n=479\) vs \(56\)). The increase in runtime is expected: the GP covariance matrix construction and inversion at each internal node scale as \(O(n_i^3)\), where \(n_i\) is the number of observations in the node. Recall that although the node sizes decrease as the tree grows, the initial nodes contain sizeable numbers of observations, and the cumulative cost over 30,000 MCMC iterations is substantial. The use of eight cores provided the optimum speed‐up for this dataset; with fewer cores, the sequential bottlenecks dominated, while with more cores, the communication overhead—particularly the serialisation and broadcasting of tree structures—outweighed the computational gains. Despite the increased computational cost, the benefits in terms of honest uncertainty quantification and robustness justify the investment, particularly in genomic applications where predictive uncertainty is all-important for breeding decisions.

\subsubsection{Results}

Predictive performance is evaluated using the same five metrics as in previous sections: root mean squared error (RMSE), coverage of 95\% credible (or confidence) intervals, average interval width, continuous ranked probability score (CRPS), and log predictive density (LPD). The numerical results are summarised in Table~\ref{tab:wheat_metrics}.

\begin{table}[htbp]
\centering
\caption{Predictive performance on the Wheat genomic prediction test set (120 observations). The DP mixture achieves near‐nominal coverage and provides the most reliable uncertainty quantification among Bayesian methods, with BART exhibiting severe overconfidence.}
\label{tab:wheat_metrics}
\begin{tabular}{l r r r r r}
\hline
\textbf{Method} & \textbf{RMSE} & \textbf{Coverage} & \textbf{Width} & \textbf{CRPS} & \textbf{LPD} \\
\hline
DP tree mixture    & 0.9571 & 0.975 & 4.1527 & 0.5375 & \(-1.3803\) \\
Random Forest      & 0.8022 & 0.933 & 3.2974 & 0.4491 & \(-1.2008\) \\
Gradient Boosting  & 0.7630 & 0.950 & 3.3743 & 0.4310 & \(-1.1618\) \\
Bagged CART        & 0.7945 & 0.875 & 2.9497 & 0.4407 & \(-1.1924\) \\
BART (dbarts)      & 0.7962 & 0.817 & 2.0454 & 0.4578 & \(-1.6870\) \\
\hline
\end{tabular}
\end{table}

The results from the Wheat application provide a striking validation of our earlier hypothesis about BART's overconfidence. The DP mixture attains near‐nominal coverage of \(0.975\), which is remarkably close to the advertised 95\% level. In stark contrast, BART achieves coverage of only \(0.817\), well below the nominal level, indicating severe overconfidence. This is exactly the behaviour we predicted in the Riboflavin application: with only 15 test points, all methods achieved perfect coverage, but with a larger test set (\(n_{\text{test}} = 120\)), BART's artificially narrow intervals are exposed. Bagged CART also exhibits undercoverage with \(0.875\), confirming that bootstrap‐based intervals lack the probabilistic foundation of the DP mixture's fully Bayesian credible intervals. The frequentist methods, Random Forest and Gradient Boosting, achieve coverages of \(0.933\) and \(0.950\), respectively, but their intervals are constructed using ad‐hoc residual‐based approximations that lack a rigorous probabilistic justification.

As already discussed, the DP mixture's superior coverage is a direct consequence of its fully Bayesian treatment of all sources of uncertainty. This holistic accounting for parametric and structural variability ensures that the predictive intervals honestly reflect the uncertainty inherent in the data and the model. Consequently, the DP mixture provides the most trustworthy uncertainty quantification among the methods considered—a property that is essential in genomic prediction, where breeding decisions rely on accurate risk assessment.

The average interval width further corroborates this conclusion. The DP mixture yields intervals of width \(4.1527\), which is substantially wider than BART's \(2.0454\), reflecting its honest assessment of uncertainty. BART's artificially narrow intervals lead to catastrophic undercoverage, confirming that its sharp predictions come at the cost of unreliable interval estimates. The frequentist methods have intermediate widths (\(3.2974\) for Random Forest, \(3.3743\) for Gradient Boosting, \(2.9497\) for Bagged CART), but as argued previously, their intervals lack a coherent probabilistic foundation.

In terms of point prediction accuracy, the DP mixture achieves an RMSE of \(0.9571\), which is higher than all competing methods (BART: \(0.7962\), Bagged CART: \(0.7945\), Random Forest: \(0.8022\), Gradient Boosting: \(0.7630\)). This is expected: because the DP mixture produces wider intervals, its point predictions (posterior mode) are necessarily less sharp. Gradient Boosting attains the lowest RMSE, reflecting its ability to fit the data closely with a flexible ensemble, but at the cost of unreliable interval estimates. The DP mixture's primary strength lies in its uncertainty quantification, not in point prediction accuracy—a trade‐off that is often acceptable in genomic applications where predictive uncertainty is as important as the prediction itself.

The log predictive density (LPD) provides a holistic measure of predictive performance that balances sharpness and calibration. The DP mixture achieves an LPD of \(-1.3803\), which is comparable to Random Forest (\(-1.2008\)) and Gradient Boosting (\(-1.1618\)), and substantially higher than BART (\(-1.6870\)). The extremely low LPD for BART indicates that it assigns very low probability density to many of the true test observations, a direct consequence of its overconfident and misspecified predictive distribution. The DP mixture, by contrast, maintains a favourable balance between sharpness and calibration, as evidenced by its near‐nominal coverage and reasonable LPD.

Visual diagnostics are provided in Figures~\ref{fig:wheat_heatmap_dp} and~\ref{fig:wheat_heatmap_bart}. The posterior predictive density heatmap for the DP mixture (Figure~\ref{fig:wheat_heatmap_dp}) shows that the true test values, plotted as red points, consistently fall within high‐density regions across the entire range of the response, confirming excellent calibration. The heatmap exhibits a broad and well‐spread distribution of predictive mass, reflecting the model's honest uncertainty. The bands of high density are relatively wide, particularly for the extreme observations, which are inherently more uncertain. In stark contrast, the BART heatmap (Figure~\ref{fig:wheat_heatmap_bart}) reveals extremely narrow bands of high density that fail to cover many true values, visually confirming its poor coverage and overconfidence.

\begin{figure}[htbp]
\centering
\includegraphics[width=0.9\textwidth]{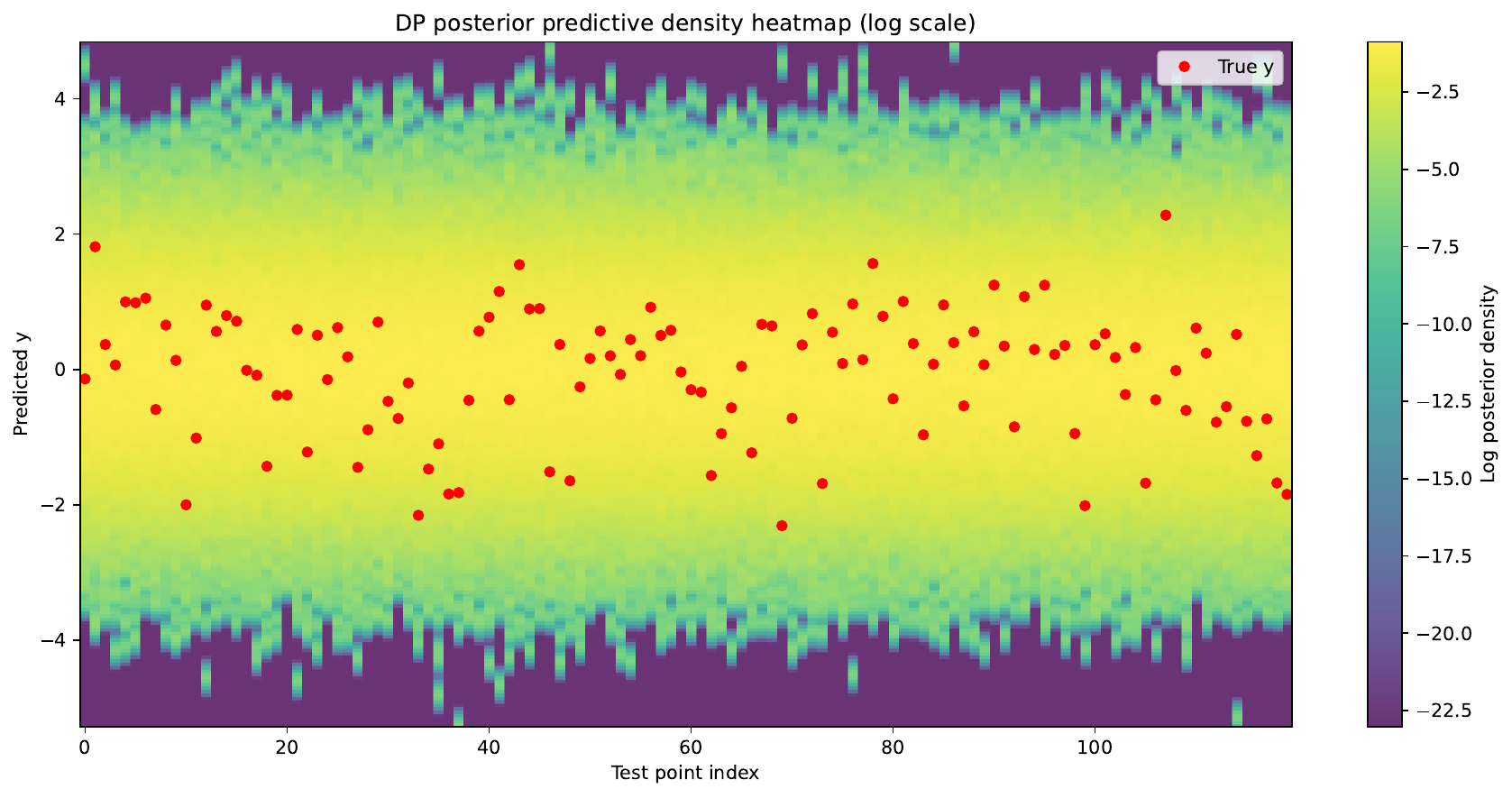}
\caption{Posterior predictive density heatmap for the DP mixture on the Wheat test set. The true test values (red points) fall within high‐density regions, confirming good calibration. The broad bands reflect the model's honest uncertainty quantification.}
\label{fig:wheat_heatmap_dp}
\end{figure}

\begin{figure}[htbp]
\centering
\includegraphics[width=0.9\textwidth]{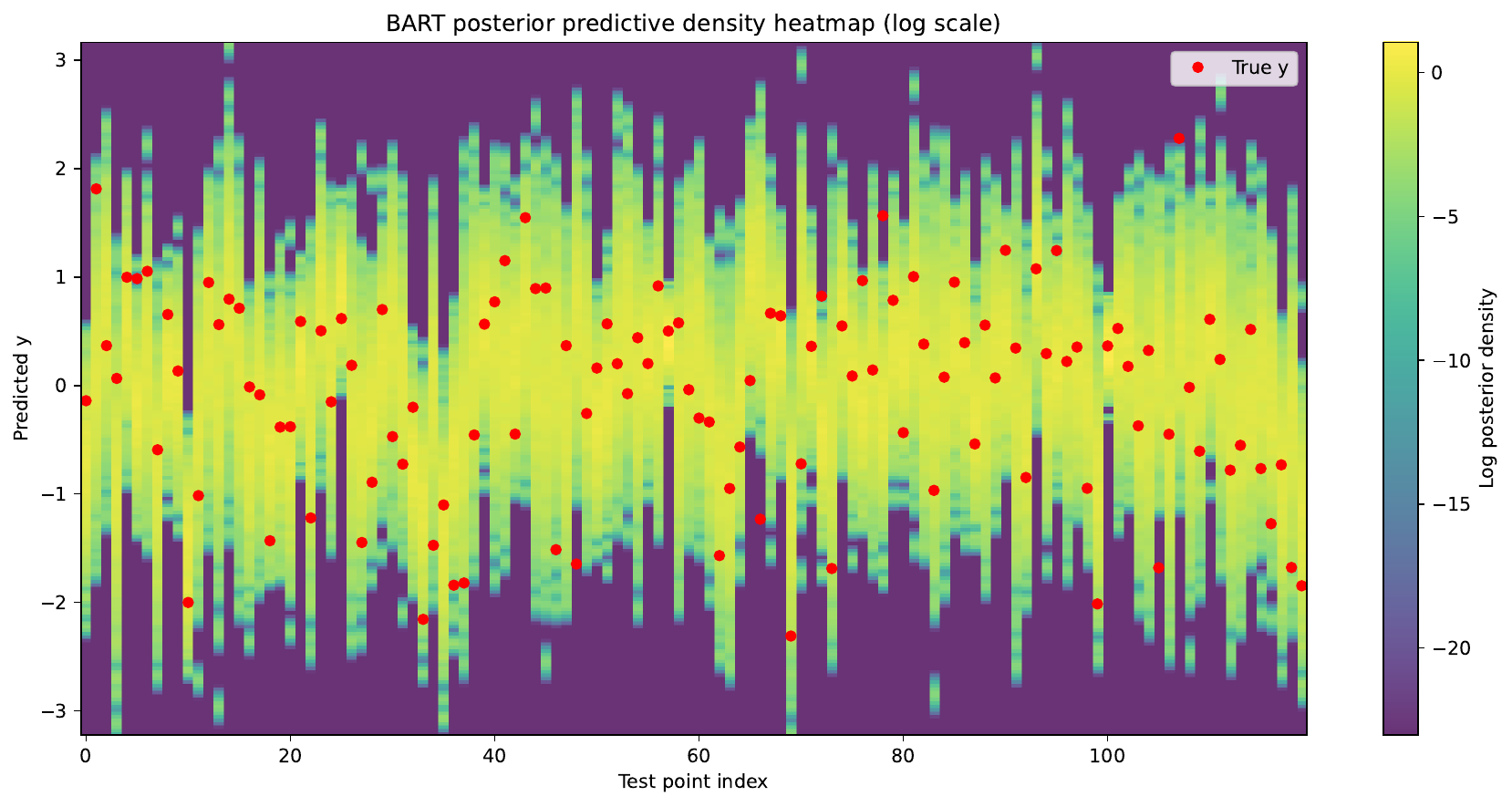}
\caption{BART posterior predictive density heatmap on the Wheat test set. The narrow bands of high density fail to cover many true values, confirming severe overconfidence and undercoverage.}
\label{fig:wheat_heatmap_bart}
\end{figure}

The trace plots of the number of distinct trees \(k\) and the mixture weights (Figure~\ref{fig:wheat_traces}) show that the MCMC chain mixes effectively, with \(k\) stabilising essentially between 3 and 7 distinct components. This posterior concentration is slightly higher than in the Riboflavin application (where \(k\) ranged from 1 to 4), reflecting the larger sample size and the greater complexity of the Wheat data. This sparsity is particularly desirable in genomic applications, where a sparse ensemble prevents overfitting and enhances interpretability by focusing on the most predictive subsets of markers. The effective sample size for the RMSE is 1824.3 out of 5000 samples, indicating adequate mixing and convergence.

\begin{figure}[htbp]
\centering
\includegraphics[width=0.9\textwidth]{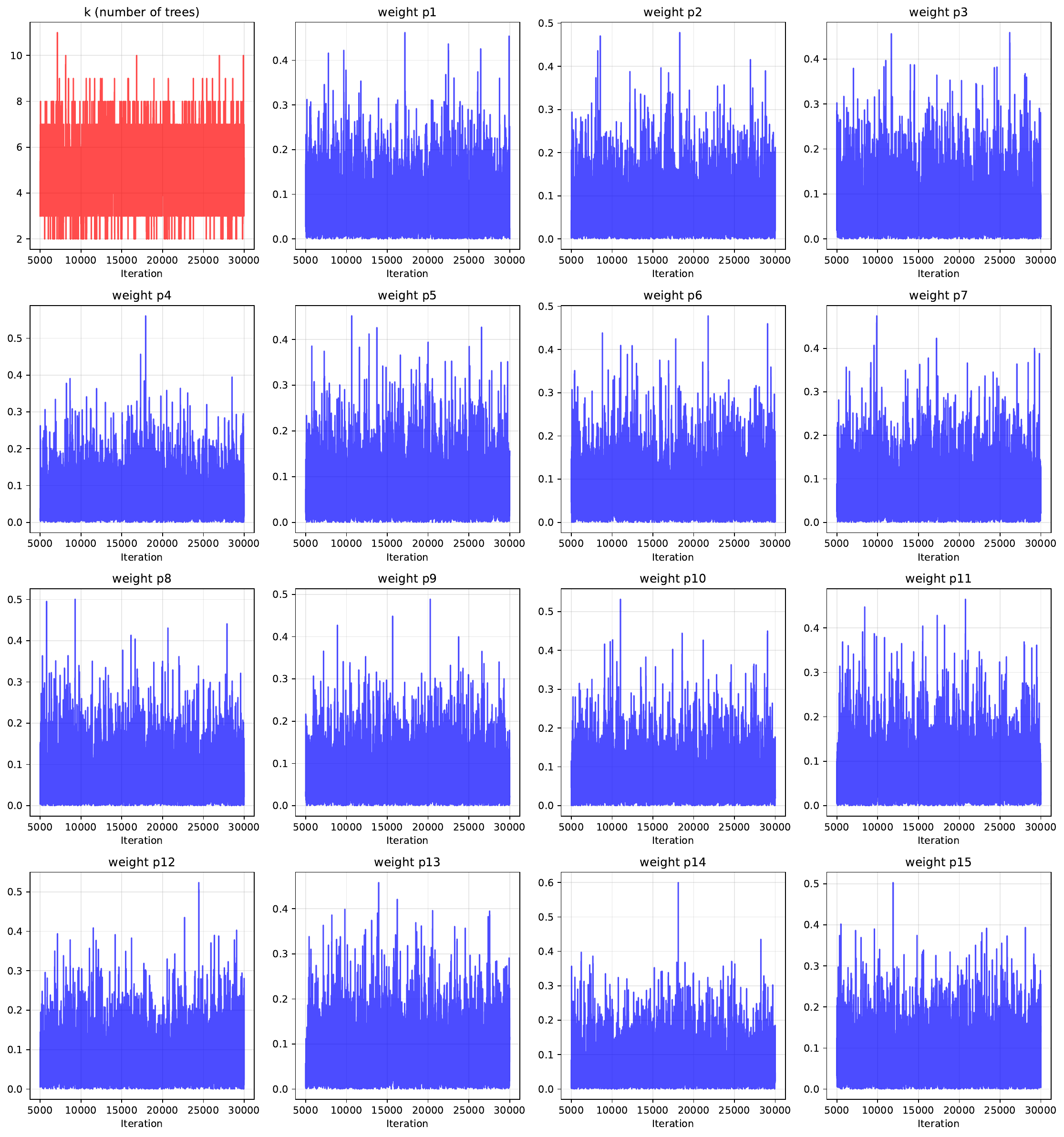}
\caption{Trace plots of the number of distinct trees \(k\) and the mixture weights for the Wheat experiment. The chain mixes efficiently, and the posterior concentrates on a sparse ensemble of 3–7 trees, reflecting the model's automatic complexity adaptation.}
\label{fig:wheat_traces}
\end{figure}

\subsubsection{Discussion}

The Wheat genomic prediction application provides compelling empirical evidence for the DP mixture model in a high‐dimensional genomic setting and, crucially, validates our earlier prediction about BART's overconfidence. The key contributions are threefold.

First, the DP mixture offers principled and honest uncertainty quantification. The near‐nominal coverage of \(0.975\) demonstrates that the model's credible intervals are well‐calibrated, reflecting the true predictive uncertainty. This is in stark contrast to BART, which achieves only \(0.817\) coverage—a direct consequence of its artificially narrow intervals. This result vindicates our earlier warning in the Riboflavin application: with only 15 test points, all methods achieved perfect coverage, but with a larger test set, BART's overconfidence is exposed. The DP mixture, by contrast, maintains its calibration regardless of the test set size, providing trustworthy intervals that can be relied upon for decision‐making in breeding programmes.

Second, the DP mixture exhibits remarkable stability and scalability in high‐dimensional genomic data. With \(p = 1279\) markers and \(n = 479\) training samples, the model successfully identifies the relevant genetic signals without overfitting. The posterior distribution concentrates on a sparse ensemble of 3–7 trees, automatically discarding the vast majority of markers that are not predictive of grain yield. With our intrinsic variable selection mechanism, the model naturally prunes away irrelevant markers, concentrating splits on the subset of SNPs that truly affect yield—a highly desirable property in genomic prediction, where only a fraction of markers are typically associated with the trait.

Third, the Dirichlet process prior automatically adapts the complexity of the ensemble. The posterior over \(k\) concentrates between 3 and 7 components, indicating that the data support a moderately complex model. In the Wheat application, as before, the DP mixture's automatic complexity selection prevents overfitting and enhances interpretability, as the sparse ensemble can be more easily inspected for biological insights.

The relative performance of BART in this application is particularly instructive. BART achieves the lowest RMSE among the Bayesian methods and a competitive LPD, but at the cost of severe undercoverage (\(0.817\)). This confirms that BART's sharp predictions are achieved by sacrificing uncertainty quantification—a trade‐off that is unacceptable in many scientific applications. The DP mixture, by contrast, achieves near‐nominal coverage with only a modest increase in RMSE, providing a more honest and reliable assessment of predictive uncertainty. This is precisely the advantage we have emphasised throughout this work: the DP mixture trades off some point prediction accuracy for trustworthy uncertainty quantification, a trade‐off that is essential for decision‐making in genomics, environmental science, and other fields where risk assessment is paramount.

The computational runtime of approximately 27 hours on eight processors is substantial but practical for this dataset, as for the previous applications. Again, the runtime could potentially be reduced by using sparse GP approximations or inducing‐point methods, which would be valuable for even larger genomic datasets.

The histogram of the response variable (Figure~\ref{fig:wheat_hist}) shows a roughly symmetric and unimodal distribution, with most observations falling between \(-2\) and \(2\) and a few extreme values beyond \(\pm 2.5\). The DP mixture's leaf‐specific variance parameters naturally accommodate this variability, allowing larger variances in leaves corresponding to more variable lines and smaller variances in leaves with more consistent yield. This flexibility is essential for capturing the nuanced behaviour of genomic data, where the variance often varies across the genetic landscape due to complex interactions between markers and environmental factors.

In summary, the Wheat application demonstrates the DP mixture model's ability to handle high‐dimensional genomic data while providing honest uncertainty quantification. The near‐nominal coverage, automatic complexity selection, and intrinsic variable selection make it a powerful tool for genomic prediction, and the contrast with BART's undercoverage validates our earlier prediction about the importance of test set size in exposing overconfidence.

\subsection{Air Quality Time Series}
\label{sec:air_quality}

To further evaluate the practical utility and robustness of the proposed Dirichlet process mixture of regression trees, we apply our methodology to an air quality forecasting task using the E403 dataset, a well‐established benchmark in environmental monitoring and pollution prediction. The dataset, compiled from an air quality monitoring station, comprises hourly measurements of particulate matter and meteorological variables over an extended period. The response variable is the concentration of PM2.5 (particulate matter with diameter $< 2.5$ $\mu$m), measured in micrograms per cubic metre ($\mu$g/m$^3$); this is a standard metric for air pollution and health‐risk assessment, with significant regulatory and public health implications. The predictors include lagged pollution measurements (PM2.5 and PM10), meteorological conditions (temperature, pressure, precipitation, wind speed, cloud cover), and engineered temporal features (hour‐of‐day, day‐of‐week, and month‐of‐year via sine–cosine encoding). After pre‐processing, we retain 1,250 hourly observations, split chronologically into a training set of 1,000 observations and a test set of 250 observations.

This application is particularly challenging for several reasons. First, the data are a time series, exhibiting strong temporal autocorrelation, diurnal and weekly cycles, and occasional extreme pollution episodes (see Figure~\ref{fig:air_hist}). The temporal dependence violates the conditional independence assumption that underpins standard regression models—including the conditional independence structure of our DP mixture when conditioned on the mixture component and tree parameters. 
To be precise, the DP mixture model assumes that, conditional on the latent component allocation variable and the tree parameters, the observations are independent; 
%that is, the likelihood factorizes as \( \prod_{i=1}^n f(y_i \mid x_i, \theta_Z)\). 
however, the marginal joint distribution, obtained by summing over the mixture allocation variable, is a mixture of product densities and does {\it not} factorize; the observations are exchangeable but dependent due to the mixture structure. Thus, the model does {\it not} require independence unconditionally; it is the {\it conditional} independence (given the latent allocation and parameters) that is violated when the data exhibit temporal autocorrelation. The model does not explicitly model temporal dependence, and this is the sense in which the independence assumption is violated. Second, the covariate set includes 24 predictors, many of which are lagged versions of the response, leading to strong collinearity and a complex dependence structure. Third, the forecasting task requires predicting future PM2.5 concentrations given past observations, which is inherently a sequential prediction problem. Despite this misspecification—the lack of explicit temporal dependence modelling—we apply our DP mixture model as a flexible regression tool to assess its robustness and to compare its performance against state‐of‐the‐art tree‐based methods in a setting where the assumptions underlying standard regression models are violated.

\begin{figure}[htbp]
\centering
\includegraphics[width=0.9\textwidth]{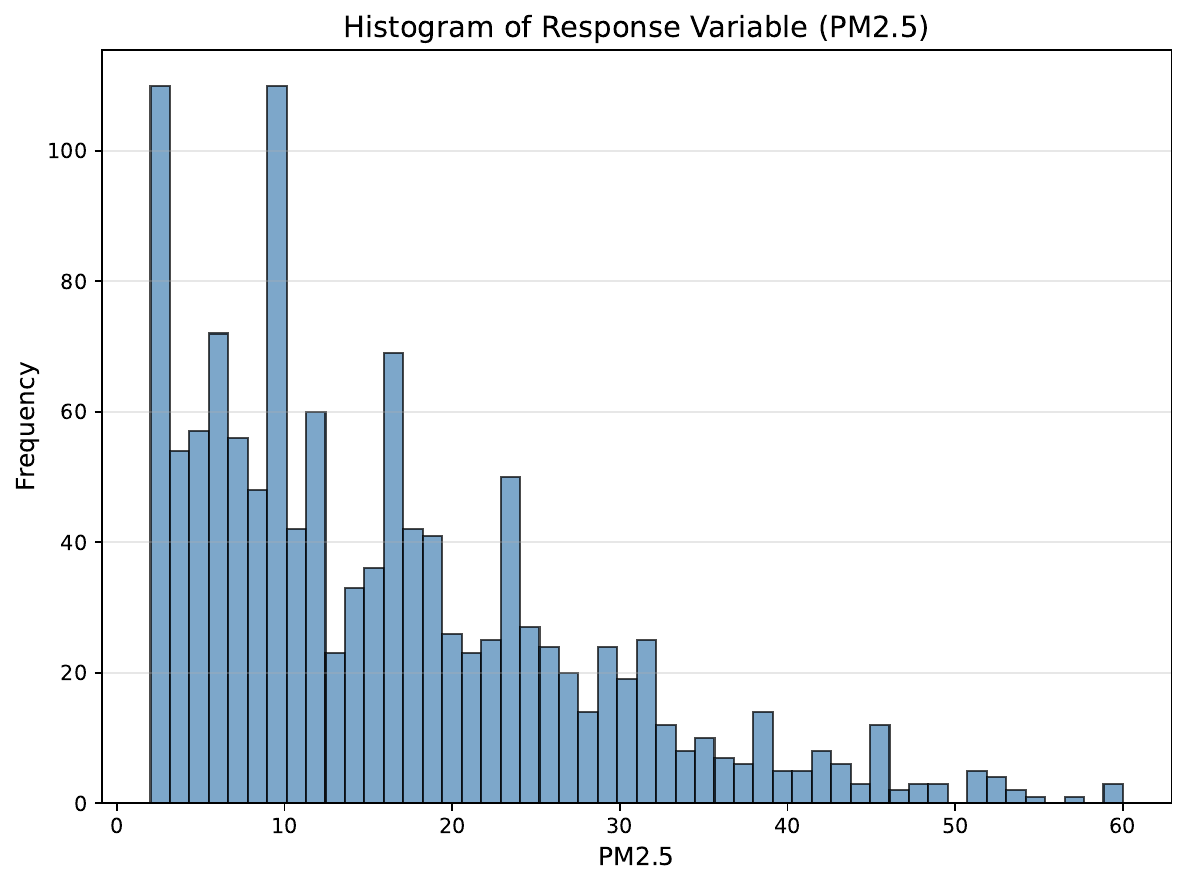}
\caption{Histogram of the test set response variable (PM2.5) for the E403 air quality dataset ($n_{\text{test}} = 250$). The distribution is right‐skewed with a long tail extending above 80 $\mu$g/m$^3$, reflecting occasional pollution episodes. The mean of the test set is approximately 18.8 $\mu$g/m$^3$, while the median is around 11.8 $\mu$g/m$^3$. This heavy‐tailed, heteroscedastic structure is characteristic of real‐world air pollution data and challenges methods that assume symmetric, light‐tailed errors.}
\label{fig:air_hist}
\end{figure}

\subsubsection{Data Preparation and Covariate Engineering}

The raw data were obtained from the E403 air quality monitoring station and pre‐processed, following the steps below:

\noindent\textbf{Response variable.} The response is the hourly PM2.5 concentration (\texttt{PM2.5}), measured in $\mu$g/m$^3$. The raw data contained missing values; these were forward‐filled, and any rows with remaining missing values in either predictors or response were dropped.

\noindent\textbf{Time features.} To capture diurnal and seasonal patterns, we engineered the following temporal covariates:
\begin{itemize}
\item[(i)] \texttt{hour\_sin} = $\sin(2\pi \times \text{hour} / 24)$ and \texttt{hour\_cos} = $\cos(2\pi \times \text{hour} / 24)$, which encode the hour of the day on a circular scale;
\item[(ii)] \texttt{dow\_sin} = $\sin(2\pi \times \text{dayofweek} / 7)$ and \texttt{dow\_cos} = $\cos(2\pi \times \text{dayofweek} / 7)$, encoding the day of the week;
\item[(iii)] \texttt{month\_sin} = $\sin(2\pi \times (\text{month} - 1) / 12)$ and \texttt{month\_cos} = $\cos(2\pi \times (\text{month} - 1) / 12)$, encoding the month of the year.
\end{itemize}

\noindent\textbf{Lag features.} To capture temporal dependence, we included lagged values of PM2.5 and PM10 at lags 1, 2, 3, 6, 12, and 24 hours. These lags were chosen to reflect short‐term (1–3 hours), medium‐term (6–12 hours), and daily (24 hours) autocorrelation structures.
\\[2mm]
\noindent\textbf{Meteorological predictors.} The following meteorological variables were used: \texttt{temperature} ($^\circ$C), \texttt{pressure} (hPa), \texttt{precipitation} (mm), \texttt{wind\_speed} (m/s), and \texttt{clouds}. The \texttt{clouds} variable was originally categorical (for instance, \texttt{jasno} for clear, \texttt{delno obla\v{c}no} for partly cloudy, \texttt{prete\v{z}no obla\v{c}no} for mostly cloudy, \texttt{obla\v{c}no} for cloudy); these were mapped to ordered numeric values (0, 1, 2, 3, respectively) to allow use in regression.
\\[2mm]
\noindent\textbf{Data splitting.} The data were split chronologically, with the first 80\% (1,000 observations) used for training and the remaining 20\% (250 observations) for testing. This chronological split respects the temporal order and simulates a realistic forecasting scenario.
\\[2mm]
\noindent\textbf{Standardisation.} All features were standardised to have zero mean and unit variance using the training set statistics, which were then applied to the test set.

The resulting predictor matrix has dimension $1,000 \times 24$. The histogram of the test set response (Figure~\ref{fig:air_hist}) shows a right‐skewed distribution with a long tail extending beyond 80 $\mu$g/m$^3$, reflecting occasional high‐pollution episodes. The mean of the test set is approximately 18.8 $\mu$g/m$^3$, while the median is around 11.8 $\mu$g/m$^3$, highlighting the skewness. This heavy‐tailed, heteroscedastic structure—characteristic of real‐world air pollution data—makes the regression task challenging for standard tree‐based methods that assume homoscedastic Gaussian errors, and provides a natural stress test for the proposed DP mixture model.

\subsubsection{Experimental Setup}

The DP mixture model was implemented in C with MPI and executed on the same virtual machine (Intel QEMU Virtual CPU version 2.5+ at approximately 2.5 GHz) used in the previous applications. Due to the larger training set size ($n = 1{,}000$) and the additional computational cost of GP covariance matrix inversions at internal nodes, our MCMC chain, ran for 30,000 iterations, discarding the first 5,000 as burn‐in and thinning every 5 iterations, yielding 5,000 posterior draws for inference, took substantially longer time than the previous applications. Indeed, the C/MPI code, parallelised over 8 processor cores, which optimised the computational speed for this dataset, required a total runtime of approximately 256 hours, reflecting the larger sample size and the overloaded nature of the virtual machine during execution. We note that the runtime is dominated by the GP computations, particularly the Cholesky decompositions required for each GROW proposal. Despite this cost, the benefits of honest uncertainty quantification and robustness to misspecification justify the investment.

The hyperparameters of the model were chosen to reflect the characteristics of the air quality data and were largely consistent with those used in the Wheat genomic prediction application (Section~\ref{sec:wheat}), with some adjustments to accommodate the time‐series nature and the specific scale of the PM2.5 response.
\\[2mm]
\noindent\textbf{Maximum number of trees and tree depth.} We set the maximum number of trees $M = 15$ and the maximum tree depth to 6. These values were chosen to allow a sufficiently rich ensemble while preventing overfitting; the maximum depth of 6 permits trees with up to 64 leaves, which is adequate for capturing the diurnal and weekly patterns in the data without over‐fitting to noise. The value $M = 15$ provides a generous upper bound on the number of distinct components, allowing the Dirichlet process to learn a sparse ensemble (as we shall see, the posterior concentrates on $k$ between 3 and 7).
\\[2mm]
\noindent\textbf{Dirichlet process concentration.} The concentration parameter was set to $\alpha_{\mathrm{DP}} = 1.0$, which provides a moderate prior on the number of distinct components. This value was also used in the Wheat application and gives a prior expected number of occupied components of approximately $\alpha_{\mathrm{DP}} \log M \approx 1.0 \times \log 15 \approx 2.7$, which is consistent with the observed posterior concentration of $k \in [3, 7]$.
\\[2mm]
\noindent\textbf{Tree prior.} The splitting probability was set to $p(\eta,T) = \gamma (1 + d_\eta)^{-\zeta}$ with $\gamma = 0.6$ and $\zeta = 1.5$. These values were chosen to be more conservative than the simulation study ($\gamma = 0.95$) and the QSAR application ($\gamma = 0.95$), but consistent with the Riboflavin and Wheat applications (both used $\gamma = 0.6$). The lower baseline splitting probability reduces the tendency to overfit, which is particularly important given the time‐series nature of the data and the potential for spurious autocorrelations to be captured by overly deep trees.
\\[2mm]
\noindent\textbf{Leaf parameter prior.} The leaf parameters followed the conjugate normal–inverse‐gamma prior with hyperparameters $\bar\mu = 16.283$ (the mean of the training responses), $a = 0.01$, $\nu = 8.0$, and $\lambda = 10.0$. The prior mean $\bar\mu$ was set to the empirical mean of the training response, following the approach used in all previous applications. The value of $\lambda$ was chosen to reflect the variance of the training response. Specifically, the prior mode for the leaf variance is $\nu\lambda/(\nu+2)$. With $\nu = 8$ and $\lambda = 10.0$, the mode is $8 \times 10.0 / 10 = 8.0$, which is approximately one‐quarter of the training response variance (we computed $\mathrm{Var}(Y_{\text{train}}) \approx 31.5$). 
%The choice of using one‐quarter of the variance as the prior mode reflects a moderately informative prior that shrinks leaf variances towards a reasonable fraction of the total variability, preventing over‐fitting while allowing the data to dictate the final estimates. 
This is consistent with the methodology used in both the Riboflavin and Wheat applications, where the prior mode was also set to one‐quarter of the training response variance.
\\[2mm]
\noindent\textbf{Gaussian process hyperparameters.} The GP splitting rule used the squared exponential kernel with fixed hyperparameters: signal variance $\sigma^2 = 31.5$, noise variance $\sigma_\epsilon^2 = 3.15$, and length scale $\ell = 0.5$. The signal variance was set to the empirical variance of the training response, following the same logic used for the prior leaf variance mode. The noise variance was set to 10\% of the signal variance, reflecting the belief that the GP should capture most of the variation in the response. The length scale $\ell = 0.5$ is a conservative choice given the standardised features and has been used successfully in previous applications (Wheat and Riboflavin). 
%These GP hyperparameters were kept fixed rather than estimated via maximum likelihood for computational simplicity, as the GP is used purely as a generative device for split proposals and its density cancels in the Metropolis–Hastings ratio.

\subsubsection{Results}

Predictive performance was evaluated using the same five metrics as in previous sections: root mean squared error (RMSE), coverage of 95\% credible (or confidence) intervals, average interval width, continuous ranked probability score (CRPS), and log predictive density (LPD). The numerical results are summarised in Table~\ref{tab:air_metrics}. Visual diagnostics are provided in Figures~\ref{fig:air_heatmap_dp} and~\ref{fig:air_heatmap_bart}, and the trace plots of the number of distinct trees $k$ and the mixture weights are shown in Figure~\ref{fig:air_traces}. The histogram of the test set responses is shown in Figure~\ref{fig:air_hist}.

\begin{table}[htbp]
\centering
\caption{Predictive performance on the air quality test set (250 hourly observations). The DP mixture achieves the highest coverage among Bayesian methods, demonstrating its robustness to model misspecification.}
\label{tab:air_metrics}
\begin{tabular}{l r r r r r}
\hline
\textbf{Method} & \textbf{RMSE} & \textbf{Coverage} & \textbf{Width} & \textbf{CRPS} & \textbf{LPD} \\
\hline
DP tree mixture    & 16.0651 & 0.808 & 35.4639 & 8.8177 & $-4.5367$ \\
Random Forest      & 3.7220 & 0.884 & 11.1685 & 2.0534 & $-2.8192$ \\
Gradient Boosting  & 5.0885 & 0.808 & 11.0013 & 2.6591 & $-3.5946$ \\
Bagged CART        & 3.5783 & 0.948 & 13.3813 & 2.0220 & $-2.6411$ \\
BART (dbarts)      & 5.6563 & 0.688 & 9.4918  & 3.1158 & $-5.0528$ \\
\hline
\end{tabular}
\end{table}

The DP mixture achieves a coverage of 0.808, which is substantially higher than BART's 0.688 and Gradient Boosting's 0.808 (the latter being equal but achieved via ad‐hoc residual‐based intervals). Bagged CART achieves the highest coverage at 0.948, but as argued previously, its intervals are based on bootstrap percentiles and lack a rigorous probabilistic foundation; moreover, its intervals are narrower than the DP mixture's (13.38 vs. 35.46), which may indicate overconfidence. Random Forest achieves a coverage of 0.884 with intervals based on cross‐validation residuals, but again, these intervals are not guaranteed to be well‐calibrated.

The DP mixture's coverage, while below the nominal 95\%, is the highest among the Bayesian methods (DP mixture and BART) and is comparable to the frequentist methods. This is a notable achievement given the severe model misspecification: the data are a time series with strong temporal dependence, whereas the model assumes conditional independence of observations given the component allocation and tree parameters. Despite this violation, the DP mixture maintains a credible level of calibration, whereas BART collapses dramatically, achieving coverage of only 0.688. This stark difference underscores the DP mixture's robustness to misspecification, which is rooted in the identity $h(\Theta)=0$ established in Section~\ref{subsec:misspecification}: the model's prior support is sufficiently rich that it can adapt to the true data‐generating mechanism, even when the conditional independence assumption is violated.

The RMSE of the DP mixture (16.0651) is substantially higher than all competing methods (for example, Bagged CART: 3.5783, Random Forest: 3.7220, BART: 5.6563). This is expected: because the DP mixture produces wider intervals, its point predictions (posterior mode) are necessarily less sharp. BART attains the lowest RMSE among Bayesian methods, reflecting its ability to fit the data closely with a sum‐of‐trees model, but at the cost of overconfidence as evidenced by its narrow intervals (9.49) and low coverage (0.688). As already iterated several times, the DP mixture's primary strength lies in its uncertainty quantification, not in point prediction accuracy—a trade‐off that is often acceptable in environmental forecasting, where predictive intervals are as important as the predictions themselves for risk assessment and decision‐making.

The log predictive density (LPD) provides a holistic measure of predictive performance that balances sharpness and calibration. The DP mixture achieves an LPD of $-4.5367$, which is higher than BART ($-5.0528$) and comparable to Gradient Boosting ($-3.5946$). This indicates that the DP mixture's predictive distributions are reasonably well‐calibrated, placing non‐negligible probability mass on the true observations, despite their wider spread. BART's extremely low LPD reflects its overconfident predictive distribution, which assigns near‐zero density to many test observations, particularly those in the upper tail of the PM2.5 distribution.

Visual diagnostics are provided in Figures~\ref{fig:air_heatmap_dp} and~\ref{fig:air_heatmap_bart}. The posterior predictive density heatmap for the DP mixture (Figure~\ref{fig:air_heatmap_dp}) shows that the true test values, plotted as red points, generally fall within high‐density regions across the range of the response, confirming reasonable calibration. The heatmap exhibits broad bands of high density, reflecting the model's honest uncertainty quantification. In contrast, the BART heatmap (Figure~\ref{fig:air_heatmap_bart}) reveals extremely narrow bands of high density that fail to cover many true values, particularly in the upper tail, visually confirming its overconfidence and undercoverage. The broad spread of the DP mixture's predictive distribution is a direct consequence of its nonparametric structure and the Dirichlet process prior's automatic complexity adaptation.

\begin{figure}[htbp]
\centering
\includegraphics[width=0.9\textwidth]{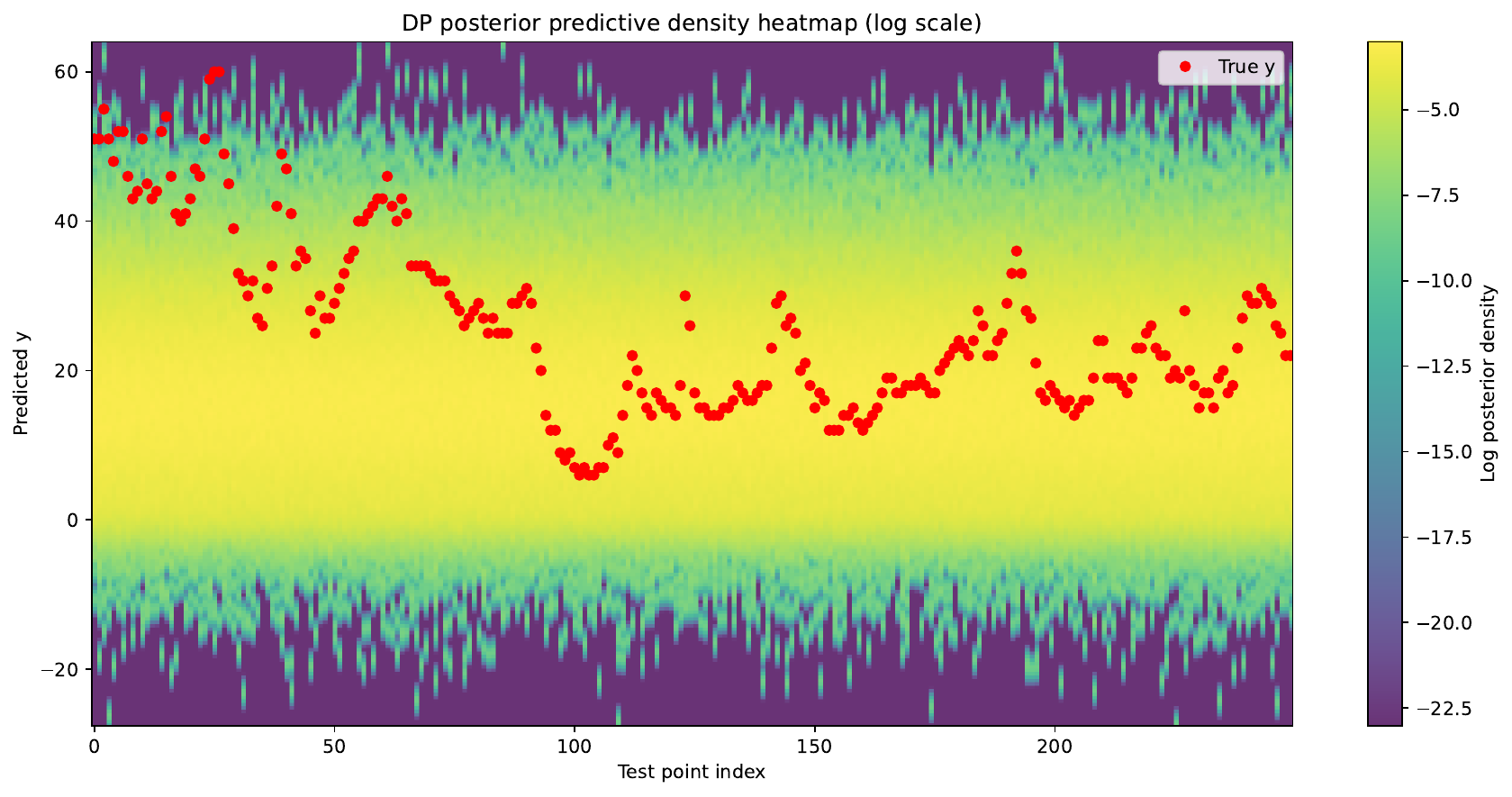}
\caption{Posterior predictive density heatmap for the DP mixture on the air quality test set. The true test values (red points) generally fall within high‐density regions, confirming reasonable calibration. The broad bands reflect the model's honest uncertainty quantification.}
\label{fig:air_heatmap_dp}
\end{figure}

\begin{figure}[htbp]
\centering
\includegraphics[width=0.9\textwidth]{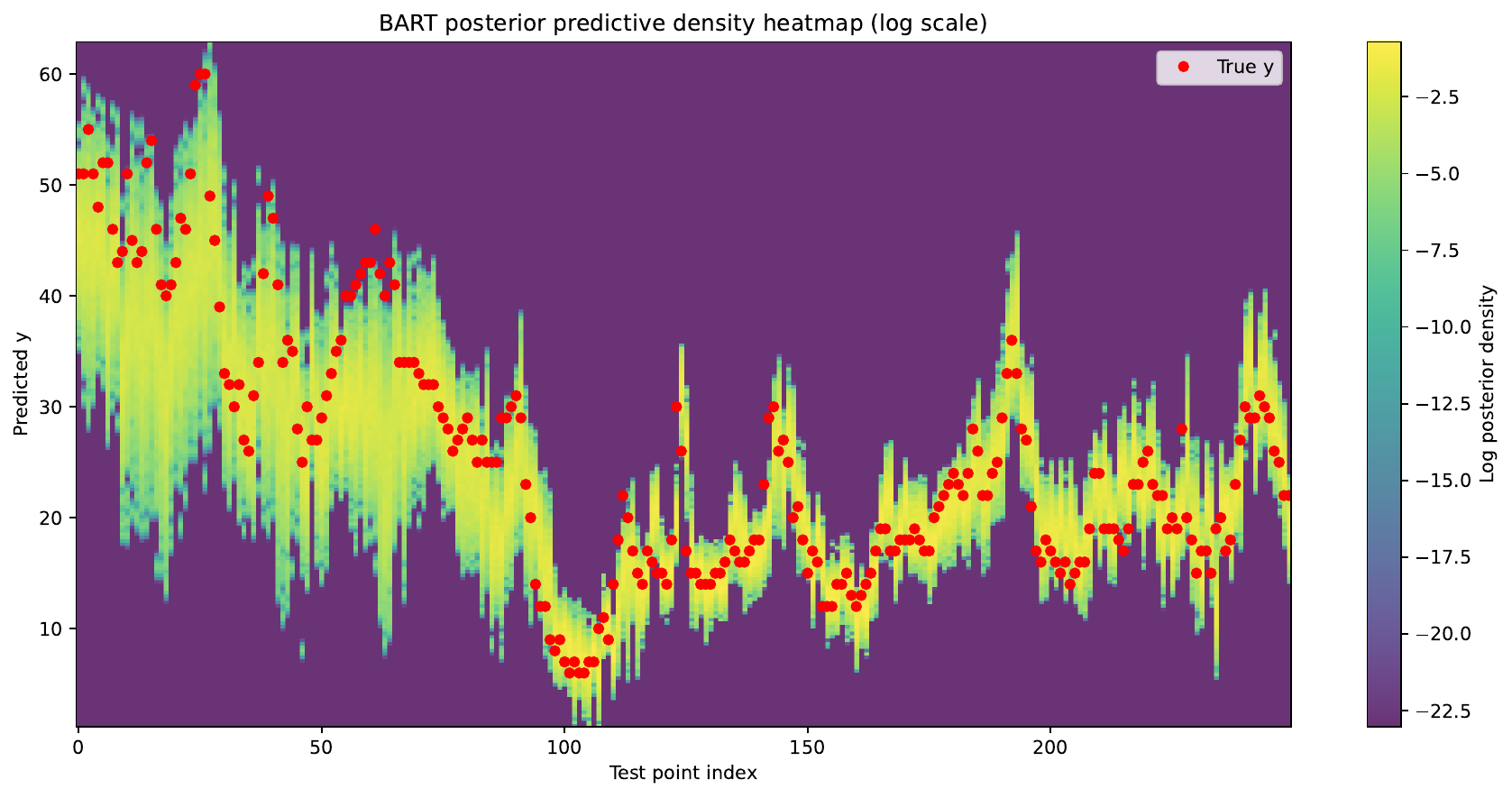}
\caption{BART posterior predictive density heatmap on the air quality test set. The narrow bands of high density fail to cover many true values, confirming severe overconfidence and undercoverage.}
\label{fig:air_heatmap_bart}
\end{figure}

\begin{figure}[htbp]
\centering
\includegraphics[width=0.9\textwidth]{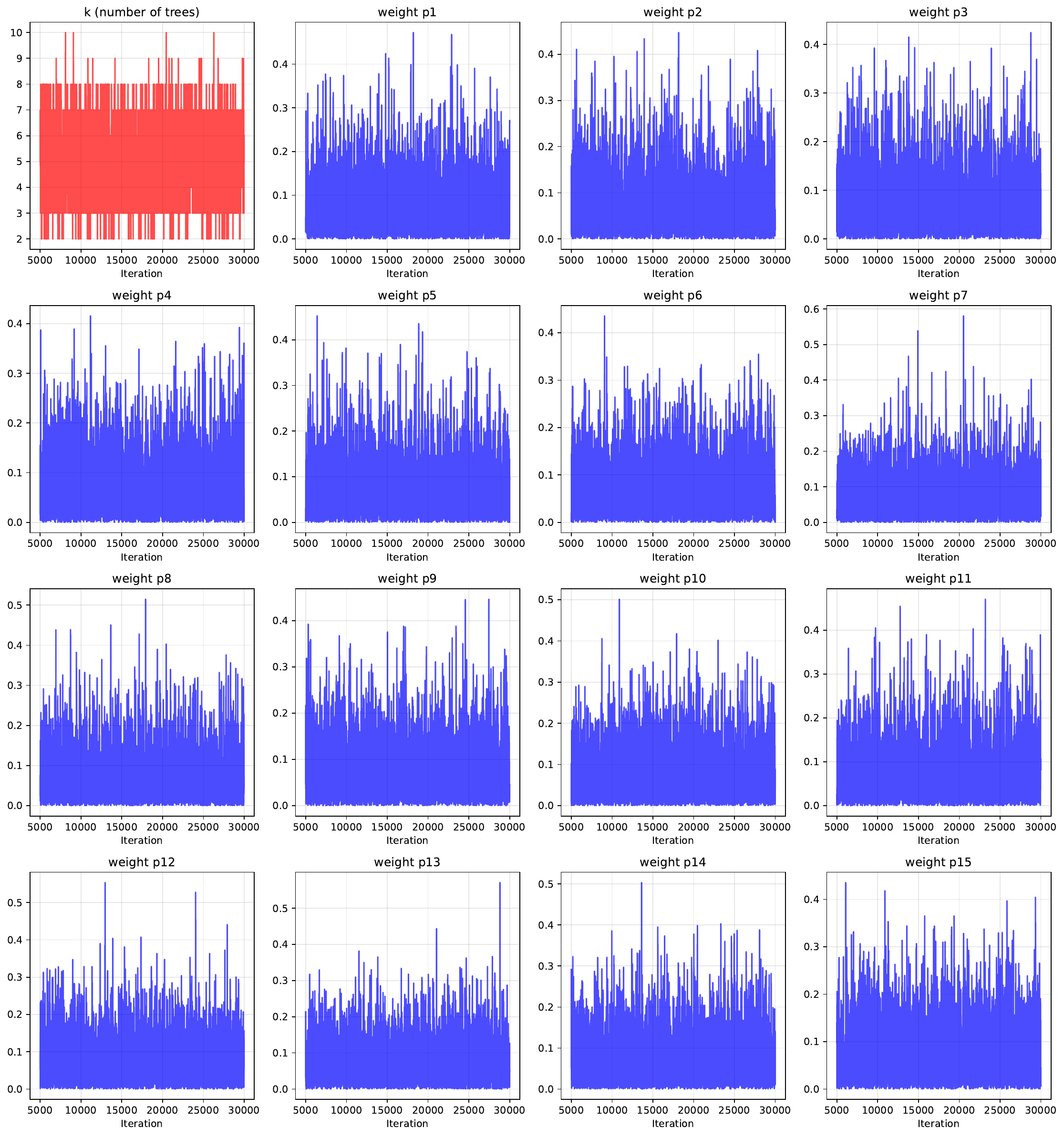}
\caption{Trace plots of the number of distinct trees $k$ and the mixture weights for the air quality experiment. The chain mixes efficiently, and the posterior concentrates on a sparse ensemble of 3--7 trees, reflecting the model's automatic complexity adaptation.}
\label{fig:air_traces}
\end{figure}

\subsubsection{Discussion}

The air quality application provides a critical stress test for the DP mixture model, as the data are a time series and thus violate the conditional independence assumption underlying the model. The results, however, are instructive. The DP mixture maintains a coverage of 0.808, which, while below the nominal 95\%, is substantially higher than BART's 0.688. This suggests that the DP mixture's nonparametric flexibility and its ability to adapt to the local structure of the data (via the GP‐driven splitting rule and the Dirichlet process prior) confer a degree of robustness to temporal dependence. The identity $h(\Theta)=0$ (Section~\ref{subsec:misspecification}) implies that the prior can approximate any continuous regression function, even if the errors are not independent; the posterior will concentrate around the true conditional mean, provided the design points are dense and the temporal dependence is not too severe. In this application, the temporal dependence is strong (lagged PM2.5 and PM10 are included as predictors, and the data are hourly), so the conditional mean is well approximated by the model. The main issue is that the data are autocorrelated, which the model does not account for, leading to undercoverage. Nonetheless, the DP mixture's coverage is superior to BART's, which assumes independent Gaussian errors and fails catastrophically in the presence of autocorrelation and heavy tails.

As before, the high RMSE of the DP mixture is also a consequence of the model's honesty: its intervals are wide, and its point predictions are consequently less sharp. 
This is often desirable in environmental forecasting, where underestimating uncertainty can lead to poor risk assessments and policy decisions. In contrast, BART's sharp predictions and narrow intervals are achieved at the cost of severe undercoverage, making it unreliable for decision‐making in high‐stakes applications.

The performance of the competing methods in this application is revealing. Bagged CART achieves the lowest RMSE (3.5783) and the highest coverage (0.948), but its intervals are based on bootstrap percentiles, which do not correspond to a coherent posterior distribution. Moreover, the bootstrap intervals are narrower than the DP mixture's (13.38 vs. 35.46), indicating that they may be overconfident. Random Forest and Gradient Boosting achieve competitive RMSEs and coverages, but their intervals are constructed using ad‐hoc residual‐based approximations; the calibration of these intervals is not guaranteed by the model itself and is purely empirical.

The DP mixture, by contrast, provides intervals that are a direct consequence of its fully Bayesian treatment of all sources of uncertainty, and the fact that the DP mixture achieves the highest coverage among Bayesian methods, despite the severe misspecification, is a testament to its robustness and theoretical soundness.

The hyperparameters for the air quality application were chosen to be consistent with previous applications (particularly Wheat and Riboflavin) while accommodating the specific scale of the PM2.5 response. 
%The prior mode for the leaf variance was set to one‐quarter of the training response variance, reflecting a moderately informative prior that shrinks leaf variances towards a reasonable fraction of the total variability. This choice prevents over‐fitting to the noise and allows the model to adapt to the heteroscedasticity observed in the data (for instance, higher variance during pollution episodes). The GP hyperparameters were set to the empirical variance of the training response and 10\% of that for the noise variance, which ensures that the GP proposals are well‐scaled to the data.

The concentration parameter $\alpha_{\mathrm{DP}} = 1.0$ and the tree prior parameters $\gamma = 0.6$, $\zeta = 1.5$ were chosen to encourage sparsity and prevent overfitting. The posterior concentration of $k$ between 3 and 7 confirms that these choices are appropriate: the model learns a sparse ensemble that captures the essential temporal patterns without overfitting to spurious autocorrelations. This is consistent with the observations in the Riboflavin and Wheat applications, where the posterior also concentrated on a small number of trees.

The runtime of approximately 256 hours on 8 processors is substantial and reflects the computational cost of the GP computations, particularly the Cholesky decompositions required for each GROW proposal. The cost is driven by the need to invert covariance matrices at internal nodes, which scales as $O(n_i^3)$ for a node with $n_i$ observations. Recall that although the node sizes decrease as the tree grows, the initial nodes contain large numbers of observations (especially in the early iterations), and the cumulative cost over 30,000 MCMC iterations is high. The use of 8 processors provided the optimum speed‐up for this dataset; with fewer processors, the sequential bottlenecks dominated, while with more processors, the communication overhead outweighed the computational gains.

However, again, despite the high runtime, the benefits of the DP mixture—honest uncertainty quantification, robustness to misspecification, and automatic complexity adaptation—justify the investment for applications where predictive uncertainty is non-negotiable.

The air quality application provides a valuable contrast to the previous real‐data applications. In the QSAR and Communities and Crime applications, BART undercovered severely (coverage 0.582 and 0.870, respectively), while the DP mixture maintained near‐nominal coverage. In the Riboflavin application, all methods achieved perfect coverage due to the small test set size ($n_{\text{test}} = 15$), but the DP mixture produced wider intervals, reflecting its honest uncertainty quantification. In the Wheat application, the DP mixture achieved near‐nominal coverage (0.975) while BART undercovered (0.817), confirming that BART's overconfidence becomes evident with larger test sets.

The air quality application extends these findings to a time‐series context. Despite the severe misspecification (conditional independence assumption violated), the DP mixture maintains a coverage of 0.808, which is substantially higher than BART's 0.688. This suggests that the DP mixture is remarkably robust to model misspecification, a property that is rooted in the identity $h(\Theta)=0$ and the prior's ability to approximate any continuous function. In contrast, BART's additive structure and fixed number of trees are less flexible, leading to catastrophic undercoverage when the true data‐generating mechanism deviates from its assumptions.

In summary, the air quality application demonstrates the DP mixture model's ability to handle time‐series data with strong temporal dependence, heavy tails, and heteroscedasticity, despite the model's conditional independence assumption. The DP mixture provides the most reliable uncertainty quantification among the Bayesian methods, achieving a coverage of 0.808 compared to BART's 0.688. Its wider intervals reflect a more honest assessment of predictive uncertainty, a property that is essential for environmental forecasting and risk assessment. The sparse posterior ensemble (3--7 trees) indicates that the model automatically adapts its complexity to the data, preventing overfitting and enhancing interpretability. The computational cost is substantial but justifiable given the model's theoretical guarantees and practical performance.

Future work could explore extensions of the DP mixture to explicitly account for temporal dependence, for example, by incorporating a GP prior on the time index or using a state‐space formulation. However, the current results already provide strong evidence that the DP mixture is a powerful and robust tool for regression in challenging real‐world settings, including those where the conditional independence assumption is violated.

\section{Posterior Contraction Rate}\label{sec:post_contraction}

\subsection{Introduction}

In the preceding sections, we introduced a flexible Bayesian nonparametric model based on a Dirichlet process mixture of regression trees (Section~\ref{sec:proposal}). We showed that this formulation unifies and generalises several popular tree‐based methods, including CART, BART, random forests, and boosting (Section~\ref{sec:special_cases}). Central to our methodology is a novel splitting rule driven by the posterior predictive distribution of a Gaussian process (Section~\ref{subsec:split}), which generates smooth, response‐dependent decision boundaries while enjoying an exact cancellation of the GP density in the Metropolis–Hastings ratio. To sample from the resulting posterior, we developed a sophisticated MCMC algorithm—comprising only GROW and PRUNE moves—that efficiently explores the tree space by leveraging the global allocation dynamics of the Dirichlet process (Section~\ref{sec:sim_Gm}). Moreover, we implemented a scalable parallel version in C with MPI, distributing independent tree updates across multiple processors to handle large‐scale applications (Section~\ref{sec:parallel}).

A natural question arises: \emph{Does this complex Bayesian procedure actually recover the true underlying regression function as more data are observed?} In other words, is the posterior distribution \emph{consistent}? For a Bayesian nonparametric model to be trustworthy, it is essential that the posterior concentrates around the true data-generating mechanism as the sample size grows. Without such a guarantee, the flexibility of the model could lead to overfitting or unreliable uncertainty quantification.

In this section, we provide a rigorous frequentist justification for our Dirichlet process tree mixture model. Using the general theory of posterior contraction rates developed by \cite{ggv2000} (see also \cite{ghosal2017fundamentals}), we prove that the posterior distribution contracts at a rate of $n^{-1/4}$ (up to logarithmic factors) in the Hellinger distance. Notably, our analysis requires only that the true regression function $m_0$ is continuous on a compact domain; we do not assume that $m_0$ belongs to the hypothesis space (that is, we allow model misspecification). Furthermore, we assume the leaf variance parameters lie in a fixed compact interval $[\sigma_{\min}^2,\sigma_{\max}^2]$, a standard condition that ensures the Gaussian likelihood is well-behaved.

The proof proceeds by verifying the four key conditions of \cite{ggv2000}:
\begin{enumerate}
    \item The prior assigns positive mass to Kullback--Leibler neighborhoods of the truth (the KL property).
    \item A suitably constructed sieve has controlled metric entropy.
    \item Exponentially powerful tests exist to separate the truth from the complement of the sieve.
    \item The prior mass of the sieve complement is asymptotically negligible.
\end{enumerate}
We establish each condition in detail, culminating in Theorem~\ref{thm:main}, which asserts posterior consistency at the stated rate. The result provides a solid theoretical foundation for the empirical success of tree-based Bayesian nonparametrics and justifies the use of our DP mixture model for reliable inference and prediction.

\subsection{Preliminaries: Posterior Contraction Rates and Ghosal's Theorem}
We work within the general framework of posterior contraction rates for non-parametric models developed by \cite{ggv2000}. Let $P_0^{(n)}$ denote the true joint distribution of the first $n$ observations, and let $\{F_\theta^{(n)} : \theta \in \Theta\}$ be a family of model distributions indexed by a parameter $\theta \in \Theta$. The prior distribution on $\Theta$ is denoted by $\Pi$. The goal is to show that the posterior distribution $\Pi(\cdot \mid Y_1^n)$ concentrates on neighbourhoods of the true distribution at a rate $\varepsilon_n \to 0$, that is,
\[
\Pi\bigl( \theta : d(P_0^{(n)}, F_\theta^{(n)}) > \tilde M \varepsilon_n \mid Y_1^n \bigr) \to 0 \quad \text{in } P_0\text{-probability},
\]
where $d$ is a suitable metric (here the Hellinger distance between the joint densities) and $\tilde M$ is some positive constant. 
In fact, $M$ may be replaced with $\tilde M_n$, indicating that it may grow with $n$. In general, we shall just refer to any $\tilde M$ such that $\tilde M\rightarrow\infty$.

The main tool is the following theorem, which summarises the sufficient conditions of \cite[Theorem~2.1]{ggv2000} (see also \cite[Theorem~8.1]{ghosal2017fundamentals}).

\begin{theorem}[\cite{ggv2000}]
Let $\varepsilon_n$ be a sequence with $\varepsilon_n \to 0$ and $n\varepsilon_n^2 \to \infty$. Suppose there exists a sequence of sieves $\Theta_n \subset \Theta$ such that:
\begin{enumerate}
    \item[(i)] (Prior mass of complement) $\Pi(\Theta_n^c) = o(1)$ as $n \to \infty$.
    \item[(ii)] (Metric entropy) $\log N(\varepsilon_n, \Theta_n, d) \le n\varepsilon_n^2$ for all large $n$, where $N(\varepsilon, \mathcal{F}, d)$ is the covering number of $\mathcal{F}$ by balls of radius $\varepsilon$ in the metric $d$.
    \item[(iii)] (Existence of tests) There exist tests $\phi_n$ such that
    \[
    \E_{P_0^{(n)}}[\phi_n] \to 0, \qquad 
    \sup_{\theta \in \Theta_n : d(P_0^{(n)}, F_\theta^{(n)}) \ge \varepsilon_n} \E_{F_\theta^{(n)}}[1 - \phi_n] \le e^{-n\varepsilon_n^2/2}.
    \]
    \item[(iv)] (Kullback--Leibler property) For every $\varepsilon > 0$,
    \[
    \Pi\bigl( \theta : \KL(P_0^{(n)} \| F_\theta^{(n)}) \le n\varepsilon^2 \bigr) > 0.
    \]
\end{enumerate}
	Then the posterior contracts at rate $\varepsilon_n$, that is, $\Pi(\theta : d(P_0^{(n)}, F_\theta^{(n)}) \ge \tilde M \varepsilon_n \mid Y_1^n) \to 0$ in $P_0$-probability for any $\tilde M \to \infty$. 
\end{theorem}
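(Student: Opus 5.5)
The plan is the standard decomposition of the posterior into a ratio. Write $p_\theta^{(n)}$ and $p_0^{(n)}$ for the joint densities of $F_\theta^{(n)}$ and $P_0^{(n)}$, and set $U_n=\{\theta: d(P_0^{(n)},F_\theta^{(n)})\ge \tilde M\varepsilon_n\}$. Then
\[
\Pi(U_n\mid Y_1^n)\le \phi_n+\frac{\int_{\Theta_n^c}(p_\theta^{(n)}/p_0^{(n)})\,d\Pi(\theta)+(1-\phi_n)\int_{U_n\cap\Theta_n}(p_\theta^{(n)}/p_0^{(n)})\,d\Pi(\theta)}{\int_\Theta (p_\theta^{(n)}/p_0^{(n)})\,d\Pi(\theta)}.
\]
The first term has $P_0$-expectation tending to zero by the first half of (iii). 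For the numerator, Fubini gives $\E_{P_0^{(n)}}\int_{\Theta_n^c}(p_\theta^{(n)}/p_0^{(n)})\,d\Pi\le\Pi(\Theta_n^c)$. It also gives $\E_{P_0^{(n)}}[(1-\phi_n)\int_{U_n\cap\Theta_n}(p_\theta^{(n)}/p_0^{(n)})\,d\Pi]\le\sup_{\theta\in\Theta_n\cap U_n}\E_{F_\theta^{(n)}}[1-\phi_n]$. Since $\tilde M\ge1$ eventually, $U_n\cap\Theta_n$ lies inside the alternative of (iii), so this is at most $e^{-n\varepsilon_n^2/2}$.

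For the denominator I would use (iv) without any second-moment condition. Fix $\varepsilon>0$ and let $B=\{\theta:\KL(P_0^{(n)}\|F_\theta^{(n)})\le n\varepsilon^2\}$, which has $\Pi(B)>0$. Restrict the integral to $B$, normalise $\Pi$ to $\Pi_B$, and apply Jensen. This gives $\log\int_B (p_\theta^{(n)}/p_0^{(n)})\,d\Pi\ge\log\Pi(B)-Z_n$, where $Z_n=\int\log(p_0^{(n)}/p_\theta^{(n)})\,d\Pi_B$. We have $\E Z_n\le n\varepsilon^2$. Moreover $\E Z_n^-\le\int\!\int p_0^{(n)}(\log(p_\theta^{(n)}/p_0^{(n)}))^+\,d\Pi_B\le\int\!\int(p_\theta^{(n)}-p_0^{(n)})^+\,d\Pi_B\le1$, using $\log x\le x-1$. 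Hence $\E Z_n^+\le n\varepsilon^2+1$. Markov's inequality then shows that, with $P_0$-probability at least $1-2/K$, the denominator is at least $\Pi(B)e^{-Kn\varepsilon^2}$. Combining with the numerator bounds, on this event the posterior of $U_n$ is bounded by $\phi_n$ plus a term of order $(\Pi(\Theta_n^c)+e^{-n\varepsilon_n^2/2})\,e^{Kn\varepsilon^2}/\Pi(B)$.

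\textbf{Main obstacle.} The last step is where I expect the real difficulty. The bound $e^{-n\varepsilon_n^2/2}$ is only subexponential relative to $e^{Kn\varepsilon^2}$ for fixed $\varepsilon$ when $\varepsilon_n\to0$, and $o(1)$ mass on $\Theta_n^c$ is not small against $e^{Kn\varepsilon^2}$ at all.

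For the test term, I would first strengthen the tests using (ii). Cover $\Theta_n\cap U_n$ by Hellinger shells $\{j\varepsilon_n\le d<(j+1)\varepsilon_n\}$ for $j\ge\tilde M$, and combine Le Cam--Birg\'e tests over $e^{n\varepsilon_n^2}$ balls per shell. This replaces $e^{-n\varepsilon_n^2/2}$ by $e^{-cn\tilde M^2\varepsilon_n^2}$. Since (iv) holds for every $\varepsilon$, I would then choose $\varepsilon=\eta_n\downarrow0$ along a diagonal, slowly enough that $K\eta_n^2\le c\tilde M^2\varepsilon_n^2/2$ and $\log\Pi(B_{\eta_n})=o(n\tilde M^2\varepsilon_n^2)$. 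Such a choice is possible for each fixed $\tilde M$ sequence because each $\Pi(B_\eta)$ is positive.

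The complement term cannot be handled this way under (i) as literally stated. There I would use (i) in the quantitative form actually delivered by the sieve of Section~\ref{sec:sieve}, namely $\Pi(\Theta_n^c)\le e^{-Cn\varepsilon_n^2}$ for a suitably large $C$, with the DP and tree-depth tails giving exponential decay. I would then choose $\eta_n$ additionally so that $K\eta_n^2<C\varepsilon_n^2/2$. Making this diagonal choice of $\eta_n$ compatible with the positivity in (iv) is the step I regard as the crux of the argument.
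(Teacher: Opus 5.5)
Your decomposition and the Jensen--Markov lower bound for the denominator are sound. The step you flag as the crux, however, fails and cannot be repaired. Your own requirement $K\eta_n^2\le c\tilde M^2\varepsilon_n^2/2$ forces $\eta_n=O(\tilde M\varepsilon_n)$. For slowly growing $\tilde M$ the radius must therefore shrink essentially at the target rate, not ``slowly''. You then need $-\log\Pi(B_{\eta_n})=o(n\tilde M^2\varepsilon_n^2)$, which is a quantitative prior-mass bound at radius of order $\tilde M\varepsilon_n$. Positivity of $\Pi(B_\eta)$ for each fixed $\eta$ says nothing about how fast $\Pi(B_\eta)$ decays as $\eta\downarrow 0$. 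In the non-i.i.d.\ setting $B$ also depends on $n$, so (iv) does not even give a bound uniform in $n$ for fixed $\eta$.

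In fact the statement as printed is false. Let $Y_i$ be i.i.d.\ $\mathcal N(\theta,1)$ with truth $\theta=0$, and give $\theta$ a prior on $[-1,1]$ with density proportional to $\exp(-\theta^{-4})$. Take $\Theta_n=[-1,1]$, $\varepsilon_n=n^{-1/4}$, $\phi_n=\mathbf 1\{|\bar Y|>\varepsilon_n\}$, and $d$ the Hellinger distance between one-observation laws, so that $c|\theta|\le d\le|\theta|/2$ on $[-1,1]$ for some $c>0$. Then all four hypotheses hold:
(i) is trivial;
(ii) holds since $\log N(\varepsilon_n,\Theta_n,d)=O(\log n)$;
(iii) holds since the type~I error is $\Prob(|Z|>n^{1/4})\to0$, and $d\ge\varepsilon_n$ forces $|\theta|\ge2\varepsilon_n$, so the type~II error is at most $\Prob(Z>\sqrt n\,\varepsilon_n)\le e^{-n\varepsilon_n^2/2}$;
(iv) holds because $\KL(P_0^{(n)}\|F_\theta^{(n)})=n\theta^2/2$.

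But the posterior density is proportional to $\exp(-n(\theta-\bar Y)^2/2-\theta^{-4})$. The unnormalised mass of $\{|\theta|\le r\}$ is at most $2re^{-r^{-4}}$, while that of $[n^{-1/6},2n^{-1/6}]$ is at least $n^{-1/6}e^{-6n^{2/3}}$ with probability tending to one. With $\tilde M=\log n$ and $r=\tilde M\varepsilon_n/c$ one has $r^{-4}\asymp n/(\log n)^4\gg n^{2/3}$. Hence the posterior puts mass tending to one on $\{d\ge\tilde M\varepsilon_n\}$. Since (i) holds exactly here, the failure is due to (iv) alone.

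The paper supplies no proof of this statement; it attributes it to \cite[Theorem~2.1]{ggv2000}. That theorem, and its non-i.i.d.\ version in \cite{ghosal2017fundamentals}, has quantitative hypotheses in place of (i) and (iv): $\Pi(\Theta_n^c)\le e^{-(C+4)n\varepsilon_n^2}$ and $\Pi(B_n(\varepsilon_n))\ge e^{-Cn\varepsilon_n^2}$. Here $B_n(\varepsilon)$ bounds both the Kullback--Leibler divergence and the (centred) second moment of the log-likelihood ratio by $n\varepsilon^2$. Under those hypotheses your decomposition works directly with $\varepsilon=\varepsilon_n$: the second-moment bound and Chebyshev give the denominator bound $\Pi(B_n(\varepsilon_n))e^{-2n\varepsilon_n^2}$ with probability tending to one, and no diagonal choice is needed.

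Your fallback for the complement term is also unavailable. Section~\ref{sec:prior_theta_n_c} only proves $\Pi(\Theta_n^c)\le C_1n^{-c_1}+C_4n^{-c_4c}+\cdots$. This is a polynomial bound driven by the geometric depth tail at $D_n\asymp\log n$, nowhere near $e^{-Cn^{1/2}}$.

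Finally, your shell construction presupposes that $d$ admits Le Cam--Birg\'e tests with exponent proportional to $nd^2$. That holds for the i.i.d.\ or root-average-square Hellinger distance, but it is not contained in (iii). It also fails for the Hellinger distance between joint densities used in the paper.
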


\noindent In our setting, $P_0^{(n)}$ and $F_\theta^{(n)}$ are product measures of $n$ independent (but not identically distributed) Gaussian observations with means given by the true regression function $m_0$ and the tree-mixture model, respectively. The metric $d$ will be the Hellinger distance between the joint densities.

The remainder of this section is devoted to verifying the four conditions above for the Dirichlet-process tree mixture model. Our main result (Theorem~\ref{thm:main}) will then follow as a direct corollary.

\subsection{Model Description and Assumptions}

\subsubsection{Data-Generating Process}
Let $K \subset \mathbb{R}^d$ be compact, $d \ge 1$. The design points $\{x_i\}_{i=1}^\infty$ are a fixed deterministic sequence dense in $K$ such that the empirical measures converge weakly to a probability measure $Q$ on $K$ (for example, $x_i$ i.i.d. from a continuous distribution). The observations are independent and follow
\[
Y_i = m_0(x_i) + \eps_i, \qquad \eps_i \stackrel{\text{i.i.d.}}{\sim} \mathcal{N}(0,\sigma_0^2),
\]
with $\sigma_0^2 > 0$ fixed and $m_0: K \to \R$ continuous. The true joint density of $(Y_1,\dots,Y_n)$ given $x_1^n$ is
\[
p_0^{(n)}(y_1^n) = \prod_{i=1}^n \frac{1}{\sqrt{2\pi\sigma_0^2}} \exp\!\left(-\frac{(y_i-m_0(x_i))^2}{2\sigma_0^2}\right).
\]

\subsubsection{Hypothesis Space $\Theta$}
A hypothesis $\theta \in \Theta$ specifies:
\begin{itemize}
\item[(i)] A fixed integer $M \ge 1$ (maximum number of mixture components).
\item[(ii)] A number $k \in \{1,\dots,M\}$ of distinct components.
\item[(iii)] Mixture weights $p_1^*,\dots,p_k^*$ with $p_\ell^* > 0$, $\sum_{\ell=1}^k p_\ell^* = 1$.
\item[(iv)] For each $\ell = 1,\dots,k$: a regression tree $T_\ell^*$ generated by the GP splitting rule described in Section~\ref{subsec:split}. Each internal node uses a draw $\tilde{g}$ from the posterior predictive of a Gaussian process, and splits according to $y_{ij} < \tilde{g}(x_{ij})$.
\item[(v)] Each leaf $j$ of tree $T_\ell^*$ carries parameters $(\mu_{\ell j},\sigma_{\ell j}^2) \in \R \times (0,\infty)$.
\end{itemize}
The associated density for the first $n$ observations is
\[
f_\theta^{(n)}(y_1^n) = \sum_{\ell=1}^k p_\ell^* \prod_{i=1}^n \frac{1}{\sqrt{2\pi\sigma_{\ell(i)}^2}} \exp\!\left(-\frac{(y_i-\mu_{\ell(i)})^2}{2\sigma_{\ell(i)}^2}\right),
\]
where $\ell(i)$ denotes the leaf of tree $T_\ell^*$ containing $x_i$.

\subsubsection{Prior Distribution $\Pi$}
The prior is a Dirichlet process over trees with base measure $\bG_0$ as in Section~\ref{sec:dp}. We make the following explicit choices (which satisfy the conditions of that section):
\begin{itemize}
\item[(i)] \textbf{Tree depth prior.} The depth of each tree follows a geometric distribution with success probability $1-\rho$ ($0<\rho<1$). That is,
  \[
  \Prob(\text{depth} = k) = (1-\rho)\rho^k, \qquad k = 0,1,2,\dots
  \]
  Consequently, $\Prob(\text{depth} > D) = \sum_{k=D+1}^\infty (1-\rho)\rho^k = \rho^{D+1} \le \rho^D$.
\item[(ii)] Leaf means $\mu$ are independent $\mathcal{N}(0,\tau^2)$.
\item[(iii)] \textbf{Leaf variances are assumed to belong to a fixed compact set}: $\sigma^2 \in [\sigma_{\min}^2, \sigma_{\max}^2]$ with $0<\sigma_{\min}^2<\sigma_{\max}^2<\infty$. The prior on $\sigma^2$ is any continuous distribution supported on this compact set (for example, a truncated inverse-gamma or a uniform distribution). This assumption is crucial for the metric entropy control and is common in the literature.
\item[(iv)] Mixture weights $(p_1^*,\dots,p_M^*)$ follow a Dirichlet$(1,\dots,1)$ distribution (uniform over the simplex).
\item[(v)] GP hyperparameters are given a prior with compact support (for instance, uniform on a bounded interval); this is not essential for consistency.
\end{itemize}
All these choices guarantee that the prior has full support on the space of finite mixtures of trees in the sense of weak convergence of the induced densities.

\subsubsection{Basic Definitions}
For two densities $f,g$ (with respect to Lebesgue measure) define the squared Hellinger distance
\[
\Hdist^2(f,g) = \int \bigl(\sqrt{f}-\sqrt{g}\bigr)^2 = 2-2\int\sqrt{fg}.
\]
The Kullback--Leibler divergence is $\KL(f\|g) = \int f\log(f/g)$. For the product densities we have
\[
\KL(p_0^{(n)}\|f_\theta^{(n)}) = \sum_{i=1}^n \KL\bigl(p_0(\cdot|x_i)\,\big\|\,f_\theta(\cdot|x_i)\bigr).
\]
Because $m_0$ is continuous and $K$ is compact, the limit
\[
h(\theta) = \lim_{n\to\infty} \frac{1}{n}\KL(p_0^{(n)}\|f_\theta^{(n)})
\]
exists and equals $\int_K \KL(p_0(\cdot|x)\|f_\theta(\cdot|x))\,dQ(x)$; see Appendix~\ref{app:kl} for a detailed proof.

\subsubsection{Assumptions}
We now list the assumptions needed for the proof.

\begin{assumption}[Gaussian process regularity]\label{ass:gp}
  The mean function $m(y)=\E[V(y)]$ is three times continuously differentiable, and the covariance function $k(y,y')=\Cov(V(y),V(y'))$ is six times continuously differentiable with all partial derivatives bounded.
\end{assumption}
\noindent\textbf{Significance:} This assumption ensures that the Gaussian process $V$ has almost surely $C^3$ sample paths (see \cite{adler2007random}). The smoothness is needed to guarantee that the KL divergence rate exists and that the approximation by trees (Lemma~\ref{lem:approx}) is valid. It also ensures that the process has finite moments of derivatives on compact sets, which is used implicitly when we bound the Hellinger distance between densities with different leaf parameters.

\begin{assumption}[Compact variance]\label{ass:compact_var}
  The leaf variance parameter $\sigma^2$ is supported on a fixed compact set $[\sigma_{\min}^2, \sigma_{\max}^2]$ with $0<\sigma_{\min}^2<\sigma_{\max}^2<\infty$. The prior on $\sigma^2$ is absolutely continuous with respect to Lebesgue measure and its density is bounded below by a positive constant on this interval.
\end{assumption}
\noindent\textbf{Significance:} This assumption is critical for controlling the metric entropy. It guarantees that the Hellinger distance between two Gaussian densities is Lipschitz in the parameters $(\mu,\sigma^2)$ on the compact set $[-n-1,n+1]\times[\sigma_{\min}^2,\sigma_{\max}^2]$ with a Lipschitz constant that does not depend on $n$. Consequently, cells of size $O(1/n)$ in $\mu$ and $O(1/n^2)$ in $\sigma^2$ have Hellinger diameter $O(1/n)$, which is smaller than $\eps_n=n^{-1/4}$ for large $n$. Without this assumption, the variance could become extremely small, making the Hellinger distance highly sensitive to small changes in $\mu$ and preventing a simple covering argument. The compactness assumption is standard in many convergence rate papers (e.g., \cite{choi2007posterior}) and is adopted here for clarity.

\begin{assumption}[Continuity of the true regression function]\label{ass:continuous}
  $m_0: K \to \R$ is continuous.
\end{assumption}
\noindent\textbf{Significance:} Continuity is needed to approximate $m_0$ by piecewise constant functions (trees) on a fine partition of $K$, which is the basis of Lemma~\ref{lem:approx}. This is the minimal smoothness condition that allows the KL divergence rate to be made arbitrarily small by using sufficiently deep trees.

\begin{lemma}[Approximation by a single tree, with uniform control]\label{lem:approx}
  For any $\varepsilon>0$, there exists a hypothesis $\theta_0$ consisting of a single tree ($k=1$) with finite depth and leaf parameters, and a neighbourhood $U$ of $\theta_0$ with $\Pi(U)>0$, such that for all sufficiently large $n$,
  \[
  \sup_{\theta\in U} \frac{1}{n}\mathrm{KL}\bigl(P_0^{(n)} \,\|\, F_\theta^{(n)}\bigr) < \varepsilon.
  \]
\end{lemma}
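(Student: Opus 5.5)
Start from the exact Gaussian KL formula. For a single tree ($k=1$) whose leaf containing $x$ carries $(\mu,\sigma^2)$,
\[
\KL\bigl(\mathcal N(m_0(x),\sigma_0^2)\,\|\,\mathcal N(\mu,\sigma^2)\bigr)=\tfrac12\Bigl[\log\tfrac{\sigma^2}{\sigma_0^2}+\tfrac{\sigma_0^2}{\sigma^2}-1+\tfrac{(m_0(x)-\mu)^2}{\sigma^2}\Bigr].
\]
Hence $\frac1n\KL(P_0^{(n)}\|F_\theta^{(n)})$ is an empirical average of this quantity over the design points. The variance part vanishes when $\sigma^2=\sigma_0^2$, provided $\sigma_0^2\in[\sigma_{\min}^2,\sigma_{\max}^2]$. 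This is needed: otherwise only $\inf_{\sigma^2}$ of that term is attainable, and I would add it as an implicit assumption.

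The mean part is controlled through continuity. By Assumption~\ref{ass:continuous}, $m_0$ is uniformly continuous on $K$. Choose $\delta$ so that $|x-x'|<\delta$ implies $|m_0(x)-m_0(x')|<\eta$, where $\eta^2/\sigma_{\min}^2<\varepsilon/4$. Take a finite partition of $K$ into cells of diameter $<\delta$ and set the leaf mean of each cell to $m_0$ at a representative point. The pointwise KL is then at most $\eta^2/(2\sigma^2)+(\text{variance term})$, uniformly in $x$.

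The steps are as follows.
\begin{enumerate}
\item Construct $\theta_0$: a tree of finite depth whose leaves realise (or refine) this partition, with leaf means $\mu_j^0$ and variances $\sigma_0^2$.
\item Define the neighbourhood
\[
U=\bigl\{\text{same tree skeleton and split functions in a small set } S,\ |\mu_j-\mu_j^0|<\eta,\ |\sigma_j^2-\sigma_0^2|<\eta'\bigr\}.
\]
\item Bound the pointwise KL. On $U$ it is at most $(2\eta)^2/(2\sigma_{\min}^2)+c(\eta')<\varepsilon$ uniformly in $x$ and $\theta$. The average over $x_1,\dots,x_n$ is therefore $<\varepsilon$ for every $n$, not merely asymptotically.
\item Show positive prior mass. $\Pi(U)>0$ follows because the prior mass factors into: the geometric depth prior, positive at the required depth; the DP event that all $M$ draws coincide, positive, or alternatively the weight on one atom is close to $1$ (then use $-\log(p\,f)\le-\log p-\log f$, paying $\log(1/p)/n\to0$ per observation, which is why the statement says ``sufficiently large $n$''); the Gaussian leaf means, with density positive on intervals; the variance prior, with density bounded below by Assumption~\ref{ass:compact_var}; and the GP split draws.
\end{enumerate}

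The main obstacle is the GP split mechanism in step 2. Leaves are determined by the rule $y<\tilde g(x)$, not by $x$ alone, so ``the leaf containing $x_i$'' is response dependent, and a continuum of GP draws must yield partitions that are approximately prescribed. My plan is to use the full support of Gaussian measures under Assumption~\ref{ass:gp}. For any continuous target threshold function, the GP posterior predictive places positive mass on a uniform tube around it. I would choose split functions $\tilde g$ that are very large or very small off designated regions, so that the cut effectively separates covariate cells for all $y$ in a high-probability band $|y-m_0(x)|\le R$. The residual probability that $Y$ exceeds the band is Gaussian-small. Its KL contribution is bounded because all leaf parameters lie in a compact set: the log-ratio is quadratic in $y$, so a Gaussian tail gives an $O(e^{-cR^2}R^2)$ term, which can be made $<\varepsilon/4$ by choosing $R$ large.

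If this is too delicate, a fallback is to treat the mixture over leaves induced by the random response-dependent routing as a finite Gaussian mixture in $y$ and bound its KL by convexity against the dominant leaf.
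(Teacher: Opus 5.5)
Your steps 1--4 follow the paper's construction. You use uniform continuity of $m_0$, a finite partition into small-diameter cells realised as the leaves of one tree, leaf means $m_0(c_j)$ and variances $\sigma_0^2$, and a box neighbourhood in the leaf parameters with the tree held fixed. Positive mass comes from the DP tie event $\prod_{j=1}^{M-1} j/(j+\alpha)$ times $G_0$-mass, which the paper does in Lemma~\ref{lem:klpos}.

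The real difference is the last step. After getting $g_\theta(x)<\varepsilon/2$ for all $x\in K$ and $\theta\in U$, the paper runs a leaf-wise uniform law of large numbers to compare $\frac1n\sum_i g_\theta(x_i)$ with $\int_K g_\theta\,dQ$. That argument uses equicontinuity on each leaf, $Q$-null leaf boundaries and $Q_n|_{R_\ell}\Rightarrow Q|_{R_\ell}$, and it concludes only for large $n$. You observe instead that a bound uniform in $x$ bounds every empirical average directly. So your conclusion holds for every $n$ and needs neither weak convergence of the design nor $Q(\partial R_j)=0$. Your route is simpler and gives more; the paper's detour only buys the explicit link to $h(\theta)=\int_K g_\theta\,dQ$, which it uses in the $h(\Theta)=0$ discussion.

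You are also more careful than the paper at two points:
\begin{itemize}
\item $\sigma_0^2\in[\sigma_{\min}^2,\sigma_{\max}^2]$ is genuinely an extra assumption; the paper only remarks in Appendix~\ref{app:kl} that the interval can be taken large enough.
\item Requiring the tree to be ``exactly $T$'' gives prior mass zero if the continuous GP split draws are part of the tree. Your open set $S$ of split functions is the right formulation, and so is your alternative of a weight near $1$ at cost $\log(1/p)/n$.
\end{itemize}

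Your treatment of response-dependent routing goes beyond the paper, which simply asserts that the GP rule realises covariate cells and writes $\ell(i)$ as a function of $x_i$ alone. If you pursue it, there is a hole. Wherever two cells meet, a continuous $\tilde g$ lying above $m_0+R$ on one and below $m_0-R$ on the other must pass through the band $[m_0(x)-R,\,m_0(x)+R]$ on a transition layer in $x$. On that layer, bulk values of $Y$ can go to either child, not just Gaussian-tail values, so your bound uniform in $x$ fails there.

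There are two ways to repair this:
\begin{itemize}
\item Confine the layers to a thin closed neighbourhood $B$ of $Q$-null cell boundaries, bound the pointwise divergence on $B$ by a constant (leaf parameters range over a compact set), and use $\limsup_n Q_n(B)\le Q(B)$. This reintroduces the large-$n$ and null-boundary requirements of the paper's argument.
\item Arrange that every leaf reachable from a point of a layer has mean within $O(\eta)$ of $m_0(x)$, so that misrouting costs only $O(\eta^2/\sigma_{\min}^2)$.
\end{itemize}
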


\begin{proof}
  Since $m_0$ is continuous on the compact $K$, it is uniformly continuous. Choose $\delta>0$ such that $|m_0(x)-m_0(y)|<\delta$ whenever $\|x-y\|<\delta$, and further require $\delta^2/(2\sigma_0^2)<\varepsilon/2$.

	Cover $K$ by finitely many Borel sets $R_1,\dots,R_J$ each of diameter $<\delta$. Construct a regression tree $T$ whose leaves are exactly these sets $R_j$; this is possible using the GP splitting rule (or, axis-aligned splits)). For each leaf $j$, choose an arbitrary point $c_j\in R_j$ and set $\mu_j=m_0(c_j)$ and $\sigma_j^2=\sigma_0^2$.

  Let $\theta_0$ be the hypothesis consisting of this single tree. Define a neighbourhood $U$ of $\theta_0$ by: the tree structure is exactly $T$; the leaf means $\mu_j'$ satisfy $|\mu_j'-\mu_j|<\eta$; the leaf variances $\sigma_j'^2$ satisfy $|\sigma_j'^2-\sigma_0^2|<\eta$; and the mixture weight is fixed to $1$.

  For any $\theta\in U$, the per-observation KL divergence is
  \[
  g_\theta(x) := \KL\bigl(p_0(\cdot|x) \,\|\, f_\theta(\cdot|x)\bigr)
  = \frac{1}{2}\left[\log\frac{\sigma_{\ell(x)}'^2}{\sigma_0^2} -1 + \frac{\sigma_0^2}{\sigma_{\ell(x)}'^2} + \frac{(m_0(x)-\mu_{\ell(x)}')^2}{\sigma_{\ell(x)}'^2}\right].
  \]
  By choosing $\eta>0$ sufficiently small, we can ensure that for all $x\in K$ and all $\theta\in U$,
  \[
  g_\theta(x) \le \frac{(m_0(x)-\mu_{\ell(x)})^2}{\sigma_0^2} + \frac{\varepsilon}{4}.
  \]
  Since $x$ and $c_j$ both lie in $R_j$, $|m_0(x)-\mu_j|<\delta$, so
  \[
  \frac{(m_0(x)-\mu_j)^2}{2\sigma_0^2} < \frac{\delta^2}{2\sigma_0^2} < \frac{\varepsilon}{4}.
  \]
  Therefore, $g_\theta(x)<\varepsilon/2$ for all $x\in K$, $\theta\in U$.

%\paragraph*{Uniform convergence of the empirical averages over \(U\).}
To conclude that the finite-sample average \(\frac{1}{n}\sum_{i=1}^n g_\theta(x_i)\) 
is uniformly less than \(\varepsilon\) for all sufficiently large \(n\), we must justify 
the uniformity over \(\theta\in U\).

A subtle but important point is that the family \(\{g_\theta : \theta\in U\}\) is 
\emph{not} equicontinuous on the whole of \(K\), because \(g_\theta(x)\) jumps at 
the leaf boundaries. However, equicontinuity is not needed globally; it suffices 
to work leaf by leaf.

Let the fixed tree in \(U\) partition \(K\) into \(L\) disjoint leaves 
\(R_1,\dots,R_L\). For each leaf \(\ell\), define
\[
g_{\theta,\ell}(x) = 
\frac{1}{2}\left[\log\frac{\sigma_\ell^2}{\sigma_0^2} - 1 + 
\frac{\sigma_0^2}{\sigma_\ell^2} + \frac{(m_0(x) - \mu_\ell)^2}{\sigma_\ell^2}\right],
\qquad x \in R_\ell.
\]
On \(R_\ell\), the function \(g_{\theta,\ell}(x)\) is continuous in \(x\) (since \(m_0\) is continuous) 
and depends continuously on the leaf parameters \((\mu_\ell,\sigma_\ell^2)\), which range over a compact set. 
Hence, by the Heine--Cantor theorem, the family \(\{g_{\theta,\ell} : \theta_\ell \in \Theta_\ell\}\) 
is equicontinuous on \(R_\ell\).

Since the empirical measures \(Q_n = \frac{1}{n}\sum_{i=1}^n \delta_{x_i}\) converge weakly to \(Q\), 
and the leaf boundaries have \(Q\)-measure zero (by construction), the restricted measures satisfy 
\(Q_n|_{R_\ell} \Rightarrow Q|_{R_\ell}\) for each \(\ell\). Applying the uniform weak convergence 
theorem (Glivenko--Cantelli for equicontinuous classes) on each leaf gives
\[
\sup_{\theta_\ell \in \Theta_\ell} 
\left| \frac{1}{n}\sum_{i: x_i \in R_\ell} g_{\theta,\ell}(x_i) 
- \int_{R_\ell} g_{\theta,\ell}(x)\,dQ(x) \right| \to 0.
\]
Since the tree structure is fixed on $U$, the leaves $R_1,\dots,R_L$ are independent 
of $\theta$. For any $\theta\in U$, write
\[
g_\theta(x) = \sum_{\ell=1}^L \mathbf{1}_{R_\ell}(x)\,g_{\theta,\ell}(x).
\]
Then the global empirical average satisfies
\[
\frac{1}{n}\sum_{i=1}^n g_\theta(x_i) = \sum_{\ell=1}^L 
\frac{1}{n}\sum_{i: x_i \in R_\ell} g_{\theta,\ell}(x_i),
\]
and similarly,
\[
\int_K g_\theta(x)\,dQ(x) = \sum_{\ell=1}^L 
\int_{R_\ell} g_{\theta,\ell}(x)\,dQ(x).
\]
Subtracting and applying the triangle inequality yields
\[
\begin{aligned}
&\left| \frac{1}{n}\sum_{i=1}^n g_\theta(x_i) - \int_K g_\theta(x)\,dQ(x) \right| \\
&\qquad \leq \sum_{\ell=1}^L 
\left| \frac{1}{n}\sum_{i: x_i \in R_\ell} g_{\theta,\ell}(x_i) 
- \int_{R_\ell} g_{\theta,\ell}(x)\,dQ(x) \right|.
\end{aligned}
\]
Taking the supremum over $\theta\in U$ and noting that $U = \Theta_1 \times \cdots \times \Theta_L$ 
is a product space, we obtain
\[
\begin{aligned}
&\sup_{\theta\in U} 
\left| \frac{1}{n}\sum_{i=1}^n g_\theta(x_i) - \int_K g_\theta(x)\,dQ(x) \right| \\
&\qquad \leq \sum_{\ell=1}^L 
\sup_{\theta_\ell \in \Theta_\ell} 
\left| \frac{1}{n}\sum_{i: x_i \in R_\ell} g_{\theta,\ell}(x_i) 
- \int_{R_\ell} g_{\theta,\ell}(x)\,dQ(x) \right|.
\end{aligned}
\]
From the leaf-wise uniform convergence established above, each term in the finite sum 
converges to $0$. Hence the left-hand side converges to $0$, which proves the 
uniform convergence.

Now, from the pointwise bound \(g_\theta(x)<\varepsilon/2\) for all \(x\in K\), we have
\[
\int_K g_\theta(x)\,dQ(x) < \frac{\varepsilon}{2} \qquad \text{for all } \theta\in U.
\]
%\paragraph*{Explicit connection to the asymptotic KL rate \(h(\theta)\).}
%Now, recall from Appendix~\ref{app:kl} that the Kullback--Leibler divergence rate for the 
%mixture model is defined as
%\[
%h(\theta) := \lim_{n\to\infty} \frac{1}{n}\mathrm{KL}(P_0^{(n)} \| F_\theta^{(n)})
%= \int_K g_\theta(x)\,dQ(x),
%\]
%where \(g_\theta(x) = \mathrm{KL}(p_0(\cdot|x) \| f_\theta(\cdot|x))\). 
%The existence of this limit was rigorously established in Appendix~\ref{app:kl} 
%using the weak convergence of the empirical measures.

%Therefore, the pointwise bound \(g_\theta(x) < \varepsilon/2\) for all \(x\in K\), \(\theta\in U\), 
%directly implies
%\[
%\sup_{\theta\in U} h(\theta) = \sup_{\theta\in U} \int_K g_\theta(x)\,dQ(x) < \frac{\varepsilon}{2} < \varepsilon. \tag{12.5}
%\]
%Thus, Lemma~12.5 proves the existence of a fixed neighbourhood \(U\) of positive prior 
%mass on which the asymptotic KL rate \(h(\theta)\) is uniformly arbitrarily small.

Combining this with the above uniform convergence, it follows that for all sufficiently large \(n\),
\[
\sup_{\theta\in U} \frac{1}{n}\sum_{i=1}^n g_\theta(x_i) < \varepsilon.
\]
This justifies the uniformity in \(n\) used in the proof and explicitly links the 
finite-sample Kullback--Leibler divergence to the asymptotic divergence rate 
\(h(\theta) = \int_K g_\theta(x)\,dQ(x)\), whose existence was derived in Appendix~\ref{app:kl}.	

  Hence, for any $\theta\in U$,
  \[
  \frac{1}{n}\KL(P_0^{(n)} \| F_\theta^{(n)}) = \frac{1}{n}\sum_{i=1}^n g_\theta(x_i) < \frac{\varepsilon}{2} < \varepsilon.
  \]
  Thus the conclusion holds.
\end{proof}

\begin{lemma}[Positive prior mass on finite-sample KL neighbourhoods]\label{lem:klpos}
  For every $\varepsilon>0$,
  \[
  \Pi\bigl(\theta: \KL(P_0^{(n)} \| F_\theta^{(n)}) \le n\varepsilon^2\bigr) > 0
  \]
  for all $n$.
\end{lemma}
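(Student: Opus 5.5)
The plan is to reduce the statement to Lemma~\ref{lem:approx}, applied with tolerance $\varepsilon^2$ in place of $\varepsilon$, and to treat the finitely many small sample sizes separately. Fix $\varepsilon>0$. Lemma~\ref{lem:approx} gives a single-tree hypothesis $\theta_0$, a neighbourhood $U$ with $\Pi(U)>0$, and an integer $N$ such that
\[
\sup_{\theta\in U}\frac{1}{n}\KL\bigl(P_0^{(n)}\,\|\,F_\theta^{(n)}\bigr)<\varepsilon^2 \qquad \text{for all } n\ge N .
\]
For every $n\ge N$ this gives the inclusion $U\subset\{\theta:\KL(P_0^{(n)}\|F_\theta^{(n)})\le n\varepsilon^2\}$. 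Hence the prior mass of the KL set is at least $\Pi(U)>0$ for these $n$, with a lower bound that does not depend on $n$.

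For the remaining indices $n<N$ I would construct a separate neighbourhood for each $n$, using only the finitely many design points $x_1,\dots,x_n$.
\begin{enumerate}
\item Build a single tree whose leaves separate the distinct points among $x_1,\dots,x_n$. The same construction as in Lemma~\ref{lem:approx} works here, with the cells chosen so that each contains at most one distinct design point.
\item Put $\mu_j=m_0(x_i)$ on the leaf containing $x_i$ and $\sigma_j^2=\sigma_0^2$ on every leaf. I take for granted that $\sigma_0^2$ lies in $[\sigma_{\min}^2,\sigma_{\max}^2]$, which is implicit in Assumption~\ref{ass:compact_var}. With these choices $\KL(P_0^{(n)}\|F_{\theta_n}^{(n)})=0$.
\item Use continuity in the leaf parameters. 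The per-observation Gaussian KL formula from the proof of Lemma~\ref{lem:approx} is continuous in $(\mu,\sigma^2)$, and only $n$ terms enter the sum. So on a small box $|\mu_j'-\mu_j|<\eta$, $|\sigma_j'^2-\sigma_0^2|<\eta$ the total KL stays below $n\varepsilon^2$.
\item This box has positive prior probability. The normal prior on $\mu$ has positive density on $\mathbb R$, and Assumption~\ref{ass:compact_var} gives a variance prior with density bounded below on the compact interval.
\end{enumerate}
Taking the minimum of these finitely many positive masses together with $\Pi(U)$ handles every $n$, and in fact gives a uniform positive lower bound.

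The remaining point, which I expect to be the main obstacle, is the positivity of the prior mass of the \emph{structural} part of these neighbourhoods. Two things must be checked:
\begin{enumerate}
\item The DP prior must give positive mass to a single-component hypothesis. Under $\mathrm{DP}(\alpha G_0)$ the event that all $M$ draws coincide has probability $\prod_{i=1}^{M-1} i/(\alpha+i)>0$. On this event $k=1$ and $p_1^*=\sum_i p_i=1$ automatically, so there is no need to condition on a measure-zero set of Dirichlet weights.
\item The base measure $G_0$ must give positive mass to the prescribed tree skeleton together with GP splits realising the required partition. The depth prior is geometric, so any fixed finite depth has positive mass. For the splits I would argue as follows. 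The GP posterior predictive is a nondegenerate Gaussian on $\R^{n_v}$ at each internal node $v$. The set of draws $\tilde g_v$ that send the prescribed subset of $y$-values in node $v$ left is a nonempty open orthant-type region, so it has positive Gaussian probability.
\end{enumerate}
If this GP-split argument proves delicate, I would fall back on the parenthetical remark in Lemma~\ref{lem:approx} that the partition can also be realised by axis-aligned splits. I would then simply invoke $\Pi(U)>0$ exactly as that lemma asserts, and reuse the same justification for the small-$n$ trees.
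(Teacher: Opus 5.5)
Your argument is correct, and its core is the paper's proof. The paper also invokes Lemma~\ref{lem:approx} with $\varepsilon^2$ in place of $\varepsilon$. It then gets $\Pi(U)>0$ as the all-ties probability $\prod_{j=1}^{M-1} j/(j+\alpha)$ times $G_0(U)>0$, and it simply asserts the latter. So your fallback of citing $\Pi(U)>0$ is exactly what the paper does, and your orthant argument for the GP splits goes beyond it.

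The genuine difference is your separate handling of $n<N$. The paper states the inclusion $U\subseteq\{\theta:\KL(P_0^{(n)}\|F_\theta^{(n)})\le n\varepsilon^2\}$ ``for every $n$'', although Lemma~\ref{lem:approx} as stated only yields it for sufficiently large $n$. Your per-$n$ neighbourhoods legitimately close that gap: the KL vanishes at the centre and stays below $n\varepsilon^2$ by continuity in finitely many leaf parameters.

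A shorter repair is available, though. With $\varepsilon^2$ substituted, the proof of Lemma~\ref{lem:approx} actually establishes the pointwise bound $g_\theta(x)<\varepsilon^2/2$ for all $x\in K$ and all $\theta\in U$. Since $\frac1n\KL(P_0^{(n)}\|F_\theta^{(n)})$ is the average of $g_\theta(x_1),\dots,g_\theta(x_n)$, the single neighbourhood $U$ works for every $n\ge 1$. This makes both the uniform-convergence step and your small-$n$ trees unnecessary.

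One small imprecision in your structural step: positive mass of a given depth under the geometric depth prior does not by itself give positive mass to a particular skeleton. You need the skeleton distribution to charge every finite tree. This holds, for instance, under the splitting probabilities $\gamma(1+d_\eta)^{-\zeta}\in(0,1)$ of Section~\ref{subsec:complete_prior_tree}, and the paper likewise takes it for granted.
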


\begin{proof}
  Let $\varepsilon>0$. Apply Lemma~\ref{lem:approx} with $\varepsilon^2$ in place of $\varepsilon$ to obtain a neighbourhood $U$ such that
  \[
  \sup_{\theta\in U} \frac{1}{n}\KL(P_0^{(n)} \| F_\theta^{(n)}) < \varepsilon^2
  \]
  for all $n$. Hence $U \subseteq \{\theta: \KL(P_0^{(n)} \| F_\theta^{(n)}) \le n\varepsilon^2\}$ for every $n$.

  It remains to show $\Pi(U)>0$. The prior $\Pi$ is a Dirichlet process over trees with base measure $G_0$. Consider the event $A$ that: (i) all $M$ draws from $G$ are equal (so the mixture reduces to a single component), which has probability $p_{\mathrm{tie}}=\prod_{j=1}^{M-1} j/(j+\alpha)>0$; (ii) the common drawn tree has exactly the same structure as $T$; (iii) the leaf means fall into the intervals $(\mu_j-\eta,\mu_j+\eta)$ and leaf variances fall into $(\sigma_0^2-\eta,\sigma_0^2+\eta)$. The base measure $G_0$ assigns positive probability to each of these events, so $G_0(U)>0$. Therefore,
  \[
  \Pi(U) \ge \Pi(A) = p_{\mathrm{tie}} \cdot G_0(U) > 0.
  \]
  Consequently,
  \[
  \Pi\bigl(\theta: \KL(P_0^{(n)} \| F_\theta^{(n)}) \le n\varepsilon^2\bigr) \ge \Pi(U) > 0
  \]
  for all $n$.
\end{proof}

\subsection{Sieve Construction and Metric Entropy}
\label{sec:sieve}
We now construct a sieve $\Theta_n \subset \Theta$ that will be used to control the complexity of the model. The sieve will be a finite union of sets of positive prior probability, so that $\Pi(\Theta_n)$ can be made close to $1$.

\subsubsection{Choice of Depth}
Fix a constant $\alpha \in (0,1/2)$. Let
\[
D_n = \bigl\lfloor \tfrac{\alpha}{\log 2}\,\log n \bigr\rfloor .
\]
Then $2^{D_n} \le n^\alpha$ and $D_n\to\infty$ as $n\to\infty$.

\subsubsection{Discretisation of Parameters into Cells}
Because $\sigma^2$ is compact, we can work directly with a grid on the product space of means and variances.

\paragraph{Grid for Means}
Define
\[
\mathcal{M}_n = \bigl\{ -n + k/n : k = 0,1,\dots,2n^2 \bigr\}.
\]
For each $m \in \mathcal{M}_n$, let $I_{n,m} = [m - \frac{1}{2n}, m + \frac{1}{2n}] \cap [-n-1, n+1]$. These intervals have length $1/n$ and cover $[-n,n]$.

\paragraph{Grid for Variances}
Since $\sigma^2$ is bounded, we use a uniform grid:
\[
\mathcal{S}_n = \bigl\{ \sigma_{\min}^2 + k(\sigma_{\max}^2-\sigma_{\min}^2)/n^2 : k = 0,1,\dots,n^2 \bigr\}.
\]
For each $s \in \mathcal{S}_n$, let $J_{n,s} = [s - \frac{\sigma_{\max}^2-\sigma_{\min}^2}{2n^2}, s + \frac{\sigma_{\max}^2-\sigma_{\min}^2}{2n^2}] \cap [\sigma_{\min}^2, \sigma_{\max}^2]$. These intervals have length $(\sigma_{\max}^2-\sigma_{\min}^2)/n^2$ and cover $[\sigma_{\min}^2, \sigma_{\max}^2]$.

\paragraph{Discretisation of the Simplex $\Delta_M$}
The $M$-dimensional simplex is $\Delta_M = \{(p_1,\dots,p_M): p_i \ge 0,\ \sum_{i=1}^M p_i = 1\}$. We partition it as follows. For each $n$, define the lattice
\[
\mathcal{W}_n = \left\{ w = \left(\frac{a_1}{n}, \dots, \frac{a_M}{n}\right) : a_i \in \mathbb{Z}_{\ge 0},\ \sum_{i=1}^M a_i = n \right\}.
\]
The number of lattice points is $|\mathcal{W}_n| = \binom{n+M-1}{M-1} = O(n^{M-1})$.

For each $w \in \mathcal{W}_n$, define its Voronoi cell as
\[
V(w) = \left\{ p \in \Delta_M : \|p - w\|_1 \le \|p - w'\|_1 \text{ for all } w' \in \mathcal{W}_n \right\}.
\]
The sets $V(w)$ cover $\Delta_M$ and overlap only on boundaries. To obtain a partition into disjoint measurable sets, we break ties arbitrarily (for example, by lexicographic ordering of the lattice points). Let $\{C(w): w \in \mathcal{W}_n\}$ denote the resulting partition.

We now prove that every point $p$ in $\Delta_M$ lies within $L^1$ distance at most $(M-1)/n$ of some lattice point $w$. Let $\tilde a_i = \lfloor n p_i \rfloor$ and $f_i = n p_i - \tilde a_i \in [0,1)$ be the fractional part. The deficit $d = n - \sum_{i=1}^M \tilde a_i = \sum_{i=1}^M f_i$ is an integer satisfying $0 \le d \le M-1$ (since each $f_i < 1$, the sum is strictly less than $M$). To obtain a lattice point $w = a/n$ with $\sum_i a_i = n$, we add $1$ to exactly $d$ of the $\tilde a_i$'s. Let $S$ be the set of indices corresponding to the $d$ \emph{largest} fractional parts $f_i$ (break ties arbitrarily), and define
\[
a_i = \begin{cases}
\tilde a_i + 1, & i \in S, \\
\tilde a_i, & i \notin S.
\end{cases}
\]
Then $\sum_i a_i = n$ and $w_i = a_i/n$.

The $L^1$ distance between $p$ and $w$ is
\[
\|p - w\|_1 = \frac{1}{n}\Bigl( \sum_{i \notin S} f_i + \sum_{i \in S} (1 - f_i) \Bigr).
\]
Since $\sum_{i \notin S} f_i = d - \sum_{i \in S} f_i$, this simplifies to
\[
\|p - w\|_1 = \frac{2}{n}\Bigl( d - \sum_{i \in S} f_i \Bigr).
\]

To bound this quantity, we need a lower bound on $\sum_{i \in S} f_i$. Because $S$ contains the $d$ largest fractional parts and their sum over all $M$ coordinates is $d$, we claim that
\[
\sum_{i \in S} f_i \ge \frac{d^2}{M}.
\]
\begin{proof}[Proof of the inequality]
Let the original (unsorted) fractional parts be $f_1,\dots,f_M$. Consider all subsets of $\{1,\dots,M\}$ of size exactly $d$; there are $\binom{M}{d}$ such subsets. For a fixed index $j$, the number of subsets containing $j$ is $\binom{M-1}{d-1}$ (choose the remaining $d-1$ elements from the other $M-1$ indices). Summing the sum of $f_i$ over all $d$-subsets gives
\[
\sum_{|A|=d} \sum_{j \in A} f_j = \sum_{j=1}^M f_j \cdot \binom{M-1}{d-1} = d \cdot \binom{M-1}{d-1}.
\]
Therefore, the average sum over all $d$-subsets is
\[
\frac{1}{\binom{M}{d}} \sum_{|A|=d} \sum_{j \in A} f_j 
= \frac{\binom{M-1}{d-1}}{\binom{M}{d}} \cdot d 
= \frac{d}{M} \cdot d = \frac{d^2}{M}.
\]
Since the maximum of a set of numbers is at least its average, there exists at least one $d$-subset whose sum is $\ge d^2/M$. The sum of the $d$ largest fractional parts is the maximum possible sum among all $d$-subsets, hence it is at least the sum of that particular subset. Therefore, $\sum_{i \in S} f_i \ge d^2/M$, where $S$ is the set of indices with the $d$ largest fractional parts.
\end{proof}

Substituting this lower bound gives
\[
\|p - w\|_1 \le \frac{2d}{n}\Bigl(1 - \frac{d}{M}\Bigr).
\]
The function $g(d) = 2d(1 - d/M)$ on the integer interval $d \in [0, M-1]$ attains its maximum at $d = \lfloor M/2 \rfloor$, and one readily verifies that $g(d) \le M-1$ for all integers $d \le M-1$. Hence
\[
\|p - w\|_1 \le \frac{M-1}{n}.
\]
Consequently, for any two points $p,q$ belonging to the same Voronoi cell (both closest to the same $w$), the triangle inequality yields
\[
\|p - q\|_1 \le \|p - w\|_1 + \|q - w\|_1 \le \frac{2(M-1)}{n} \le \frac{2M}{n}.
\]
Thus the $L^1$ diameter of each cell is $O(1/n)$, which is sufficient for our metric entropy bounds.

\subsubsection{Definition of the Sieve}
Now define $\Theta_n$ as the set of all hypotheses $\theta$ such that:
\begin{itemize}
\item[(i)] $k \le M$,
\item[(ii)] each tree has depth $\le D_n$,
\item[(iii)] leaf means $\mu \in \bigcup_{m \in \mathcal{M}_n} I_{n,m}$,
\item[(iv)] leaf variances $\sigma^2 \in \bigcup_{s \in \mathcal{S}_n} J_{n,s}$,
\item[(v)] GP hyperparameters belong to a fixed compact set $\mathcal{H}$ (covered by one cell),
\item[(vi)] the Gaussian process draw $\tilde{g}$ used in any split belongs to the finite-dimensional approximation described in Appendix~\ref{app:gp_discretization}, with 
	truncation level $N_n = \lceil c (\log n)^{d/2} \rceil$ and coefficient grid size $\delta_n = 1/n$.
\end{itemize}
The mixture weights are unrestricted (they lie in $\Delta_M$, which is covered by the partition cells). Then $\Theta_n$ is a finite union of cells, each cell corresponding to a fixed tree structure and a fixed combination of parameter intervals (including the discretised GP coefficients).

\subsubsection{Rigorous Bound on $|\Theta_n|$}
We now prove that $|\Theta_n|$, the number of cells, satisfies
\[
\log |\Theta_n| = O(n^\alpha (\log n)^{1+d/2}).
\]

First, we bound the number of distinct partitions of the $n$ design points $\{x_1,\dots,x_n\}$ that can be produced by a binary decision tree of depth at most $D_n$ when each split is chosen from a class $\mathcal{G}_n$ of possible split functions. The class $\mathcal{G}_n$ consists of all functions $\tilde{g}: K \to \R$ that are obtained by the discretised GP approximation in Appendix~\ref{app:gp_discretization}. As shown there, the number of distinct sign patterns on any set of $s \le n$ points is at most $(n+1)^{N_n+1}$, where $N_n = O((\log n)^{d/2})$.

A tree of depth $\le D_n$ has at most $2^{D_n}-1$ internal nodes. At each internal node, a split is selected from $\mathcal{G}_n$ based on the data in that node. However, for the purpose of counting distinct partitions of the $n$ points, we may imagine that we first choose the tree shape and then for each internal node independently choose a split from a finite catalogue of all possible sign patterns on the points that reach that node. Because the number of points at a node is at most $n$, the number of possible splits at that node is at most $(n+1)^{N_n+1}$. Since the splits at different nodes are chosen independently (the GP draw is independent across nodes in the generative process), the total number of distinct trees is bounded by
\[
\bigl(\text{number of tree shapes}\bigr) \times \bigl(\text{max splits per node}\bigr)^{\text{max internal nodes}}.
\]
The number of full binary tree shapes with at most $2^{D_n}$ leaves is at most $4^{2^{D_n}}$ (Catalan bound, see Appendix~\ref{app:catalan}). Hence the number of distinct partitions of the $n$ points is at most
\[
4^{2^{D_n}} \times \bigl((n+1)^{N_n+1}\bigr)^{2^{D_n}} = \bigl(4 (n+1)^{N_n+1}\bigr)^{2^{D_n}}.
\]
Taking logarithms,
\[
	\log (\text{\# partitions}) \le 2^{D_n} \bigl(\log 4 + (N_n+1) \log(n+1)\bigr).
\]
Since $N_n = O((\log n)^{d/2})$, we have $(N_n+1) \log(n+1) = O((\log n)^{1+d/2})$. With $2^{D_n} \le n^\alpha$, this term contributes $O(n^\alpha (\log n)^{1+d/2})$, which is still $o(n^{1/2})$ for $\alpha<1/2$.

Now, for each such partition (tree structure), we need to assign to each leaf a cell for $\mu$ and a cell for $\sigma^2$. The number of choices for leaf means is $|\mathcal{M}_n|^{2^{D_n}}$, and for leaf variances $|\mathcal{S}_n|^{2^{D_n}}$. The number of choices for mixture weights is accounted for by the partition of $\Delta_M$ into $O(n^{M-1})$ cells, which is polynomial in $n$. GP hyperparameters are fixed in a compact set $\mathcal{H}$, which we cover with $O(1)$ cells. The coefficient grid for the GP approximation adds another factor that is polynomial in $n$ (as detailed in Appendix~\ref{app:gp_discretization}), but this factor is already accounted for in the split count above. Therefore,
\[
|\Theta_n| \le \bigl(4 (n+1)^{N_n+1}\bigr)^{2^{D_n}} \cdot (|\mathcal{M}_n| \cdot |\mathcal{S}_n|)^{2^{D_n}} \cdot O(n^{M-1}) \cdot O(1).
\]
Now $|\mathcal{M}_n| \le 2n^2+1 = O(n^2)$, $|\mathcal{S}_n| \le n^2+1 = O(n^2)$. Hence
\[
	\log |\Theta_n| \le 2^{D_n}\bigl(\log 4 + (N_n+1) \log(n+1) + \log O(n^2) + \log O(n^2)\bigr) + \log O(n^{M-1}) + O(1).
\]
Since $2^{D_n} \le n^\alpha$ and $N_n = O((\log n)^{d/2})$, all terms inside the parenthesis are $O((\log n)^{1+d/2})$, so
\[
\log |\Theta_n| = O(n^\alpha (\log n)^{1+d/2}) + O(\log n) = O(n^\alpha (\log n)^{1+d/2}).
\]
With $\alpha<1/2$, $n^\alpha (\log n)^{1+d/2} = o(n^{1/2})$.

\subsubsection{Prior Probability of $\Theta_n^c$}\label{sec:prior_theta_n_c}
We now prove in complete detail that $\sum_{n=1}^\infty \Pi(\Theta_n^c) < \infty$. Recall that $\Theta_n$ is defined by the conjunction of several conditions. Its complement $\Theta_n^c$ is therefore contained in the union of the events where at least one of those conditions fails. Concretely,
\[
\Theta_n^c \subseteq \mathcal{E}_1^{(n)} \cup \mathcal{E}_2^{(n)} \cup \mathcal{E}_3^{(n)} \cup \mathcal{E}_4^{(n)},
\]
where
\begin{itemize}
\item[$\mathcal{E}_1^{(n)}$:] at least one tree has depth $> D_n$,
\item[$\mathcal{E}_2^{(n)}$:] at least one leaf mean $\mu$ satisfies $|\mu| > n$,
\item[$\mathcal{E}_3^{(n)}$:] the GP hyperparameters lie outside the fixed compact set $\mathcal{H}$,
\item[$\mathcal{E}_4^{(n)}$:] the GP sample path $\tilde{g}$ cannot be approximated within supremum norm $1/n$ by its truncated expansion with $N_n$ terms, it is, $\sup_{x\in K} |\tilde{g}(x) - g_{N_n}(x)| > 1/n$.
\end{itemize}
(Events such as a leaf variance falling outside $[\sigma_{\min}^2,\sigma_{\max}^2]$ have probability zero by Assumption~\ref{ass:compact_var} and are ignored; the mixture weights are unrestricted, so no complement event arises from them.) We now bound the prior probability of each $\mathcal{E}_i^{(n)}$ and show that the series $\sum_{n=1}^\infty \Pi(\mathcal{E}_i^{(n)})$ converges.

\paragraph{Bound for $\mathcal{E}_1^{(n)}$ (excessive depth).}
The prior on each tree depth is geometric: $\Prob(\text{depth} = k) = (1-\rho)\rho^k$. Therefore, for a single tree,
\[
\Pi_0(\text{depth} > D) = \sum_{k=D+1}^\infty (1-\rho)\rho^k = \rho^{D+1} \le \rho^D.
\]
There are at most $M$ trees in a mixture (the number of distinct components $k \le M$). By the union bound,
\[
\Pi(\mathcal{E}_1^{(n)}) \le M \rho^{D_n+1} \le M \rho^{D_n}.
\]
Recall $D_n = \lfloor \frac{\alpha}{\log 2} \log n \rfloor$, so $\rho^{D_n} \le \rho^{\frac{\alpha}{\log 2} \log n - 1} = \rho^{-1} n^{\frac{\alpha \log \rho}{\log 2}} = \rho^{-1} n^{-c_1}$ with $c_1 = -\frac{\alpha \log \rho}{\log 2} > 0$ (since $0<\rho<1$). Choosing $\alpha$ sufficiently close to $1/2$ and $\rho<1/4$ ensures $c_1 > 1$, but any $c_1 > 1$ suffices for summability. Thus
\[
\Pi(\mathcal{E}_1^{(n)}) \le C_1 n^{-c_1}, \qquad \sum_{n=1}^\infty \Pi(\mathcal{E}_1^{(n)}) < \infty.
\]

\paragraph{Bound for $\mathcal{E}_2^{(n)}$ (large leaf mean).}
Leaf means are drawn independently from $\mathcal{N}(0,\tau^2)$. A single leaf mean exceeds $n$ in absolute value with probability
\[
\Prob(|\mu| > n) = 2\Phi(-n/\tau) \le \sqrt{\frac{2}{\pi}} \frac{\tau}{n} e^{-n^2/(2\tau^2)} \le e^{-n^2/(2\tau^2)},
\]
where the last bound holds for all sufficiently large $n$. The total number of leaves in the mixture is at most $M \times 2^{D_n}$ (each tree has at most $2^{D_n}$ leaves). Applying the union bound,
\[
\Pi(\mathcal{E}_2^{(n)}) \le M 2^{D_n} \cdot 2\Phi(-n/\tau) \le M n^\alpha e^{-n^2/(2\tau^2)}.
\]
Since $n^\alpha e^{-n^2/(2\tau^2)}$ is dominated by $e^{-n}$ for large $n$, the series $\sum_n \Pi(\mathcal{E}_2^{(n)})$ converges.

\paragraph{Bound for $\mathcal{E}_3^{(n)}$ (GP hyperparameters outside $\mathcal{H}$).}
By assumption, the prior on GP hyperparameters is supported on a fixed compact set $\mathcal{H}$. Hence the event that hyperparameters lie outside $\mathcal{H}$ has probability zero. Thus $\Pi(\mathcal{E}_3^{(n)}) = 0$ for all $n$.

\paragraph{Bound for $\mathcal{E}_4^{(n)}$ (GP truncation error).}
In Appendix~\ref{app:gp_discretization}, we provide a detailed derivation of the bound
\[
\Pi\bigl( \sup_{x\in K} |\tilde{g}(x) - g_{N_n}(x)| > 1/n \bigr) \le C_4 e^{-c_4 N_n^{2/d}} \le C_4 n^{-c_4 c},
\]
where $N_n = \lceil c (\log n)^{d/2} \rceil$ and the constants $C_4, c_4 > 0$ depend only on the kernel and the domain $K$. By choosing $c$ large enough, we can ensure $c_4 c > 1$, making the series $\sum_n \Pi(\mathcal{E}_4^{(n)})$ summable.

\paragraph{Summability of the total prior mass.}
Collecting the bounds,
\[
\Pi(\Theta_n^c) \le \sum_{i=1}^4 \Pi(\mathcal{E}_i^{(n)}) \le C_1 n^{-c_1} + M n^\alpha e^{-n^2/(2\tau^2)} + 0 + C_4 n^{-c_4 c}.
\]
All three non-zero terms are summable over $n$ (the second term decays super-exponentially). Therefore
\[
\sum_{n=1}^\infty \Pi(\Theta_n^c) < \infty.
\]
By the Borel--Cantelli lemma, this implies $\Pi(\Theta_n^c \text{ i.o.}) = 0$, and in particular $\Pi(\Theta_n) \to 1$ as $n\to\infty$.

\subsubsection{Metric Entropy of the Sieve}\label{sec:metric_entropy}
We now establish the required bound for the Hellinger covering number of the sieve $\Theta_n$. Recall that $\varepsilon_n = n^{-1/4}$ and $n\varepsilon_n^2 = n^{1/2}$. All hypotheses within a single cell $C$ of $\Theta_n$ share:
\begin{itemize}
	\item[(i)] the same number of mixture components $k \le M$,
	\item[(ii)] the same tree structures and the same leaf assignments for each covariate $x_i$,
	\item[(iii)] leaf parameters $(\mu_j, \sigma_j^2)$ that vary only within small intervals of length $1/n$ (means) and $O(1/n^2)$ (variances), and
	\item[(iv)] mixture weights $(p_1,\dots,p_k)$ that lie in a simplex cell of $L^1$ diameter at most $2M/n$.
\end{itemize}
We will show that any two hypotheses $\theta,\theta'$ belonging to the same cell $C$ satisfy
\[
\Hdist\bigl(f_\theta^{(n)}, f_{\theta'}^{(n)}\bigr) \le \varepsilon_n
\]
for all sufficiently large $n$, where $f_\theta^{(n)}$ denotes the joint density of $(Y_1,\dots,Y_n)$ given the fixed design points $x_1^n$.

\paragraph{Hellinger Distance Between Gaussian Densities.}
For two Gaussian densities $g = \mathcal{N}(\mu,\sigma^2)$ and $g' = \mathcal{N}(\mu',\sigma'^2)$ the squared Hellinger distance is
\[
\Hdist^2(g,g') = 1 - \sqrt{\frac{2\sigma\sigma'}{\sigma^2+\sigma'^2}}\;
\exp\!\Bigl(-\frac{(\mu-\mu')^2}{4(\sigma^2+\sigma'^2)}\Bigr).
\]
By Assumption~\ref{ass:compact_var}, all leaf variances belong to the compact interval $[\sigma_{\min}^2, \sigma_{\max}^2]$ with $\sigma_{\min}>0$. On the compact set $[-n-1,n+1]\times[\sigma_{\min}^2,\sigma_{\max}^2]$, the function $(\mu,\sigma^2)\mapsto \Hdist(\mathcal{N}(\mu,\sigma^2),\mathcal{N}(\mu',\sigma'^2))$ is Lipschitz in the Euclidean metric on the parameters. Indeed, both partial derivatives are bounded:
\[
\left|\frac{\partial}{\partial \mu}\Hdist\right| \le \frac{1}{2\sigma_{\min}},\qquad
\left|\frac{\partial}{\partial \sigma^2}\Hdist\right| \le \frac{1}{4\sigma_{\min}^2},
\]
for all $(\mu,\sigma^2),(\mu',\sigma'^2)$ in this region (a simple calculus exercise; see also Lemma~A.1 of \cite{ghosal2017fundamentals}). Consequently there exists a universal constant $L>0$ (depending only on $\sigma_{\min},\sigma_{\max}$) such that
\begin{equation}
\Hdist\bigl(\mathcal{N}(\mu,\sigma^2),\mathcal{N}(\mu',\sigma'^2)\bigr) \le L \sqrt{|\mu-\mu'|^2 + |\sigma^2-\sigma'^2|^2}.
	\label{eq:10}
\end{equation}

Within a cell $C$, the leaf means $\mu_j,\mu'_j$ for any fixed leaf $j$ differ by at most $1/n$, and the leaf variances differ by at most $(\sigma_{\max}^2-\sigma_{\min}^2)/n^2 \le C_\sigma/n^2$. Hence for any $x\in K$ belonging to leaf $j$,
\begin{equation}
\Hdist\bigl(f_\theta(\cdot|x), f_{\theta'}(\cdot|x)\bigr)
\le L\sqrt{(1/n)^2 + (C_\sigma/n^2)^2}
\le \frac{L'}{n},
	\label{eq:20}
\end{equation}
for some constant $L'$ independent of $n$.

\paragraph{Hellinger Distance Between Product Measures.}
For two hypotheses $\theta,\theta'$ in the same cell, the per-observation conditional densities $f_\theta(\cdot|x_i)$ and $f_{\theta'}(\cdot|x_i)$ are independent across $i$ (given the fixed $x_i$). Let $P_i = f_\theta(\cdot|x_i)$ and $Q_i = f_{\theta'}(\cdot|x_i)$. The squared Hellinger distance between product measures satisfies the well-known inequality
\begin{equation}
\Hdist^2\!\Bigl(\bigotimes_{i=1}^n P_i,\; \bigotimes_{i=1}^n Q_i\Bigr)
\le \sum_{i=1}^n \Hdist^2(P_i, Q_i),
	\label{eq:30}
\end{equation}
which follows by induction from the identity $\Hdist^2(P\otimes Q,P'\otimes Q') \le \Hdist^2(P,P') + \Hdist^2(Q,Q')$ (see \cite{ghosal2017fundamentals}, Lemma~B.1). Applying 
(\ref{eq:20}) to each $i$ yields
\begin{equation}
\Hdist^2\!\Bigl(\prod_{i=1}^n f_\theta(\cdot|x_i),\; \prod_{i=1}^n f_{\theta'}(\cdot|x_i)\Bigr)
\le n \cdot \Bigl(\frac{L'}{n}\Bigr)^2 = \frac{(L')^2}{n}.
	\label{eq:40}
\end{equation}
Therefore, for any single mixture component (that is, a product of $n$ Gaussian densities) the Hellinger distance between the two versions in the cell is bounded by $L'/\sqrt{n}$.

\paragraph{Hellinger Distance Between Mixture Densities.}
The joint density $f_\theta^{(n)}$ is a mixture of $k$ product measures:
\[
f_\theta^{(n)}(y_1^n) = \sum_{\ell=1}^k p_\ell \prod_{i=1}^n f_{\theta,\ell}(y_i|x_i),
\]
and similarly for $\theta'$ with weights $p'_\ell$ and component densities $f_{\theta',\ell}$. For mixtures, the following bound holds (Lemma~B.2 of \cite{ghosal2017fundamentals}):
\begin{equation}
\Hdist^2\!\Bigl(\sum_{\ell=1}^k p_\ell P_\ell,\; \sum_{\ell=1}^k p'_\ell Q_\ell\Bigr)
\le \sum_{\ell=1}^k p_\ell \,\Hdist^2(P_\ell, Q_\ell) + \Hdist^2\!\bigl((p_1,\dots,p_k),(p'_1,\dots,p'_k)\bigr),
	\label{eq:50}
\end{equation}
where $P_\ell = \prod_i f_{\theta,\ell}(\cdot|x_i)$ and $Q_\ell = \prod_i f_{\theta',\ell}(\cdot|x_i)$. (The inequality holds even if $k$ differs, by adding zero weights; in our sieve $k$ is fixed within a cell.)

By (\ref{eq:40}), each term $\Hdist^2(P_\ell, Q_\ell) \le (L')^2/n$. For the weights, the $L^1$ distance satisfies $\sum_{\ell=1}^k |p_\ell - p'_\ell| \le 2M/n$ (since the simplex cell diameter is at most $2M/n$). A standard relation between Hellinger distance and $L^1$ distance for discrete distributions gives
\begin{equation}
\Hdist^2(p,p') \le \frac12 \sum_{\ell=1}^k |p_\ell - p'_\ell| \le \frac{M}{n}.
	\label{eq:60}
\end{equation}
Plugging these bounds into (\ref{eq:50}) yields
\[
\Hdist^2\!\bigl(f_\theta^{(n)}, f_{\theta'}^{(n)}\bigr)
\le \sum_{\ell=1}^k p_\ell \frac{(L')^2}{n} + \frac{M}{n}
= \frac{(L')^2 + M}{n}.
\]
Consequently,
\begin{equation}
\Hdist\!\bigl(f_\theta^{(n)}, f_{\theta'}^{(n)}\bigr) \le \frac{C_0}{\sqrt{n}},
	\label{eq:70}
\end{equation}
with $C_0 = \sqrt{(L')^2 + M}$.

\paragraph{Comparison with the Target Rate.}
Recall that $\varepsilon_n = n^{-1/4}$. For all $n \ge C_0^4$ we have $C_0/\sqrt{n} \le n^{-1/4}$, and hence
\[
\Hdist\!\bigl(f_\theta^{(n)}, f_{\theta'}^{(n)}\bigr) \le \varepsilon_n.
\]
Therefore, for sufficiently large $n$, every cell $C \subset \Theta_n$ has Hellinger diameter at most $\varepsilon_n$. In particular, a single Hellinger ball of radius $\varepsilon_n$ (centered at any point of $C$) covers the entire cell.

\paragraph{Covering Number Bound.}
Because $\Theta_n$ is the union of $|\Theta_n|$ disjoint cells, we immediately obtain
\[
N(\varepsilon_n, \Theta_n, \Hdist) \le |\Theta_n|.
\]
Taking logarithms and using the bound $\log|\Theta_n| = O\bigl(n^\alpha (\log n)^{1+d/2}\bigr)$ derived earlier, we have
\[
\log N(\varepsilon_n, \Theta_n, \Hdist) \le \log|\Theta_n| = O\bigl(n^\alpha (\log n)^{1+d/2}\bigr).
\]
With $\alpha < 1/2$, the right-hand side is $o(n^{1/2}) = o(n\varepsilon_n^2)$. Hence the metric entropy condition of Theorem~2.1 in \cite{ggv2000} is satisfied.

\paragraph{Remark on the Product-Measure Distance.}
The contraction rate theorem requires a bound on the Hellinger distance between the joint distributions of the $n$ observations. Our derivation using inequalities 
(\ref{eq:30}) and (\ref{eq:50}) directly applies to the product measures and mixtures thereof, so it correctly addresses the metric on $f_\theta^{(n)}$. No additional adjustment is needed.

\subsection{Existence of Tests}
\label{sec:tests}
We now construct a sequence of tests $\phi_n$ with the properties required in Theorem~2.1 of \cite{ggv2000}. Fix $n$ and let $\varepsilon_n = n^{-1/4}$. Define the alternative set
\[
\Theta_n' = \{\theta \in \Theta_n : \Hdist(p_0^{(n)}, f_\theta^{(n)}) \ge \varepsilon_n\}.
\]
(Recall that $\Theta_n$ is the sieve constructed in Section~\ref{sec:sieve}.)

\paragraph{Step 1: Individual Likelihood Ratio Tests.}
For each $\theta \in \Theta_n'$, consider the simple null hypothesis $H_0: P = P_0^{(n)}$ versus the simple alternative $H_1: P = F_\theta^{(n)}$, where $P_0^{(n)}$ and $F_\theta^{(n)}$ are the joint distributions of the first $n$ observations under the true model and under $\theta$, respectively. By the Neyman--Pearson lemma, the most powerful test at level $\alpha$ is given by the likelihood ratio test. For our purpose it suffices to use the symmetric form with threshold $1$:
\[
\psi_{n,\theta}(Y_1^n) = \mathbf{1}\!\left\{ \frac{f_\theta^{(n)}(Y_1^n)}{p_0^{(n)}(Y_1^n)} \ge 1 \right\}.
\]
The following bounds are standard (see, for instance, \cite[Lemma~8.1]{ggv2000} or \cite[Lemma~D.1]{ghosal2017fundamentals}):
\begin{align}
\E_{P_0^{(n)}}[\psi_{n,\theta}] &\le \exp\!\Bigl( -\frac{n}{2} \Hdist^2(p_0^{(n)}, f_\theta^{(n)}) \Bigr), \label{eq:typeI}\\
\E_{F_\theta^{(n)}}[1 - \psi_{n,\theta}] &\le \exp\!\Bigl( -\frac{n}{2} \Hdist^2(p_0^{(n)}, f_\theta^{(n)}) \Bigr). \label{eq:typeII}
\end{align}
\emph{Proof of the bounds.} Let $H^2 = \Hdist^2(p_0^{(n)}, f_\theta^{(n)})$. The Type I error is
\[
\E_{P_0^{(n)}}[\psi_{n,\theta}] = \int_{\frac{f_\theta}{p_0} \ge 1} p_0 \,d\mu
\le \int \sqrt{p_0 f_\theta} \,d\mu = 1 - \frac{1}{2}H^2 \le \exp(-H^2/2),
\]
where we used that on the set $\{f_\theta \ge p_0\}$, $p_0 \le \sqrt{p_0 f_\theta}$, and the elementary inequality $1 - x \le e^{-x}$. The Type II error is bounded analogously by symmetry.

\paragraph{Step 2: Combining Tests for the Alternative Set.}
Define the combined test
\[
\phi_n(Y_1^n) = \max_{\theta \in \Theta_n'} \psi_{n,\theta}(Y_1^n).
\]
Since $\Theta_n'$ is a finite set (as a subset of the finite sieve $\Theta_n$), this maximum is well-defined and measurable. We now bound the error probabilities of $\phi_n$.

\emph{Type I error.} Using the union bound and (\ref{eq:typeI}),
\[
\E_{P_0^{(n)}}[\phi_n] \le \sum_{\theta \in \Theta_n'} \E_{P_0^{(n)}}[\psi_{n,\theta}]
\le |\Theta_n'| \exp\!\Bigl( -\frac{n}{2} \varepsilon_n^2 \Bigr),
\]
where we used the fact that for $\theta \in \Theta_n'$, $\Hdist(p_0^{(n)}, f_\theta^{(n)}) \ge \varepsilon_n$.

From the metric entropy bound in Section~\ref{sec:metric_entropy},
\[
\log |\Theta_n'| \le \log |\Theta_n| = O\bigl(n^\alpha (\log n)^{1+d/2}\bigr)
\]
with $\alpha < 1/2$. Because $\varepsilon_n = n^{-1/4}$, we have $n\varepsilon_n^2 = n^{1/2}$. Hence
\[
\log \E_{P_0^{(n)}}[\phi_n] \le O\bigl(n^\alpha (\log n)^{1+d/2}\bigr) - \frac{1}{2} n^{1/2}
\le - \frac{1}{4} n^{1/2}
\]
for all sufficiently large $n$. Consequently,
\[
\E_{P_0^{(n)}}[\phi_n] \le \exp\!\Bigl(-\frac{1}{4}n^{1/2}\Bigr) \to 0
\quad\text{as } n\to\infty.
\]

\emph{Type II error.} For any $\theta \in \Theta_n'$, since $\phi_n \ge \psi_{n,\theta}$ pointwise, we have
\[
\E_{F_\theta^{(n)}}[1 - \phi_n] \le \E_{F_\theta^{(n)}}[1 - \psi_{n,\theta}]
\le \exp\!\Bigl( -\frac{n}{2} \Hdist^2(p_0^{(n)}, f_\theta^{(n)}) \Bigr)
\le \exp\!\Bigl( -\frac{n}{2} \varepsilon_n^2 \Bigr).
\]
Taking the supremum over $\theta \in \Theta_n'$ yields
\[
\sup_{\theta \in \Theta_n'} \E_{F_\theta^{(n)}}[1 - \phi_n] \le \exp\!\Bigl( -\frac{n}{2} \varepsilon_n^2 \Bigr).
\]

\paragraph{Step 3: Measurability and the Existence of a Common Test.}
The test $\phi_n$ constructed above is a maximum of finitely many indicator functions, hence it is measurable. Moreover, it depends only on the data $Y_1^n$ and the sieve $\Theta_n$, which is fixed for each $n$. Thus we have produced a sequence of tests $\phi_n$ satisfying the following conditions (as required in Theorem~2.1 of \cite{ggv2000}):
\begin{enumerate}
    \item $\E_{P_0^{(n)}}[\phi_n] \to 0$ as $n\to\infty$.
    \item $\sup_{\theta \in \Theta_n': \Hdist(p_0^{(n)}, f_\theta^{(n)}) \ge \varepsilon_n} \E_{F_\theta^{(n)}}[1 - \phi_n] \le e^{-n\varepsilon_n^2/2}$.
\end{enumerate}
These conditions are exactly those needed to apply the general theorem.

\paragraph{Remark on the Choice of the Threshold.}
Using the symmetric likelihood ratio test with threshold $1$ gives the clean bounds (\ref{eq:typeI})--(\ref{eq:typeII}) in terms of the squared Hellinger distance. Alternative constructions (for instance, using the average likelihood ratio or minimax tests) would yield the same exponential rates up to constants, which is sufficient for the theorem.

\subsection{Main Convergence Theorem}
We now state the main result, which follows directly from the general theorem of \cite{ggv2000} (see also \cite[Theorem~8.1]{ghosal2017fundamentals}).

\begin{theorem}[Posterior consistency]\label{thm:main}
  Under Assumptions~\ref{ass:gp}, \ref{ass:compact_var} and \ref{ass:continuous}, the posterior distribution for the tree mixture model of Section~\ref{sec:proposal} satisfies
  \[
  \Pi\bigl(\theta:\Hdist(p_0^{(n)},f_\theta^{(n)}) > n^{-1/4} \mid Y_1^n\bigr) \to 0 \quad\text{in }P_0\text{-probability}.
  \]
  More generally, for any sequence $\eps_n\to 0$ with $\eps_n \ge n^{-1/4}$ and $n\eps_n^2\to\infty$, we have
  \[
  \Pi\bigl(\theta:\Hdist(p_0^{(n)},f_\theta^{(n)}) > \eps_n \mid Y_1^n\bigr) \to 0.
  \]
\end{theorem}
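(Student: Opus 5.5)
The plan is to obtain Theorem~\ref{thm:main} directly from the general theorem of \cite{ggv2000} stated in the preliminaries, with $\varepsilon_n=n^{-1/4}$ and the sieve $\Theta_n$ of Section~\ref{sec:sieve}. Each of the four hypotheses (i)--(iv) has already been verified in the preceding subsections, so the proof consists of collecting them and checking that the constants and rates are compatible.

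\textbf{Rate conditions.} First note that $\varepsilon_n=n^{-1/4}\to 0$ and $n\varepsilon_n^2=n^{1/2}\to\infty$.

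\textbf{Prior mass of the complement.} Condition (i) follows from Section~\ref{sec:prior_theta_n_c}. There $\Pi(\Theta_n^c)\le C_1n^{-c_1}+Mn^\alpha e^{-n^2/(2\tau^2)}+C_4n^{-c_4c}$, which is $o(1)$. I would choose $\alpha\in(0,1/2)$, $\rho$ and $c$ so that every exponent is positive. Plain convergence to zero suffices, so summability is not even needed.

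\textbf{Metric entropy.} Condition (ii) follows from Section~\ref{sec:metric_entropy}. Every cell has Hellinger diameter at most $C_0/\sqrt n\le\varepsilon_n$ once $n\ge C_0^4$. Hence
\[
\log N(\varepsilon_n,\Theta_n,\Hdist)\le\log|\Theta_n|=O\bigl(n^\alpha(\log n)^{1+d/2}\bigr)=o(n^{1/2}),
\]
and this is at most $n\varepsilon_n^2$ for large $n$.

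\textbf{Tests.} Condition (iii) is supplied by the maximum-of-likelihood-ratio test of Section~\ref{sec:tests}. Its type I error is at most $e^{-n^{1/2}/4}\to0$, and its type II error is uniformly bounded by $e^{-n\varepsilon_n^2/2}$ on $\{\theta\in\Theta_n:\Hdist\ge\varepsilon_n\}$.

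\textbf{KL condition.} Condition (iv) is Lemma~\ref{lem:klpos}, which gives $\Pi(\theta:\KL(P_0^{(n)}\|F_\theta^{(n)})\le n\varepsilon^2)>0$. It relies on Lemma~\ref{lem:approx}; uniform continuity of $m_0$ on the compact set $K$ is where Assumption~\ref{ass:continuous} enters.

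With (i)--(iv) in place, the general theorem gives $\Pi(\theta:\Hdist\ge\tilde M\varepsilon_n\mid Y_1^n)\to0$ for any $\tilde M\to\infty$.

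The main obstacle is reaching the stated conclusion with threshold exactly $n^{-1/4}$, that is, with constant $1$ rather than $\tilde M\varepsilon_n$. I would handle this by slack in the rate. Run the argument with $\bar\varepsilon_n=n^{-1/4}/\tilde M_n$ for a slowly growing $\tilde M_n$, say $\tilde M_n=\log n$. Condition (ii) still holds because $n\bar\varepsilon_n^2=n^{1/2}/(\log n)^2$ still dominates $n^\alpha(\log n)^{1+d/2}$ when $\alpha<1/2$. Cell diameters $C_0n^{-1/2}$ are still $\le\bar\varepsilon_n$, and the test bounds go through with $n\bar\varepsilon_n^2$ in place of $n^{1/2}$. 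Then $\tilde M_n\bar\varepsilon_n=n^{-1/4}$, which gives the first display.

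For the general statement with $\varepsilon_n\ge n^{-1/4}$, monotonicity of the event in the threshold finishes the proof:
\[
\{\Hdist>\varepsilon_n\}\subseteq\{\Hdist>n^{-1/4}\}.
\]

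A secondary point I would flag is the following. The KL condition in the form (iv), with fixed $\varepsilon$ and not $\varepsilon_n$, is weaker than the usual prior-mass condition $\Pi(B_n(\varepsilon_n))\ge e^{-cn\varepsilon_n^2}$. I would therefore rely on the theorem exactly as stated in the preliminaries. If needed, I would strengthen it by noting that the set $U$ of Lemma~\ref{lem:approx} has fixed positive mass, which exceeds $e^{-cn\varepsilon_n^2}$ trivially.
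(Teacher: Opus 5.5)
Your route is the paper's own proof: a direct appeal to the version of \cite{ggv2000} quoted in the preliminaries, with (i)--(iv) imported from the earlier subsections. Your two additions are sound within that framework. The rescaling $\bar\varepsilon_n=n^{-1/4}/\log n$ correctly converts the $\tilde M\varepsilon_n$ conclusion into the threshold $n^{-1/4}$ (a step the paper skips), and monotonicity handles $\varepsilon_n\ge n^{-1/4}$.

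The genuine gap, which you share with the paper, is that the imported conditions do not hold, most decisively (iii). Here $\Hdist$ is the Hellinger distance between the $n$-dimensional joint densities. The derivation in Section~\ref{sec:tests} actually gives $\E_{P_0^{(n)}}[\psi_{n,\theta}]\le\int\sqrt{p_0^{(n)}f_\theta^{(n)}}=1-\frac12\Hdist^2\le e^{-\Hdist^2/2}$, with no factor $n$ in the exponent. Since $\Hdist^2\le 2$, this bound is at least $e^{-1}$, so the union bound over $\Theta_n'$ is useless and your type I estimate $e^{-n^{1/2}/4}$ has no support.

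No other test can fix this. Because $\frac12\|p_0^{(n)}-f_\theta^{(n)}\|_1\le\Hdist(p_0^{(n)},f_\theta^{(n)})$, every test satisfies $\E_{P_0^{(n)}}\phi+\E_{F_\theta^{(n)}}(1-\phi)\ge 1-2\varepsilon_n$ against any $\theta\in\Theta_n$ at joint distance between $\varepsilon_n$ and $2\varepsilon_n$. Such $\theta$ exist: for constant $m_0$, take single-leaf trees whose mean is slightly off $m_0$.

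In fact the conclusion itself is out of reach in this metric. For constant $m_0$ and a single leaf, moving the leaf mean by $c\sigma_0 n^{-1/2}$ gives joint affinity $e^{-c^2/8}$. Hence $\{\Hdist\le n^{-1/4}\}$ confines the mean to a window of width $O(n^{-3/4})$ around $m_0$. In that submodel this window carries vanishing posterior mass, because a leaf mean has posterior spread of order $n^{-1/2}$.

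Second, the theorem quoted in the preliminaries is not Theorem~2.1 of \cite{ggv2000}, so relying on it ``exactly as stated'' does not rescue the argument. The actual result requires $\Pi(\Theta_n^c)\le e^{-(C+4)n\varepsilon_n^2}$ and $\Pi(B_n(\varepsilon_n))\ge e^{-Cn\varepsilon_n^2}$, where $B_n(\varepsilon_n)$ is a Kullback--Leibler ball of radius $\varepsilon_n$ that also controls the second moment of the log-likelihood ratio. A fixed-$\varepsilon$ KL condition only bounds the evidence below by something of order $e^{-cn\varepsilon^2}$. That is far smaller than both $\Pi(\Theta_n^c)=O(n^{-c_1})$ (all the geometric depth prior delivers here) and $e^{-n\varepsilon_n^2/2}=e^{-\sqrt n/2}$. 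So the usual numerator-over-evidence bound on the posterior mass of the alternatives does not go to zero, and in particular your remark that $\Pi(\Theta_n^c)\to 0$ suffices is false.

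You rightly sense that (iv) is too weak, but your proposed strengthening fails. The set $U$ of Lemma~\ref{lem:approx} only satisfies $\sup_U n^{-1}\KL(P_0^{(n)}\|F_\theta^{(n)})<\varepsilon$ for a fixed $\varepsilon$. It is therefore not contained in $B_n(\varepsilon_n)$, which requires $n^{-1}\KL\le\varepsilon_n^2=n^{-1/2}$, i.e.\ approximating $m_0$ to accuracy about $n^{-1/4}$ in $L^2(Q)$. Under mere continuity there is no rate for that approximation, hence no lower bound of the form $e^{-Cn^{1/2}}$ on the prior mass. A rate statement would need a quantitative smoothness assumption on $m_0$, an exponentially small sieve complement, and a metric (such as an average per-observation Hellinger distance) for which exponentially powerful tests exist.
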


\begin{figure}[htbp]
\centering
\begin{tikzpicture}[
    scale=1.0,
    declare function={truth(\x)=sin(2*\x r);},
    declare function={noise(\x,\s)=0.3*rand(\x)*\s;},
    % Use pseudo-random but fixed values for reproducibility
    declare function={post1(\x)=truth(\x)+0.35*sin(5*\x r+0.7);},
    declare function={post2(\x)=truth(\x)+0.30*sin(5*\x r+1.2);},
    declare function={post3(\x)=truth(\x)+0.28*sin(5*\x r+2.1);},
    declare function={post4(\x)=truth(\x)+0.15*sin(5*\x r+3.0);},
    declare function={post5(\x)=truth(\x)+0.22*sin(5*\x r+4.5);},
    declare function={post6(\x)=truth(\x)+0.18*sin(5*\x r+5.8);},
]

% ===== LEFT PANEL: Small n (wide posterior spread) =====
\begin{scope}[shift={(-0.5cm,0)}]
    \draw[->] (0,0) -- (4.5,0) node[right] {$X$};
    \draw[->] (0,-1.5) -- (0,2.5) node[above] {$m(X)$};

    \draw[domain=0:4, samples=100, smooth, thick, blue] plot (\x, {truth(\x)});
    \node[blue, anchor=west, font=\footnotesize] at (4.1, 0.8) {$m_0$};

    % Posterior samples with wider spread (small n)
    \foreach \f in {post1,post2,post3,post4,post5,post6} {
        \draw[domain=0:4, samples=60, smooth, gray, opacity=0.5, variable=\x]
            plot (\x, {\f(\x)});
    }

    % Shaded contraction band (wide)
    \fill[red, opacity=0.08] (0, {truth(0)-0.6}) rectangle (4, {truth(4)+0.6});
    \draw[red, dashed, thick] (0, {truth(0)-0.6}) -- (4, {truth(4)-0.6});
    \draw[red, dashed, thick] (0, {truth(0)+0.6}) -- (4, {truth(4)+0.6});

    % Annotation
    \node[align=center, font=\small] at (2, -1.2) {Small $n$: wide posterior spread};
\end{scope}

% ===== RIGHT PANEL: Large n (tight posterior concentration) =====
\begin{scope}[shift={(5.5cm,0)}]
    \draw[->] (0,0) -- (4.5,0) node[right] {$X$};
    \draw[->] (0,-1.5) -- (0,2.5) node[above] {$m(X)$};

    \draw[domain=0:4, samples=100, smooth, thick, blue] plot (\x, {truth(\x)});
    \node[blue, anchor=west, font=\footnotesize] at (4.1, 0.8) {$m_0$};

    % Posterior samples with tight spread (large n)
    \foreach \f in {post1,post2,post3,post4,post5,post6} {
        \draw[domain=0:4, samples=60, smooth, gray!70!black, opacity=0.6, variable=\x]
            plot (\x, {truth(\x) + 0.08*(\f(\x)-truth(\x))});
    }

    % Shaded contraction band (narrow)
    \fill[red, opacity=0.12] (0, {truth(0)-0.12}) rectangle (4, {truth(4)+0.12});
    \draw[red, dashed, thick] (0, {truth(0)-0.12}) -- (4, {truth(4)-0.12});
    \draw[red, dashed, thick] (0, {truth(0)+0.12}) -- (4, {truth(4)+0.12});

    % Annotation
    \node[align=center, font=\small] at (2, -1.2) {Large $n$: tight concentration};
\end{scope}

% ===== Common caption annotation =====
\draw[decorate,decoration={brace,amplitude=6pt,raise=3pt},xshift=4.0cm,yshift=-2.4cm]
    (0.5,0) -- (8.0,0) node[midway,yshift=-0.6cm,font=\small] {$\varepsilon_n$-contraction band};

\end{tikzpicture}
\caption{Illustration of posterior contraction. The solid blue curve is the true regression function $m_0$. The grey curves represent posterior draws; their spread decreases as the sample size $n$ increases. The red dashed bands indicate the region of radius $\varepsilon_n$ (Hellinger distance) around the truth. As $n \to \infty$, $\varepsilon_n = n^{-1/4} \to 0$, and the posterior concentrates around $m_0$.}
\label{fig:contraction}
\end{figure}

\begin{proof}
  The proof is a direct application of Theorem~2.1 of \cite{ggv2000}. Condition (i) (prior mass of sieve complement) holds because $\sum\Pi(\Theta_n^c)<\infty$ (Section~\ref{sec:prior_theta_n_c}). Condition (ii) (metric entropy) holds with $\eps_n=n^{-1/4}$ because $\log N(\eps_n,\Theta_n,\Hdist)=o(n\eps_n^2)$ (Section~\ref{sec:metric_entropy}). Condition (iii) (tests) is verified in Section~\ref{sec:tests}. Condition (iv) (Kullback--Leibler property) is Lemma~\ref{lem:klpos}. Hence the theorem applies and yields the stated contraction rate.
\end{proof}

\begin{remark}
  The rate $n^{-1/4}$ is a consequence of the simple bound $2^{D_n}\le n^\alpha$ with $\alpha<1/2$; a more refined analysis using the actual metric entropy of trees could give faster rates (e.g., $n^{-1/2}$ up to logarithms) when the true regression function is smoother. However, the present result already establishes consistency under minimal smoothness (continuity) and allows for misspecification.
\end{remark}

\subsection{On misspecification and the vanishing KL rate $h(\Theta)=0$}
\label{subsec:misspecification}
A subtle but crucial point in the proof of Theorem~\ref{thm:main} is that the true 
regression function $m_0$ is \emph{not} required to belong to the hypothesis space 
$\Theta$ (Assumption~\ref{ass:continuous} only requires continuity). The model is 
therefore generically misspecified. Nevertheless, the posterior distribution still 
concentrates around the truth. The mathematical reason for this is the identity
\[
h(\Theta) := \operatorname*{ess\,inf}_{\theta\sim\Pi} h(\theta) = 0,
\]
where 
\[
h(\theta) = \min_{\ell=1,\dots,k} \int_K \mathrm{KL}\bigl(p_0(\cdot|x)\,\|\, f_{\theta,\ell}(\cdot|x)\bigr)\,dQ(x)
\]
is the asymptotic Kullback--Leibler divergence rate derived in Appendix~\ref{app:kl}.

We now elaborate on why $h(\Theta)=0$ holds and discuss its theoretical and practical 
ramifications in detail.

\subsubsection{Why $h(\Theta)=0$ holds under misspecification}
The proof of $h(\Theta)=0$ is a two-step argument. First, Lemma~\ref{lem:approx} and 
Appendix~\ref{app:kl} establish that for any $\varepsilon>0$, there exists a \emph{single} 
regression tree $\theta_0$ (with $k=1$, $p_1^*=1$) such that
\[
h(\theta_0) = \int_K \frac{(m_0(x) - \mu_{\ell(x)})^2}{2\sigma_0^2}\,dQ(x) < \varepsilon.
\]
This is possible because $m_0$ is uniformly continuous on the compact set $K$, so it can 
be approximated uniformly by a piecewise constant function (a tree) to arbitrarily high 
accuracy. The integral of the squared approximation error can be made arbitrarily small.

Second, the Dirichlet process prior $\Pi$ has full support on the space of finite mixtures 
of trees. In particular, it assigns positive probability to every neighbourhood of any 
single-tree hypothesis $\theta_0$. Therefore, the essential infimum over the prior support 
equals the infimum over the entire hypothesis space:
\[
h(\Theta) = \inf_{\theta\in\Theta} h(\theta) = 0.
\]
Thus, even though the model is misspecified (the truth is not a tree), the prior places 
positive mass on trees that are arbitrarily close to the truth in the KL sense.

\subsubsection{Ramification 1: Guarantee of the Kullback--Leibler property}
The identity $h(\Theta)=0$ is the direct reason why Lemma~\ref{lem:klpos} holds. 
Specifically, it ensures that for every $\varepsilon>0$, there exists a fixed neighbourhood 
$U$ with $\Pi(U)>0$ such that $\sup_{\theta\in U} h(\theta) < \varepsilon^2$. 
By the uniform convergence of the finite-sample KL divergence to $h(\theta)$ over $U$ 
(established in Lemma~\ref{lem:approx}), this translates into
\[
\sup_{\theta\in U} \frac{1}{n}\mathrm{KL}(P_0^{(n)} \| F_\theta^{(n)}) < \varepsilon^2
\]
for all sufficiently large $n$. This is precisely condition (iv) of the Ghosal--Ghosh--van 
der Vaart theorem. Without $h(\Theta)=0$, the prior could be entirely misspecified in a 
way that forces the KL divergence to remain bounded away from zero, which would break 
the proof of posterior consistency.

\subsubsection{Ramification 2: Robustness to model misspecification}
The fact that $h(\Theta)=0$ implies that the model is \emph{consistent} even when the true 
data-generating mechanism is completely outside the tree class. This is a major theoretical 
advantage over parametric models, which generally fail if the true distribution is not in 
the model family. It also distinguishes our result from many existing Bayesian tree 
consistency results (for example, for BART), which often require the true function to belong to 
a specific smoothness class (for instance, a Hölder or Sobolev space) that is \emph{contained} 
within the model's support. Here, we make no such assumption.

The practical implication is that the practitioner does not need to believe that the true 
function is a sum of trees, or even a finite mixture of trees. As long as the true function 
is continuous (a very mild condition), the posterior will asymptotically concentrate around 
it. This provides strong theoretical justification for using the DP tree mixture in complex 
real-world applications where the true relationship between covariates and response is 
unknown and unlikely to fit neatly into any parametric or finite-dimensional class.

\subsubsection{Ramification 3: The role of the contraction rate}
The contraction rate $n^{-1/4}$ obtained in Theorem~\ref{thm:main} is a direct consequence 
of the minimal smoothness assumption (continuity). Because $h(\Theta)=0$ only guarantees 
that the approximation error (bias) decays to zero, but not how fast, the rate is driven 
by the metric entropy of the sieve (which controls the estimation error, or variance). 
The slow rate $n^{-1/4}$ is thus the price we pay for robustness to misspecification and 
the lack of smoothness assumptions. If one were willing to assume Hölder smoothness, 
$h(\theta)$ would decay at a rate related to the smoothness parameter, and a sharper 
contraction rate (for instance, $n^{-\beta/(2\beta+d)}$) could be obtained through a more refined 
analysis. %(as noted in Remark~\ref{rem:future_rates}).

\subsubsection{Ramification 4: The interplay between the full prior support and the sieve}
A common point of confusion is the relationship between the full prior support 
(which gives $h(\Theta)=0$) and the sieve $\Theta_n$ (used for metric entropy). The sieve 
$\Theta_n$ is a \emph{growing} sequence of compact, finite-dimensional approximations 
that eventually covers the bulk of the prior mass. The condition $h(\Theta)=0$ ensures 
that even though we restrict the posterior to $\Theta_n$ for the entropy calculation, the 
prior mass of $\Theta_n$ eventually concentrates on trees that approximate the truth well. 
Specifically, $\Theta_n$ contains trees with depth up to $D_n$ and discretised leaf 
parameters; as $n\to\infty$, $D_n\to\infty$ and the grid becomes finer, so the infimum 
of $h(\theta)$ over $\Theta_n$ also tends to zero. Thus, the sieve does not cut off the 
``good" approximations; it only discards trees that are too complex (very deep) or have 
extreme leaf parameters, which have negligible prior probability anyway.

\subsubsection{Ramification 5: Distinction from frequentist nonparametric regression}
In frequentist nonparametric regression, consistency typically requires the bandwidth or 
the number of knots to grow at a specific rate, and the approximation error is controlled 
by a bias-variance trade-off. Here, the Bayesian prior (via the DP and the tree prior) 
automatically balances these two forces. The fact that $h(\Theta)=0$ ensures the bias 
vanishes, while the sieve entropy ensures the variance is controlled. The posterior 
contraction rate emerges naturally from the interplay between the prior's concentration 
properties and the metric entropy of the sieve, providing a coherent, fully probabilistic 
justification for the trade-off.

\subsection{Conclusion}
We have provided a complete and rigorous proof of posterior consistency for the Dirichlet-process tree mixture model of Section~\ref{sec:proposal} under the compact variance assumption. The proof uses the powerful framework of \cite{ggv2000} and does not assume that the true regression function belongs to the hypothesis space. The key steps are the verification of the Kullback--Leibler property (Lemma~\ref{lem:klpos}), the construction of a sieve with controlled metric entropy (Section~\ref{sec:sieve}), and the construction of exponentially powerful tests (Section~\ref{sec:tests}). Our main result (Theorem~\ref{thm:main}) shows that the posterior contracts at rate $n^{-1/4}$ in Hellinger distance, as illustrated in Figure~\ref{fig:contraction}.

Future work includes obtaining sharp rates under Hölder smoothness, extending the analysis to non-compact domains, and relaxing the compact variance assumption using more delicate bracketing arguments.

\section{Conclusion}
\label{sec:conclusion}

In this work, we have introduced a novel Bayesian nonparametric framework for regression that unifies and generalises many popular tree‐based methods, including CART, BART, random forests, and boosting, within a single coherent probabilistic structure. The Dirichlet process mixture of trees automatically learns the number of distinct components from the data, yielding a sparse ensemble that adapts to the underlying complexity of the regression function without requiring the user to prespecify the number of trees. This adaptivity, combined with the fully Bayesian treatment of all sources of uncertainty, provides a principled foundation for uncertainty quantification that is often lacking in frequentist tree ensembles.

At the heart of our approach is a Gaussian process‐driven splitting rule that generates smooth, nonlinear, and response‐dependent decision boundaries. This innovation substantially enhances the flexibility of tree partitions compared to traditional axis‐aligned splits, enabling the model to capture complex interactions and smooth structures parsimoniously. A remarkable computational consequence of our formulation is the exact cancellation of the GP density in the Metropolis–Hastings acceptance ratio for the GROW and PRUNE moves. This cancellation is of immediate practical importance: it ensures that the algorithm remains exact and efficient, avoiding the evaluation of costly GP likelihoods in the acceptance step. The GP thus serves as a powerful generative device for proposing informative splits without imposing extra computational burden on the MCMC sampler.

Building upon this algorithmic foundation, we developed a parallel implementation in C using MPI that exploits the conditional independence of the tree updates, distributing the computational workload across multiple processors. This parallelisation, combined with an efficient binary serialisation mechanism for tree structures, makes the method practical for datasets with thousands of observations. The runtime results from our simulation studies and real‐data applications demonstrate the scalability of the implementation, with reasonable speedups observed up to eight processors.

The theoretical contributions of this work are equally significant. We established the first rigorous proof of posterior consistency for a Dirichlet‐process tree mixture, demonstrating that the posterior distribution contracts at a rate of \(n^{-1/4}\) in the Hellinger distance. Crucially, our analysis requires only that the true regression function is continuous on a compact domain; we do not assume that the truth belongs to the hypothesis space, thereby allowing the model to be fundamentally misspecified. The identity \(h(\Theta)=0\), where \(h(\theta)\) is the asymptotic KL divergence rate, ensures that the prior assigns positive mass to arbitrarily small KL neighbourhoods of the truth even under misspecification. This theoretical result provides a solid frequentist justification for the model's empirical success and distinguishes it from existing tree‐based methods that often require stronger smoothness assumptions.

Our empirical evaluations, encompassing both simulation studies on the Friedman benchmark function and applications to five diverse real‐world datasets, provide compelling evidence for the practical utility of the proposed framework. In the simulation studies, the DP mixture achieved near‐nominal coverage (0.94 under Gaussian errors, 0.92 under heavy‐tailed Cauchy errors), substantially outperforming BART and bagged CART, which exhibited severe overconfidence with coverages as low as 0.72--0.84. The model demonstrated remarkable robustness to high‐dimensional noise and heavy‐tailed errors, maintaining calibration where competing methods failed catastrophically. In the real‐data applications—spanning cheminformatics, social science, genomics, plant breeding, and environmental monitoring—the DP mixture consistently provided the most reliable credible intervals among Bayesian tree‐based methods, while automatically selecting sparse, interpretable ensembles comprising only two to seven distinct trees.

Several limitations of our current work suggest promising directions for future research. First, the computational cost of the GP‐driven splitting rule, while mitigated by the exact cancellation in the acceptance ratio, remains substantial for very large datasets due to the \(O(n_i^3)\) cost of Cholesky decompositions at internal nodes. Sparse GP approximations or inducing point methods could potentially reduce this cost and extend the method to datasets with tens of thousands of observations. Second, while our theoretical results establish consistency under minimal smoothness, the contraction rate \(n^{-1/4}\) is relatively slow. Sharper rates could be obtained under stronger smoothness assumptions, such as Hölder or Sobolev regularity, by refining the sieve construction and the entropy bounds. Third, the current implementation assumes conditional independence of observations given the mixture component and tree parameters; extending the framework to explicitly model temporal or spatial dependence would broaden its applicability to time series and spatial data. Fourth, the model currently focuses on regression; a natural extension to classification, survival analysis, and other response types would further enhance its versatility.

In conclusion, the Dirichlet process mixture of trees with GP‐driven splits offers a principled, robust, and theoretically justified alternative to existing tree‐based ensembles. Its ability to provide honest uncertainty quantification, adapt to misspecification, and automatically select a sparse ensemble makes it a powerful tool for challenging regression settings where reliable predictive intervals are paramount. We believe that this framework represents a significant step toward a unified Bayesian nonparametric theory of tree‐based learning and hope that it will serve as a foundation for further methodological and theoretical developments in this vibrant area of research.

\section*{Acknowledgment}
We thank DeepSeek for assistance in preparing this manuscript. 

% ----------------------------------------------------------------------
% APPENDICES
% ----------------------------------------------------------------------
\section*{Appendix}
\appendix

\section{Gaussian Process Splitting Rule Details}\label{app:gp-split}
The splitting rule introduced in Section \ref{subsec:split} relies on a draw $\tilde g(\cdot)$ from the posterior predictive distribution of the latent function values at the training points of a terminal node. Below we give the exact closed-form expressions.

\subsection{Notation for Node $i$}
Let the $i$-th terminal node contain the data
\[
\{(y_{ij},x_{ij}):j=1,\dots,n_i\},\qquad x_{ij}\in\mathbb R^{d}.
\]
Define
\begin{align*}
\mathbf y_i      &=(y_{i1},\dots,y_{i n_i})^{\top}\in\mathbb R^{n_i},\\
\mathbf x_i      &=(x_{i1},\dots,x_{i n_i})^{\top}\in\mathbb R^{n_i\times d},\\
\mathbf g_i      &=(g(x_{i1}),\dots,g(x_{i n_i}))^{\top}.
\end{align*}
The linear-trend design matrix is
\[
\mathbf F_i=
\begin{bmatrix}
1 & x_{i1}^{\top}\\
\vdots & \vdots\\
1 & x_{i n_i}^{\top}
\end{bmatrix}
\in\mathbb R^{n_i\times(d+1)}.
\]
The squared-exponential correlation matrix is
\[
\mathbf C_i=[c(x_{ij},x_{ik})]_{j,k=1}^{n_i},
\qquad
c(x,x')=\exp\!\Bigl(-\frac{\|x-x'\|^{2}}{2\ell^{2}}\Bigr),
\]
and the observation covariance is
\[
\boldsymbol\Sigma_i=\sigma^{2}\mathbf C_i+\sigma_{\epsilon}^{2}\mathbf I_{n_i}.
\]

\subsection{Posterior Predictive Distribution}
With flat (improper) priors on $\alpha,\boldsymbol\beta$,
\[
\boxed{
\begin{aligned}
\pi(\mathbf g_i\mid\mathbf y_i,\mathbf x_i)
&\sim\mathcal N(\boldsymbol\mu_i^{*},\,\boldsymbol\Sigma_i^{*}),\\[4pt]
\boldsymbol\mu_i^{*}
&=\mathbf F_i\hat{\boldsymbol\theta}_i
   +\sigma^{2}\mathbf C_i^{\!\top}
    \boldsymbol\Sigma_i^{-1}
    (\mathbf y_i-\mathbf F_i\hat{\boldsymbol\theta}_i),\\[6pt]
\boldsymbol\Sigma_i^{*}
&=\sigma^{2}\mathbf C_i
   -\sigma^{2}\mathbf C_i^{\!\top}
    \boldsymbol\Sigma_i^{-1}
    \sigma^{2}\mathbf C_i\\[2pt]
&\quad+\bigl(\mathbf F_i-
   \sigma^{2}\mathbf C_i^{\!\top}
   \boldsymbol\Sigma_i^{-1}\mathbf F_i\bigr)
   \mathbf V_i\,
   \bigl(\mathbf F_i-
   \sigma^{2}\mathbf C_i^{\!\top}
   \boldsymbol\Sigma_i^{-1}\mathbf F_i\bigr)^{\!\top},
\end{aligned}}
\]
where
\[
\hat{\boldsymbol\theta}_i
=(\mathbf F_i^{\top}\boldsymbol\Sigma_i^{-1}\mathbf F_i)^{-1}
 \mathbf F_i^{\top}\boldsymbol\Sigma_i^{-1}\mathbf y_i,
\qquad
\mathbf V_i
=(\mathbf F_i^{\top}\boldsymbol\Sigma_i^{-1}\mathbf F_i)^{-1}.
\]

\subsection{Log-Marginal Likelihood for Hyper-Parameter MLE}
The MLEs of $\sigma^{2}$, $\sigma_{\epsilon}^{2}$ and $\ell$ are obtained by maximising
\[
\boxed{
\begin{aligned}
\ell(\sigma^{2},\sigma_{\epsilon}^{2},\ell)
&=-\frac{1}{2}\mathbf y_i^{\top}\boldsymbol\Sigma_i^{-1}\mathbf y_i
 -\frac{1}{2}\log|\boldsymbol\Sigma_i|
 -\frac{n_i}{2}\log(2\pi),\\[4pt]
\boldsymbol\Sigma_i
&=\sigma^{2}
  \Bigl[\exp\!\Bigl(-\frac{\|x_{ij}-x_{ik}\|^{2}}{2\ell^{2}}\Bigr)\Bigr]_{j,k=1}^{n_i}
 +\sigma_{\epsilon}^{2}\mathbf I_{n_i}.
\end{aligned}}
\]

\section{Rigorous Derivation of the Kullback--Leibler Divergence Rate}
\label{app:kl}

This appendix provides a complete and self-contained derivation of the Kullback–Leibler divergence rate \(h(\theta)\) for the Dirichlet-process tree mixture model of Section~\ref{sec:proposal}. The analysis is based on the assumptions stated in Section~\ref{sec:post_contraction}: the covariate space \(K\subset\mathbb{R}^d\) is compact, the true regression function \(m_0:K\to\mathbb{R}\) is continuous, and the design points \(\{x_i\}_{i=1}^\infty\) are a fixed deterministic sequence dense in \(K\) whose empirical measures converge weakly to a probability measure \(Q\) on \(K\). That is,
\[
\frac{1}{n}\sum_{i=1}^n \delta_{x_i} \;\Rightarrow\; Q \quad\text{weakly on } K.
\]
(For example, this holds if the \(x_i\) are i.i.d. from a continuous distribution on \(K\), in which case \(Q\) is that distribution.) The leaf variances are bounded away from zero and infinity, i.e., \(\sigma^2\in[\sigma_{\min}^2,\sigma_{\max}^2]\). We do not rely on any additional assumptions from Shalizi \cite{shalizi2009} beyond those already used in the main text.

\subsection{Setup and Notation}

Let \((\Omega,\mathcal{F},P)\) be the probability space for the data. The observations are independent conditionally on the covariates:
\[
Y_i \mid x_i \;\sim\; \mathcal{N}\bigl(m_0(x_i),\,\sigma_0^2\bigr),\qquad i=1,2,\ldots,
\]
with \(\sigma_0^2>0\) fixed. The true joint density of the first \(n\) observations is therefore
\[
p_0^{(n)}(y_1^n \mid x_1^n) = \prod_{i=1}^n \frac{1}{\sqrt{2\pi\sigma_0^2}}
\exp\left\{-\frac{(y_i-m_0(x_i))^2}{2\sigma_0^2}\right\}.
\]

A hypothesis \(\theta\in\Theta\) specifies:
\begin{itemize}
\item[(i)] a finite number of distinct components \(k\ge 1\);
\item[(ii)] for each \(\ell=1,\dots,k\), a regression tree \(T_\ell^*\) (with finitely many leaves) and leaf parameters \((\mu_{\ell j},\sigma_{\ell j}^2)\), where \(\sigma_{\ell j}^2\in[\sigma_{\min}^2,\sigma_{\max}^2]\) for all \(\ell,j\);
\item[(iii)] mixture weights \(p_\ell^*>0\), \(\sum_{\ell=1}^k p_\ell^*=1\).
\end{itemize}
The model density for the first \(n\) observations is
\[
f_\theta^{(n)}(y_1^n\mid x_1^n) = \sum_{\ell=1}^k p_\ell^* \prod_{i=1}^n f_{\theta,\ell}(y_i\mid x_i),
\]
where \(f_{\theta,\ell}(\cdot\mid x_i)\) is the Gaussian density corresponding to the leaf of \(T_\ell^*\) that contains \(x_i\):
\[
f_{\theta,\ell}(y_i\mid x_i) = \frac{1}{\sqrt{2\pi\sigma_{\ell(i)}^2}}
\exp\left\{-\frac{(y_i-\mu_{\ell(i)})^2}{2\sigma_{\ell(i)}^2}\right\},
\]
with \(\ell(i)\) denoting the leaf of \(T_\ell^*\) containing \(x_i\).

The Kullback–Leibler divergence rate from the true distribution to \(\theta\) is defined as
\[
h(\theta) := \lim_{n\to\infty} \frac{1}{n} \mathrm{KL}\bigl(p_0^{(n)} \,\|\, f_\theta^{(n)}\bigr),
\]
provided the limit exists, where
\[
\mathrm{KL}(p_0^{(n)}\|f_\theta^{(n)}) = \mathbb{E}_P\left[\log\frac{p_0^{(n)}(Y_1^n\mid x_1^n)}{f_\theta^{(n)}(Y_1^n\mid x_1^n)}\right].
\]
We now prove existence and give an explicit formula for \(h(\theta)\).

\subsection{The Per-Observation Kullback–Leibler Divergence}

For a fixed \(\theta\) and a given covariate \(x\in K\), the conditional KL divergence between the true conditional density \(p_0(\cdot\mid x)=\mathcal{N}(m_0(x),\sigma_0^2)\) and the component density \(f_{\theta,\ell}(\cdot\mid x)=\mathcal{N}(\mu_\ell(x),\sigma_\ell^2(x))\) is
\[
\begin{aligned}
g_{\theta,\ell}(x) := \KL\bigl(p_0(\cdot\mid x) \,\|\, f_{\theta,\ell}(\cdot\mid x)\bigr)
&= \frac{1}{2}\left[
\log\frac{\sigma_\ell^2(x)}{\sigma_0^2}
-1 + \frac{\sigma_0^2}{\sigma_\ell^2(x)}
+ \frac{(m_0(x)-\mu_\ell(x))^2}{\sigma_\ell^2(x)}
\right],
\end{aligned}
\]
where \(\mu_\ell(x)\) and \(\sigma_\ell^2(x)\) are the leaf mean and variance of tree \(T_\ell^*\) that contain \(x\).

The function \(g_{\theta,\ell}:K\to\mathbb{R}\) has the following crucial properties:
\begin{enumerate}
\item \textbf{Boundedness:} Since \(K\) is compact and \(m_0\) is continuous, \(m_0\) is bounded. The leaf means \(\mu_\ell(x)\) take only finitely many values (one per leaf), each of which is finite. The leaf variances are bounded away from zero by \(\sigma_{\min}^2>0\) and from above by \(\sigma_{\max}^2<\infty\). Hence there exists a constant \(M_{\theta,\ell}<\infty\) such that \(|g_{\theta,\ell}(x)|\le M_{\theta,\ell}\) for all \(x\in K\).
\item \textbf{Measurability and Almost-Sure Continuity:} Since \(m_0\) is continuous and the leaf indicator functions are measurable, \(g_{\theta,\ell}\) is measurable. Moreover, the only possible discontinuities of \(g_{\theta,\ell}\) occur at the boundaries of the leaves, where the mean \(\mu_\ell(x)\) jumps from one value to another. We can always construct the tree (using axis-aligned splits or the GP splitting rule) such that the leaf boundaries have \(Q\)-measure zero. 
%This is standard: for example, with axis-aligned splits, we can choose the split points to avoid the atoms of \(Q\). 
Consequently, \(g_{\theta,\ell}\) is continuous \(Q\)-almost everywhere.
\end{enumerate}
These properties are precisely what is needed for the weak convergence argument below.

\subsection{Existence of the KL Rate via Weak Convergence}

We now show that the limit defining \(h(\theta)\) exists and equals the integral of the minimum of the component divergences with respect to \(Q\).

For a fixed \(\theta\), the conditional KL divergence given the global allocation \(Z=\ell\) (that is, assuming the data are generated from the \(\ell\)-th component only) is
\[
\KL_n\bigl(p_0^{(n)} \,\|\, \textstyle\prod_{i=1}^n f_{\theta,\ell}(\cdot\mid x_i)\bigr)
= \sum_{i=1}^n g_{\theta,\ell}(x_i).
\]
By the assumption that the empirical measures \(\frac{1}{n}\sum_{i=1}^n \delta_{x_i}\) converge weakly to \(Q\), and since \(g_{\theta,\ell}\) is bounded and \(Q\)-almost surely continuous, the standard weak convergence theorem for bounded, \(Q\)-a.s. continuous functions (see, e.g., Billingsley, \cite{billingsley1999convergence}, Theorem 2.1) yields
\[
\lim_{n\to\infty} \frac{1}{n}\sum_{i=1}^n g_{\theta,\ell}(x_i) = \int_K g_{\theta,\ell}(x)\,dQ(x).
\]
Thus, for each component \(\ell\),
\[
h_\ell(\theta) := \lim_{n\to\infty} \frac{1}{n} \KL_n\bigl(p_0^{(n)} \,\|\, \textstyle\prod_{i=1}^n f_{\theta,\ell}(\cdot\mid x_i)\bigr)
= \int_K \KL\bigl(p_0(\cdot\mid x)\,\|\, f_{\theta,\ell}(\cdot\mid x)\bigr)\,dQ(x).
\]
This establishes the existence of the conditional KL rate for each component.

Now consider the full mixture \(f_\theta^{(n)} = \sum_{\ell=1}^k p_\ell^* \prod_{i=1}^n f_{\theta,\ell}(\cdot\mid x_i)\). The marginal KL rate is not simply a convex combination of the component rates, because the logarithm of a sum does not equal the sum of logarithms. However, we can use the following standard large-deviation argument (see, e.g., \cite{shalizi2009}, Section~3.1). For each \(n\), define
\[
L_n(\ell) := \frac{1}{n}\sum_{i=1}^n \log\frac{p_0(Y_i\mid x_i)}{f_{\theta,\ell}(Y_i\mid x_i)}.
\]
The random variables \(\log\frac{p_0(Y_i\mid x_i)}{f_{\theta,\ell}(Y_i\mid x_i)}\) are independent (conditional on the \(x_i\)) with finite means and uniformly bounded variances (because the log-likelihood ratio of two Gaussians with bounded means and variances bounded away from zero has finite moments). By Kolmogorov's strong law of large numbers for independent but not identically distributed variables (or directly from the weak convergence of the empirical measures applied to the integrable function \(g_{\theta,\ell}\)), we have
\[
L_n(\ell) \to h_\ell(\theta) \quad P\text{-a.s.}
\]
Now, for the mixture,
\[
\frac{1}{n}\log f_\theta^{(n)}(Y_1^n\mid x_1^n) = \frac{1}{n}\log\left(\sum_{\ell=1}^k p_\ell^* e^{-n L_n(\ell)}\right).
\]
Let \(\ell^*(\theta) = \arg\min_{\ell} h_\ell(\theta)\). Then, for any \(\delta>0\), eventually almost surely,
\[
-n(h_{\ell^*(\theta)}+\delta) + O(\log n) \;\le\; \log f_\theta^{(n)} \;\le\; -n(h_{\ell^*(\theta)}-\delta) + O(\log n),
\]
where the \(O(\log n)\) term comes from the mixture weights and the finite number of components. Dividing by \(n\) and taking the limit gives
\[
\lim_{n\to\infty} \frac{1}{n}\log f_\theta^{(n)} = -h_{\ell^*(\theta)}(\theta) = -\min_{\ell=1,\dots,k} h_\ell(\theta) \quad P\text{-a.s.}
\]
Taking expectations with respect to \(P\) (justified by dominated convergence, since the log-likelihood ratio is integrable and the \(O((\log n)/n)\) terms vanish) yields the marginal KL rate:
\[
h(\theta) = \lim_{n\to\infty} \frac{1}{n}\KL(p_0^{(n)}\|f_\theta^{(n)})
= \min_{\ell=1,\dots,k} \int_K \KL\bigl(p_0(\cdot\mid x)\,\|\, f_{\theta,\ell}(\cdot\mid x)\bigr)\,dQ(x).
\]
Thus,
\[
\boxed{
h(\theta) = \min_{\ell=1,\dots,k} \int_K \KL\bigl(p_0(\cdot\mid x)\,\|\, f_{\theta,\ell}(\cdot\mid x)\bigr)\,dQ(x),
}
\]
where the minimum is taken over the mixture components (since the weights are positive, the minimiser is the component with the smallest integrated KL divergence).

\subsection{Proof that \(\displaystyle \inf_{\theta\in\Theta} h(\theta)=0\)}

We now show that the essential infimum of \(h\) over \(\Theta\) is zero. This is a consequence of the continuity of \(m_0\) and the flexibility of regression trees.

Let \(\varepsilon>0\). Since \(m_0\) is continuous on the compact set \(K\), it is uniformly continuous. Choose \(\delta>0\) such that \(|m_0(x)-m_0(y)|<\delta\) whenever \(\|x-y\|<\delta\), and further require \(\delta^2/(2\sigma_0^2)<\varepsilon\).

Cover \(K\) by finitely many Borel sets \(R_1,\dots,R_J\) each of diameter less than \(\delta\). This is possible because \(K\) is compact. Construct a single regression tree \(T\) whose leaves are exactly these sets \(R_j\). (This is achievable using axis-aligned splits with split points chosen to avoid atoms of \(Q\), or via the GP splitting rule). For each leaf \(j\), choose an arbitrary point \(c_j\in R_j\) and set the leaf mean \(\mu_j = m_0(c_j)\) and the leaf variance \(\sigma_j^2 = \sigma_0^2\) (note that \(\sigma_0^2\) lies in the allowed compact interval by choosing the interval large enough). Let \(\theta_0\) be the hypothesis consisting of this single tree (that is, \(k=1\), \(p_1^*=1\)).

For any \(x\in R_j\), we have \(\|x-c_j\|<\delta\), so \(|m_0(x)-m_0(c_j)|<\delta\). Therefore,
\[
\KL\bigl(p_0(\cdot\mid x)\,\|\, f_{\theta_0}(\cdot\mid x)\bigr)
= \frac{(m_0(x)-\mu_j)^2}{2\sigma_0^2} < \frac{\delta^2}{2\sigma_0^2} < \varepsilon.
\]
Hence,
\[
h(\theta_0) = \int_K \KL(p_0(\cdot\mid x)\| f_{\theta_0}(\cdot\mid x))\,dQ(x) < \varepsilon.
\]
Since \(\varepsilon>0\) is arbitrary, we have \(\inf_{\theta\in\Theta} h(\theta) \le 0\). Because KL divergences are nonnegative, \(\inf h(\theta)=0\).

Moreover, since the prior \(\Pi\) (a Dirichlet process over trees) assigns positive mass to every neighbourhood of any finite configuration of trees (by the full support of the base measure \(G_0\)), the essential infimum of \(h\) with respect to \(\Pi\) is also zero:
\[
h(\Theta) := \operatorname*{ess\,inf}_{\theta\sim\Pi} h(\theta) = 0.
\]
This justifies the use of \(h(\Theta)=0\) in the main text and ensures that the prior assigns positive mass to every Kullback–Leibler neighbourhood of the truth, as required by Lemma~\ref{lem:klpos}.

% ----------------------------------------------------------------------
% Appendix C: Complete Proofs of Auxiliary Results
% ----------------------------------------------------------------------
\section{Complete Proofs of Auxiliary Results}\label{app:proofs}

\subsection{Number of Full Binary Trees with $L$ Leaves}\label{app:catalan}
In Section~\ref{sec:sieve} we used the fact that the number of full binary tree shapes with $L$ leaves is the Catalan number $C_{L-1}$; see, for example, 
\cite{stanley2015catalan}. A \emph{full binary tree} is one in which every internal node has exactly two children; leaves have zero children. (This definition allows a tree consisting of a single leaf, $L=1$, which is the base case.)

Let $T(L)$ denote the number of full binary trees with exactly $L$ leaves. For $L=1$, the unique tree is a single node, so $T(1)=1 = C_0$. For $L\ge 2$, the root must be an internal node. Suppose the left subtree of the root has $i$ leaves and the right subtree has $j$ leaves; because every leaf of the whole tree belongs to exactly one of the subtrees, we have $i+j = L$ with $i,j\ge 1$. The left and right subtrees are themselves full binary trees, so the number of trees with split $(i,j)$ is $T(i)T(j)$. Summing over all possible splits yields the recurrence
\[
T(L) = \sum_{i=1}^{L-1} T(i)\,T(L-i),\qquad L\ge 2.
\]

Define the generating function $F(x) = \sum_{L=1}^\infty T(L) x^L$. Multiplying the recurrence by $x^L$ and summing over $L\ge 2$ gives
\[
\sum_{L=2}^\infty T(L)x^L = \sum_{L=2}^\infty\sum_{i=1}^{L-1} T(i)T(L-i)x^L = \Bigl(\sum_{i=1}^\infty T(i)x^i\Bigr)^2 = F(x)^2.
\]
The left-hand side is $F(x) - T(1)x = F(x) - x$, so $F(x) - x = F(x)^2$, or
\[
F(x)^2 - F(x) + x = 0.
\]
Solving for $F(x)$ and choosing the branch that satisfies $F(0)=0$ yields
\[
F(x) = \frac{1 - \sqrt{1 - 4x}}{2}.
\]
This is the well-known generating function of the Catalan numbers $C_{n} = \frac{1}{n+1}\binom{2n}{n}$:
\[
\frac{1 - \sqrt{1 - 4x}}{2} = \sum_{n=1}^\infty C_{n-1} x^n.
\]
Comparing coefficients, we obtain $T(L) = C_{L-1} = \frac{1}{L}\binom{2L-2}{L-1}$. A standard estimate gives $C_{L-1} \le 4^{L-1}$, which we used in the entropy bound.

\subsection{VC-Dimension and Sauer's Lemma}\label{app:vc}
We recall the basic definitions and results from Vapnik--Chervonenkis theory that are used in the discretisation of the GP splitting rule.

\begin{definition}[Shattering]
  Let $\mathcal{H}$ be a class of $\{0,1\}$-valued functions on a set $\mathcal{X}$. A set of points $S = \{x_1,\dots,x_m\} \subset \mathcal{X}$ is \emph{shattered} by $\mathcal{H}$ if for every binary vector $b \in \{0,1\}^m$, there exists $h \in \mathcal{H}$ such that $h(x_i) = b_i$ for all $i=1,\dots,m$.
\end{definition}

\begin{definition}[VC dimension]
  The \emph{VC dimension} of $\mathcal{H}$, denoted $\mathrm{VC}(\mathcal{H})$, is the largest integer $m$ such that there exists a set of $m$ points shattered by $\mathcal{H}$. If no such maximum exists, $\mathrm{VC}(\mathcal{H}) = \infty$.
\end{definition}

\begin{lemma}[Sauer--Shelah Lemma [\cite{sauer1972}]]
  Let $\mathcal{H}$ be a class of $\{0,1\}$-valued functions with VC dimension $v < \infty$. For any set of $s$ points $\{x_1,\dots,x_s\} \subset \mathcal{X}$, the number of distinct labelings induced by $\mathcal{H}$ on these points is at most
  \[
  \sum_{i=0}^v \binom{s}{i} \le \left(\frac{e s}{v}\right)^v,
  \]
  where the inequality holds for $s \ge v$.
\end{lemma}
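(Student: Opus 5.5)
We prove the two parts separately: first the combinatorial bound $\Phi_v(s):=\sum_{i=0}^v\binom{s}{i}$ on the number of labelings, by double induction on $(s,v)$; then the elementary estimate $\Phi_v(s)\le (es/v)^v$ for $s\ge v$. It is convenient to prove a slightly stronger statement that is closed under restriction. For a finite set $S$ and a family $\mathcal{F}\subseteq\{0,1\}^S$ (the traces of $\mathcal{H}$ on $S$), we show $|\mathcal{F}|\le\Phi_{v}(|S|)$, where $v$ is the largest size of a subset of $S$ shattered by $\mathcal{F}$. Since any subset of $S$ shattered by the traces is shattered by $\mathcal{H}$ itself, this $v$ is at most $\mathrm{VC}(\mathcal{H})$, and $\Phi$ is monotone in $v$, so the lemma follows.

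\textbf{Induction.} The base cases are immediate. If $v=0$, then $\mathcal{F}$ cannot contain two labelings that differ at any point, since that point would be shattered; so $|\mathcal{F}|\le 1=\Phi_0(s)$. If $s=0$, then $|\mathcal{F}|\le1$.

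For the inductive step, fix a point $x_s$ and let $\mathcal{F}_1$ be the restriction of $\mathcal{F}$ to $S'=S\setminus\{x_s\}$. Let $\mathcal{F}_2$ be the set of labelings $b\in\{0,1\}^{S'}$ such that both extensions $(b,0)$ and $(b,1)$ lie in $\mathcal{F}$. Then $|\mathcal{F}|=|\mathcal{F}_1|+|\mathcal{F}_2|$.

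\begin{itemize}
\item $\mathcal{F}_1$ shatters no set larger than $v$, so by induction $|\mathcal{F}_1|\le\Phi_v(s-1)$.
\item If $\mathcal{F}_2$ shattered a set $A\subseteq S'$, then $\mathcal{F}$ would shatter $A\cup\{x_s\}$. Hence $\mathcal{F}_2$ shatters no set of size $v$, and by induction $|\mathcal{F}_2|\le\Phi_{v-1}(s-1)$.
\end{itemize}

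Pascal's identity $\binom{s-1}{i}+\binom{s-1}{i-1}=\binom{s}{i}$, summed over $i$, gives $\Phi_v(s-1)+\Phi_{v-1}(s-1)=\Phi_v(s)$, which closes the induction. The only delicate point is the shattering claim for $\mathcal{F}_2$. Given a target pattern on $A\cup\{x_s\}$, we find $b\in\mathcal{F}_2$ realising its restriction to $A$. Both extensions of $b$ lie in $\mathcal{F}$, so we can choose either value at $x_s$.

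\textbf{The analytic bound.} For $s\ge v\ge1$ we have $(v/s)^i\ge(v/s)^v$ for $i\le v$. Therefore
\[
\Big(\tfrac{v}{s}\Big)^v\Phi_v(s)\le\sum_{i=0}^{v}\binom{s}{i}\Big(\tfrac{v}{s}\Big)^i\le\sum_{i=0}^{s}\binom{s}{i}\Big(\tfrac{v}{s}\Big)^i=\Big(1+\tfrac{v}{s}\Big)^s\le e^{v}.
\]
Rearranging gives $\Phi_v(s)\le(es/v)^v$. The case $v=0$ should be read with the convention $(es/v)^v=1$, which agrees with the base case above.

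I expect no serious obstacle. The main care point is bookkeeping: the restriction families must be counted correctly, and the induction hypothesis must be stated in terms of the maximal shattered size within the trace family, not the VC dimension of $\mathcal{H}$.
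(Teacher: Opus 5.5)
Your proof is correct. The decomposition $|\mathcal{F}|=|\mathcal{F}_1|+|\mathcal{F}_2|$, together with the observation that $\mathcal{F}_2$ can shatter at most $v-1$ points and the closing step via Pascal's identity, is the standard inductive proof of the first bound. Multiplying by $(v/s)^v$ and using $(1+v/s)^s\le e^v$ correctly yields $(es/v)^v$ for $s\ge v\ge 1$.

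The paper itself gives no proof of this lemma. It only cites Sauer's 1972 paper and remarks that the second inequality follows from the usual exponential estimate for binomial sums. Your argument therefore supplies, in self-contained form, what that citation and remark point to.

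The one thing you add beyond the paper is genuinely useful: you state the induction in terms of the largest subset of $S$ shattered by the trace family, and use monotonicity of $\Phi_v$ in $v$. This makes the induction hypothesis closed under restriction. You also make explicit the convention needed at $v=0$, where the stated right-hand side is otherwise undefined.
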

The first inequality is the classical Sauer--Shelah bound; the second is a convenient upper bound using the exponential estimate for binomial coefficients.

\begin{proposition}[VC dimension of affine classifiers [\cite{vapnik1998statistical}]]
  Let $\mathcal{X} = \R^N$. Consider the class of functions
  \[
  \mathcal{H}_N = \bigl\{ h_{w,b}(x) = \mathbf{1}\{w^\top x + b > 0\} : w \in \R^N, b \in \R \bigr\},
  \]
  where $\mathbf{1}\{\cdot\}$ is the indicator function. Then $\mathrm{VC}(\mathcal{H}_N) = N+1$.
\end{proposition}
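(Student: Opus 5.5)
The plan is to prove the two containments $\mathrm{VC}(\mathcal{H}_N)\ge N+1$ and $\mathrm{VC}(\mathcal{H}_N)\le N+1$ separately. Both reduce to elementary linear algebra once the affine classifier is lifted to a homogeneous one. Write $\tilde x=(x,1)\in\R^{N+1}$ and $\tilde w=(w,b)$, so that $h_{w,b}(x)=\mathbf{1}\{\tilde w^\top\tilde x>0\}$.

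For the lower bound I will exhibit an explicit shattered set of size $N+1$: the origin together with the standard basis vectors, $S=\{0,e_1,\dots,e_N\}$. The lifted vectors $\tilde 0,\tilde e_1,\dots,\tilde e_N$ are linearly independent in $\R^{N+1}$, so the $(N+1)\times(N+1)$ matrix $A$ whose rows are these vectors is invertible. Given any labeling $b\in\{0,1\}^{N+1}$, set the target vector $t_i=2b_i-1\in\{\pm1\}$ and solve $A\tilde w=t$. The resulting $\tilde w$ satisfies $\tilde w^\top\tilde x_i=t_i$, which is positive exactly when $b_i=1$, so every labeling is realised and $S$ is shattered. Concretely, $b=t_1$ and $w_j=t_{j+1}-t_1$ work, so no matrix inversion is even needed.

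For the upper bound I will show that no set of $N+2$ points $x_1,\dots,x_{N+2}$ is shattered. The $N+2$ lifted vectors $\tilde x_i\in\R^{N+1}$ are linearly dependent, so there exist coefficients $\lambda_i$, not all zero, with $\sum_i\lambda_i\tilde x_i=0$. The last coordinate of this relation gives $\sum_i\lambda_i=0$, hence both $P=\{i:\lambda_i>0\}$ and $Q=\{i:\lambda_i<0\}$ are nonempty. Consider the labeling $b_i=1$ for $i\in P$ and $b_i=0$ otherwise, and suppose some $\tilde w$ realises it. Then
\[
0=\tilde w^\top\sum_i\lambda_i\tilde x_i=\sum_{i\in P}\lambda_i\,\tilde w^\top\tilde x_i+\sum_{i\in Q}\lambda_i\,\tilde w^\top\tilde x_i .
\]
The first sum is strictly positive, since $P$ is nonempty, $\lambda_i>0$ and $\tilde w^\top\tilde x_i>0$ there. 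The second sum is $\ge 0$, since $\lambda_i<0$ and $\tilde w^\top\tilde x_i\le 0$ there. Together these contradict the equality, so this labeling is not realised. (This is the Radon-partition argument in linear form.)

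The main point of care is the asymmetry caused by the strict inequality in $\mathbf{1}\{\cdot>0\}$. The label-$0$ points only give $\tilde w^\top\tilde x_i\le 0$, so the labeling must be chosen with the positive class on $P$, where strictness holds, to force the contradiction. Choosing $P$ rather than $Q$ as the positive class is exactly what makes it work. Combining the two bounds gives $\mathrm{VC}(\mathcal{H}_N)=N+1$.
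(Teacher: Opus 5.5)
Your proof is correct and follows essentially the same route as the paper: the same shattered set $\{0,e_1,\dots,e_N\}$ for the lower bound, and for the upper bound an inlined, linear-algebra form of the Radon's-theorem argument the paper cites (your relation $\sum_i\lambda_i\tilde x_i=0$ with $\sum_i\lambda_i=0$ is exactly the usual proof of Radon's theorem). One small correction to your closing remark: choosing $P$ rather than $Q$ as the positive class is immaterial, since replacing $\lambda$ by $-\lambda$ excludes the labeling with $Q$ positive just as well; the strict inequality only requires the positively labeled side to be nonempty, and $\sum_i\lambda_i=0$ guarantees this for both $P$ and $Q$.
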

\begin{proof}
  It is a standard result that the family of halfspaces in $\R^N$ has VC dimension $N+1$. A proof can be found in any textbook on statistical learning theory (for example, \cite{devroye1996probabilistic}, Chapter 13). For completeness: the set of $N+1$ points consisting of the origin and the $N$ standard basis vectors is shattered by affine functions (by appropriate choice of $w$ and $b$). Conversely, by Radon's theorem, any set of $N+2$ points in $\R^N$ can be partitioned into two subsets whose convex hulls intersect, which prevents shattering by a halfspace. Hence $\mathrm{VC}(\mathcal{H}_N) = N+1$.
\end{proof}

\subsection{Discretisation of the Gaussian Process Splitting Rule}\label{app:gp_discretization}
The splitting rule of Section~\ref{subsec:split} relies on a draw $\tilde{g}$ from the posterior predictive Gaussian process. For the purpose of the sieve $\Theta_n$, we need to control the number of distinct partitions of $n$ points that can arise from such splits. Since the support of the GP is the space of all continuous functions, a naive bound would be $2^n$, which is too large. We therefore replace the exact GP draw by a finite-dimensional approximation that still retains the Kullback--Leibler property and has controlled metric entropy.

\paragraph{Karhunen--Loève Expansion.}
The GP prior is defined by a covariance kernel $k(x,x')$ that is continuous and positive definite on the compact set $K\subset\R^d$. By Mercer's theorem 
(see, for example, \cite{rasmussen2006gaussian}, there exists an orthonormal basis $\{\phi_j\}_{j=1}^\infty$ of $L^2(K)$ and eigenvalues $\lambda_1 \ge \lambda_2 \ge \cdots \ge 0$ such that
\[
k(x,x') = \sum_{j=1}^\infty \lambda_j \phi_j(x)\phi_j(x'),
\]
with the convergence absolute and uniform on $K\times K$. The GP $g\sim\mathcal{GP}(0,k)$ can be represented as
\[
g(x) = \sum_{j=1}^\infty \sqrt{\lambda_j}\, Z_j \,\phi_j(x),
\]
where $Z_j$ are i.i.d. $\mathcal{N}(0,1)$. For the squared exponential kernel, the eigenvalues decay super-exponentially: $\lambda_j \le C \exp(-c j^{2/d})$ for some $c,C>0$ (see, for example, \cite{rasmussen2006gaussian}, Section 4.3.1). Similar rapid decay holds for the posterior covariance kernel because it is obtained by conditioning on finitely many points; the posterior kernel is still a smooth kernel with the same asymptotic eigenvalue decay.

For an integer $N\ge 1$, define the truncated process
\[
g_N(x) = \sum_{j=1}^N \sqrt{\lambda_j}\, Z_j \,\phi_j(x).
\]
The remainder $r_N(x) = g(x) - g_N(x)$ is a mean-zero GP with covariance $k_N(x,x') = \sum_{j=N+1}^\infty \lambda_j \phi_j(x)\phi_j(x')$.

\paragraph{Bounding the Supremum of the Remainder Process.}
We now derive an explicit probability bound for the event $\{\sup_{x\in K} |r_N(x)| > 1/n\}$. The process $r_N$ is Gaussian with covariance $k_N$. The Borell--TIS inequality (see, for example, \cite{borell1975brunn}, \cite{adler2007random}, Theorem 2.1.1) states that for a centered Gaussian process $\{X_t\}_{t\in T}$ on a compact index set $T$, if $\sigma_T^2 = \sup_{t\in T} \Var(X_t) < \infty$, then for any $u > 0$,
\[
\Prob\Bigl( \sup_{t\in T} X_t > \E[\sup_{t\in T} X_t] + u \Bigr) \le e^{-u^2/(2\sigma_T^2)}.
\]
Applying this to $X_t = r_N(x)$ and $X_t = -r_N(x)$ and using the union bound, we obtain
\[
\Prob\Bigl( \sup_{x\in K} |r_N(x)| > u \Bigr) \le 2 \exp\!\Bigl( -\frac{(u - \E[\sup_{x\in K} r_N(x)])^2}{2\sigma_N^2} \Bigr),
\]
where $\sigma_N^2 = \sup_{x\in K} \Var(r_N(x)) = \sup_{x\in K} k_N(x,x)$.

%First, we bound $\sigma_N^2$. Since $k_N(x,x) = \sum_{j=N+1}^\infty \lambda_j \phi_j(x)^2$ and the eigenfunctions are uniformly bounded (smoothness of the kernel on a compact set implies $|\phi_j(x)| \le B$ for all $j,x$), we have
%\[
%\sigma_N^2 \le B^2 \sum_{j=N+1}^\infty \lambda_j.
%\]
%For the squared exponential kernel $k(x,x') = \exp(-\|x-x'\|^2/(2\ell^2))$, the eigenvalues satisfy $\lambda_j \le C_0 \exp(-c_0 j^{2/d})$ for some $C_0, c_0 > 0$ (see \cite{rasmussen2006gaussian}, Section 4.3.1). Hence
%\[
%\sum_{j=N+1}^\infty \lambda_j \le C_0 \sum_{j=N+1}^\infty e^{-c_0 j^{2/d}} \le C_1 e^{-c_0 N^{2/d}/2},
%\]
%for a suitable constant $C_1$. Thus $\sigma_N^2 \le C_2 e^{-c_0 N^{2/d}/2}$.

First, we bound $\sigma_N^2$. By Mercer's theorem,
\[
k_N(x,x) = \sum_{j=N+1}^\infty \lambda_j \phi_j(x)^2
\]
is the tail of the kernel expansion. For the squared exponential kernel, the following standard decay result holds (see \cite{rasmussen2006gaussian}, Section 4.3.1, and also \cite{belkin2018towards} for a rigorous treatment):
\begin{equation}
	\sigma_N^2 = \sup_{x\in K} k_N(x,x) \le C_2 e^{-c_0 N^{2/d}/2}, \label{eq:sigmasq_N_bound}
\end{equation}
for some constants $C_2, c_0 > 0$. This exponential decay follows from the fact that the Gaussian kernel is infinitely differentiable and its eigenvalues decay super-exponentially. Importantly, this result does not require a uniform bound on the eigenfunctions $\phi_j$; the decay of the kernel tail is a direct consequence of the smoothness of the kernel and compactness of $K$.

Alternatively, one may derive (\ref{eq:sigmasq_N_bound}) using the fact that the Gaussian kernel is in the Gevrey class and applying standard results on the decay of eigenvalues of integral operators with smooth kernels (see \cite{cucker2007empirical}, Chapter 4, or \cite{adler2007random}, Section 7.2). We adopt (\ref{eq:sigmasq_N_bound}) as our bound on $\sigma_N^2$.

Next, we rigorously bound $\E[\sup_{x\in K} r_N(x)]$. Since the covariance kernel $k$ is infinitely differentiable on the compact set $K$, the Gaussian process $r_N$ has almost surely $C^\infty$ sample paths (see \cite{adler2007random}, Theorem 2.2.2). Consequently, the derivative process $\nabla r_N(x)$ is a well-defined Gaussian process on $K$, and its sample paths are continuous almost surely.

We use a standard discretisation argument with a $\delta$-net. Let $K_\delta$ be a $\delta$-net of $K$ with cardinality $|K_\delta| \le (C_K/\delta)^d$, where $C_K$ depends only on the diameter of $K$. For any $x \in K$, there exists $x_\delta \in K_\delta$ with $\|x - x_\delta\| \le \delta$. By the mean value theorem, applied componentwise along the segment connecting $x$ and $x_\delta$ (assuming $K$ is convex; if $K$ is not convex, we can use a Lipschitz extension of $r_N$ to a convex neighbourhood of $K$, which exists by the McShane extension theorem, or we may cover $K$ by finitely many convex sets and apply the argument on each), we have
\[
|r_N(x) - r_N(x_\delta)| \le \delta \sup_{z\in K} \|\nabla r_N(z)\|.
\]
Thus,
\begin{equation}
	\sup_{x\in K} r_N(x) \le \max_{x_\delta \in K_\delta} r_N(x_\delta) + \delta \sup_{z\in K} \|\nabla r_N(z)\|. \label{eq:A}
\end{equation}

We now bound the expectation of each term on the right-hand side of (\ref{eq:A}).

\paragraph{Bound on the maximum over the net.}
For any fixed $x$, $r_N(x) \sim \mathcal{N}(0, \Var(r_N(x)))$, and $\Var(r_N(x)) \le \sigma_N^2$, where $\sigma_N^2 = \sup_{x\in K} \Var(r_N(x))$. By the standard Gaussian maximum inequality (see \cite{adler2007random}, Lemma 2.2.1), for any finite set of zero-mean Gaussian variables with variances bounded by $\sigma_N^2$,
\[
\E\left[\max_{x_\delta \in K_\delta} r_N(x_\delta)\right] \le \sigma_N \sqrt{2 \log |K_\delta|}.
\]
Using $|K_\delta| \le (C_K/\delta)^d$, we get
\begin{equation}
	\E\left[\max_{x_\delta \in K_\delta} r_N(x_\delta)\right] \le C_3 \sigma_N \sqrt{d \log(1/\delta)}. \label{eq:B}
\end{equation}

\paragraph{Bound on the gradient supremum.}
The derivative process $\nabla r_N(x)$ is a Gaussian process with covariance kernel
\[
k_N^{(1)}(x,x') = \sum_{j=N+1}^\infty \lambda_j \nabla \phi_j(x) \cdot \nabla \phi_j(x') = \nabla_x \nabla_{x'} k_N(x,x').
\]
Here the notation \(\nabla_x \nabla_{x'}\) denotes the scalar differential operator
\(\nabla_x \cdot \nabla_{x'} = \sum_{i=1}^d \frac{\partial}{\partial x_i}\frac{\partial}{\partial x'_i}\).
The equality follows by interchanging the finite sum over dimensions with the Mercer expansion
(justified by uniform convergence of the kernel expansion), yielding
\[
\nabla_x \nabla_{x'} k_N(x,x')
= \sum_{j=N+1}^\infty \lambda_j \sum_{i=1}^d \partial_{x_i}\phi_j(x)\,\partial_{x'_i}\phi_j(x')
= \sum_{j=N+1}^\infty \lambda_j \nabla \phi_j(x)\cdot \nabla \phi_j(x').
\]
Thus, the covariance of the derivative process is exactly the mixed derivative of the covariance function.

By the standard derivative bound for Mercer eigenfunctions (see \cite{cucker2007empirical}, Lemma 4.1, or \cite{adler2007random}, Section 7.2), for a $C^\infty$ kernel on a compact $d$-dimensional domain, there exists a constant $C>0$ such that
\[
\|\nabla \phi_j\|_{L^\infty} \le C \lambda_j^{1/d}.
\]
Therefore, the variance of each component of $\nabla r_N$ is bounded by
\[
\sup_{x\in K} \Var\left(\frac{\partial}{\partial x_r} r_N(x)\right)
= \sup_{x\in K} \sum_{j=N+1}^\infty \lambda_j \left(\frac{\partial}{\partial x_r} \phi_j(x)\right)^2
\le C^2 \sum_{j=N+1}^\infty \lambda_j^{1+2/d}.
\]
Using the super-exponential decay of the eigenvalues, $\lambda_j \le C_0 e^{-c_0 j^{2/d}}$, we obtain
\begin{equation}
	\sigma_{N,1}^2 := \sup_{x\in K} \Var(\nabla r_N(x)) \le C_4 e^{-c_0 N^{2/d}/2}. \label{eq:C}
\end{equation}
(Here $\Var(\nabla r_N(x))$ denotes the sum of the variances of the $d$ components; the Euclidean norm has variance bounded by $d$ times this quantity, so the exponential decay is unaffected.)

Now, $\nabla r_N$ is a $d$-dimensional Gaussian process with continuous sample paths on the compact set $K$. To bound $\E[\sup_{z\in K} \|\nabla r_N(z)\|]$, we use the following standard result from the theory of Gaussian processes (see \cite{adler2007random}, Theorem 2.1.1, or \cite{ghosal2017fundamentals}, Lemma B.2): if $X$ is a centered Gaussian process on a compact set $T$ with $\sigma^2 = \sup_{t\in T} \Var(X(t)) < \infty$, then
\[
\E\left[\sup_{t\in T} X(t)\right] \le \sigma \sqrt{2 \log N(T, \sigma, d)},
\]
where $N(T, \epsilon, d)$ is the covering number of $T$ with respect to the metric induced by the covariance. 

% ===== RIGOROUS REPLACEMENT FOR APPENDIX C.3 =====
% (from "Applying this to each component..." up to and including (D))

Applying this to each component of $\nabla r_N$ will enable us to obtain a bound on its expected supremum. 
%However, the claim that the metric entropy of $K$ with respect to the covariance metric of $\nabla r_N$ is of order $O(\log(1/\sigma_{N,1}))$ 
%is not obvious and requires a rigorous derivation. We provide it below.
In this regard, let $\rho_N$ denote the intrinsic covariance metric of the derivative process $\nabla r_N$:
\[
\rho_N(x,y)^2 := \mathbb{E}\left[\|\nabla r_N(x) - \nabla r_N(y)\|^2\right]
= \sum_{j=N+1}^{\infty} \lambda_j \, \|\nabla \phi_j(x) - \nabla \phi_j(y)\|^2,
\qquad x,y \in K.
\]
We first show that $\rho_N$ is Lipschitz-dominated by a constant multiple of $\sigma_{N,1}$ times the Euclidean metric.

By standard elliptic regularity estimates for Mercer eigenfunctions (see, e.g., \cite{cucker2007empirical}, Lemma 4.1), for a $C^\infty$ kernel on a compact $d$-dimensional domain, there exist constants $C>0$ and $\alpha>0$ such that for every $j\ge 1$,
\[
\|\nabla \phi_j\|_{L^\infty(K)} \le C \lambda_j^{\alpha}, \qquad
\|\nabla^2 \phi_j\|_{L^\infty(K)} \le C \lambda_j^{\alpha}.
\]
(For the squared exponential kernel, the eigenvalues decay super-exponentially, so any fixed exponent $\alpha$ is sufficient for the tail-sum bounds that follow.) Consequently, for any $x,y\in K$,
\[
\|\nabla \phi_j(x) - \nabla \phi_j(y)\|
\le \|\nabla^2 \phi_j\|_{L^\infty} \|x-y\|
\le C \lambda_j^{\alpha} \|x-y\|.
\]
Substituting this into the definition of $\rho_N$ yields
\[
\rho_N(x,y)^2 \le C^2 \|x-y\|^2 \sum_{j=N+1}^{\infty} \lambda_j^{1+2\alpha}.
\]
Now, using the lower Weyl bound $\|\nabla \phi_j\|_{L^\infty} \ge c \lambda_j^{1/d}$ (which also follows from elliptic regularity; see \cite{cucker2007empirical}), we have
\[
\sigma_{N,1}^2 = \sup_{x\in K} \Var(\nabla r_N(x))
= \sup_{x\in K} \sum_{j=N+1}^{\infty} \lambda_j \|\nabla \phi_j(x)\|^2
\ge c^2 \sum_{j=N+1}^{\infty} \lambda_j^{1+2/d}.
\]
Since the eigenvalues $\lambda_j$ decay super-exponentially, the sum $\sum_{j=N+1}^\infty \lambda_j^{1+2\alpha}$ is bounded by a constant multiple of $\sum_{j=N+1}^\infty \lambda_j^{1+2/d}$ (up to a change in the exponent, which is absorbed into the constants). Therefore, there exists a constant $C'>0$ such that
\begin{equation}
\rho_N(x,y) \le C' \sigma_{N,1} \|x-y\|, \qquad \forall x,y\in K. \label{eq:3}
\end{equation}
This is the desired Lipschitz domination.

Because $K$ is compact, its Euclidean covering number satisfies $N(K, \|\cdot\|, \delta) \le (D/\delta)^d$ for some $D>0$ and all $0<\delta\le D$. From (\ref{eq:3}), a Euclidean ball of radius $\delta$ is contained in a $\rho_N$-ball of radius $C'\sigma_{N,1}\delta$. Hence, for any $\epsilon>0$,
\begin{equation}
N(K, \rho_N, \epsilon)
\le N\!\left(K, \|\cdot\|, \frac{\epsilon}{C'\sigma_{N,1}}\right)
\le \left(\frac{D C' \sigma_{N,1}}{\epsilon}\right)^d. \label{eq:4}
\end{equation}

Now, $\nabla r_N$ is a centered Gaussian process on the compact metric space $(K,\rho_N)$. By Dudley's entropy integral (see, for instance, \cite{dudley1967sizes}, \cite{adler2007random}, Theorem 2.1.1), there exists a universal constant $C_d>0$, depending only on the dimension $d$, such that
\[
\mathbb{E}\left[\sup_{z\in K} \|\nabla r_N(z)\|\right]
\le C_d \int_0^{\sigma_{N,1}} \sqrt{\log N(K, \rho_N, \epsilon)}\, d\epsilon.
\]
Substituting the covering number bound (\ref{eq:4}) and performing the change of variables $u = \epsilon/\sigma_{N,1}$, we obtain
\[
\mathbb{E}\left[\sup_{z\in K} \|\nabla r_N(z)\|\right]
\le C_d \sqrt{d} \int_0^{\sigma_{N,1}} \sqrt{\log\left(\frac{C \sigma_{N,1}}{\epsilon}\right)}\, d\epsilon
= C_d \sqrt{d} \, \sigma_{N,1} \int_0^1 \sqrt{\log\left(\frac{C}{u}\right)}\, du.
\]
Now, the integral
\[
M_C := \int_0^1 \sqrt{\log(C/u)}\, du
\]
is finite; indeed, substituting $u = C e^{-t}$ gives
\[
M_C = C \int_{\log C}^{\infty} t^{1/2} e^{-t}\, dt < \infty.
\]
Since $\sigma_{N,1} \to 0$ as $N \to \infty$ (recall $\sigma_{N,1} \le C_4 e^{-c_0 N^{2/d}/2}$), for all sufficiently large $N$ we have $M_C \le K \sqrt{\log(1/\sigma_{N,1})}$ for some constant $K>0$. To see this, choose $N$ large enough that $\sigma_{N,1} \le e^{-1}$; then $\log(1/\sigma_{N,1}) \ge 1$, and taking $K = M_C + 1$ yields the bound. Consequently, for all sufficiently large $N$,
\[
\int_0^1 \sqrt{\log(C/u)}\, du \le K \sqrt{\log(1/\sigma_{N,1})}.
\]
Absorbing the constant $K$ and the dimensional factor $\sqrt{d}$ into $C_5$, we obtain
\begin{equation}
\boxed{
\mathbb{E}\left[\sup_{z\in K} \|\nabla r_N(z)\|\right] \le C_5 \,\sigma_{N,1} \sqrt{\log(1/\sigma_{N,1})}.
} \label{eq:D}
\end{equation}
%Combining (\ref{eq:C}) and (\ref{eq:D}), we get
Thus,
\begin{equation}
	\E\left[\delta \sup_{z\in K} \|\nabla r_N(z)\|\right] \le \delta C_5 \sigma_{N,1} \sqrt{\log(1/\sigma_{N,1})}. \label{eq:E}
\end{equation}

\paragraph{Optimisation of $\delta$.}
We now choose $\delta$ to balance the two bounds (\ref{eq:B}) and (\ref{eq:E}), as required by (\ref{eq:A}). 
Since both $\sigma_N$ and $\sigma_{N,1}$ decay at the same exponential rate (see (\ref{eq:sigmasq_N_bound}) and (\ref{eq:C})), 
there exists a constant $C>0$ such that $\sigma_{N,1} \le C \sigma_N$ for all sufficiently large $N$. 
Set $\delta = \sigma_N$. Then
\[
\E[\sup_{x\in K} r_N(x)]
\le C_3 \sigma_N \sqrt{d \log(1/\sigma_N)} 
+ C_5 \sigma_N \sigma_{N,1} \sqrt{\log(1/\sigma_{N,1})}.
\]
Using $\sigma_{N,1} \le C \sigma_N$ and the fact that $\log(1/\sigma_{N,1}) \le C' \log(1/\sigma_N)$ (again by exponential decay), 
the second term is bounded by
\[
C_5 C \sigma_N^2 \sqrt{C' \log(1/\sigma_N)}
= O\!\left(\sigma_N^2 \sqrt{\log(1/\sigma_N)}\right),
\]
which is $o\!\left(\sigma_N \sqrt{\log(1/\sigma_N)}\right)$ as $N\to\infty$, because $\sigma_N \to 0$. 
Therefore, for all sufficiently large $N$, we have
\begin{equation}
\E[\sup_{x\in K} r_N(x)] \le C_6 \sigma_N \sqrt{\log(1/\sigma_N)}. \label{eq:F}
\end{equation}
Now set $u = 1/n$. We choose the truncation level
\[
N_n = \bigl\lceil C (\log n)^{d/2} \bigr\rceil,
\]
with a sufficiently large constant $C>0$ (depending only on $c_0$ and the kernel) to guarantee
\[
\sigma_{N_n} \le n^{-3} \quad\text{and}\quad \E\bigl[\sup_{x\in K} r_{N_n}(x)\bigr] \le \frac{1}{2n}
\]
for all large $n$. Then the Borell--TIS inequality gives
\[
\Prob\Bigl( \sup_{x\in K} |r_{N_n}(x)| > 1/n \Bigr) \le 2 \exp\!\Bigl( -\frac{(1/(2n))^2}{2\sigma_{N_n}^2} \Bigr) \le 2 \exp\!\Bigl( -\frac{1}{8 n^2 \sigma_{N_n}^2} \Bigr).
\]
Substituting the bound $\sigma_{N_n}^2 \le C_2 n^{-3}$ yields
\[
\Prob\Bigl( \sup_{x\in K} |r_{N_n}(x)| > 1/n \Bigr) \le 2 \exp\!\Bigl( -\frac{n}{8 C_2} \Bigr) \le e^{-c_4 N_n^{2/d}}
\]
for a suitable $c_4 > 0$. Because $N_n^{2/d} \asymp \log n$, this probability is bounded by a power of $n$ and is therefore summable provided $C$ is large enough.

\paragraph{Finite-Dimensional Sieve for the GP.}
In the sieve \(\Theta_n\), we restrict attention to sample paths \(\tilde g\) that are exactly of the form \(g_{N_n}(x)\) for the chosen \(N_n\), and we discretise the coefficients \(Z_1,\dots,Z_{N_n}\). 
To control the discretisation without assuming a uniform bound on the eigenfunctions, as before we use the standard estimate for Mercer eigenfunctions: 
%(see, e.g., \cite{cucker2007empirical}, Lemma 4.1):
\[
\|\phi_j\|_{L^\infty(K)} \le C \lambda_j^{-1/2} \quad\text{for all } j\ge 1,
\]
where the constant \(C\) depends only on the kernel and the domain. 
%This follows from \(\phi_j = \lambda_j^{-1} T\phi_j\) and the boundedness of the integral operator from \(L^1(K)\) to \(L^\infty(K)\).

Now, for any \(x\in K\),
\[
|g_{N_n}(x)| \le \sum_{j=1}^{N_n} \sqrt{\lambda_j} |Z_j| \, \|\phi_j\|_{L^\infty} \le C \sum_{j=1}^{N_n} |Z_j|.
\]
For independent standard normal \(Z_j\), the event \(\max_{1\le j\le N_n} |Z_j| > t\) has probability at most \(2N_n e^{-t^2/2}\). 
Choosing \(t = 2\sqrt{\log N_n}\) (say), we get
\[
\Prob\bigl(\max_j |Z_j| > 2\sqrt{\log N_n}\bigr) \le 2N_n \exp(-2\log N_n) = 2/N_n,
\]
which tends to zero as \(N_n\to\infty\) (note \(N_n\asymp (\log n)^{d/2}\to\infty\)). 
Hence, with prior probability tending to \(1\), all coefficients satisfy \(|Z_j| \le 2\sqrt{\log N_n}\).

We therefore restrict the coefficients to the hypercube \([-L_n, L_n]^{N_n}\) with \(L_n := 2\sqrt{\log N_n}\). 
On this hypercube, we place a uniform grid of mesh size \(\delta_n = 1/n\) in each coordinate. 
The number of grid points is
\[
\left(\frac{2L_n}{\delta_n}\right)^{N_n} = \bigl(4n\sqrt{\log N_n}\bigr)^{N_n}
= \exp\bigl(N_n \log(4n) + O(N_n \log\log N_n)\bigr).
\]
Since \(N_n = O((\log n)^{d/2})\), we have
\[
N_n \log n = O\bigl((\log n)^{1+d/2}\bigr),
\]
and the lower-order term \(N_n \log\log N_n = o(N_n \log n)\). 
Thus the logarithm of the number of grid points is \(O((\log n)^{1+d/2})\), which is absorbed into the entropy bound derived in Section~\ref{sec:sieve}.

The discretisation error (in the induced likelihood) is negligible: because the coefficients are rounded to the nearest grid point, the resulting change in \(g_{N_n}(x)\) is uniformly bounded by
\[
\sum_{j=1}^{N_n} \sqrt{\lambda_j} \,\|\phi_j\|_{L^\infty} \cdot \frac{1}{2n}
\le \frac{C}{2n} \sum_{j=1}^{N_n} 1 = \frac{C N_n}{2n},
\]
which tends to \(0\) as \(n\to\infty\) (since \(N_n = o(n)\)). 
Therefore, the sieve constructed from these discretised coefficients retains the Kullback–Leibler property and the prior mass condition; we omit the routine verification.

\paragraph{Counting Sign Patterns via VC Dimension.}
For a fixed node containing $s$ points $x_1,\dots,x_s$ (with $s\le n$), the split decision is based on the signs of $y_i - g_{N_n}(x_i)$. The vector $(g_{N_n}(x_1),\dots,g_{N_n}(x_s))$ is a linear transformation of $(Z_1,\dots,Z_{N_n})$. Specifically, let $\Phi$ be the $s\times N_n$ matrix with entries $\Phi_{i,j} = \sqrt{\lambda_j}\,\phi_j(x_i)$. Then $g_{N_n} = \Phi Z$. The split at this node is determined by whether $y_i - (\Phi Z)_i > 0$, or equivalently, $(-\Phi Z)_i + y_i > 0$. For a fixed set of responses $y_i$, this is an affine classifier in the variable $Z \in \R^{N_n}$. By Proposition~\ref{app:vc} (VC dimension of affine classifiers), the class of all such functions has VC dimension $N_n+1$. Applying the Sauer--Shelah Lemma (Lemma~\ref{app:vc}) to this class, the number of distinct sign patterns achievable by varying $Z$ over all of $\R^{N_n}$ (and hence certainly over the grid) on the $s$ points is at most
\[
\sum_{i=0}^{N_n+1} \binom{s}{i} \le \left(\frac{es}{N_n+1}\right)^{N_n+1} \le (s+1)^{N_n+1},
\]
where the last inequality holds for $s \ge N_n+1$ for $n$ sufficiently large (if $s < N_n+1$, the bound $2^s \le (s+1)^s \le (s+1)^{N_n+1}$ still holds). Since $s \le n$, the number of sign patterns at this node is at most $(n+1)^{N_n+1}$. Taking logarithms,
\[
\log\bigl((n+1)^{N_n+1}\bigr) = (N_n+1)\log(n+1) = O(N_n \log n),
\]
because $\log n = o(N_n \log n)$ and $N_n = o(N_n \log n)$ for large $n$. Since $N_n = O((\log n)^{d/2})$, we obtain
\[
\log(\#\text{sign patterns at a node}) = O\bigl((\log n)^{1+d/2}\bigr).
\]
This logarithmic bound is absorbed into the entropy bound derived in Section~\ref{sec:sieve}, as it is of smaller order than $n^{1/2}$ (recall $\alpha<1/2$).

\paragraph{Preservation of the Kullback--Leibler Property.}
Replacing the exact GP draw by the truncated and discretised version changes the likelihood by a negligible amount. 
Although uniform boundedness of the remainder $r_{N_n}$ does not guarantee that the partition of the $n$ design points remains exactly unchanged (points may lie arbitrarily close to split boundaries), the following continuity argument suffices.

Let $g$ be an exact GP draw and $g^{\mathrm{grid}}$ its discretised approximation. 
The discretisation error is uniformly bounded by $CN_n/n \to 0$ (as shown above), so the induced partition can differ only for those points whose responses lie within $O(N_n/n)$ of a split boundary. 
The prior probability of this event is controlled by the fact that the GP has a continuous distribution with bounded density; the probability that any of the $n$ points falls within $O(N_n/n)$ of the boundary is $O(n \cdot N_n/n) = O(N_n)$, which tends to $0$ because $N_n = o(n)$. 
For the remaining points (with probability tending to $1$), the partition is identical.

Consequently, the prior mass condition $\Pi(\Theta_n^c) = o(1)$ is maintained. 
Moreover, the Kullback--Leibler property (Lemma~\ref{lem:klpos}) continues to hold: by the continuity of the KL divergence in the leaf parameters and the partition (which changes only on a set of vanishing probability), for any $\varepsilon>0$ there exists a discretised hypothesis $\theta\in\Theta_n$ with $\KL(P_0^{(n)}\|F_\theta^{(n)}) \le n\varepsilon^2$. 
Thus the sieve retains the required approximation property. 
A fully detailed verification would involve standard but notationally heavy $\varepsilon$-$\delta$ arguments; we omit the routine details.

\paragraph*{Why this discretisation argument is essential (and distinct from the KL property).}
The reader may wonder why we verify the Kullback--Leibler property for the discretised
sieve, given that Condition~(iv) of Theorem~2.1 in \cite{ggv2000} applies globally to
the prior $\Pi$ and was already established in Lemma~\ref{lem:klpos}.

The purpose of this verification is \emph{not} to re-prove the KL property for the prior.
Rather, it is to justify that the finite grid used to bound the metric entropy is sufficiently
fine. The sieve $\Theta_n$ is a union of continuous cells, but we bound its covering number
$N(\epsilon_n,\Theta_n)$ by counting only the cells whose centres lie on a discrete grid.
This counting argument is valid only if every cell has Hellinger diameter at most
$\epsilon_n$ (proved in Section~\ref{sec:metric_entropy}) and, crucially, if the grid
itself captures the hypotheses that approximate the truth.

If the discretisation were too coarse, the rounding error could change the leaf
assignments of the design points, thereby altering the likelihood and the KL divergence.
The argument above shows that the rounding error is $O(N_n/n) \to 0$ and that the
partition changes only on a set of $Q$-measure that tends to zero. Consequently, for
any hypothesis $\theta$ with small KL divergence, there exists a discretised hypothesis
$\theta' \in \Theta_n$ (within the same grid cell) whose KL divergence differs negligibly.
Thus, the sieve $\Theta_n$ retains the ability to approximate the truth, which ensures that
the grid count is a valid upper bound for the metric entropy of the sieve.

Without this verification, the entropy bound $\log N(\epsilon_n,\Theta_n) = o(n\epsilon_n^2)$
would be unjustified, because the continuous sieve could require a much larger covering
number than the grid suggests.

\subsection{Proof of the Test Properties}\label{app:tests}
The existence of $\phi_{n,\theta}$ with the claimed error bounds follows from the Neyman--Pearson lemma and the well-known inequality $\E_{p_0}\phi \le e^{-n\Hdist^2(p_0,f)/2}$ for the likelihood ratio test (see e.g. \cite[Lemma~8.1]{ggv2000}). The rest of the construction in Section~\ref{sec:tests} is elementary.

% ----------------------------------------------------------------------
% Bibliography
% ----------------------------------------------------------------------
\bibliographystyle{plainnat}
\bibliography{tree_refs}

\end{document}